\documentclass{article}
\usepackage[left=1in,right=1in,top=1in,bottom=1in]{geometry}
\usepackage{amsfonts,amsmath,mathrsfs,bbm,amsthm, amssymb, enumitem,comment,apptools}
\usepackage[dvipsnames]{xcolor}
\usepackage{authblk}
\usepackage{appendix}
\usepackage{float}
\usepackage{enumitem}
\usepackage{graphicx}
\usepackage[hidelinks]{hyperref}

\hypersetup{
    colorlinks=true,
    urlcolor={CadetBlue},
    citecolor={Periwinkle},
    linkcolor={RoyalBlue!70!gray}
} 
\usepackage{url}
\usepackage{tikz-cd}
\usetikzlibrary{decorations.pathreplacing}

\usepackage{bbm}
\usepackage{pdfpages}

\usepackage{stmaryrd}

\usepackage{orcidlink}

\newcommand{\ESP}{{\hyperlink{ESP}{(ESP)}}}
\newcommand{\ESPa}{\ESP\hspace{0.85mm}}

\newcommand{\PureGen}{{\hyperlink{PureGen}{(EMPS)}}}
\newcommand{\PureGena}{\PureGen\hspace{0.85mm}}

\usepackage{mathrsfs}

\newcommand{\mcL}{\mathcal{L}}
\newcommand{\mcG}{\mathcal{G}}
\DeclareMathOperator{\dom}{\mathrm{dom}}

\DeclareMathOperator{\Tr}{\mathrm{Tr}}

\newcommand{\Hilb}[1]{\mathcal{H}^{#1}}
\newcommand{\one}{\mathbbm{1}}

\newcommand{\acts}[1]{\overset{#1}{\curvearrowright}}

\newcommand{\C}{\mathbb C}
\newcommand{\Z}{\mathbb Z}
\newcommand{\I}{\mathbb I}
\newcommand{\A}[1]{\mathcal{A}_{#1}}

\newcommand{\M}{\mathbb{M}}
\newcommand{\N}{\mathbb N}

\newcommand{\hatrho}{\hat{\rho}}

\newcommand{\localOne}{\one_{\localDim}}

\newcommand{\bondOne}{\one_{\BondDim}}

\newcommand{\gmatrices}{\matrices^g}

\newcommand{\wielength}[1]{\ell\!\left(#1\right)}

\newcommand{\prelength}[1]{\ell_0\!\left(#1\right)}

\newcommand{\bs}[1]{\boldsymbol{#1}}

\newcommand{\tr}[1]{\Tr\!\left( #1\right)}

\renewcommand{\P}{\mathbb{P}}

\newcommand{\<}{\langle}
\renewcommand{\>}{\rangle}
\newcommand{\ket}[1]{\left|#1\right\rangle}
\newcommand{\bra}[1]{\left\langle #1 \right|}

\newcommand{\inner}[2]{\left\<#1, #2\right\>}

\newcommand{\ketbra}[2]{| #1 \rangle\!\langle #2 |}
\newcommand{\braket}[2]{\langle #1 | #2 \rangle}

\let\emptyset\varnothing
\DeclareMathOperator{\loc}{loc}

\newcommand{\positives}[1]{#1_{+}}
\newcommand{\selfadjoints}[1]{#1_{\operatorname{sa}}}
\newcommand{\cstates}[1]{\mathcal{S}\!\left(#1\right)}

\newcommand{\matrices}{\mathbb{M}_{d}}
\newcommand{\states}{\mathbb{S}_{d}}
\newcommand{\stateso}{\mathbb{S}_{d}^{\circ}}

\newcommand{\rmatrices}{\matrices(\Omega)}
\newcommand{\rstates}{\states(\Omega)}
\newcommand{\rstateso}{\stateso(\Omega)}
\newcommand{\seq}[1]{\left(#1\right)}
\newcommand{\set}[1]{\left\{#1\right\}}

\newcommand{\dual}[1]{#1^\star}

\newcommand{\ucps}[1]{\operatorname{ucp}\!\left(#1\right)}

\newcommand{\LocalDim}{n}
\newcommand{\localDim}{\LocalDim}
\newcommand{\BondDim}{D}
\newcommand{\MatricesLocal}{\mathbb{M}_\LocalDim}
\newcommand{\HilbertLocal}{\C^{\LocalDim}}

\newcommand{\HilbertBond}{\C^\BondDim}
\newcommand{\MatricesBond}{\mathbb{M}_\BondDim}

\newcommand{\TrD}[1]{\operatorname{Tr}_D\!\left(#1\right)}

\newcommand{\trD}[1]{\operatorname{Tr}_D\!\left(#1\right)}

\DeclareMathOperator{\proj}{proj}
\newtheorem{thm}{Theorem}[section]
\newtheorem{thmx}{Theorem}

\newtheorem{prop}[thm]{Proposition}

\newtheorem{lem}[thm]{Lemma}
\newtheorem{lemma}[thm]{Lemma}
\newtheorem{cor}[thm]{Corollary}
\newtheorem{corollary}[thm]{Corollary}
\newtheorem*{thm*}{Theorem}
\theoremstyle{definition}

\newtheorem{definition}[thm]{Definition}

\newtheorem{example}[thm]{Example}

\newtheorem{problem}[thm]{Problem}

\newtheorem{rmk}[thm]{Remark}
\newtheorem{remark}[thm]{Remark}

\newtheorem{notation}[thm]{Notation}

\newcommand{\TCV}{\mathfrak{C}}

\numberwithin{equation}{section}
\numberwithin{thm}{section}

\let\emptyset\varnothing

\begin{document}
\title{
\huge\textbf{Parent Hamiltonians of \\
Ergodic Matrix Product States}
%
}
\author[1, 2]{Owen Ekblad\orcidlink{0009-0006-0834-0327}\thanks{\url{oekblad@ucdavis.edu}}}
\author[1]{Eloy Moreno-Nadales\orcidlink{0009-0001-8178-7786}\thanks{\url{morenon4@msu.edu}}}
\author[1]{Eric B. Roon\orcidlink{0009-0000-7566-5734}\thanks{\url{rooneric@msu.edu}}}
\author[1]{Jeffrey H. Schenker\orcidlink{0000-0002-1171-7977}\thanks{\url{schenke6 @ msu.edu}}}

\affil[1]{Department of Mathematics\\ Michigan State University\\ East Lansing, MI, U.S.A.}
\affil[2]{Department of Mathematics and Center for Quantum Mathematics and Physics\\ University of California, Davis\\  
Davis, CA, U.S.A.}
\date{}                     
\setcounter{Maxaffil}{0}
\renewcommand\Affilfont{\itshape\small}
\maketitle

\begin{abstract}
Matrix product states (MPS) are quintessential examples of frustration-free gapped ground states of local interactions called parent Hamiltonians. 
In this work, we investigate parent Hamiltonians for a class of \textit{ergodic} matrix product states (EMPS), which are MPS defined by site-dependent random tensors $\{X_j^{[k]}\}_{j=1}^\BondDim$ which are homogeneously distributed at every site $k$ in the spin chain. 
Here, the EMPS are not translation-invariant but rather \textit{statistically} translation-invariant.
Under a mild injectivity assumption, we show the thermodynamic limit of an EMPS is the unique frustration-free ground state of a parent Hamiltonian on the whole spin chain, which, depending on the statistical properties of the EMPS, may or may not be finite-range. 
In contrast to the translation-invariant regime, these Hamiltonians need not be gapped.
Nevertheless, applying the martingale method while keeping track of local statistics gives conditions for a gap, in addition to pointing towards why there need not be a gap in general. 
We include examples of EMPS both with and without spectral gaps to illustrate our results. 
\end{abstract}
%
%
\tableofcontents
\section{Introduction}
Originally formulated by Fannes, Nachteragele, and Werner \cite{FannesNachtergaeleWerner} to investigate the AKLT antiferromagnet \cite{AKLT}, matrix product states (MPS) have become a cornerstone of condensed matter theory.  
MPS exhibit many properties representative of one-dimensional quantum matter, which, heuristically, is explained by their fundamental role in the density matrix renormalization group that efficiently approximates lowest energies of Hamiltonians of quantum spin chains \cite{White1992DensityGroups, White1993Density-matrixGroups}.
A central aspect of MPS is their role as unique frustration-free ground states of local gapped Hamiltonian interactions, called \textit{parent Hamiltonians} in \cite{Perez-Garcia_et_al}:
indeed, a key use of MPS is as a reverse-engineered approach to studying arbitrary local Hamiltonians, by first approximating ground states of these interactions by MPS then studying directly their parent Hamiltonians.
For the most part, investigations into MPS have been restricted to the translation-invariant (TI) regime. 
Recent studies initiated by the works \cite{MovassaghSchenker_PRX,MovassaghSchenker} extended fundamental results about TI MPS to MPS with homogeneously-distributed on-site disorder. 
That is, the MPS is \textit{statistically} translation invariant with respect to an underlying probability space. 
We examine the case when this disorder is implemented via \textit{ergodic} matrix product states (EMPS). 
In the very recent work \cite{RoonSchenker}, an example of an EMPS which is not the ground state of \textit{any} gapped finite-range interaction was constructed, which stands in stark contrast to the TI regime, where it is generically true that parent Hamiltonians of MPS are gapped. 
Here, we investigate the properties of parent Hamiltonians of EMPS in general so as to explain the discrepancy between TI MPS and EMPS more clearly. 
\subsection{Main results}
A TI MPS $\psi$ is described by an $\LocalDim$-tuple $\mathcal{T} = \big(X_j\big)_{j=1}^\LocalDim$ of matrices $X_j\in\MatricesBond$ such that $\sum_{j=1}^\LocalDim X_j^* X_j = \bondOne$ where $\MatricesBond$ denotes the set of $\BondDim\times \BondDim$ matrices with entries in $\C$, $\bondOne\in\MatricesBond$ is the identity matrix, $\LocalDim$ is the local dimension of the spin at each site of the integer lattice $\Z$, and $\BondDim$ is called the \textit{bond dimension}.  
We call such $\mathcal{T}$ a \textit{tensor} of bond dimension $\BondDim$.
The MPS is then defined locally via the $\Gamma$-maps
\begin{equation*}
\Gamma^{[l, r]}_{\operatorname{TI}}:\MatricesBond\to (\C^n)^{\otimes(r - l + 1)}, \qquad
    \Gamma_{\operatorname{TI}}^{[l, r]}(b)
    :=
    \sum_{i_l=1}^\LocalDim
        \cdots 
    \sum_{i_r = 1}^\LocalDim
    \TrD{bX_{i_r}\cdots X_{i_l}}\ket{i_l\cdots i_r}
\end{equation*}
for any $[l, r]\cap\Z\subset\Z$, where $\set{\ket{j}}_{j=1}^\LocalDim$ is a fixed basis of the local Hilbert space $\C^\LocalDim$ and $\set{\ket{i_l\cdots i_r}}$ is the corresponding basis of $(\C^{\LocalDim})^{\otimes(r - l + 1)}$,
and the MPS on $[l, r]$ is the pure state defined by 
\begin{equation*}
    \hat{\Gamma}_{\operatorname{TI}}^{[l, r]}
        :=
     \Gamma_{\operatorname{TI}}^{[l, r]}(\one_{\BondDim}).
\end{equation*}
Under the assumption that there is $L\in\N$---called the \textit{injectivity length}---for which $\Gamma^{[1, L]}$ is injective (in which case one says the MPS itself is injective), the thermodynamic limit
\begin{equation*}
    \lim_{[l, r]\uparrow\Z}\hat{\Gamma}_{\operatorname{TI}}^{[l, r]} =: \psi_{\operatorname{TI}}
\end{equation*}
exists in the weak$^*$ sense \cite{FannesNachtergaeleWerner}, where $\psi_{\operatorname{TI}}$ is a $C^*$-algebraic state on the $C^*$-algebra $\A{\Z} := \bigotimes_{x\in\Z}\MatricesLocal$.
Moreover, $\psi_{\operatorname{TI}}$ is a pure state. 
Further, if we let $\A{[l, r]} := \bigotimes_{x\in[l, r]\cap\Z}\MatricesLocal$, the parent Hamiltonian of $\psi_{\operatorname{TI}}$ was constructed in \cite{FannesNachtergaeleWerner}: 
it was shown that there is $R\in\N$ and positive semidefinite $h\in\A{[1, R]}$ such that, if we let $\tau:\A{\Z}\to\A{\Z}$ denote the translation, then $\psi$ is the unique frustration-free ground state of the Hamiltonian given formally by the expression
\begin{equation*}
    H_{\operatorname{TI}}
        :=
    \sum_{k\in\Z}\tau^k(h),
\end{equation*}
and $H_{\operatorname{TI}}$ has a spectral gap. 
Our model deviates from the translation-invariant one by allowing the tensors defining the MPS to be site-dependent. 
Specifically, we consider MPS defined by the $\Gamma$-maps
\begin{equation*}
     \Gamma^{[l, r]}(b)
    :=
    \sum_{i_l=1}^\LocalDim
        \cdots 
    \sum_{i_r = 1}^\LocalDim
    \TrD{bX_{i_r}^{[r]}\cdots X_{i_l}^{[l]}}\ket{i_l\cdots i_r},
\end{equation*}
where for all $k\in\Z$, $\mathcal{T}^{[k]} = \big(X_j^{[k]}\big)_{j=1}^\LocalDim$ is a \textit{local} tensor at site $k\in\Z$. 
This situation was investigated for finite-volumes in \cite{Perez-Garcia_et_al}, but we go beyond this work by operating under the additional assumption that $\seq{\mathcal{T}^{[k]}}_{k\in\Z}$ is sampled from an ergodic tensor-valued stochastic process. 
If for all $k\in\Z$ there is a random $L^{[k]}$---called the \textit{local injectivity length}---for which $\Gamma^{[k, k + L^{[k]}]}$ is injective, the thermodynamic limit $\psi$ of the local MPS exists \cite{MovassaghSchenker}, and we call $\psi$ the \textit{injective infinite-volume EMPS} defined by the local tensors $\mathcal{T}^{[k]}$. 
Our first theorem gives says the thermodynamic limit $\psi$ is a pure state and admits parent Hamiltonians for $\psi$. 
\begin{thmx}[Theorem \ref{Thm:Parent_Hamiltonian}, informal]
\label{Thm:Intro:PH}
    The thermodynamic limit $\psi$ of an injective EMPS is a pure state.
    Furthermore, there is an ergodic local projection-valued stochastic process $\seq{h^{[k]}}_{k\in\Z}$ such that $\psi$ satisfies
    \begin{equation}
        \psi(h^{[k]})
        =
        0
        \quad\text{for all $k\in\Z$}\,,
    \end{equation}
    uniquely, where, if we write $[k, R^{[k]}]$ to denote the support of $h^{[k]}$, the random length scales $(R^{[k]} - k)$ need not be bounded in $k$.
    Moreover, there is a random strongly continuous group $\seq{\alpha_t:\A{\Z}\to\A{\Z}}_{t\in\mathbb{R}}$ of $*$-automorphisms of $\A{\Z}$ such that for all $\bs{a}\in\A{\Z}$
    \begin{equation}
    \label{Eqn:Intro:SCUG}
        \lim_{n\to\infty}
        \left\|
            \alpha_t(\bs{a})
                -
            e^{-itH_n}\bs{a}
            e^{itH_n}
        \right\|
            =
            0
    \end{equation}
    holds uniformly for $t$ in compact intervals, where $H_n = \sum_{k\in[-n, n]}h^{[k]}$. 
\end{thmx}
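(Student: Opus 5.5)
The proof adapts the Fannes--Nachtergaele--Werner parent-Hamiltonian construction to site-dependent tensors. At each step we keep the random length scales measurable and stationary, so that the resulting interaction is an ergodic process.

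\textbf{Step 1: local projections.} For each $k$, we use the local injectivity length $L^{[k]}$. We choose a random range $R^{[k]}\ge k+L^{[k]}$, for instance the smallest $r$ such that $\Gamma^{[k,r]}$ is injective and the intersection property below holds. We set
\[
h^{[k]}:=\one-P^{[k,R^{[k]}]},
\]
where $P^{[l,r]}$ is the orthogonal projection onto $\mathcal{G}_{[l,r]}:=\operatorname{ran}\Gamma^{[l,r]}$. Because $R^{[k]}$ is a function of the shifted tensor sequence, $(h^{[k]})_k$ is an ergodic projection-valued process. Since $R^{[k]}-k$ is only a.s.\ finite, its distribution may be unbounded, which is why the ranges need not be bounded in $k$.

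\textbf{Step 2: intersection property.} We show the finite-volume analogue of the standard lemma: if $\Gamma^{[l,m]}$ and $\Gamma^{[m',r]}$ are injective with an overlap that is itself injective, then
\[
\bigl(\mathcal{G}_{[l,m]}\otimes\mathcal H_{(m,r]}\bigr)\cap\bigl(\mathcal H_{[l,m')}\otimes\mathcal G_{[m',r]}\bigr)=\mathcal G_{[l,r]}.
\]
The argument is the Pérez-García et al.\ proof, which is purely algebraic and uses only injectivity of the individual blocks. Iterating over $k\in[l,r']$ gives
\[
\bigcap_{k}\ker h^{[k]}=\mathcal G_{[l,r]},
\]
which has dimension at most $D^2$ and is the range of a $\Gamma$-map.

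\textbf{Step 3: frustration-freeness, uniqueness and purity.} We have $\psi(h^{[k]})=0$ because $\psi$ restricted to $[k,R^{[k]}]$ is a mixture of the vectors $\Gamma^{[k,R^{[k]}]}(b)$. This follows from the thermodynamic limit of \cite{MovassaghSchenker}, where the boundary matrices converge to a random density matrix. For uniqueness, let $\omega$ be a state with $\omega(h^{[k]})=0$ for all $k$. Then $\omega|_{\A{[l,r]}}$ is supported on $\mathcal G_{[l,r]}$, so $\omega(\bs a)=\operatorname{Tr}\bigl(\rho_{l,r}\,\Gamma$-type expressions$\bigr)$ for some boundary density on $\MatricesBond\otimes\MatricesBond$. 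The primitivity and contraction estimates of the random transfer operators from \cite{MovassaghSchenker}, namely exponential loss of memory of products of ergodic random ucp maps, then force the boundary dependence to vanish as $[l,r]\uparrow\Z$. This gives $\omega=\psi$. Purity follows: any decomposition $\psi=\tfrac12(\omega_1+\omega_2)$ has $\omega_i\ge 0$ with $\omega_i(h^{[k]})=0$, so $\omega_i=\psi$.

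\textbf{Step 4: dynamics.} The interactions are bounded projections, but their ranges are random, so a Lieb--Robinson argument needs $\sup_k\sum_{x\ni\cdot}$-type bounds that may fail. Instead, we take a strongly continuous dynamics via the standard existence theorem, in the style of Bratteli--Robinson, for interactions whose local Hamiltonians satisfy
\[
\sum_{k:\,[k,R^{[k]}]\ni x}\|h^{[k]}\|<\infty
\]
for each $x$. This holds almost surely because $\sum_{k}\P(R^{[k]}-k\ge |x-k|)$ is comparable to $\mathbb E(R^{[0]})$. If that expectation is infinite, we use instead that a.s.\ only finitely many $k$ cover any given $x$, which follows from stationarity and finiteness of $R^{[0]}$, with the tail controlled by Borel--Cantelli along the left. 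We then show that $\delta(\bs a)=\lim_n[H_n,\bs a]$ defines a closable derivation on local observables. The main obstacle is that this derivation is generally not bounded by a summable Lieb--Robinson estimate, because $R^{[k]}$ is unbounded. I would try to handle this by first proving that $e^{-itH_n}\bs a e^{itH_n}$ is Cauchy using the Dyson/Duhamel formula together with commutativity of far-separated terms. If that fails, I would show that local observables form a core of analytic elements, using that the $h^{[k]}$ with support touching a fixed finite set are a.s.\ finitely many and have norm $1$, which yields power-series bounds of order $(Ct)^m$. Convergence uniformly on compact $t$-intervals then follows by density and the Trotter--Kato theorem.
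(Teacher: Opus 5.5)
Your Steps 1--3 follow the paper's route: the intersection property, the $\Gamma$-representation of $\psi$ on intervals, comparison of a frustration-free state with $\psi$ via transfer-operator contraction, and the convex-component argument for purity. Step 4, however, has a genuine gap. You never show that the closure of $\delta$ generates a group of automorphisms, and neither of your fallbacks can do it.

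\textbf{The Duhamel route.} It needs $[H_n-H_m,\alpha^{(m)}_s(\bs{a})]$ to be small, that is, that $\alpha^{(m)}_s(\bs{a})$ stays approximately localized. That is exactly the Lieb--Robinson input that norm-one terms with unbounded ranges deny you.

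\textbf{The analytic-element route.} It only gives $\|\delta^m(\bs{a})\|\le 2^m\|\bs{a}\|\prod_{j<m}N_j$, where $N_j$ counts the terms meeting $\operatorname{supp}\delta^j(\bs{a})$. That support is not a fixed finite set: it grows by jumps of size $R^{[x]}-x$, so $N_j$ is not controlled by a constant. For heavy-tailed ranges the product can outgrow $C^m m!$ for every $C$, so no radius of convergence follows. Your Trotter--Kato step is fine once generation with $\A{\Z}^{\loc}$ as a core is known.

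\textbf{The missing idea} is a bounded surface energy along a \emph{random} absorbing sequence, which is the hypothesis of Proposition~\ref{Prop:Dynamics_from_loc_fin_interactions}. By Poincar\'e recurrence, pick sites $t_n\to-\infty$ and $s_n\to+\infty$ at which the range is short, $R^{[t]}-t<M$. Take $s_n$ increasing and beyond $R^{[t_n+M]}$, and set $\Lambda_n=[t_n+M,s_n+M]$. If $x\mapsto R^{[x]}$ is nondecreasing, every term crossing $\partial\Lambda_n$ has left endpoint in $(t_n,t_n+M)$ or $(s_n,s_n+M]$. Hence at most $2M$ projections cross, uniformly in $n$ and $\omega$; this is Lemma~\ref{Lem:Quasilocal_dynamics_for_eta_interaction}.

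The same monotonicity is what gives local finiteness, and your argument for that is wrong when $\mathbb{E}[R^{[0]}]=\infty$. This case does occur: in Example~\ref{Ex:Unbounded_inj}, $\ell\notin L^1$. Borel--Cantelli needs $\sum_{k<x}\Pr[R^{[k]}\ge x]<\infty$, which is essentially $\mathbb{E}[R^{[0]}]<\infty$. Stationarity plus a.s.\ finiteness alone does not stop infinitely many $k<x$ from reaching $x$; i.i.d.\ heavy tails already give infinitely many. With $R^{[\cdot]}$ nondecreasing, recurrence gives one $y<x$ with $R^{[y]}<x$, and then the same holds for every $y'\le y$.

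Monotonicity is available once you pin down $R$. The iteration in Step 2 goes through when each block overlaps its successor injectively, i.e.\ $R^{[k]}\ge k+1+\ell_{k+1}$; this is the paper's condition (b) on injectivity sequences. The choice $R^{[k]}=k+\ell_k$ can fail. Since $\ell_x\le \ell_{x+1}+1$, the minimal choice $R^{[k]}=k+1+\ell_{k+1}$ is nondecreasing; the paper instead uses $x+k^\eta_x+\eta$.

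Finally, in Step 3 the comparison of a frustration-free state with $\psi$ on $[l,r]$ carries a factor $\Tr(\rho_r^{-1})$, through $\alpha^{[r_0+1,r]}$ and $\kappa^{[l,r]}$. Since this factor need not stay bounded, you must send $r\to\infty$ along a subsequence on which it is bounded (condition (c) of Definition~\ref{Def:Interaction_exposing_psi}), not along all $[l,r]\uparrow\Z$.
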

We remark that this theorem pertains to \textit{thermodynamic limits} of MPS, as opposed to finite volume MPS.
Such $\psi$ in the above might be called an ergodic purely generated $C^*$-finitely correlated state on $\A{\Z}$ in the language of \cite{FannesNachtergaeleWerner, RoonSchenker}.  
We call $\seq{h^{[k]}}_{k\in\Z}$ as in the above theorem a parent Hamiltonian of $\psi$. 
The existence of a strongly continuous group $\seq{\alpha_t:\A{\Z}\to\A{\Z}}_{t\in\mathbb{R}}$ satisfying (\ref{Eqn:Intro:SCUG}) is assured for finite ranged interactions via Lieb-Robinson bounds \cite{LiebRobinson, NachtergaeleOgataSims, NachtergaeleSimsYoung, Young_thesis}, but if $\seq{R^{[k]} - k}_{k\in\Z}$ is unbounded, $\seq{h^{[k]}}_{k\in\Z}$ does \textit{not} define a finite ranged interaction: as we show, the interaction it defines is only \textit{locally} finite ranged.
Since each of the local terms is a \textit{projection}, hence has norm 1, this poses nontrivial technical challenges, and one of the main aspects of Theorem \ref{Thm:Intro:PH} is that such dynamics $\seq{\alpha_t}_{t\in\mathbb{R}}$ exist even in this non-finite ranged scenario. 
As we begin discussing the question of spectral gaps for these parent Hamiltonians, the reader should recall from the introduction above that there exist gapless injective EMPS. 
Nevertheless, we \textit{are} able to establish conditions on injective EMPS for which the parent Hamiltonian of Theorem \ref{Thm:Intro:PH} is gapped almost surely.
We follow the martingale method \cite{Nachtergaele} while keeping track of local statistical information along the way. 
To describe more specifically what this local statistical information is, we require some more notation. 
Associated to an EMPS defined by local tensors $\mathcal{T}^{[k]}$ there is a \textit{transfer apparatus} $(E^{[k]})_{k\in\Z}$: 
for all $k\in\Z$, we let $E^{[k]}:\MatricesBond\to\MatricesBond$ denote the random linear map 
\begin{equation*}
    E^{[k]}(b)
        :=
    \sum_{j=1}^\LocalDim
    X_j^{[k]*} b X_j^{[k]},
\end{equation*}
where we note that $E^{[k]}$ is a unital completely positive map for all $k$ since $\mathcal{T}^{[k]}$ is a tensor.
By \cite{Ekblad2024ReducibilityProcesses}, it is automatic that associated to the transfer apparatus there is an ergodic density matrix-valued stochastic process $\seq{\rho_k\in\MatricesBond}_{k\in\Z}$ such that 
\begin{equation*}
    \hatrho_k\circ E^{[k+1]} = \hatrho_{k+1}
\end{equation*}
holds almost surely for all $k\in\Z$, where $\hatrho$ denotes the linear functional $\hatrho(b) = \trD{\rho b}$.
We show in Appendix \ref{App:Injectivity} that the injectivity hypothesis is equivalent to the following dynamical property of the transfer apparatus. There is almost surely $N\in\N$ such that for all projections $p\in\MatricesBond$, $E^{[1]}\circ\cdots\circ E^{[N]}(p)$ is invertible. 
Phrased differently, $\seq{E^{[k]}}_{k\in\Z}$ is \textit{eventually strictly positive (ESP)}. 
This property was studied in depth by Movassagh and Schenker in \cite{MovassaghSchenker_PRX, MovassaghSchenker}, where they showed that if $\seq{E^{[k]}}_{k\in\Z}$ is ESP, then the corresponding random density matrices $\rho_k$ are almost surely invertible for all $k$, and satisfy
\begin{equation}
\label{Eqn:Intro:ESP}
    \Big\|
        E^{[l]}\circ\cdots\circ E^{[r]}(\cdot)
            -
        \frac{1}{\BondDim}\hatrho_r(\cdot) \one_{\BondDim}
    \Big\|
        \leq 
    \mu^{r - l} \quad \text{for } r > \xi^{[l]} + l,
\end{equation} 
where $\mu \in (0,1)$ is deterministic and $\xi^{[l]}$ is a site-dependent random \emph{onset length}, representing a threshold beyond which exponential decay is witnessed.
This shows, among other things, that almost sure exponential clustering holds for injective EMPS \cite{MovassaghSchenker}, but this need not be uniformly visible locally at scales below the onset length. 
Indeed, the gapless example of \cite{RoonSchenker} took advantage of this nonuniformity, obtaining a corresponding lower bound for $r < l + \xi^{[l]}$ and leveraging the general exponential clustering result of \cite{Nachtergaele2006Lieb-RobinsonTheorem} to prove gaplessness.  
It is therefore not surprising that this dynamical data is the main ingredient to our lower bound on the spectral gap.
\begin{thmx}[Theorem \ref{Thm:Local_spectral_gap_prop} and Corollary \ref{Cor:Gaps_for_finite_inj_length}, informal]
\label{Thm:Intro:Gaps}
    For a thermodynamic limit $\psi$ of injective EMPS where the parent Hamiltonian $\seq{h^{[k]}}_{k\in\Z}$ as in Theorem \ref{Thm:Intro:PH} has uniformly bounded length scale $\seq{R^{[k]} - k}_{k\in\Z}$, there is an increasing and absorbing sequence $(\Lambda_N)_{N\in\N}$ and random non-negative constants $\seq{\beta_N}_{N\in\N}$ such that 
    \begin{equation*}
        \operatorname{spec-gap}\!\seq{
            \sum_{[k, R^{[k]}]\cap\Lambda_N\neq\emptyset}
            h^{[k]}
        }
        \geq 
            {\beta}_N
    \end{equation*}
    holds for all $N$, for all $[k, R^{[k]}]\cap\Lambda_N\neq\emptyset$ and $\beta_N$ depends only on the decay constant $\mu$ in \eqref{Eqn:Intro:ESP}, the worst-case onset length $\xi^{[k]}$ for $k\in\Lambda_N$, and the smallest eigenvalues of $\rho_k$ for $k\in\Lambda_N$. 
    In particular, if the onset lengths are uniformly bounded and $\rho_k$ is uniformly bounded away from zero, the parent Hamiltonian is gapped. 
\end{thmx}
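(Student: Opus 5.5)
We will run the martingale method of Nachtergaele, keeping every constant site-dependent and tracking its dependence on the transfer apparatus. First we fix the geometry. Let $R := \sup_k (R^{[k]}-k) < \infty$. Take $\Lambda_N := [a_N, b_N]$ as an increasing absorbing sequence of intervals, and write $H_{\Lambda} := \sum_{[k,R^{[k]}]\subset \Lambda} h^{[k]}$. Let $G_\Lambda$ be the projection onto $\ker H_\Lambda$. By the injectivity hypothesis and the finite-volume analysis behind Theorem \ref{Thm:Intro:PH} (the parent Hamiltonian construction of P\'erez-Garc\'ia et al.), for $\Lambda$ longer than the relevant local injectivity lengths we have
\[
\ker H_\Lambda = \operatorname{ran}\Gamma^{\Lambda}.
\]
We then cover $\Lambda_N$ by overlapping blocks $B_m = [c_m, c_m + \ell_m]$. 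Consecutive blocks overlap on windows of length $\ge \max(R, \xi^{[c]})$, with $c$ ranging over the relevant sites, and we set $\Lambda^{(m)} = B_1\cup\dots\cup B_m$. The martingale bound will then give
\[
H_{\Lambda_N} \ge \frac{\gamma}{2}\,\bigl(1-\sqrt{\ell}\,\epsilon\bigr)\,(\one - G_{\Lambda_N}).
\]
Here $\gamma$ is the minimal local gap of $H_{B_m}$, $\ell$ is the maximal number of blocks overlapping a term, and $\epsilon$ bounds $\|G_{B_{m+1}}(G_{\Lambda^{(m)}} - G_{\Lambda^{(m+1)}})\|$. The operator-inequality steps $(\one-G_{B})\le \gamma^{-1}H_B$ and $\sum_m H_{B_m}\le \ell H_{\Lambda_N}$ are routine.

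Next we bound the local gaps $\gamma$. Each $B_m$ has length bounded by a constant depending on $\xi^{[k]}$ and $R$, and $H_{B_m}$ is a sum of finitely many projections. Since the onset lengths on $\Lambda_N$ are finite almost surely, there are only finitely many distinct block configurations inside $\Lambda_N$. Therefore $\gamma$ is the minimum of finitely many positive numbers and is positive. It is random, but it is controlled by the same local data. In the uniform case we would take all blocks of a common length $\ell_0$ and bound $\gamma$ below via compactness of the set of tensors with onset length $\le\xi$ and $\rho\ge\delta$. This compactness step needs care, because injectivity is an open condition and we must restrict to a closed subset on which the injectivity length is bounded.

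The main step is the bound on $\epsilon$, which is where \eqref{Eqn:Intro:ESP} enters. Following FNW/Nachtergaele, we write $G_{\Lambda}$ via $\Gamma^{\Lambda}$ and its Gram map
\[
b\mapsto \langle \Gamma^\Lambda(b), \Gamma^\Lambda(b')\rangle,
\]
which is expressed through compositions $E^{[l]}\circ\cdots\circ E^{[r]}$. The overlap between $G_{B_{m+1}}$ and $\one - G_{\Lambda^{(m)}}$ restricted to $\operatorname{ran}G_{\Lambda^{(m+1)}}$ is bounded by how far the transfer composition across the overlap window deviates from the rank-one map $\tfrac1D\hat\rho_r(\cdot)\one$, namely by $\mu^{r-l}$ once the window exceeds the onset length $\xi^{[l]}$. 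Inverting the Gram maps, which is needed to turn vectors back into boundary matrices, costs factors of $\lambda_{\min}(\rho_k)^{-1}$, giving roughly
\[
\epsilon \lesssim \frac{\mu^{w}}{\min_{k\in\Lambda_N}\lambda_{\min}(\rho_k)}\, ,
\]
where $w$ is the overlap width. The window widths are then chosen large, $w\ge \max_{k\in\Lambda_N}\xi^{[k]} + C\log(\lambda_{\min}^{-1})$, so that $\sqrt{\ell}\,\epsilon<1/2$. This yields $\beta_N$ depending only on $\mu$, $\max_{k\in\Lambda_N}\xi^{[k]}$, and $\min_{k\in\Lambda_N}\lambda_{\min}(\rho_k)$. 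Under uniform bounds on these quantities $\beta_N\ge\beta>0$ independently of $N$, and a gap for the infinite-volume GNS Hamiltonian follows by the standard limiting argument.

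I expect the main obstacle to be the $\epsilon$ estimate. The Gram-matrix/ESP bound must be made uniform across a nonuniform random environment, and the argument has to handle the regime below the onset length, where only injectivity and no decay is available.
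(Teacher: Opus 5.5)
\paragraph{The local gap $\gamma$.} Your overall architecture is the right one: a finite-volume gap from a two-block overlap estimate, driven by the decay of $E^{[l]}\circ\cdots\circ E^{[r]}$ and by $\rho^{-1}$, followed by Koma--Nachtergaele. Using Nachtergaele's martingale method instead of the paper's FNW projection argument would also be legitimate. The step that fails is the local gap $\gamma$ of the block Hamiltonians $H_{B_m}$, and your claim that $\beta_N$ depends only on $\mu$, the onset lengths and $\lambda_{\min}(\rho_k)$ rests on it.

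Each $H_{B_m}$ is a sum of many short-range projections. ``A minimum of finitely many positive numbers'' gives positivity, but no control by those data. The compactness argument has nothing to run on:
\begin{itemize}
\item the onset-length and $\rho$ bounds give quantitative injectivity only on windows longer than the onset scale, through $\kappa^{[l,r]}\ge 1-\alpha^{[l,r]}$ (Lemma~\ref{Lem:Kappa_lemma});
\item on the short supports of the individual terms, injectivity is merely qualitative, so the admissible set of tensors is open, not closed;
\item at its boundary the ranks of the short-range ground projections drop, so $H_{B_m}$ is not even continuous there.
\end{itemize}
Your hypotheses therefore give no lower bound on the smallest nonzero eigenvalue of $H_{B_m}$.

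The paper does not control this quantity either: in Theorem~\ref{Thm:Local_spectral_gap_prop} it survives as the explicit factor $\min_j\tilde\gamma_{x,j}$. The clean bound of Corollary~\ref{Cor:Gaps_for_finite_inj_length} comes from eliminating it. For $\ell\in L^\infty$ one uses the interaction $\Psi^{2p}$, whose terms are exactly the block projections $\one-G^{[x+jp,\,x+(j+2)p]}$. Every block is then a single interaction support and no local gap is needed. The price is that the gap is proved for $\Psi^{2p}$ with $p$ large compared to the onset lengths and to $\log\operatorname{Tr}(\rho^{-1})$, not for an arbitrary bounded-range parent Hamiltonian. You should make the same choice (blocks equal to term supports) rather than try to bound $\gamma$.

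\paragraph{The $\epsilon$-estimate.} This step is only heuristic, and your description of it is off. On $\operatorname{ran}G_{\Lambda^{(m+1)}}$ the operator $\one-G_{\Lambda^{(m)}}$ vanishes. What must be bounded is the angle between $\operatorname{ran}(G_{\Lambda^{(m)}}\otimes\one)$ and $\operatorname{ran}(\one\otimes G_{B_{m+1}})$ on the orthogonal complement of their intersection $\operatorname{ran}G_{\Lambda^{(m+1)}}$, i.e.\ $\|FE-E\wedge F\|$.

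The paper supplies exactly this in Lemma~\ref{Lem:Good_approximations_part_1} and Corollary~\ref{Cor:good_approximations_part_2}:
\begin{itemize}
\item expand $\xi\in\mathcal{G}^{[x_0,x_2]}\otimes\mathcal{H}^{(x_2,x_3]}$ and $\zeta\in\mathcal{H}^{[x_0,x_1)}\otimes\mathcal{G}^{[x_1,x_3]}$ through boundary matrices;
\item reduce $\langle\xi,\zeta\rangle$ to Gram products of $\Gamma^{[x_1,x_2]}$, which differ from $\langle\cdot,\cdot\rangle_{\rho_{x_2}}$ by $\alpha^{[x_1,x_2]}$;
\item convert back to $\|\xi\|\|\zeta\|$ at a cost of $1/\kappa^{[x_1,x_2]}$, using monotonicity of $\kappa$ under enlarging intervals;
\item test against $\Gamma^{[x_0,x_3]}(b)$ to see that orthogonality to $\mathcal{G}^{[x_0,x_3]}$ forces the effective boundary matrices to be $O(\alpha/\sqrt{\kappa})$.
\end{itemize}
The resulting bound is $\alpha^{[x_1,x_2]}\bigl(\|\rho_{x_3}^{-1/2}\rho_{x_2}\rho_{x_3}^{-1/2}\|^{1/2}+\alpha^{[x_1,x_2]}\bigr)/\kappa^{[x_1,x_2]}$. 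It depends only on the overlap window and on $\rho$ at its right ends, not on $x_0$. So it serves your growing volumes $\Lambda^{(m)}$ as well as the paper's neighbouring pairs, and the sub-onset regime that worried you never enters it.

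The paper then squares $\sum_j(\one-G^{[x_j,x_{j+2}]})$ on the regrouped nearest-neighbour chain and bounds the neighbour anticommutators by Lemma~\ref{Lem:FNW_proj_lemma}, obtaining the factor $1-\max_j\beta_{x,j}$. The martingale method consumes the same input and gives a bound of the form $\frac{\gamma}{d}(1-\epsilon\sqrt{\ell})^2$, not the $\frac{\gamma}{2}(1-\sqrt{\ell}\,\epsilon)$ you wrote. Once the local-gap issue is removed, the two routes are interchangeable.
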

To illustrate this theorem, we provide a class of examples (Proposition \ref{Prop:DAKLT}) that satisfies the hypotheses of the above theorem. 
This example is a disordered deformation of the AKLT model, of the same type that \cite{RoonSchenker} showed is gapless, where it can directly be seen that $\beta_N\to 0$ as $N\to\infty$. 
Proposition \ref{Prop:DAKLT}, demonstrates how to modify the example of \cite{RoonSchenker} in such a way that $\beta_N>0$ uniformly in $N$, which, by the above theorem, demonstrates a bulk gap. 
This example also points to a possibly general phenomenon regarding spectral gaps of EMPS, which we describe in more detail below in Section \ref{Sec:Outlook}

\subsection{Related literature}
This paper is based on (and follows the methodology prescribed in) the seminal work of Fannes, Nachtergaele, and Werner \cite{FannesNachtergaeleWerner}, which systematically described what later came to be known as matrix product states. 
The literature on MPS is vast, and we only recount a small selection of relevant works here. 
MPS are useful for approximating arbitrary ground states of quantum spin chains \cite{FannesNachtergaeleWerner_abundance, VerstraeteCirac, Vidal, White1992DensityGroups, White1993Density-matrixGroups}, they provide a rich class of example models for which one can compute symmetry protected topological phase indices and string-order parameters \cite{Ogata_z2, Ogata_cmp, Ogata_cdm, PG_StringOrder, Pollman_et_al_z2, Pollmann_et_al_entanglement}, they exhibit local topological quantum order \cite{NachtergaeleSimsYoung_2}, and are a first example of tensor network states \cite{Cirac2021MatrixTheorems,SanzPerezGarciaWolfCirac}.
Parent Hamiltonians of general, non-translation invariant \textit{finite volume} MPS were described in \cite{Perez-Garcia_et_al}, but there has been less work on thermodynamic limits and conditions for bulk spectral gaps of infinite-volume MPS as we describe here.
It has already been seen in \cite{Fernandez-Gonzalez2015FrustrationStates} that local perturbations of non-injective TI MPS can result in gapless parent Hamiltonians---called ``uncle Hamiltonians'' in \cite{Fernandez-Gonzalez2015FrustrationStates}---but the sort of perturbation we are considering here is different for two reasons: first, we operate under an injectivity assumption for our EMPS, and second, perhaps more notably, our disorder is highly nonlocal in the sense that it permeates the entire spin chain in an ergodic way. 
The works \cite{MovassaghSchenker, MovassaghSchenker_PRX} stimulated a great deal of activity towards understanding infinite-volume EMPS and other classes of non-TI infinite volume states on the spin chain \cite{Nelson2024ErgodicAlgebras, Pathirana2026AsymptoticStates, Pathirana2026CorrelationStates, Souissi2025ErgodicProcesses,  RoonSchenker, RoonSchenker_BGA}.
%
\subsection{Discussion and an open problem}\label{Sec:Outlook}
In this work, we described the parent Hamiltonians of matrix product states defined by site-dependent tensors satisfying a statistical version of translation invariance. 
We then gave a sufficient condition that ensured such a parent Hamiltonian was gapped, as it is known in general that not all EMPS are gapped ground states of suitable parent Hamiltonians. 
This is suggestive of a problem that will likely require new ideas and seems to be an interesting new direction: 
\begin{problem}
    Classify the set of gapped injective EMPS.
\end{problem}
In the case that $\rho_k$ is bounded away from zero, it is natural from our analysis to ask if the unboundedness of the onset length is the \textit{only} obstruction to the EMPS being gapped, even in, e.g., the i.i.d. case. 
It is nevertheless possible that there are other obstructions to injective EMPS being gapped, and this problem seems worthwhile to investigate in the future. 
\section*{Acknowledgments}
EBR would like to thank Dr. Bruno Nachtergaele for the helpful comments about the construction underpinning Proposition~\ref{Prop:DAKLT}. 
EBR is also grateful to Dr. Amanda Young for the helpful discussions about the Martingale method.
OE appreciates discussions with Lubashan Pathirana and Albert H. Werner regarding the content of Appendix \ref{App:Injectivity}.
OE, EM-N, and JHS were supported by the US National Science Foundation under Grant No. DMS-2153946.
%

%
%
\subsection{Organization of the paper}
In Section \ref{sec:Preliminaries}, we describe the technical preliminaries about quantum spin chains and ergodic quantum processes.
In Section \ref{sec:Vectorization}, we prove the basic technical facts about ergodic matrix product states that we require in our work, describing how many of the technical results of \cite[Section 5]{FannesNachtergaeleWerner} extend to the ergodic regime. 
Section \ref{sec:PHams} contains the proof of the existence of parent Hamiltonians (Theorem \ref{Thm:Intro:PH}), and we conclude the main body of the paper in Section \ref{sec:GapEstimates}, where we apply the martingale method to establish lower bounds on the spectral gap of these parent Hamiltonians we constructed (Theorem \ref{Thm:Intro:Gaps}).
In Appendix \ref{App:Injectivity}, we establish the equivalence between injectivity of an ergodic matrix product state and eventual strict positivity of its corresponding transfer apparatus ergodic quantum process, a fact generalizing \cite[Proposition 3]{SanzPerezGarciaWolfCirac} that seems to be of independent interest. 


\section{Preliminaries}\label{sec:Preliminaries}
We begin by setting notation and stating some preliminary technical facts we make reference to later. 
For a Banach space $\mathcal{X}$, we write $\dual{\mathcal{X}}$ to denote the dual space bounded linear functionals $\varphi:\mathcal{X}\to\mathbb{C}$. 
We let $\mathcal{B}(\mathcal{X})$ denote the set of bounded linear maps $\mathcal{X}\to\mathcal{X}$.
Given $A\in\mathcal{B}(\mathcal{X})$, let $\operatorname{spec}(A)$ denote the spectrum of $A$.
If $\operatorname{spec}(A)\subset[0, \infty)$ and $0\in\operatorname{spec}(A)$, we put $\operatorname{spec-gap}(A)$ equal to the quantity
\begin{equation*}
    \operatorname{spec-gap}(A)
        :=
    \inf\set{\delta>0\,\,:\,\,
    (0, \delta)\cap\sigma(A) = \emptyset}.
\end{equation*}
We write $\|x\|_{\mathcal{X}}$ to denote the norm of $x\in \mathcal{X}$, and given a bounded linear operator $A:\mathcal{X}\to\mathcal{X}$, we abuse notation slightly and write $\|A\|$ to denote its operator norm. 
Given a $C^*$-algebra $\mathcal{A}$ with unit $\one$, we let $\cstates{\mathcal{A}}$ denote the subset of $\dual{\mathcal{A}}$ consisting of those $\varphi\in\dual{\mathcal{A}}$ satisfying $\varphi(\one) = 1$ and $\varphi(a^*a)\geq 0$ for all $a\in\mathcal{A}$. Such an element of $\cstates{\mathcal{A}}$ is called a state. 
A state is called \emph{pure} if it is an extreme point of the convex closed set $\cstates{\mathcal{A}}$.
We let $\positives{\mathcal{A}}$ denote the set of $b\in\mathcal{A}$ of the form $b = a^*a$ for some $a\in\mathcal{A}$, and we let $\selfadjoints{\mathcal{A}}$ denote the set of $a\in\mathcal{A}$ with $a^* = a$. 
Given $\varphi\in\cstates{\mathcal{A}}$, the GNS representation of $\mathcal{A}$ with respect to $\varphi$ is the tuple $(\pi_\varphi, \mathcal{K}_\varphi, \Xi_\varphi)$, where $\mathcal{K}_\varphi$ is the Hilbert space defined as the completion of the inner product space
\begin{equation*}
    ({\mathcal{A}/N_\varphi}, \inner{\cdot}{\cdot}_\varphi)
\end{equation*}
where $N_\varphi$ is the closed left ideal $N_\varphi = \set{a\in\mathcal{A}\,\,:\,\, \varphi(a^*a) = 0}$, and $\inner{a}{b}_\varphi := \varphi(a^*b)$ for $a, b\in\mathcal{A}$, $\pi_\varphi:\mathcal{A}\to \mathcal{B}(\mathcal{K}_\varphi)$ is the representation 
\begin{equation*}
    \pi_\varphi(a)[b]
    =
    [ab]
\end{equation*}
where $[b]$ denotes the equivalence class of $b$ in $\mathcal{A}/N_\varphi$, and $\Xi_\varphi$ is the cyclic vector $[\one_{\mathcal{A}}]$.

We call a linear map $\phi:\mathcal{A}\to\mathcal{B}$ between $C^*$-algebras positive if $\phi\seq{\positives{\mathcal{A}}}\subset \positives{\mathcal{B}},$ and we call $\phi$ unital if $\phi(\one_\mathcal{A}) = \one_\mathcal{B}$. 
For $n\in\N$, we let $\mathbb{M}_n(\mathcal{A})$ denote the unital $C^*$-algebra consisting of matrices with entries in $\mathcal{A}$ with the usual operations, which the reader may readily verify may be identified with the tensor product $\mathcal{A}\otimes\mathbb{M}_n$, where $\mathbb{M}_n$ denotes the set of $n\times n$ matrices with entries in $\mathbb{C}$. 
A linear map $\phi:\mathcal{A}\to\mathcal{B}$ is called completely positive if for all $n\in\mathbb{N}$, the map 
\begin{equation*}
    \phi\otimes\operatorname{Id}_n:\mathbb{M}_n(\mathcal{A})\to\mathbb{M}_n(\mathcal{B})\,,
\end{equation*}
is positive, where $\operatorname{Id}_n$ denotes the identity map on $\mathbb{M}_n$. 
For shorthand, we call a linear map $\phi:\mathcal{A}\to\mathcal{B}$ that is both unital and completely positive ucp, and we let $\ucps{\mathcal{A}, \mathcal{B}}$ denote the set of unital and completely positive linear maps $\phi:\mathcal{A}\to\mathcal{B}$.
We write $\ucps{\mathcal{A}}$ to denote $\ucps{\mathcal{A}, \mathcal{A}}$.

Throughout, $\matrices$ will denote the $d\times d$ matrices with complex-valued entries, $\P_d\subset\matrices$ will denote the positive semidefinite elements in $\matrices$.
For $a\in\matrices$, we let $\|a\|$ denote the usual operator norm. 
For $\xi, \eta\in\mathbb{C}^d$, we use the standard bra-ket notation, writing $\braket{\xi}{\eta}$ to denote the usual inner product (and $\ketbra{\xi}{\eta}$ the outer product) on $\mathbb{C}^d$, and we write $\|\xi\|$ to denote $\sqrt{\braket{\xi}{\xi}}$ to denote the corresponding norm. 
To denote the trace of $a\in\matrices$, we write $\operatorname{Tr}_d(a)$, emphasizing the dimension $d$. 
We let $\states\subset \P_d$ denote the density matrices, i.e., $\states$ is the set of $\rho\in\P_d$ such that $\operatorname{Tr}_d(\rho) = 1$, and we let $\stateso$ denote the set of invertible density matrices.
Let $\one_d$ denote the identity matrix in $\matrices$.
We identify $\states$ with $\cstates{\matrices}$ via
\begin{equation}
    \rho\in\states 
    \quad\longleftrightarrow\quad 
    \Big(\hat{\rho}:\matrices\to\C,\,  a\mapsto \operatorname{Tr}_d(\rho a)\Big).
\end{equation}
Accordingly, given $a\in\matrices$ and $\rho\in\states$, we write $\hat{\rho}(a)$ to denote $\operatorname{Tr}_d(\rho a)$. 
Given $\rho\in\states$, we define 
\begin{equation}
    \inner{a}{b}_\rho
        :=
    \hatrho(a^*b),
\end{equation}
which, if $\rho$ is invertible, is a nondegenerate inner product on $\matrices$. 
We let $\|a\|_\rho:= \sqrt{\inner{a}{a}_\rho}$ be the corresponding norm. 
Given two finite-dimensional $\C$-vector spaces $V$ and $W$, we write $\operatorname{Lin}(V, W)$ to denote the set of $\C$-linear maps $V\to W$. 
We make this into a measurable space by giving it the Borel $\sigma$-algebra induced by any norm making $\operatorname{Lin}(V, W)$ into a Banach space. 
%
%
%
\subsection{Ergodic Quantum Processes and Random States}
Throughout this work, we shall take $(\Omega, \mathcal{F}, \Pr)$ to be a fixed probability space, where $\mathcal{F}$ is the $\sigma$-algebra and $\Pr$ the probability measure. 
Any statements about maps being measurable are always with respect to $\mathcal{F}$, and statements about a property holding almost surely are with respect to $\Pr$. 
For a $C^*$-algebra $\mathcal{A}$, we make $\cstates{\mathcal{A}}$ into a measurable space by endowing it with the Borel $\sigma$-algebra $\mathcal{B}(\operatorname{wk}^*)$ induced by the weak$^*$ topology on $\cstates{\mathcal{A}}$. 
We write $\mathcal{S}(\Omega, \mathcal{A})$ to denote the set of functions $\psi:\Omega\to\cstates{\mathcal{A}}$ that are measurable with respect to $\mathcal{B}(\operatorname{wk}^*)$, and we call an element $\psi$ of $\cstates{\Omega, \mathcal{A}}$ a random state (on $\mathcal{A}$).
We write $\rmatrices$ to denote the set of measurable maps $x:\Omega\to\matrices$, and, similarly, we write $\rstates$ (resp. $\rstateso$) to denote the set of measurable maps $\rho:\Omega\to\states$ (resp. $\rho:\Omega\to\stateso$).
One of the primary objects considered in this work is \textit{ergodic quantum processes}, whose definition we now recall.
\begin{definition}[Ergodic quantum process]
    An ergodic quantum process on $\matrices$ is a map 
    \begin{equation*}
        F:\Omega\times\matrices \to\Omega\times\matrices\,,
    \end{equation*}
    defined by $F(\omega, a) = (T(\omega), \phi_{T(\omega)}(a))$, where $T:\Omega\to\Omega$ is an invertible measure-preserving and ergodic map and $\phi:\Omega\to\ucps{\M_d}$ is measurable. 
    We write $(T, \phi)$ to denote an ergodic quantum process. 
\end{definition}
Ergodic quantum processes were first studied at this level of generality by Movassagh and Schenker in \cite{MovassaghSchenker}, but were studied earlier in special cases by various authors under the guise of repeated interactions (see, e.g., \cite{Bruneau2008RandomSystems}, where the authors consider the I.I.D. case).
\begin{rmk}
    In the literature, there are a number of different definitions of ergodic quantum processes. 
    For example, $\phi$ may be assumed to be trace preserving almost surely, or, more generally, $\phi$ may only be assumed to almost surely satisfy certain general positivity and faithfulness assumptions which are automatic for ucp (and trace preserving) maps; this general set up is taken in \cite{MovassaghSchenker}. 
\end{rmk}
Given an ergodic quantum process $(T, \phi)$ on $\matrices$ and $l, r\in\Z$ with $l\leq r$, we define $\phi^{[l, r]}:\Omega\to\ucps{\matrices}$ by 
\begin{equation}\label{eqn:Generic_EQP}
    \phi^{[l, r]}_\omega 
    :=
    \phi_{T^{l}(\omega)}
    \circ\cdots\circ 
    \phi_{T^{r}(\omega)},
\end{equation}
where the reader should recall we have assumed $T$ to be invertible. 
\subsection{Quantum Spin Chains}
In this work, we are concerned with quantum spin chains, for which we recall all the standard notations and definitions here. 
Consider the integer lattice $\Z$. 
For $\Lambda\subset\Z$, if $|\Lambda|<\infty$, we write $\Lambda\Subset\Z$.
Given $x\in\mathbb{Z}$, let $\A{\{x\}}$ denote $\MatricesLocal$, and, more generally, for $\Lambda\Subset\Z$, define $\A{\Lambda}$ by
\begin{equation}
    \A{\Lambda} 
        := 
    \bigotimes_{x\in \Lambda} \A{\{x\}} \cong (\MatricesLocal)^{\otimes |\Lambda|}.
\end{equation}
The algebra $\A{\Lambda}$ is called \emph{local algebra} at $\Lambda$. 
For any $\Lambda'\subset\Lambda''\Subset\Z$, we note that $\A{\Lambda'}$ embeds $*$-isometrically intro $\A{\Lambda''}$ by tensoring with $\one_{\Lambda'' \setminus \Lambda'}:= \bigotimes_{x\in \Lambda'' \setminus \Lambda'} \one_x$. 
When necessary, we denote this embedding by $\iota_{\Lambda'\to\Lambda''}$, but often we identify $\A{\Lambda'}$ with its image $\iota_{\Lambda'\to\Lambda''}(\A{\Lambda'})$. 
For any (possibly infinite) subset $\Xi\subset\Z$, $\{\A{\Lambda}\colon \Lambda \Subset \Xi\}$ forms a net of both $*$-algebras and $C^*$-algebras.
We let $\A{\Xi}^{\loc}$ denote the $*$-algebra containing all the local algebras of observables $\{\A{\Lambda}\colon \Lambda \Subset \Xi\}$\textit{ as $*$-algebras}, i.e., 
\begin{equation}
    \A{\Xi}^{\loc}
        =
    \bigcup_{\Lambda \Subset \Xi} \A{\Lambda}.
\end{equation}
The above union is a unital-$\ast$ algebra once the $\A{\Lambda}$ are identified with their inclusions into larger volumes so that $\A{\Xi}^{\loc}$ is an inductive limit in the category of $*$-algebras. 
This is called the \emph{local algebra}. 
For $\bs{a}\in\A{\Z}^{\loc}$, we write $\operatorname{supp}(\bs{a})\Subset\Z$ to denote the minimal $\Lambda\Subset\Z$ where $\bs{a}\in\A{\Lambda}$, and we write $\|\bs{a}\|$ to denote the operator norm of $\bs{a}$ as an element of $\A{\operatorname{supp}(\bs{a})}$. 
We write $\one$ to denote the unit of $\A{\Z}^{\loc}$. 
The operator-norm closure of this $*$-algebra is therefore a unital $C^*$-algebra, and we denote it by $\A{\Xi}$, which is called the \emph{quasi-local algebra}. 
In symbols,  
\begin{equation}
     \A{\Xi} 
        :=
    \overline{\A{\Xi}^{\loc}}^{\|\cdot \|}\,.
\end{equation} 
Phrased differently, $\A{\Xi}$ is the inductive limit of $\{\A{\Lambda}\colon \Lambda \Subset \Xi\}$ in the category of $C^*$-algebras \cite[Chapter 6]{Murphy}.
We call $\A{\Z}$ the \emph{quasilocal algebra}.
We implicitly identify $\A{\Xi}$ with the $C^*$-subalgebra of $\A{\Z}$ defined to be the image of the map $\A{\Xi}\ni a\mapsto a\otimes \one_{\Z\setminus\Xi}$, where $\one_{\Z\setminus\Xi}$ is the unit of $\A{\Z\setminus\Xi}$.
We let $\tau:\A{\Z}\to\A{\Z}$ denote the $*$-automorphism defined by 
\begin{equation}
    \tau(a_l\otimes \cdots \otimes a_r) = a_{l+1}\otimes \cdots \otimes a_{r+1}\,,
\end{equation}
whenever $l\leq r$, and we write $\tau_k$ to denote $\tau^k: \A{\Z}\to\A{\Z}$ for any $k\in\mathbb{Z}$. 
Thus, $(\tau_k)_{k\in\Z}$ is a group action of translation by $*$-automorphisms $\Z \acts{\tau} \A{\Z}$.
\begin{definition}[Ergodic state]\label{Def:TCV}
    We say that $\psi\in \mathcal{S}(\Omega, \mathcal{A}_{\mathbb{Z}})$ is an \textbf{ergodic state} if for all $k\in\mathbb{Z}$,
    \begin{equation}
        \psi_\omega\circ\tau_k 
        =
        \psi_{T^k(\omega)}\,,
    \end{equation}
    holds for almost every $\omega\in\Omega$. 
    We write $\TCV$ to denote the set of all ergodic states on $\mathcal{A}_\mathbb{Z}$.
\end{definition}
Let $[\LocalDim]$ denote the set $\set{1, \dots, \LocalDim}$, and for $\Lambda\Subset\Z$, let $[\LocalDim]^\Lambda$ be the set of multi-indices, i.e., 
\begin{equation*}
    [\LocalDim]^\Lambda
        =
    \set{
    (j_\lambda)_{\lambda\in\Lambda}
        \,\,:\,\,
    j_\lambda\in[\LocalDim]
    }.
\end{equation*}
Endow $\C^\LocalDim$ with an orthonormal basis $\{\ket{e_j}\}_{j=1}^\LocalDim$, and given $\Lambda\Subset\Z$ and $\sigma = (j_\lambda)_{\lambda\in\Lambda}\in[\LocalDim]^{\Lambda}$, let 
\begin{equation*}
    \ket{e_\sigma}
        =
    \bigotimes_{\lambda\in\Lambda}\ket{e_{j_\lambda}}.
\end{equation*}
Write $\Hilb{\Lambda}$ to denote the vector space with basis $\set{\ket{e_\sigma}}_{\sigma\in[\LocalDim]^\Lambda}$. 
For any $\Lambda\Subset\Z$ and $k\in\Z$, by a slight abuse of notation we write $\tau_k$ to denote the isomorphism in $\operatorname{Lin}(\Hilb{\Lambda}, \Hilb{\Lambda + k})$ defined by 
\begin{equation}\label{Def:tau_for_local_Hilbert_spaces}
    \begin{split}
        \tau_k:\Hilb{\Lambda} &\to \Hilb{\Lambda + k}\,,\\
        \ket{e_\sigma}
        &\mapsto 
        \ket{e_{\tau_k(\sigma)}},
    \end{split}
\end{equation}
where $\tau_k(\sigma) = (j_{\lambda + k})\in[\LocalDim]^{\Lambda + k}$ whenever $\sigma = (j_\lambda)_{\lambda\in\Lambda}$. 
Recall Takeda's theorem \cite{Takeda}, which relates finite-volume states to their bulk counterparts.

\begin{thm}[Takeda]\label{Thm:Takeda}
    Let $\A{\Z}$ be the quasilocal algebra.
    Suppose $\phi_{\Lambda}\in \cstates{\A{\Lambda}}$ is a family of states satisfying the following compatibility condition: whenever $\Lambda_0 \subset \Lambda \Subset \mathbb{Z}$ one has
        \begin{equation}
            \phi_{\Lambda} \circ \iota_{\Lambda_0 \to \Lambda} = \phi_{\Lambda_0}\,.
        \end{equation}
    Then, there is a uniquely defined state $\psi\in \cstates{\A{\Z}}$ so that $\psi|_{\A{\Lambda}} = \phi_{\Lambda}$ for all $\Lambda \Subset \mathbb{Z}$. Conversely, every bulk state is determined by its finite volume restrictions. 
\end{thm}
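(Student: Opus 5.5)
The plan is to construct $\psi$ first on the local algebra, extend it by density and continuity, and then settle uniqueness and the converse. To begin, fix $\Lambda\Subset\Z$ and $\bs{a}\in\A{\Lambda}\subset\A{\Z}^{\loc}$. We set
\begin{equation*}
\psi(\bs{a}) := \phi_{\Lambda}(\bs{a}).
\end{equation*}
This must be shown well defined. Suppose $\bs{a}$ also lies in $\A{\Lambda'}$ under the identifications. Put $\Lambda''=\Lambda\cup\Lambda'$. The element $\bs{a}$ is the image of the same operator under $\iota_{\Lambda\to\Lambda''}$ and under $\iota_{\Lambda'\to\Lambda''}$. The compatibility condition then gives
\begin{equation*}
\phi_{\Lambda}(\bs{a})=\phi_{\Lambda''}(\iota_{\Lambda\to\Lambda''}\bs{a})=\phi_{\Lambda''}(\iota_{\Lambda'\to\Lambda''}\bs{a})=\phi_{\Lambda'}(\bs{a}).
\end{equation*}
Hence $\psi$ is a well-defined function on $\A{\Z}^{\loc}$.

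Linearity of $\psi$ follows because any two local elements lie in a common $\A{\Lambda}$, on which $\psi=\phi_\Lambda$ is linear. Next, $\psi(\one)=1$. Positivity holds on local elements, since $\psi(\bs{a}^*\bs{a})=\phi_\Lambda(\bs{a}^*\bs{a})\geq 0$ for $\bs{a}\in\A{\Lambda}$. Boundedness comes from each $\phi_\Lambda$ being a state, which gives $|\psi(\bs{a})|\le\|\bs{a}\|$. The embeddings are $*$-isometric, so this norm is the one inherited by $\A{\Z}$.

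Since $\A{\Z}^{\loc}$ is norm dense in $\A{\Z}$, $\psi$ extends uniquely to a bounded linear functional on $\A{\Z}$ of norm at most one with $\psi(\one)=1$. Positivity passes to the extension. Any $x\in\A{\Z}$ is a norm limit of local elements $\bs{a}_n$, and then $\bs{a}_n^*\bs{a}_n\to x^*x$. Continuity of $\psi$ therefore gives $\psi(x^*x)=\lim_n\psi(\bs{a}_n^*\bs{a}_n)\ge 0$. Alternatively, one can use that a unital functional of norm one is positive. Thus $\psi\in\cstates{\A{\Z}}$, and by construction $\psi|_{\A{\Lambda}}=\phi_\Lambda$.

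For uniqueness, suppose two states $\psi_1,\psi_2$ have the same restrictions to every $\A{\Lambda}$. Then they agree on $\A{\Z}^{\loc}$. Both are continuous, since states on a $C^*$-algebra are bounded with $\|\psi\|=\psi(\one)$. Density of $\A{\Z}^{\loc}$ then gives $\psi_1=\psi_2$. The same argument proves the converse: every state $\psi$ on $\A{\Z}$ produces a compatible family $\phi_\Lambda=\psi|_{\A{\Lambda}}$ and is determined by it.

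The only step needing care is well-definedness under the identifications of $\A{\Lambda'}$ inside $\A{\Lambda''}$. This reduces to applying compatibility through the common superset $\Lambda\cup\Lambda'$, so I do not expect a genuine obstacle.
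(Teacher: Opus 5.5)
The paper gives no proof of this statement; it only recalls it as Takeda's theorem, so there is no argument in the text to compare yours with. Your proof is the standard one and it is correct. The pieces are:
\begin{itemize}
\item well-definedness on $\A{\Z}^{\loc}$ by applying the compatibility condition through the common superset $\Lambda\cup\Lambda'$;
\item the contraction bound $|\phi_\Lambda(\bs{a})|\le\|\bs{a}\|$, which is consistent across volumes because the embeddings are isometric;
\item extension by density, with positivity passing to the limit via $\bs{a}_n^*\bs{a}_n\to x^*x$;
\item uniqueness and the converse, from density together with the automatic continuity $\|\psi\|=\psi(\one)$ of states.
\end{itemize}
Compared with citing Takeda's general result on inductive limits of $C^*$-algebras, your argument is self-contained. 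It uses only what this concrete setting provides: unital injective (hence isometric) connecting $*$-homomorphisms, norm density of $\A{\Z}^{\loc}$, and the norm of a state equalling its value at $\one$. That is all the paper needs wherever it invokes the theorem.
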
 
\subsection{Interactions, Dynamics, and Ground States}\label{Sec:LRB}
In this section, we recall the standard formalism for studying interacting quantum spin systems \cite{BratteliRobinsonI, BratteliRobinsonII}.
Let $\mathscr{P}_0(\Z)$ denote the finite subsets of $\Z$, and to denote $\Lambda\in\mathscr{P}_0(\Z)$, we write $\Lambda\Subset\Z$. 
\begin{definition}[Interactions and local dynamics]
    An interaction is a function $\Phi: \mathscr{P}_0(\Z) \to \A{\Z}^{\loc}$ such that $\Phi(\Lambda) = \Phi(\Lambda)^*$ for all $\Lambda\Subset\Z$. 
    We say $\Phi$ is a \emph{positive interaction} if $\Phi(\Lambda)\geq 0$ for all $\Lambda\Subset\Z$. 
    For $\Lambda\Subset\Z$, we define $H^{\Phi, \Lambda}$ by
    \begin{equation}
        H^{\Phi, \Lambda}
            :=
        \sum_{Z\subset\Lambda}
        \Phi(Z),
    \end{equation}
    and we call $H^{\Phi, \Lambda}$ the total Hamiltonian in $\Lambda$. 
    We call the one-parameter group of $*$-automorphisms $(\alpha^{\Phi, \Lambda}_t)_{t\in\mathbb{R}}$ defined by
    \begin{equation*}
        \begin{split}
            \alpha_t^{\Phi, \Lambda}:\A{\operatorname{supp}(H^{\Phi, \Lambda})}&\to \A{\operatorname{supp}(H^{\Phi, \Lambda})}\\
            \alpha^{\Phi, \Lambda}_t(\bs{a})
                &:=
             e^{itH^{\Phi, \Lambda}}\,\bs{a}\, e^{-itH^{\Phi, \Lambda}}
        \end{split}
    \end{equation*}
    the \emph{local dynamics generated by} $\Phi$ in $\Lambda$. 
    We let $\delta^{\Phi, \Lambda}:\A{\Z}\to\A{\Z}$ be the derivation defined by 
    \begin{equation*}
        \delta^{\Phi, \Lambda}(\bs{a})
            :=
            i[H^{\Phi, \Lambda}, \bs{a}].
    \end{equation*}
    Notice that $\alpha^{\Phi, \Lambda}_t = \exp(t\delta^{\Phi, \Lambda})$, i.e., $\delta^{\Phi, \Lambda}$ is the infinitesimal generator for $(\alpha^{\Phi, \Lambda}_t)_{t\in\mathbb{R}}$ as a strongly continuous one-parameter group \cite{EngelNagel}. 
\end{definition}
We will be interested in analyzing the thermodynamic limit as $\Lambda \uparrow \Z$. 
The typical method to do this in the literature is to use Lieb-Robinson bounds to show that the limit $\lim_{\Lambda\uparrow\Z}\alpha^{\Phi, \Lambda}_t$ exists in the strong operator topology uniformly for all $t$ in compact subsets of $\mathbb{R}$ \cite{NachtergaeleOgataSims, NachtergaeleSimsYoung}.
In our investigation below, however, we encounter interactions which are only \emph{locally} finite ranged with non-decaying norm and thus not subject to the usual requirement for a Lieb-Robinson bound to hold \cite{NachtergaeleOgataSims}. 
To be precise, we say that an interaction $\Phi$ is \emph{locally finite ranged} if for every $x\in \Z$ the quantity
\[
    n_\Phi(x) = \#\{Z\Subset \Z \colon x\in Z \text{ and } \Phi(Z) \neq 0\}
\] 
is finite. 
If, in addition, $\sup_x n_\Phi(x) < \infty$, then $\Phi$ is called finite range.
Given a locally finite ranged interaction $\Phi$, we define the derivation $\delta^\Phi:\A{\Z}^{\loc}\to \A{\Z}^{\loc}$ by
\begin{equation}\label{eqn:canonical_derivation}
    \delta^\Phi(\bs{a}) := \sum_{\substack{\Lambda\Subset \Z}} i[\Phi(Z), \bs{a}], 
\end{equation} 
where we have noted that since $\operatorname{supp}(\bs{a})\Subset\Z$ and $\Phi$ is locally finite range, the above sum is always a finite sum for any $\bs{a}\in\A{\Z}^{\loc}$. 
The following proposition from \cite{BratteliRobinsonII} gives a condition for which the limit $\lim_{\Lambda\uparrow\Z}\alpha^{\Phi, \Lambda}_t$ exists as described above. 
\begin{prop}[\texorpdfstring{\cite[Theorem 6.2.6]{BratteliRobinsonII}}{l}]
\label{Prop:Dynamics_from_loc_fin_interactions}
    Let $\Phi$ be a locally finite range interaction. 
    If there is an increasing and absorbing sequence $(\Lambda_n)_{n\in\N}$ and $M > 0$ such that 
    \begin{equation}
    \label{Eqn:Surface_energy}
        \left\|
            \sum_{
            \substack{
                \Lambda\cap\Lambda_n\neq\emptyset\\
                \Lambda\cap\Lambda_{n}^c\neq\emptyset
                }
            }
            \Phi(\Lambda)
        \right\|
        \leq M,
    \end{equation} 
    then $\delta^\Phi$ as in \eqref{eqn:canonical_derivation} is a closable unbounded operator on $\A{\Z}$, and the closure of $\overline\delta^\Phi$ generates a strongly continuous group $(\alpha^\Phi_t)_{t\in\mathbb{R}}$ of $*$-automorphisms of $\A{\Z}$. 
    Moreover, 
    \begin{equation*}
       \lim_{n\to\infty}
       \left\| 
            \alpha_t^\Phi(\bs{a})
            -
            \alpha_t^{\Phi, \Lambda_n}(\bs{a})
       \right\|
       =
       0
    \end{equation*}
    holds for all $\bs{a}\in\A{\Z}$ uniformly for $t$ in compact intervals. 
\end{prop}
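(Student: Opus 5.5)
The plan follows the classical route of \cite[Theorem 6.2.6]{BratteliRobinsonII}: build the dynamics first on the local algebra from the finite-volume dynamics, and use the surface-energy bound \eqref{Eqn:Surface_energy} to control how these finite-volume dynamics differ as the volume grows.

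\textbf{Step 1: Cauchy property of the finite-volume dynamics.} Write $H_n := H^{\Phi,\Lambda_n}$. Since $\Phi$ is locally finite ranged, for a fixed local $\bs{a}$ the set of $Z$ with $Z\cap\operatorname{supp}(\bs{a})\neq\emptyset$ and $\Phi(Z)\neq 0$ is finite. Hence $\delta^\Phi(\bs{a})$ is a finite sum, and for $n$ large it coincides with $\delta^{\Phi,\Lambda_n}(\bs{a})$. For $m>n$, decompose
\[
H_m = H_n + W_{n,m} + B_{n,m},
\]
where $W_{n,m}$ collects the terms supported in $\Lambda_m\setminus\Lambda_n$, and $B_{n,m}$ collects the terms meeting both $\Lambda_n$ and $\Lambda_n^c$. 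Because $W_{n,m}$ commutes with $\A{\Lambda_n}$, Duhamel's formula gives, for $\bs{a}\in\A{\Lambda_n}$,
\[
\alpha^{\Phi,\Lambda_m}_t(\bs{a})-\alpha^{\Phi,\Lambda_n}_t(\bs{a})=\int_0^t \alpha^{\Phi,\Lambda_m}_{s}\big(i[\,\cdots,\alpha^{\Phi,\Lambda_n}_{t-s}(\bs{a})]\big)\,ds .
\]
The commutator inside is with $B_{n,m}$ together with the part of $W_{n,m}$ that fails to commute with $\alpha^{\Phi,\Lambda_n}_{t-s}(\bs a)$. That second part vanishes, because $\alpha^{\Phi,\Lambda_n}_{t-s}(\bs a)$ stays in $\A{\Lambda_n}$. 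The only terms in $B_{n,m}$ not supported inside $\Lambda_m$ are those already excluded, so $\|B_{n,m}\|\le M$ by \eqref{Eqn:Surface_energy}.

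\textbf{Step 2: obtaining actual smallness.} The uniform bound $M$ alone does not make the difference small, and this is the main obstacle. The plan is to use instead the analytic-vector approach of Bratteli–Robinson. By local finiteness, $(\delta^\Phi)^k(\bs{a})$ involves terms whose supports grow, and the surface bound is used to estimate $\|(\delta^{\Phi,\Lambda_n})^k \bs{a}\|$. More precisely, one shows $\delta^{\Phi,\Lambda_n}$ is a bounded perturbation, of norm at most $2M$, of a derivation that leaves $\A{\Lambda_n}$ and $\A{\Lambda_n^c}$ invariant.

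Concretely, set $\delta_0 := \delta^{\Phi,\Lambda_n}-i[B_n,\cdot]$ on $\A{\Lambda_n}$, where $B_n$ is the boundary term. Then $\delta^\Phi$ restricted to $\A{\Lambda_n}$ equals $\delta^{\Phi,\Lambda_n}$ plus $i[B_n',\cdot]$ with $\|B_n'\|\le M$. This shows that $\delta^\Phi$ is dissipative in both directions: it is a limit of generators of $*$-automorphism groups, each perturbed by uniformly bounded inner derivations.

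\textbf{Step 3: generation and convergence.} Closability of $\delta^\Phi$ follows because it is a symmetric derivation defined on a dense $*$-subalgebra. To get the group, the plan is to verify the range condition: $(\one\pm\lambda\delta^\Phi)(\A{\Z}^{\loc})$ is dense for small $\lambda$. This is done by solving the equation $(\one-\lambda\delta^{\Phi,\Lambda_n})x_n=\bs{a}$ in $\A{\Lambda_n}$ and checking that $x_n$ converges. The convergence check reuses the Duhamel identity of Step 1 together with the Laplace transform $\int e^{-t/\lambda}\alpha_t\,dt$, where the uniform bound $M$ controls the boundary contributions.

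The Lumer–Phillips theorem then gives a contraction group $\alpha^\Phi_t$, which consists of $*$-automorphisms because $\delta^\Phi$ is a $*$-derivation. Finally, the Trotter–Kato approximation theorem (strong resolvent convergence of $\delta^{\Phi,\Lambda_n}$ to $\overline{\delta}^\Phi$ on the core $\A{\Z}^{\loc}$) yields $\alpha^{\Phi,\Lambda_n}_t\to\alpha^\Phi_t$ strongly, uniformly for $t$ in compact intervals. I expect establishing the range condition in Step 3 to be the delicate point.
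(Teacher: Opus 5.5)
The endgame of your outline is sound: dissipativity, Lumer--Phillips, $*$-automorphisms, and Trotter--Kato on the core $\A{\Z}^{\operatorname{loc}}$, using that $\delta^{\Phi,\Lambda_n}\bs a=\delta^\Phi\bs a$ for $n$ large. The paper itself gives no proof and only cites Bratteli--Robinson. However, the step that carries all the content, the range condition, has a genuine gap, and it is the obstruction you already identified in Step~1.

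You propose to show that $x_n:=(\one-\lambda\delta^{\Phi,\Lambda_n})^{-1}\bs a$ converges, with boundary contributions ``controlled by $M$''. Write $W_k$ for the sum in \eqref{Eqn:Surface_energy} for $\Lambda_k$. For $\bs a\in\A{\Lambda_m}$ and $m\le n$, one has $\delta^{\Phi,\Lambda_n}-\delta^{\Phi,\Lambda_m}=i[W_m-W_n,\cdot\,]$ on $\A{\Lambda_m}$. The resolvent identity then gives only $\|x_n-x_m\|\le 4|\lambda|M\|\bs a\|$, which is bounded but not small. Making it small would require $x_m$ to nearly commute with the boundary terms of $\Lambda_m$, i.e.\ a Lieb--Robinson-type localization. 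That is exactly what is unavailable for interactions with unbounded range and norm-one terms. Convergence of $x_n$ is just the resolvent form of the convergence of local dynamics you are trying to prove; it comes out at the end via Trotter--Kato and cannot be the means of verifying the range condition. The analytic-vector suggestion in Step~2 fails for a similar reason: nothing in \eqref{Eqn:Surface_energy} gives factorial bounds on $\|(\delta^\Phi)^k\bs a\|$.

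The missing idea is that an approximate solution with relative error below $1$ suffices. Take $n$ with $\bs a\in\A{\Lambda_n}$. By local finiteness $W_n$ is a finite sum, and $\delta^\Phi=\delta^{\Phi,\Lambda_n}+i[W_n,\cdot\,]$ on $\A{\Lambda_n}$. Also $x_n\in\A{\Lambda_n}$ with $\|x_n\|\le\|\bs a\|$. Hence
\[
(\one-\lambda\delta^\Phi)x_n-\bs a=-i\lambda[W_n,x_n],\qquad \big\|(\one-\lambda\delta^\Phi)x_n-\bs a\big\|\le 2|\lambda|M\|\bs a\|.
\]
For real $0<|\lambda|<1/(2M)$, the residual is again local and smaller by the factor $c=2|\lambda|M<1$. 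Iterating gives local $x^{(k)}$ with $\|x^{(k)}\|\le c^k\|\bs a\|$ and $\|\delta^\Phi x^{(k)}\|\le 3c^k\|\bs a\|/|\lambda|$. So $x=\sum_k x^{(k)}$ converges in graph norm and satisfies $(\one-\lambda\overline\delta^\Phi)x=\bs a$. Thus the range of $\one-\lambda\overline\delta^\Phi$ contains the local algebra, and it is closed by dissipativity, so it is all of $\A{\Z}$ for both signs of $\lambda$. Your Lumer--Phillips and Trotter--Kato steps then go through.

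Two smaller corrections:
\begin{itemize}
\item Closability does not follow merely from $\delta^\Phi$ being a symmetric derivation on a dense $*$-subalgebra. It does follow from the dissipativity you note, since on $\A{\Lambda_n}$, $\delta^\Phi$ coincides with the bounded $*$-derivation $i[H^{\Phi,\Lambda_n}+W_n,\cdot\,]$.
\item In Step~1, the bound $\|B_{n,m}\|\le M$ for a partial sum of boundary terms is not implied by \eqref{Eqn:Surface_energy} unless $\Phi$ is positive.
\end{itemize}
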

We say a locally finite ranged interaction has \emph{bounded surface energy} if (\ref{Eqn:Surface_energy}) holds for some $M>0$ and some increasing and absorbing sequence $(\Lambda_n)_{n\in\N}$. Given the existence of dynamics in this way, our main focus are the \textit{ground states}. 
\begin{definition}[Ground states and frustration-freedom]
\label{Def:Ground_state_of_interaction}
    Let $\Phi$ be a locally finite ranged interaction. 
    A state $\varphi\in\cstates{\A{\Z}}$ is called a \textit{ground state} if 
    \begin{equation}\label{eqn:ground_state}
        -i
            \varphi(\bs{a}^*\delta^\Phi(\bs{a}))\geq 0
            \quad 
            \text{for all $\bs{a}\in\A{\Z}^{\loc}$.}
    \end{equation}
    Given a ground state $\varphi$, we say $\Phi$ has spectral gap $\gamma>0$ above $\varphi$ if 
     \begin{equation}\label{eqn:spectral_gap}
        -i\varphi\big( 
            \bs{a}^* \delta^\Phi(\bs{a})
        \big)
        \geq 
        \gamma \varphi(\bs{a}^*\bs{a})
    \end{equation}
    holds for all $\bs{a}\in\A{\Z}^{\loc}$ with $\varphi(\bs{a}) = 0$. 
    A state $\varphi\in\cstates{\A{\Z}}$ is called \textit{frustration-free} if $\varphi(\Phi(\Lambda)) = 0$ for all $\Lambda\Subset\Z$. 
\end{definition}
If $\Phi$ is a positive interaction, it is straightforward to show that frustration-freeness of $\varphi\in\cstates{\A{\Z}}$ implies $\varphi$ is a ground state. 
One can also check that $\varphi\in\cstates{\A{\Z}}$ is a ground state if and only if $\alpha^{\Phi}_t\circ\varphi = \varphi$ for all $t\in\mathbb{R}$:
this equivalence is originally due \cite[Theorem 2]{PowersSakai} (see also \cite[Theorem 6.2.52]{BratteliRobinsonII}).
In this work, will be concerned with the analysis of local spectral gaps for our locally finite ranged interaction which we construct in Theorem~\ref{Thm:Parent_Hamiltonian}. 
One of our aims is to find sufficient conditions for a bulk spectral gap to be open despite the disorder. 
We record the so-called covariant GNS representation of a one-parameter group of $\ast$-automorphisms which have a ground state $\varphi$. 
\begin{prop}
\label{prop:CovariantGNS}
    Let $(\alpha_t)_{t\in \mathbb{R}}$ be a strongly-continuous one-parameter group of $\ast$-automorphisms on $\A{\Z}$ generated by a densely defined closable derivation $\delta$.
    Suppose $\varphi\in\cstates{\A{\Z}}$ is such that $\alpha_t\circ\varphi = \varphi$ for all $t\in\mathbb{R}$, and
    let $(\pi_\varphi, \mathcal{K}_\varphi, \Xi_\varphi)$ be the GNS representation of $\A{\Z}$ with respect to $\varphi$. 
    Then 
    \begin{equation}
        U_t\, \pi_\varphi(\bs{a})\,\Xi_\varphi := \pi_\varphi (\alpha_t(\bs{a}))\, \Xi_\varphi \quad \text{for all $\bs{a}\in\A{\Z}$}
    \end{equation} 
    defines a strongly continuous one-parameter group of unitary operators on $\mathcal{K}_\varphi$. In particular, Stone's theorem implies there is a unique unbounded self-adjoint operator $H^\delta$ with domain $\pi_\varphi(\A{\Z})\Xi_\varphi$ which generates $(U_t)_{t\in \mathbb{R}}$ in the sense that $U_t = e^{-itH^\delta}$ for all $t\in \mathbb{R}$. 
\end{prop}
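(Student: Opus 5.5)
The plan is the standard covariant GNS construction, using invariance of $\varphi$ and the automorphism property of each $\alpha_t$. First, $U_t$ is well defined and isometric on the dense subspace $\pi_\varphi(\A{\Z})\Xi_\varphi$. For $\bs{a},\bs{b}\in\A{\Z}$ we have
\begin{equation*}
\inner{\pi_\varphi(\alpha_t(\bs{a}))\Xi_\varphi}{\pi_\varphi(\alpha_t(\bs{b}))\Xi_\varphi}
= \varphi(\alpha_t(\bs{a}^*\bs{b}))
= \varphi(\bs{a}^*\bs{b}),
\end{equation*}
using $\varphi\circ\alpha_t=\varphi$, which is the meaning of $\alpha_t\circ\varphi=\varphi$. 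Hence if $\pi_\varphi(\bs{a})\Xi_\varphi=\pi_\varphi(\bs{b})\Xi_\varphi$, then $\bs{a}-\bs{b}\in N_\varphi$. Since $\alpha_t$ preserves $N_\varphi$, also $\alpha_t(\bs{a}-\bs{b})\in N_\varphi$. So $U_t$ is well defined and isometric, and it extends by continuity to an isometry on $\mathcal{K}_\varphi$. Its range contains $\pi_\varphi(\alpha_t(\A{\Z}))\Xi_\varphi=\pi_\varphi(\A{\Z})\Xi_\varphi$, because $\alpha_t$ is surjective. The range is therefore dense and closed, so $U_t$ is unitary.

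Next comes the group law. The identity $U_sU_t=U_{s+t}$ and $U_0=I$ follow on the dense set from $\alpha_s\alpha_t=\alpha_{s+t}$, and then on all of $\mathcal{K}_\varphi$ by continuity. For strong continuity, it suffices, since the $U_t$ are uniformly bounded, to check it on the dense set. There we have
\begin{equation*}
\|U_t\pi_\varphi(\bs{a})\Xi_\varphi-\pi_\varphi(\bs{a})\Xi_\varphi\|
\le \|\alpha_t(\bs{a})-\bs{a}\|,
\end{equation*}
and the right-hand side tends to $0$ by strong continuity of $(\alpha_t)_{t\in\mathbb{R}}$.

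Stone's theorem then gives a unique self-adjoint $H^\delta$ with $U_t=e^{-itH^\delta}$. The only delicate point is the domain claim. The literal statement should be read as saying that $H^\delta$ is the closure of its restriction to $\pi_\varphi(\dom\overline{\delta})\Xi_\varphi$, where it acts by $H^\delta\pi_\varphi(\bs{a})\Xi_\varphi=i\pi_\varphi(\overline\delta(\bs{a}))\Xi_\varphi$. To establish this, take $\bs{a}\in\dom(\overline\delta)$. Then $t\mapsto\alpha_t(\bs{a})$ is norm-differentiable, so the vector is in the domain of the generator and the formula holds. The subspace $\pi_\varphi(\dom\overline\delta)\Xi_\varphi$ is dense, because $\dom\overline\delta$ is dense in $\A{\Z}$ in norm. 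It is also $U_t$-invariant, because $\alpha_t$ preserves the domain of its generator. Nelson's core theorem then makes it a core for $H^\delta$. This is the main obstacle; everything else is routine.
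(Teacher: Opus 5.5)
The paper states this proposition as a standard fact and gives no proof, so there is no competing argument to compare against. Your proof is the usual covariant GNS construction, and it is correct. The individual steps all hold:
\begin{itemize}
\item well-definedness and the isometry property follow from $\varphi\circ\alpha_t=\varphi$;
\item unitarity follows from surjectivity of $\alpha_t$;
\item strong continuity follows from $\|\pi_\varphi(\bs{x})\Xi_\varphi\|\le\|\bs{x}\|$;
\item the formula $H^\delta\pi_\varphi(\bs a)\Xi_\varphi=i\pi_\varphi(\overline\delta(\bs a))\Xi_\varphi$ holds on $\pi_\varphi(\dom\overline\delta)\Xi_\varphi$, and that subspace is dense and $U_t$-invariant, hence a core.
\end{itemize}

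You were right not to prove the domain clause literally, because as written it is false in general. If $\varphi$ is pure (the relevant case here, by Corollary~\ref{Cor:Pure_state}), Kadison's transitivity theorem gives $\pi_\varphi(\A{\Z})\Xi_\varphi=\mathcal{K}_\varphi$. A self-adjoint operator defined on all of $\mathcal{K}_\varphi$ is bounded by Hellinger--Toeplitz.

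Your core formulation is the correct replacement, and it is also what the proof of Proposition~\ref{Prop:Koma_Nachtergaele} actually needs. Since $\dom\delta$ is by definition a core for $\overline\delta$, graph-norm approximation shows that $\pi_\varphi(\dom\delta)\Xi_\varphi$ is already a core for $H^\delta$. For $\delta=\delta^\Phi$ these are the vectors $\pi_\varphi(\bs a)\Xi_\varphi$ with $\bs a\in\A{\Z}^{\loc}$. This yields the density of $\set{H^\Phi\pi_\varphi(\bs a)\Xi_\varphi\,:\,\bs a\in\A{\Z}^{\loc}}$ in $(\ker H^\Phi)^\perp$ that is used there.

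One minor point concerns conventions, not your argument. With $U_t=e^{-itH^\delta}$ as stated, your formula gives $\inner{\pi_\varphi(\bs a)\Xi_\varphi}{H^\delta\pi_\varphi(\bs a)\Xi_\varphi}=i\varphi(\bs a^*\delta(\bs a))\le 0$ for a ground state in the sense of Definition~\ref{Def:Ground_state_of_interaction}. So the positivity $H^\Phi\ge 0$ used later requires the convention $U_t=e^{itH^\delta}$ instead.
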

In accordance with this proposition, given a locally finite range interaction $\Phi$ with bounded surface energy and ground state $\varphi$, we write $H^{\Phi}$ to denote the self-adjoint operator corresponding to $\delta^\Phi$ in the above proposition and call $H^\Phi$ the GNS Hamiltonian corresponding to $\varphi$.  
A key tool in our analysis is the following straightforward generalization of Koma-Nachtergaele \cite{KomaNachtergaele}, whose proof follows exactly as in \cite[Proof of Theorem 3.1.2]{Young_thesis}.
\begin{prop}[Koma-Nachtergaele Inequality]
\label{Prop:Koma_Nachtergaele}
    Suppose $\Phi$ is a positive locally finite ranged interaction with bounded surface energy, let $\varphi$ be a frustration-free ground state, and let $H^\Phi$ be the corresponding GNS Hamiltonian. 
    Then
    \begin{equation*}
        \operatorname{spec-gap}(H^{\Phi})
            \geq 
        \limsup_{n\geq 1}\,
        \operatorname{spec-gap}(H^{\Lambda_n, \Phi})
    \end{equation*}
    for any increasing and absorbing sequence $\Lambda_n\Subset\Z$. 
\end{prop}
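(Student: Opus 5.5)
We follow the argument of \cite[Proof of Theorem 3.1.2]{Young_thesis} and of Koma--Nachtergaele \cite{KomaNachtergaele}. The idea is to approximate vectors in the GNS space by local excitations, and to compare the GNS Hamiltonian with the finite-volume Hamiltonians $H^{\Lambda_n,\Phi}$. Fix $\gamma$ with $0<\gamma<\limsup_n \operatorname{spec-gap}(H^{\Lambda_n,\Phi})$; if no such $\gamma$ exists there is nothing to prove. Pass to a subsequence $n_j$ along which $\operatorname{spec-gap}(H^{\Lambda_{n_j},\Phi})\ge\gamma$. The target is
\[
\inner{\pi_\varphi(\bs{a})\Xi_\varphi}{H^\Phi\pi_\varphi(\bs{a})\Xi_\varphi}\ge \gamma\,\|\pi_\varphi(\bs{a})\Xi_\varphi\|^2
\]
for all $\bs{a}\in\A{\Z}^{\loc}$ with $\varphi(\bs{a})=0$. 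By Proposition \ref{prop:CovariantGNS}, $\pi_\varphi(\A{\Z}^{\loc})\Xi_\varphi$ is a core-type dense set. Since $\Xi_\varphi$ spans $\ker H^\Phi$, the target then yields $\operatorname{spec}(H^\Phi)\cap(0,\gamma)=\emptyset$.

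\textbf{Step 1: identify both sides locally.} Frustration-freeness gives $\varphi(\Phi(Z))=0$. Positivity of $\Phi(Z)$ then gives $\pi_\varphi(\Phi(Z))\Xi_\varphi=0$, via Cauchy--Schwarz applied to $\Phi(Z)^{1/2}$. Hence
\[
\inner{\pi_\varphi(\bs{a})\Xi_\varphi}{H^\Phi \pi_\varphi(\bs{a})\Xi_\varphi}=-i\varphi(\bs{a}^*\delta^\Phi(\bs{a}))=\sum_Z\varphi(\bs{a}^*\Phi(Z)\bs{a}).
\]
This sum is finite by local finiteness. For $n$ large, every $Z$ meeting $\operatorname{supp}(\bs{a})$ with $\Phi(Z)\neq0$ lies in $\Lambda_n$. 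So the expression is at least $\varphi(\bs{a}^*H^{\Lambda_n,\Phi}\bs{a})$ minus nothing; in fact it is bounded below by the latter, using positivity of the omitted terms $\varphi(\bs{a}^*\Phi(Z)\bs{a})$, which vanish because $[\Phi(Z),\bs{a}]=0$ for those $Z$.

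\textbf{Step 2: local gap.} Restrict $\varphi$ to the finite-dimensional algebra $\A{\Lambda_n}$, with density matrix $\varrho_n$. Let $G_n=\ker H^{\Lambda_n,\Phi}$ and let $P_n$ be the projection onto it. Frustration-freeness forces $\operatorname{ran}\varrho_n\subset G_n$. Write $\varphi(\bs{a}^*H\bs{a})=\operatorname{Tr}(\varrho_n\bs{a}^*H\bs{a})$. The spectral gap gives $H\ge\gamma(\one-P_n)$, hence
\[
\varphi(\bs{a}^*H^{\Lambda_n,\Phi}\bs{a})\ge \gamma\big(\varphi(\bs{a}^*\bs{a})-\varphi(\bs{a}^*P_n\bs{a})\big).
\]

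\textbf{Step 3: control of $\varphi(\bs{a}^*P_n\bs{a})$ (the main obstacle).} We must show this term tends to $|\varphi(\bs{a})|^2=0$ along the subsequence. Here we use the Koma--Nachtergaele observation that the gaps, together with frustration-freeness, force the ground-state projections $P_n$ to ``converge'' to $\ketbra{\Xi_\varphi}{\Xi_\varphi}$. Concretely, $\pi_\varphi(P_n)$ is a decreasing family of projections, since $G_{n+1}\subset G_n\otimes\Hilb{\Lambda_{n+1}\setminus\Lambda_n}$ by frustration-freeness and monotonicity of the Hamiltonians. It therefore converges strongly to a projection $P_\infty$ fixing $\Xi_\varphi$. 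Any vector in $\operatorname{ran}P_\infty$ is annihilated by every $\pi_\varphi(\Phi(Z))$, and so lies in $\ker H^\Phi$. This requires passing from local to GNS kernels using bounded surface energy, through Proposition \ref{Prop:Dynamics_from_loc_fin_interactions}, to ensure that $\alpha^\Phi_t$ acts trivially on such vectors. We then need $\ker H^\Phi=\C\Xi_\varphi$, or at least that the vectors $P_\infty\pi_\varphi(\bs{a})\Xi_\varphi$ are multiples of $\Xi_\varphi$. Following \cite{Young_thesis}, we avoid needing uniqueness by arguing directly on the spectral subspace: apply the inequality to $\bs{a}$ replaced by approximants of $\mathbf{1}_{(0,\gamma)}(H^\Phi)\pi_\varphi(\bs{a})\Xi_\varphi$. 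These approximants are orthogonal to $\ker H^\Phi\supset\operatorname{ran}P_\infty$, so the $P_n$ term vanishes in the limit.

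Combining Steps 1--3 gives $\inner{\eta}{H^\Phi\eta}\ge\gamma\|\eta\|^2$ on $(\ker H^\Phi)^\perp\cap\operatorname{dom}$. Letting $\gamma$ increase to the $\limsup$ concludes the proof. We expect Step 3, namely justifying the approximation and the kernel identification in the merely locally finite setting, to be the delicate part.
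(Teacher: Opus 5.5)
Your route can be made to work, but it differs from the paper's. The paper never touches the local ground-state projections. It tests the form only on the vectors $H^\Phi\pi_\varphi(\bs a)\Xi_\varphi$, $\bs a\in\A{\Z}^{\loc}$, which are dense in $(\ker H^\Phi)^\perp$. It then uses $\pi_\varphi(H^{\Phi,\Lambda_n})\Xi_\varphi=0$ and the fact that $\delta^2(\bs a),\delta^3(\bs a)$ lie in $\A{\Lambda_n}$ for large $n$. With these, the required bound becomes $\varphi\big(\bs a^*(H^{\Phi,\Lambda_n})^3\bs a\big)\ge \operatorname{spec-gap}(H^{\Phi,\Lambda_n})\,\varphi\big(\bs a^*(H^{\Phi,\Lambda_n})^2\bs a\big)$. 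This is just the operator inequality $x^3\ge\gamma x^2$ on $\operatorname{spec}(H^{\Phi,\Lambda_n})\subseteq\{0\}\cup[\gamma,\infty)$, and it holds trivially on $\ker H^{\Phi,\Lambda_n}$ since both sides vanish there. So nothing like your Step 3 is needed, and the bound holds for every large $n$.

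Your first-order route must instead dispose of $\varphi(\bs a^*P_n\bs a)$; done correctly, it gives slightly more. Since $P_{n+1}\le P_n$ by positivity, letting $n\to\infty$ along your subsequence in Steps 1--2 gives $\inner{w}{H^\Phi w}\ge\gamma\|(\one-P_\infty)w\|^2$ for $w\in\pi_\varphi(\A{\Z}^{\loc})\Xi_\varphi$. By graph-norm closure this extends to all $w\in\dom(H^\Phi)$. This is where the core property enters; the paper needs it too, for the density of $H^\Phi\pi_\varphi(\A{\Z}^{\loc})\Xi_\varphi$ in $(\ker H^\Phi)^\perp$. Together with $\operatorname{ran}P_\infty\subseteq\ker H^\Phi$, this yields the gap and also identifies $\ker H^\Phi=\operatorname{ran}P_\infty$. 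No spectral projection $\mathbf{1}_{(0,\gamma)}(H^\Phi)$ is needed. What removes the $P_n$ term is $P_\infty\eta=0$ for the limit vector $\eta\perp\ker H^\Phi$, taken after $n\to\infty$. It is not orthogonality of the local approximants to $\ker H^\Phi$, which generally fails.

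Two points in Step 3 need repair.

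First, $\operatorname{ran}P_\infty\subseteq\ker H^\Phi$ does not come from bounded surface energy or from $\alpha_t$ acting on vectors. It follows from two facts:
\begin{enumerate}[label = (\roman*)]
\item $\pi_\varphi(\Phi(Z))P_\infty=0$, because $\Phi(Z)P_n=0$ once $Z\subseteq\Lambda_n$;
\item the identity $H^\Phi\pi_\varphi(\bs a)\Xi_\varphi=\sum_Z\pi_\varphi(\Phi(Z))\pi_\varphi(\bs a)\Xi_\varphi$ on the core, which is a finite sum.
\end{enumerate}
Together these show $\operatorname{ran}P_\infty\perp\operatorname{ran}H^\Phi=(\ker H^\Phi)^\perp$.

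Second, drop the claims that $\Xi_\varphi$ spans $\ker H^\Phi$ and that $P_n\to\ketbra{\Xi_\varphi}{\Xi_\varphi}$. No uniqueness is assumed. Take $\varphi$ to be the equal mixture of the all-up and all-down states of the ferromagnetic Ising chain. Then $\ker H^\Phi$ is two-dimensional, and your opening target ``for all $\bs a$ with $\varphi(\bs a)=0$'' is false. The correct target is the bound on $(\ker H^\Phi)^\perp$, which is what your final paragraph asserts.
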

\begin{proof}
    We proceed as in \cite[Proof of Theorem 3.1.2]{Young_thesis}.
    Let $\mathcal{G} = \set{\xi\in\dom(H^\Phi)\,\,:\,\, H^{\Phi} \xi = 0}$. 
    To prove the result, we show
    \begin{equation*}
        \inner{\xi}{H^\Phi\xi}
            \geq 
        \gamma' 
        \inner{\xi}{\xi}
    \end{equation*}
    for all $\xi\in\mathcal{G}^\perp$, where $\gamma' =  \limsup_{n\geq 1}
        \operatorname{spec-gap}(H^{\Phi,\Lambda_n})$.
    To do this, it suffices to prove this holds for a dense subset of $\mathcal{G}^\perp\cap \dom(H^\Phi)$. In this case, we may use the fact that $\dom(H^\Phi) \supset \pi(\A{\mathbb{Z}})\Xi$, and the latter is dense in $\mathcal{H}$. 
    Since $H^\Phi\geq 0$ and $H^\Phi\Xi = 0$, we have that 
    \begin{equation*}
        \mathcal{S}
            =
        \set{
            H^\Phi \pi(\bs{a})\Xi\,\,:\,\,\bs{a}\in\A{\Z}^{\operatorname{loc}}
        }
    \end{equation*}
    is dense in $\mathcal{G}^\perp$. 
    So, we prove $  \inner{\xi}{H^\Phi\xi}
            \geq 
        \gamma' 
        \inner{\xi}{\xi}$
    for all $\xi\in\mathcal{S}$. 
    To do this, in turn, it suffices to show that for all $\bs{a}\in\A{\Z}^{\operatorname{loc}}$, there is $N$ such that $n\geq N$ implies 
    \begin{equation*}
        \inner{\Xi}{\pi(\bs{a}^*)(H^\Phi)^3\pi(\bs{a})\Xi}
        \geq 
        \operatorname{spec-gap}(H^{\Phi,\Lambda_n})
         \inner{\Xi}{\pi(\bs{a}^*)(H^\Phi)^2\pi(\bs{a})\Xi}.
    \end{equation*}
    Now, since $\Psi$ is locally finite ranged, for all $\bs{a}\in\A{\Z}^{\operatorname{loc}}$, there exists $N\in\N$ such that $\delta_\Psi^2(\bs{a}), \delta_\Psi^3(\bs{a})\in\A{\Lambda_N}$. 
    Moreover, since $\psi$ is frustration-free, we have that $\pi(H^{\Phi,\Lambda_n})\Xi = 0$ for all $n\geq 1$. 
    So, for any $n\geq N$, we have that 
    \begin{align*}
        \inner{\Xi}{\pi(\bs{a}^*)(H^{\Phi})^3\pi(\bs{a})\Xi}
        &=
        \inner{\Xi}{\pi(\bs{a}^*)\pi(\delta_\Psi^3(\bs{a}))\Xi}\\
        &= 
         \inner{\Xi}
         {
            \pi(\bs{a}^*)
            \pi(H^{\Phi,\Lambda_n})^3
            \pi(\bs{a})
            \Xi}
            \\
        &= 
         \varphi\Big( 
            \bs{a}^* \big(H^{\Phi, \Lambda_n}\big)^3 \bs{a}
         \Big).
    \end{align*}

Using the fact that $H^{\Phi, \Lambda_n}$ is positive semi-definite, it follows trivially that 
\begin{align*}
         \inner{\Xi}{\pi(\bs{a}^*)(H^\Phi)^3\pi(\bs{a})\Xi}
         &=\varphi(
         \bs{a}^* \big(H^{\Phi, \Lambda_n}\big)^3 \bs{a})\\
         &\geq 
         \operatorname{spec-gap}(H^{\Lambda_n})
         \varphi(
         \bs{a}^* \big(H^{\Phi, \Lambda_n}\big)^2 \bs{a})
         =
         \operatorname{spec-gap}(H^{\Lambda_n})
          \inner{\Xi}{\pi(\bs{a}^*)(H^\Phi)^2\pi(\bs{a})\Xi},
    \end{align*} where we have used that $\delta_\Psi^2(\bs{a})\in\A{\Lambda_N}$.\qedhere
\end{proof}
%

\subsubsection{Random Interactions}
In this work, we consider random interactions, defined as follows. 
\begin{definition}[Random interactions]
    A \textit{random interaction} is a function $\Psi:\Omega\times\mathscr{P}_0(\Z)\to\A{\Z}^{\operatorname{loc}}$ such that for all $\omega\in\Omega$, $\Psi(\omega, \cdot) =: \Psi_\omega$ defines an interaction and for all $\Lambda\Subset\Z$, the map $\omega\mapsto \Psi_\omega(\Lambda)$ is Borel measurable with respect to $\mathcal{B}(\operatorname{wk}^*)$.
    We say a random interaction is locally finite range if $\Psi_\omega$ is locally finite range almost surely. 
    If $\Psi$ is a random interaction and $\psi$ is a random state, we say $\psi$ is a ground state (resp. frustration-free ground state) of $\Psi$ if $\psi_\omega$ is a ground state (resp. frustration-free ground state) of $\Psi_\omega$ for almost every $\omega\in\Omega$. 
    A random interaction $\Psi$ is called \textit{ergodic} if 
    \begin{equation}
    \label{eqn:ergodic_interaction}
        \Psi_\omega(\Lambda + k)
        =
        \tau_k\big(
            \Psi_{T^k(\omega)}(\Lambda)
        \big)
    \end{equation}
    for all $\Lambda\Subset\Z$. If $\Psi$ has bounded surface energy almost surely and if $\psi$ is a ground state of $\Psi$, we let $H^\Psi_\omega$ denote the GNS Hamiltonian corresponding to $\Psi_\omega$. 
\end{definition}
Arguing as in \cite[Theorem 4.6]{RoonSchenker_BGA} it is not hard to see the following. 
\begin{lemma}[Spectrum of GNS Hamiltonian is nonrandom]
\label{Lem:GNS_Ham_spec_nonrandom}
    Let $\Psi$ be an ergodic locally finite-ranged random interaction with ergodic ground state $\psi\in\cstates{\Omega, \A{\Z}}$. 
    Suppose that $\Psi_\omega$ has bounded surface energy almost surely.  
    Then there exists a deterministic set $\Sigma\subseteq\mathbb{R}^{\geq 0}$ such that 
    \begin{equation*}
        \sigma(H^{\Psi}_\omega) = \Sigma
    \end{equation*}
    for almost every $\omega\in\Omega$. 
\end{lemma}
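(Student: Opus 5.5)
The plan is the standard ergodicity argument: show that the GNS Hamiltonians at $\omega$ and at $T(\omega)$ are unitarily equivalent, so that $\omega\mapsto\sigma(H^\Psi_\omega)$ is a $T$-invariant random set, and then use ergodicity to conclude that it is almost surely constant.

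\emph{Step 1: covariance of the dynamics.} Fix $\omega$ in the full-measure set where $\Psi_\omega$ and $\Psi_{T\omega}$ have bounded surface energy and $\psi_\omega$, $\psi_{T\omega}$ are ground states. From \eqref{eqn:ergodic_interaction} with $k=1$ we get $\delta^{\Psi_\omega}\circ\tau_1=\tau_1\circ\delta^{\Psi_{T\omega}}$ on $\A{\Z}^{\loc}$. Indeed, each commutator $[\Psi_\omega(\Lambda+1),\tau_1(\bs a)]$ equals $\tau_1[\Psi_{T\omega}(\Lambda),\bs a]$, and reindexing a finite sum gives the identity. Passing to closures and using Proposition \ref{Prop:Dynamics_from_loc_fin_interactions}, we obtain $\alpha^{\Psi_\omega}_t\circ\tau_1=\tau_1\circ\alpha^{\Psi_{T\omega}}_t$. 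One could equally argue directly, since $\tau_1$ maps the absorbing sequence $\Lambda_n$ to $\Lambda_n+1$, which is again absorbing.

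\emph{Step 2: unitary equivalence of GNS Hamiltonians.} The ergodic state satisfies $\psi_\omega\circ\tau_1=\psi_{T\omega}$. By uniqueness of the GNS representation, the map $V\pi_{\psi_{T\omega}}(\bs a)\Xi_{\psi_{T\omega}}:=\pi_{\psi_\omega}(\tau_1(\bs a))\Xi_{\psi_\omega}$ extends to a unitary $V:\mathcal K_{\psi_{T\omega}}\to\mathcal K_{\psi_\omega}$. It is isometric by the state relation and surjective since $\tau_1$ is onto. Combining with Step 1 and the definition of $U_t$ in Proposition \ref{prop:CovariantGNS}, we get $VU^{T\omega}_tV^*=U^{\omega}_t$. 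By Stone's theorem, $VH^\Psi_{T\omega}V^*=H^\Psi_\omega$, and hence $\sigma(H^\Psi_\omega)=\sigma(H^\Psi_{T\omega})$ almost surely.

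\emph{Step 3: measurability and ergodicity.} For each rational interval $I$, consider the event $A_I=\{\omega:\sigma(H^\Psi_\omega)\cap I\neq\emptyset\}$. By Step 2 it is $T$-invariant up to null sets, so by ergodicity $\Pr(A_I)\in\{0,1\}$. Set $\Sigma:=\{x:\Pr(A_I)=1 \text{ for every rational open } I\ni x\}$. Since there are countably many rational intervals, $\sigma(H^\Psi_\omega)=\Sigma$ almost surely. The set $\Sigma$ lies in $[0,\infty)$ because $H^\Psi_\omega\ge0$, which holds as $\psi_\omega$ is a ground state.

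The main obstacle is measurability of $A_I$. I would establish it by expressing it through the spectral measure. For instance, $A_I$ (for open $I$) holds if and only if $\langle\xi,\chi_I(H^\Psi_\omega)\xi\rangle>0$ for some $\xi$ in the dense set $\pi_{\psi_\omega}(\A{\Z}^{\loc})\Xi_{\psi_\omega}$, where one can restrict to a countable dense subset of local observables. Approximating $\chi_I$ by continuous functions $f$, one then needs $\omega\mapsto\langle\xi,f(H_\omega)\xi\rangle$ to be measurable. Writing $f$ through Fourier integrals of the unitary group $U_t$, this becomes measurability of $\omega\mapsto\psi_\omega(\bs a^*\alpha^{\Psi_\omega}_t(\bs b))$. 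That map is a pointwise limit, by Proposition \ref{Prop:Dynamics_from_loc_fin_interactions}, of finite-volume expressions that are measurable in $\omega$. This is the argument of \cite[Theorem 4.6]{RoonSchenker_BGA}, which I would follow.
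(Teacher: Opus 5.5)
Your proposal is correct and is essentially the argument the paper has in mind. The paper gives no proof beyond deferring to \cite[Theorem 4.6]{RoonSchenker_BGA}, and your three steps are that standard argument written out: the covariance $\delta^{\Psi_\omega}\circ\tau_1=\tau_1\circ\delta^{\Psi_{T\omega}}$, the GNS unitary intertwining $H^\Psi_{T\omega}$ with $H^\Psi_\omega$, and the zero-one law over rational intervals with measurability through the spectral measure.

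One detail is worth making explicit. The hypothesis only supplies an $\omega$-dependent absorbing sequence, so in the measurability step it is cleaner to approximate along the deterministic sequence $[-n,n]$. This is legitimate by the Trotter--Kato approximation theorem, because $\A{\Z}^{\loc}$ is a core for $\overline{\delta^{\Psi_\omega}}$ and $\delta^{\Psi_\omega,[-n,n]}(\bs{a})=\delta^{\Psi_\omega}(\bs{a})$ for all large $n$ when $\bs{a}$ is local.
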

In particular, given random interaction and state $(\Psi, \psi)$ as above, we see that the spectral $\gamma(H^\Psi_\omega)$ is a deterministic quantity, so we can say without ambiguity whether $(\Psi, \psi)$ is gapped. 
%
%
\section{Ergodic matrix product states}\label{sec:Vectorization}
In this section, we prove the main technical results that drive the rest of the paper. 
We say that a state $\psi\in\TCV$ satisfies \PureGena if the following condition is satisfied. 
\begin{description}
\hypertarget{PureGen}{}
    \item[\PureGen]
    There is $\BondDim\in\N$, $\hatrho\in\cstates{\Omega, \MatricesBond}$, and measurable $V:\Omega\to\operatorname{Lin}(\HilbertBond, \HilbertLocal\otimes\HilbertBond)$ such that if for all $a\in\MatricesLocal$ we define $E_a:\Omega\to\operatorname{Lin}(\MatricesBond)$ by 
    \begin{equation*}
        E_{a; \omega}(b)
            := 
        V_\omega^*(a\otimes b)V_\omega,
    \end{equation*}
    then $\hatrho_\omega\circ E_{\one_\LocalDim; T(\omega)} = \hatrho_{T(\omega)}$ almost surely, $V_\omega^* V_\omega = \one_{\LocalDim\BondDim}$ almost surely, and for all $[l, r]\subset \Z$ and $a_j\in\MatricesLocal$, 
    \begin{equation}\label{Eqn:PureGen_MPS}
        \psi_\omega(a_l\otimes\cdots\otimes a_r)
            =
        \hatrho_{T^{l - 1}(\omega)}\circ E_{a_l; T^l(\omega)}\circ\cdots\circ E_{a_r; T^r(\omega)}(\one_\BondDim)\,,
    \end{equation}
    holds for almost every $\omega\in\Omega$. 
\end{description}
\begin{notation}[Probabilists' notation]
\label{Notation:Probabilist}
    Given data as in \PureGen, we shall often suppress $\omega$ in our notation unless it is absolutely necessary. 
    To do this, we introduce the following notation. 
    For $x\in\Z$, we let $\hatrho_x\in\cstates{\Omega, \MatricesBond}$ be the random state defined by $\hatrho_{x; \omega} = \hatrho_{T^x(\omega)}$. 
    Similarly, for $a\in\MatricesLocal$ and $x\in\Z$, we let $E^{[x]}_a:\Omega\to\operatorname{Lin}(\MatricesBond)$ be the random completely positive map defined by $E^{[x]}_{a; \omega} = E_{a; T^x(\omega)}$; 
    more generally, for $\bs{a}\in\A{[l, r]}$ and $[l, r]\subset\Z$, we define $E^{[l, r]}_{\bs{a}}$ by
    \begin{equation*}
        E^{[l, r]}_{a_l\otimes\cdots\otimes a_r}
            :=
        E^{[l]}_{a_l}\circ\cdots\circ E^{[r]}_{a_r}.
    \end{equation*}
    More generally, for any random variable $F$ and $x\in\Z$, we shall often write $F_x$ to denote the random variable defined by $F_{x; \omega} = F_{T^x(\omega)}$, and then  manipulate $F_x$ as a random variable as opposed to evaluated pointwise. 
    So, for example, the condition on $\hatrho$ from \PureGena becomes 
    \begin{equation*}
        \hatrho_x \circ E^{[x + 1]}_{\localOne}
        =
        \hatrho_{x + 1}
    \end{equation*}
    holds almost surely. 
    Another important instance of this notation is that the ergodicity of a random state $\varphi\in\cstates{\Omega, \A{\Z}}$ may be phrased as $\varphi_0\circ\tau = \varphi_{1}$ almost surely.
    It may be useful for the reader to think of the subscript as denoting some site-dependent information associated to $\psi$. 
\end{notation}
In the following, $\set{\ket{f_k}}_{k=1}^\BondDim$ will be a fixed orthonormal basis of $\HilbertBond$. 
It is clear that $(a_l, \dots, a_r)\mapsto E_{a_l\otimes\cdots\otimes a_r}^{[l, r]}$ is multilinear in $(a_l, \dots, a_r)$, so if we take (\ref{Eqn:PureGen_MPS}) as a \textit{definition}, then we have that $\psi$ defines at least a measurable  map $\psi:\Omega\to\operatorname{Lin}(\A{\Z}, \C)$. 
Furthermore, we can define $E_{\bs{a}}^{[l, r]}$ for any $\bs{a}\in\A{[l, r]}$ by extending the definition $E_{a_l\otimes\cdots\otimes a_r}^{[l, r]}$ linearly. 
We now show, in fact, that such $\psi$ defines an ergodic state in addition to gathering some other basic facts about \PureGena into a lemma. 
\begin{lemma}\label{lem:generated}\label{Lem:Generated}
    Let $\BondDim$, $\hatrho$, $V$, and $E$ be as in \PureGen. 
    \begin{enumerate}[label = (\alph*)]
        \item $(T, E_{\one_\LocalDim})$ defines an ergodic quantum process on $\MatricesBond$.
        
        \item For all $\bs{a} \in \A{[l, r]}$, $\|E_{\bs{a}}^{[l, r]}\|\leq \|\bs{a}\|$ holds almost surely. 

        \item If we define $\psi$ by (\ref{Eqn:PureGen_MPS}), then $\psi\in\TCV$ as in definition~\ref{Def:TCV}.
    \end{enumerate}
\end{lemma}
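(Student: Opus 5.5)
For part (a), I will check the definition of an ergodic quantum process directly. I need $E_{\one_\LocalDim}:\Omega\to\ucps{\MatricesBond}$ to be measurable. Since $E_{\one_\LocalDim;\omega}(b)=V_\omega^*(\one_\LocalDim\otimes b)V_\omega$, it has Stinespring form, so it is completely positive. It is unital because $V_\omega^*V_\omega=\one$ almost surely; I will read this as $V^*V=\one_\BondDim$, which is what the dimensions force. Measurability follows because $\omega\mapsto V_\omega$ is measurable and $V\mapsto V^*(\one\otimes b)V$ is continuous for each fixed $b$. The map $T$ is given as invertible, measure-preserving and ergodic, so $(T,E_{\one_\LocalDim})$ is an ergodic quantum process. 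After modifying on a null set, I may take the unitality to hold everywhere.

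For part (b), I will first handle a single site and then extend to $[l,r]$ with a Stinespring dilation. At one site, $\|E_a(b)\|\le\|V\|^2\|a\|\|b\|=\|a\|\|b\|$. Elementary tensors only give a product of norms, and this does not control $\|\bs a\|$ for a general $\bs a$. So I will build the composed isometry
\[
W^{[l,r]}:=(\one_{\LocalDim}^{\otimes(r-l)}\otimes V_r)\cdots(\one_\LocalDim\otimes V_{l+1})V_l:\HilbertBond\to(\HilbertLocal)^{\otimes(r-l+1)}\otimes\HilbertBond,
\]
with tensor factors ordered so that site $l$ is outermost. I will then verify by induction on $r-l$, using the single-site formula, that $E^{[l,r]}_{a_l\otimes\cdots\otimes a_r}(b)=W^{[l,r]*}(a_l\otimes\cdots\otimes a_r\otimes b)W^{[l,r]}$. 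By linearity this gives $E^{[l,r]}_{\bs a}(b)=W^*(\bs a\otimes b)W$ for every $\bs a\in\A{[l,r]}$. Each factor of $W$ is an isometry almost surely, so $W$ is an isometry almost surely. Hence $\|E^{[l,r]}_{\bs a}(b)\|\le\|\bs a\|\|b\|$, which is the claim in operator norm on $\MatricesBond$. The main care here is the bookkeeping of tensor-factor orderings in the induction. The same representation also shows that $E^{[l,r]}_{(\cdot)}$ is completely positive and that $E^{[l,r]}_{\one}=E^{[l]}_{\one}\circ\cdots\circ E^{[r]}_{\one}$ is unital.

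For part (c), I will proceed in three steps.
\begin{enumerate}[label=(\roman*)]
\item \emph{Consistency of the local functionals.} Define $\psi^{[l,r]}_\omega(\bs a):=\hatrho_{l-1}\circ E^{[l,r]}_{\bs a}(\one)$. Extension on the right, $\bs a\mapsto\bs a\otimes\one_{r+1}$, is consistent because $E^{[r+1]}_{\one}(\one)=\one$. Extension on the left, $\bs a\mapsto\one_{l-1}\otimes\bs a$, is consistent because $\hatrho_{l-2}\circ E^{[l-1]}_{\one}=\hatrho_{l-1}$ almost surely.
\item \emph{Positivity and normalization.} Each $\psi^{[l,r]}$ is positive since $\hatrho$ is a state and $E^{[l,r]}$ is completely positive; it is normalized since $E^{[l,r]}_{\one}(\one)=\one$. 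By Theorem~\ref{Thm:Takeda}, this family defines a unique state $\psi_\omega$, after intersecting countably many full-measure sets, one for each interval $[l,r]$.
\item \emph{Measurability and ergodicity.} Measurability in the weak$^*$ topology follows from measurability of $\omega\mapsto\psi_\omega(\bs a)$ for local $\bs a$, then density of local observables and the uniform bound $|\psi_\omega(\bs a)|\le\|\bs a\|$. For ergodicity, $\psi_\omega\circ\tau_k(a_l\otimes\cdots\otimes a_r)$ has support $[l+k,r+k]$. It therefore equals $\hatrho_{T^{l+k-1}\omega}\circ E_{a_l;T^{l+k}\omega}\circ\cdots(\one)$, which is exactly $\psi_{T^k\omega}(a_l\otimes\cdots\otimes a_r)$. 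Linearity and density then give $\psi_\omega\circ\tau_k=\psi_{T^k\omega}$.
\end{enumerate}
The only delicate point in (c) is handling the countably many null sets, which is routine.
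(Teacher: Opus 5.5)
Your proposal is correct and follows essentially the same route as the paper. Part (a) is a direct check. Part (b) writes $E^{[l,r]}_{\bs a}(b)$ as a compression of $\bs a\otimes b$ by an isometry: your composed $W^{[l,r]}$ is the coordinate-free form of the paper's iterated basis expansion of $V_l\ket{\xi}$ followed by Cauchy--Schwarz. Part (c) uses Takeda's theorem plus the direct covariance computation, and there you are more careful than the paper, since you explicitly verify the compatibility and positivity of the local functionals that Takeda's theorem requires, whereas the paper cites only the continuity from (b).
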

\begin{proof}
    It is clear that $E_{\one_{\LocalDim}}$ is completely positive almost surely, and its unitality from the fact that $V$ is almost surely a partial isometry, so (a) holds. 
    To see (b), it is clear when $l = r$ by the fact that $V$ is an isometry, so assume $l < r$. 
    Let $\bs{a}\in\A{[l, r]}$ and write $\bs{a} = \sum_{x=j}^m \bs{a}_j$, where $\bs{a}_j = a_{l, j}\otimes \cdots \otimes a_{r, j}$ for some $a_{i, j}\in\MatricesLocal$. 
    Let also $\ket{\xi}, \ket{\eta}\in\HilbertBond$. 
    Then for all $b\in\MatricesBond$, we have 
    \begin{equation}\label{Eqn:Generated_Eqn_1}
        \bra{\xi} 
        E_{\bs{a}}^{[l, r]}(b)
        \ket{\eta}
            = 
        \sum_{j = 1}^{m}
        \bra{\xi}V_l^* 
        \left( 
            a_{l, j}\otimes E^{[l + 1, r]}_{\bs{a}_j'}(b)
        \right)
        V_{l}\ket{\eta}
    \end{equation}
    almost surely, 
    where $\bs{a}_j' = a_{l+1, j}\otimes \cdots \otimes a_{r, j}$ for all $j$. 
    Writing
    \begin{equation}\label{Eqn:Generated_Eqn_2}
        V_{l}\ket{\xi}
        =
        \sum_{x=1}^\LocalDim
        \sum_{y=1}^\BondDim
        \alpha_{x, y} \ket{e_x}\otimes\ket{f_y}
        \qquad
        \text{and}
        \qquad 
        V_{l}\ket{\eta}
        =
        \sum_{x=1}^\LocalDim
        \sum_{y=1}^\BondDim
        \beta_{x, y} \ket{e_x}\otimes\ket{f_y}
    \end{equation}
    for some (random) $\alpha_{x, y}, \beta_{x, y}\in\C$, (\ref{Eqn:Generated_Eqn_1}) becomes 
    \begin{align*}
        &
        \sum_{j=1}^m 
        \left[
        \sum_{w, x = 1}^{\LocalDim}
        \sum_{y, z = 1}^{\BondDim}
        \overline{\alpha}_{w, y}\beta_{x, z}
        \bra{e_w}\otimes\bra{f_y}
        \left( 
            a_{l, j}\otimes E^{[l + 1, r]}_{\bs{a}_j'}(b)
        \right)
        \ket{e_x}\otimes\ket{f_z}
        \right]\\
        &\phantom{\sum_{j=1}^m 
      \sum_{w, x = 1}^{\LocalDim}
        \sum_{y, z = 1}^{\BondDim}}= 
        \sum_{j=1}^m 
        \left[
        \sum_{w, x = 1}^{\LocalDim}
        \sum_{y, z = 1}^{\BondDim}
        \overline{\alpha}_{w, y}\beta_{x, z}
        \bra{e_w}
            a_{l, j}
        \ket{e_x}
        \bra{f_y}
            E^{[l + 1, r]}_{\bs{a}_j'}(b)
        \ket{f_z}
        \right]
    \end{align*}
    almost surely. 
    Iterating this argument, we conclude that 
    \begin{align*}
     &\bra{\xi} 
        E_{\bs{a}}^{[l, r]}(b)
        \ket{\eta}\\
        &\hspace{5mm}=
        \sum_{w_1, x_1 = 1}^{\LocalDim}\cdots 
        \sum_{w_s, x_s = 1}^{\LocalDim}
        \sum_{y_1, z_1 = 1}^{\BondDim}
        \cdots 
         \sum_{y_s, z_s = 1}^{\BondDim}
        \overline{\alpha}_{w_1, y_1, \dots, w_s, y_s}\beta_{x_1, z_1, \dots, x_s, z_s}
        \bra{e_{(w_1, \dots, w_s)}}
            \bs{a}
        \ket{e_{(x_1, \dots, x_s)}}
        \bra{f_{y_s}}
            b
        \ket{f_{z_s}}\,,
    \end{align*}
    where $s = r - l + 1$ and $\alpha_{w_1, y_1, \dots, w_s, y_s}, \beta_{x_1, z_1, \dots, x_s, z_s}\in\C$ are some constants arising as in (\ref{Eqn:Generated_Eqn_2}). 
    Therefore, since $V$ is almost surely a partial isometry, we conclude from Cauchy-Schwarz that 
    \begin{equation*}
        \left|
            \bra{\xi} 
        E_{\bs{a}}^{[l, r]}(b)
        \ket{\eta}
        \right|
        \leq 
        \|\bs{a}\|
        \|b\|
        \|\xi\|
        \|\eta\|\,,
    \end{equation*}
    almost surely, which shows (b). 
    To see (c), we first show that $\psi\in\cstates{\Omega, \A{\Z}}$, i.e., that $\psi$ defined by (\ref{Eqn:PureGen_MPS}) extends to a continuous linear functional on $\A{\Z}$. 
    This, however, follows from (b) and Takeda's theorem (Theorem \ref{Thm:Takeda}), since (b) shows precisely that $\A{[l, r]}\ni\bs{a}\mapsto \psi(\bs{a})$ is continuous almost surely. 
    The ergodicity of $\psi$ follows from the computation 
    \begin{align*}
        \psi_0\circ\tau (a_l\otimes \cdots \otimes a_r)
            &=
        \psi(\one_\LocalDim\otimes a_l\otimes\cdots\otimes a_r)\\
            &= 
        \hatrho_{l-1}\circ E^{[l]}_{\localOne}\circ E^{[l+1, r+1]}_{a_l\otimes\cdots\otimes a_r}(\bondOne)
            \\
            &= 
        \hatrho_{l}\circ E^{[l+1, r+1]}_{a_l\otimes\cdots\otimes a_r}(\bondOne)\\
            &= 
        \psi_{1}(a_l\otimes\cdots\otimes a_r),
    \end{align*}
    which concludes the proof. 
\end{proof}
\begin{definition}[Ergodic matrix product state]
\label{Def:Transfer_apparatus}
\label{Def:EMPS}
    If $\psi\in\TCV$ is of the form described by \PureGen, we call $\psi$ an \textit{ergodic matrix product state}, and we call $(T, E)$ the \textit{transfer apparatus} of $\psi$. 
    For $a\in\MatricesLocal$, we call $E_a:\Omega\to\operatorname{Lin}(\MatricesBond)$ a transfer operator, and in the special case where $a = \one_\LocalDim$, we call $(T, E_{\one_\LocalDim})$ the \textit{transfer apparatus ergodic quantum process}. 
\end{definition}
\begin{remark}
    Given \textit{any} collection $\set{\tilde{E}_{a}:\Omega\to\operatorname{Lin}(\MatricesBond)}_{a\in\MatricesLocal}$ such that $a\mapsto \tilde{E}_{a}$ is almost surely linear and $\tilde{E}_{\localOne}$ is almost surely ucp, it is generally true that there is $\hatrho\in\cstates{\Omega, \matrices}$ such that $\hatrho_{0}\circ \tilde{E}_{\localOne}^{[1]} = \hatrho_{1}$ almost surely \cite{Ekblad2024ReducibilityProcesses, Ekblad2026PeriodicityProcesses}, in which case one can define a state by (\ref{Eqn:PureGen_MPS}). 
    Therefore, the main assumption of \PureGena is the particular form that the maps $E$ take, i.e., that they are almost surely defined by partial isometries. 
\end{remark}
\begin{remark}
    In the language of \cite{FannesNachtergaeleWerner}, states satisfying \PureGena might be called purely generated ergodic $C^*$-finitely correlated states.
    We have chosen against using this terminology for simplicity. 
\end{remark}
Owed to the description of $E$ in terms of $V$, there is an explicit description of the transfer apparatus in terms of the bases $\set{\ket{e_k}}_{k=1}^\LocalDim$ and $\set{\ket{f_k}}_{k=1}^\BondDim$ which we make use of in the following.
With the notation as in \PureGen, for $j\in[n]$, we define $X^j:\Omega\to\MatricesBond$ by 
    \begin{equation}\label{Eqn:X_def}
        X^j_\omega 
            := 
        \sum_{x, y = 1}^\BondDim 
        \alpha_{y, x; \omega}^{(j)}
        \ketbra{f_y}{f_x}
        \qquad\text{where }
        \alpha_{y, x; \omega}^{(j)}
            :=
            (\bra{e_x}\otimes \bra{f_j})
            V_\omega 
            \ket{f_y}\,,
    \end{equation}
    for $\omega\in\Omega$, where we recall $\set{\ket{e_k}}_{k=1}^\LocalDim$ and $\set{\ket{f_k}}_{k=1}^\BondDim$ were our fixed orthonormal bases of $\HilbertLocal$ and $\HilbertBond$, respectively. 
    In other words, $X^j$ is the $j$th columnar block of $V$, i.e., 
    \begin{equation*}
        V 
            =
        \left[ 
            \begin{array}{c}
                 X^1 \\
                \vdots \\ 
                X^\LocalDim 
            \end{array}
        \right]\,,
    \end{equation*}
    where this block representation is with respect to the bases $\set{\ket{e_k}}_{k=1}^\LocalDim$ and $\set{f_k}_{k=1}^\BondDim$.
    More generally, for $[l, r]\subset\Z$ and $\sigma = (i_l, \dots, i_r)\in[\LocalDim]^{[l, r]}$, we define $X^{\sigma}:\Omega\to\MatricesBond$ by $X^\sigma = X^{i_r}_r\cdots X^{i_l}_l$, i.e., 
    \begin{equation}\label{Eqn:Xsigma_def}
        X^{\sigma}_\omega 
            := 
        X_{T^{r}(\omega)}^{i_r}\cdots X_{T^l(\omega)}^{i_l}\,.
    \end{equation}
    Notice that we have used the fact that $\sigma\in [\LocalDim]^{[l, r]}$ contains information about the interval $[l, r]$. 
    So, for example, by our definition, for any $[l, r]\subset\Z$, $k\in\Z$, $\sigma\in[\LocalDim]^{[l, r]}$, and $\varsigma\in[\LocalDim]^{[l + k, r + k]}$, we have that 
    \begin{equation*}
        X^{\sigma} 
        =
        X^{\varsigma}_{k}.
    \end{equation*}
    Given $l\leq s < s + 1\leq r$, if $\sigma_1\in[\LocalDim]^{[l, s]}$ and $\sigma_2\in [\LocalDim]^{[s + 1, r]}$, we write $\sigma_1\sigma_2$ to denote the concatenation, which is an element of $[\LocalDim]^{[l, r]}$, and we notice that 
    \begin{equation*}
        X^{\sigma_1\sigma_2} 
            =
        X^{\sigma_2} X^{\sigma_1}
    \end{equation*}
    holds almost surely. 
\begin{lemma}\label{Lem:Explicit_description_of_E_in_terms_of_X}\label{Lem:XPlicit}
    Fix $b\in\MatricesBond$. 
    For any $\bs{a}\in\A{[l, r]}$,
    \begin{equation*}
        E_{\bs{a}}^{[l, r]}(b)
            = 
        \sum_{\sigma,\varsigma\in[n]^{[l, r]}}
        \bra{e_\sigma}
        \bs{a}
        \ket{e_\varsigma}
        (X^{\sigma})^*
        b 
        X^{\varsigma} \,,
    \end{equation*}
    holds almost surely. 
\end{lemma}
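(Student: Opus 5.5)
The plan is to prove the identity first for a single site, then for elementary tensors by induction on $r-l$, and finally for general $\bs{a}$ by linearity. Both sides are linear in $\bs{a}$: the left side by the linear extension of $E^{[l,r]}_{\bs{a}}$ described just before Lemma \ref{Lem:Generated}, the right side because $\bs{a}\mapsto\bra{e_\sigma}\bs{a}\ket{e_\varsigma}$ is linear. So it suffices to treat $\bs{a}=a_l\otimes\cdots\otimes a_r$. For such an elementary tensor, $\bra{e_\sigma}\bs{a}\ket{e_\varsigma}=\prod_{x=l}^r\bra{e_{i_x}}a_x\ket{e_{j_x}}$ whenever $\sigma=(i_x)$ and $\varsigma=(j_x)$.

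\textbf{Single site.} By (\ref{Eqn:X_def}), $X^j$ is the $j$th block of $V$ with respect to the decomposition $\HilbertLocal\otimes\HilbertBond=\bigoplus_j \ket{e_j}\otimes\HilbertBond$. Equivalently, $V\ket{\eta}=\sum_{j}\ket{e_j}\otimes X^j\ket{\eta}$, so $X^j=(\bra{e_j}\otimes\one_\BondDim)V$. I would first fix the index convention in (\ref{Eqn:X_def}) so that it reads exactly this way. Inserting $\one_\LocalDim=\sum_i\ketbra{e_i}{e_i}$ on both sides of $a\otimes b$ then gives
\begin{equation*}
E_{a;\omega}(b)=V_\omega^*(a\otimes b)V_\omega=\sum_{i,j=1}^{\LocalDim}\bra{e_i}a\ket{e_j}\,(X^i_\omega)^*\,b\,X^j_\omega .
\end{equation*}
This holds pointwise in $\omega$. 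Replacing $\omega$ by $T^x(\omega)$ gives the formula for $E^{[x]}_a$ with $X^i_x$.

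\textbf{Induction.} Write $\bs{a}=a_l\otimes\bs{a}'$ with $\bs{a}'\in\A{[l+1,r]}$, so that $E^{[l,r]}_{\bs{a}}(b)=E^{[l]}_{a_l}\big(E^{[l+1,r]}_{\bs{a}'}(b)\big)$. I would apply the inductive hypothesis to the inner map and the single-site formula to the outer one. This produces a sum over $i,j\in[\LocalDim]$ and $\sigma',\varsigma'\in[\LocalDim]^{[l+1,r]}$ of
\begin{equation*}
\bra{e_i}a_l\ket{e_j}\bra{e_{\sigma'}}\bs{a}'\ket{e_{\varsigma'}}\,(X^i_l)^*(X^{\sigma'})^*\,b\,X^{\varsigma'}X^j_l .
\end{equation*}
The concatenation rule $X^{\sigma_1\sigma_2}=X^{\sigma_2}X^{\sigma_1}$ gives $X^{\varsigma'}X^j_l=X^{j\varsigma'}$ and $(X^i_l)^*(X^{\sigma'})^*=(X^{\sigma'}X^i_l)^*=(X^{i\sigma'})^*$. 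Combining the two scalar factors into $\bra{e_{i\sigma'}}\bs{a}\ket{e_{j\varsigma'}}$ and relabelling $\sigma=i\sigma'$, $\varsigma=j\varsigma'$ yields the claimed formula.

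\textbf{Main obstacle.} The substantive content is only the single-site block computation, and I expect the main care to go into bookkeeping. The indices in (\ref{Eqn:X_def}) must be matched so that $X^j$ really is the $j$th block of $V$. The site labels must also be tracked so that $X^{\sigma}$ carries the correct shifts $T^x(\omega)$. The ``almost surely'' qualifier is harmless: each step holds pointwise for every $\omega$, and even if one works only on the full-measure set where $V$ is an isometry, only countably many $T^x$-translates are involved, so their intersection still has full measure.
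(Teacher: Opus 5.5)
Your proposal is correct and is essentially the paper's own argument. Both reduce to elementary tensors by linearity, obtain the single-site identity $V^*(a\otimes b)V=\sum_{i,j}\bra{e_i}a\ket{e_j}(X^i)^*bX^j$ from the block decomposition of $V$, and induct by peeling off the leftmost site via $E^{[l,r]}_{\bs{a}}=E^{[l]}_{a_l}\circ E^{[l+1,r]}_{\bs{a}'}$; your explicit use of $X^{\sigma_1\sigma_2}=X^{\sigma_2}X^{\sigma_1}$ and your remark about normalizing the garbled index convention in (\ref{Eqn:X_def}) simply spell out what the paper calls ``immediate by induction.''
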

\begin{proof}
    Because $\bs{a}\mapsto E^{[l, r]}_{\bs{a}}$ is linear in $\bs{a}$, it suffices to prove the lemma when $\bs{a} = a_l\otimes\cdots \otimes a_r$ is a pure tensor.
    In this case, the statement we want to prove becomes
    \begin{equation}\label{pf:Explicit:eq:simplified}
        E^{[l,r]}_{\bs{a}}(b) = \sum_{\sigma, \varsigma \in [n]^{[l,r]}} \prod_{x=l}^r \<e_{i_x}| a_x |e_{j_x}\> (X^\sigma)^* b X^\varsigma\,,
    \end{equation} 
    where we have written $\sigma = (i_l, \dots, i_r)$ and $\varsigma = (j_l, \dots, j_r)$. 
    We proceed by induction. 

    In the case when $l = r =:y$, the analog of \eqref{pf:Explicit:eq:simplified}, results from expressing $a_y = \sum_{i,j} \<e_i|a|e_j\>\,\cdot \ket{e_j}\bra{e_i}$ as a sum of its matrix entries relative to the basis $\{e_i\}$. In this case, using the block-matrix definition of $V$, we obtain 
    \[
        E^{[y]}_{a;\omega}(b) 
            = 
        V_{y}^*(a\otimes b)V_{y} 
            =
        \sum_{i,j = 1}^{\LocalDim}
        \bra{e_i} a_y\ket{e_j}
        V^*_y(\ket{e_i}\bra{e_j}\otimes b)V_y 
            =
        \sum_{i,j = 1}^{\LocalDim}
        \bra{e_i} a_y\ket{e_j}
        (X^i_{y})^* bX^{j}_{y}.
    \] 
    In the case with $l<r$, and $E^{[l,r]}_{\bs{a}}(b)$ with $\bs{a}$ a pure tensor as above, the following holds by construction 
    \[
        E^{[l,r]}_{\bs{a}}(b) = E^{[l]}_{a_l} \circ E^{[l+1, r]}_{\bs{a}'}(b) 
            =
        \sum_{i_l, j_l = 1}^{\LocalDim} 
        \<e_{i_l}|a_l|e_{j_l}\>\, 
        (X^{i_l}_l)^* b' X^{j_l}_l
    \] 
    where $\bs{a}' = a_{l+1} \otimes \cdots \otimes a_r$ and $b' =  E^{[l+1, r]}_{\bs{a}'}(b)$. 
    Now, \eqref{pf:Explicit:eq:simplified}  is immediate by induction, which concludes the proof. 
\end{proof}
\begin{example}[Disordered AKLT \texorpdfstring{\cite{RoonSchenker}}{l}]
\label{Example:DAKLT_introduction}
    The driving example in this work is the ergodic AKLT model described in \cite{RoonSchenker}, which is a deformation of the original AKLT model studied by Affleck, Kennedy, Lieb, and Tasaki in \cite{AKLT}.
    Here, the bond dimension $\BondDim$ is 2 and the local dimension $\LocalDim$ is 3, and the probability space $\Omega$ is $[0, 2\pi]^\Z$ with any measure $\Pr$ invariant and ergodic with respect to the bilateral shift. 
    Then for $\overline{\theta} = \seq{\theta_x}_{x\in\Z}\in\Omega$, we define $V_{\overline{\theta}}$ via its blocks 
    \begin{equation*}
         X^z_{\overline{\theta}} := 
        -\sin(\theta_0)\sigma^z
        ,\quad 
      X^+_{\overline{\theta}} := 
        -\cos(\theta_0)\sigma^+,
        \quad \text{and}\quad 
        X^-_{\overline{\theta}} := 
        \cos(\theta_0)\sigma^-
    \end{equation*}
    where 
    \begin{equation*}
        \sigma^{z}
        =
        \left[ 
            \begin{array}{cc}
               1  &0  \\
                0 & -1
            \end{array}
        \right],
        \quad 
        \sigma^{+}
        =
         \left[ 
            \begin{array}{cc}
                0 & 1  \\
                0 & 0
            \end{array}
        \right],
        \quad\text{and}\quad 
         \sigma^{-}
        =
         \left[ 
            \begin{array}{cc}
                0 & 0  \\
                1 & 0
            \end{array}
        \right].
    \end{equation*}
    Then, by Lemma \ref{Lem:XPlicit}, we have that 
    \begin{equation*}
        E^{[x]}_{a; \overline{\theta}}
        (b)
        =
        \sum_{j, k\in\set{z, +, -}}
        \bra{e_j}a\ket{e_k}
        X^{j*}_{\theta_x}b X^{k}_{\theta_x}
    \end{equation*}
    for any $x\in\Z$, 
    and so the disordered AKLT state is defined by 
    \begin{equation*}
        \psi_{\operatorname{DAKLT}; \overline{\theta}}(a_l\otimes\cdots\otimes a_r)
            :=
        \frac{1}{2}\Tr\seq{
           E^{[l, r]}_{a_l\otimes\cdots\otimes a_r; \overline{\theta}}(\one_2) 
        },
    \end{equation*}
    where we have noted that the unique random state $\hatrho\in\cstates{\Omega, \mathbb{M}_2}$ with $\hatrho_{\theta_0}\circ E_{\one_3; \theta_1} = \hatrho_{\theta_1}$ for almost every $\overline{\theta}$ is the deterministic completely mixed state $\hatrho = \frac{1}{2}\hat{\one}_2$ \cite{Ekblad2024ReducibilityProcesses}.
    It is worthwhile to note that, with respect to the orthonormal basis $\set{\one_2, \sigma^z, \sigma^+, \sigma^-}$ of $\mathbb{M}_2$, the transfer operator $E^{[x]}_{\one_3}$ is diagonalized
    \begin{equation*}
    E^{[x]}_{\one_3; \overline{\theta}}
        =
        \operatorname{diag}
        \Big(
            1, 
            -\cos 2\theta_x, 
            -\sin^2 \theta_x, 
            -\sin^2 \theta_x
        \Big),
    \end{equation*}
    so, in particular, 
    \begin{equation}\label{Eqn:Diagonalization_of_AKLT}
         E^{[l, r]}_{\one_3; \overline{\theta}}
            =
        (-1)^{r - l + 1}
         \operatorname{diag}
          \left(
             (-1)^{r - l + 1}, 
             \prod_{x=l}^r\cos 2\theta_x, 
             \prod_{x=l}^r\sin^2 \theta_x, 
             \prod_{x=l}^r\sin^2 \theta_x
        \right).
    \end{equation}
    From this, it is not hard to compute that the quantum correlations satisfy
    \begin{equation}\label{Eqn:Correlations_DAKLT}
         \big|\psi_{\operatorname{DAKLT}; \overline{\theta}}(S^{\#}\otimes\one_{[l, r]}\otimes S^{\#})\big|
            =
         \begin{cases}
             |\sin 2\theta_{l-1} |\,\displaystyle\seq{\prod_{x=l}^r\sin^2 \theta_x} \, |\sin 2\theta_{r+1} |
                &\#\in\set{+, -},
             \\
            \displaystyle
           \left| \frac{1+\cos 2\theta_{l-1} }{2}\right|\, \seq{\prod_{x=l}^r|\cos 2\theta_x|}\,\left| \frac{1+\cos 2\theta_{r+1} }{2}\right|\,
                &\# = z,
         \end{cases}
    \end{equation}
    where 
    \begin{equation*}
        S^z
        =
        \left[
            \begin{array}{ccc}
                 1& 0&0 \\
                 0&0&0\\
                 0&0&-1
            \end{array}
        \right], 
        \quad 
        S^+
        =
        \left[
            \begin{array}{ccc}
                 0&1&0 \\
                 0&0&1\\
                 0&0&0
            \end{array}
        \right], \quad\text{and}\quad 
        S^-
        =
        \left[
            \begin{array}{ccc}
                 0& 0&0 \\
                 1&0&0\\
                 0&1&0
            \end{array}
        \right].
    \end{equation*} 
    We use this fact later. 
\end{example}
Next, we recall the $\Gamma$-map formalism introduced in \cite{FannesNachtergaeleWerner}, and
we devote this subsection to describing this formalism as it pertains to the present disordered setting and establishing basic facts about it.  
\begin{definition}
    For an EMPS $\psi\in\TCV$, the tuple of random matrices $(X^j)_{j=1}^n$ from (\ref{Eqn:X_def}) is called the (random) \textit{tensor} associated to $\psi$, and we call $(X^j_x)_{j=1}^n$ the local tensor at $x\in\Z$. 
    For all $[l, r]\subset\Z$, define $\Gamma^{[l, r]}:\Omega\to\operatorname{Lin}(\MatricesBond, \Hilb{[l, r]})$ by 
    \begin{equation}\label{eqn:Gamma_maps}
        \Gamma^{[l, r]}(b)
            := 
        \sum_{\sigma\in[\LocalDim]^{[l, r]}}
        \trD{b X^\sigma}
        \ket{e_{\sigma}}
        \qquad\text{for all }b\in\MatricesBond\,.
    \end{equation}
    We call $\seq{\Gamma^{[l, r]}}_{[l, r]\subset\Z}$ the \textit{$\Gamma$-map formalism} associated to the data in \PureGen.
    We let $\mathcal{G}^{[l, r]}$ denote the (random) subspace of $\Hilb{[l, r]}$ defined by 
    \begin{equation*}
        \mathcal{G}^{[l, r]}
            :=
        \Gamma^{[l, r]}(\MatricesBond),
    \end{equation*}
    and we let $G^{[l, r]}\in\A{[l, r]}$ be the random projection $G^{[l, r]}:=\proj{ \mathcal{G}^{[l, r]}}$.
\end{definition}
Throughout the rest of this paper, $\psi$ will always denote a fixed EMPS, and we shall freely use the notation introduced above, i.e., $\rho$ will always refer to the random density matrix defined in \PureGen, $\seq{X^j}_{j=1}^\LocalDim$ will always refer to the tensor associated to $\psi$, etc.
The maps $\Gamma^{[l, r]}$ enable us to translate between dynamical properties of the transfer apparatus ergodic quantum process and the physical properties of the EMPS $\psi\in\TCV$, as we now describe. 
\begin{lemma}\label{lem:Gamma_covariant}\label{Lem:Gamma_covariant}
For $b\in\MatricesBond$, $[l, r]\subset\Z$, and $x, k\in\Z$, 
$ \tau_k\!\seq{
        \Gamma^{[l, r]}_x(b)
    }
        =
    \Gamma^{[l + k, r + k]}
    _{x - k}(b)$
almost surely. 
\end{lemma}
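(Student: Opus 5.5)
This is a direct unwinding of definitions, checked on an arbitrary $b\in\MatricesBond$ and pointwise in $\omega$ off a null set. By definition \eqref{eqn:Gamma_maps} and the probabilists' convention of Notation \ref{Notation:Probabilist},
\begin{equation*}
    \Gamma^{[l, r]}_{x}(b)
        =
    \sum_{\sigma\in[\LocalDim]^{[l, r]}}
    \trD{b X^{\sigma}_{x}}\ket{e_\sigma},
    \qquad
    X^{\sigma}_{x;\omega} = X^{i_r}_{T^{r+x}(\omega)}\cdots X^{i_l}_{T^{l+x}(\omega)}
\end{equation*}
for $\sigma=(i_l,\dots,i_r)$. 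Applying $\tau_k$ as in \eqref{Def:tau_for_local_Hilbert_spaces}, which is linear and sends $\ket{e_\sigma}$ to $\ket{e_{\tau_k(\sigma)}}$, gives
\begin{equation*}
    \tau_k\big(\Gamma^{[l, r]}_x(b)\big) = \sum_{\sigma}\trD{bX^\sigma_x}\ket{e_{\tau_k(\sigma)}}.
\end{equation*}

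Next I reindex the sum by $\varsigma=\tau_k(\sigma)\in[\LocalDim]^{[l+k,r+k]}$. The map $\sigma\mapsto\tau_k(\sigma)$ is a bijection $[\LocalDim]^{[l,r]}\to[\LocalDim]^{[l+k,r+k]}$, and it keeps the same string of indices, only placed on the shifted interval. So the sum becomes $\sum_{\varsigma}\trD{bX^{\tau_k^{-1}(\varsigma)}_x}\ket{e_\varsigma}$.

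It remains to show $X^{\tau_k^{-1}(\varsigma)}_x = X^{\varsigma}_{x-k}$. Both sides are the product of the same index string $(i_l,\dots,i_r)$. On the left, site $l+m$ is evaluated at $T^{l+m+x}(\omega)$. On the right, $\varsigma$ lives on $[l+k,r+k]$, so site $l+k+m$ is evaluated at $T^{(l+k+m)+(x-k)}(\omega)=T^{l+m+x}(\omega)$. The factors therefore agree term by term. This is exactly the identity $X^\sigma = X^\varsigma_k$ recorded after \eqref{Eqn:Xsigma_def}, shifted by $x$.

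Comparing coefficients then yields $\Gamma^{[l+k,r+k]}_{x-k}(b)$. The only care needed is bookkeeping: the sign convention in $T^{x}$ versus the shift direction of $\tau_k$. I would double-check it on a single-site example, $l=r$. The exceptional null sets are countable in number, so the statement holds almost surely simultaneously for all $b$, which suffices by linearity in $b$.
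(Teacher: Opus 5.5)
Your proof is correct and is the same unwinding of definitions the paper has in mind when it calls the lemma ``immediate from the definitions.'' The whole content is your term-by-term check that site $l+m$ of $\sigma$ and site $l+k+m$ of $\varsigma$ are both evaluated at $T^{l+m+x}(\omega)$; this in fact holds for every $\omega$, so no null sets arise.

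One caution about your remark citing the paper: what you derived is $X^{\sigma} = X^{\varsigma}_{-k}$ (equivalently $X^{\varsigma} = X^{\sigma}_{k}$). This has the opposite sign from the identity $X^\sigma = X^\varsigma_k$ as printed after \eqref{Eqn:Xsigma_def}, and that printed sign appears to be a typo. So the step is justified by your direct computation, not by that citation.
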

\begin{proof}
    This is immediate from the definitions. 
\end{proof}
\begin{lemma}\label{lem:basis}\label{Lem:Basis}
For all $\bs{a}\in\A{[l, r]}$ and $b_1, b_2\in\MatricesBond$, 
    \begin{equation*}
        \bra{\Gamma^{[l, r]}(b_1)}
            \bs{a}
        \ket{\Gamma^{[l, r]}(b_2)}
            =
        \sum_{x, y=1}^\BondDim
        \bra{f_x}
            E^{[l, r]}_{\bs{a}}
                \Big(
                    b_1^* \ketbra{f_x}{f_y}
                    b_2
                \Big)
        \ket{f_y}\,,
    \end{equation*}
    holds almost surely. 
\end{lemma}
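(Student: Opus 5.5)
By sesquilinearity of both sides in $(b_1,b_2)$ and linearity in $\bs{a}$, the claim reduces to a direct computation from the definition \eqref{eqn:Gamma_maps} of $\Gamma^{[l,r]}$ together with Lemma \ref{Lem:XPlicit}. The plan is to expand the left-hand side in the basis $\set{\ket{e_\sigma}}$, expand the right-hand side using the explicit formula for $E^{[l,r]}_{\bs{a}}$, and match the two term by term. All identities hold pointwise in $\omega$ wherever Lemma \ref{Lem:XPlicit} holds, so the result holds almost surely.

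First, write out the left-hand side. Since $\Gamma^{[l,r]}(b) = \sum_\sigma \trD{bX^\sigma}\ket{e_\sigma}$, we get
\begin{equation*}
\bra{\Gamma^{[l,r]}(b_1)}\bs{a}\ket{\Gamma^{[l,r]}(b_2)} = \sum_{\sigma,\varsigma\in[\LocalDim]^{[l,r]}} \overline{\trD{b_1X^\sigma}}\,\trD{b_2X^\varsigma}\,\bra{e_\sigma}\bs{a}\ket{e_\varsigma}.
\end{equation*}
Next, rewrite each trace in the basis $\set{\ket{f_x}}$. We have $\trD{b_2X^\varsigma} = \sum_y \bra{f_y}b_2X^\varsigma\ket{f_y}$. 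Similarly, $\overline{\trD{b_1X^\sigma}} = \trD{(X^\sigma)^*b_1^*} = \sum_x \bra{f_x}(X^\sigma)^*b_1^*\ket{f_x}$.

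Then compute the right-hand side. Lemma \ref{Lem:XPlicit} applied with $b = b_1^*\ketbra{f_x}{f_y}b_2$ gives
\begin{equation*}
\sum_{x,y}\bra{f_x}E^{[l,r]}_{\bs{a}}\big(b_1^*\ketbra{f_x}{f_y}b_2\big)\ket{f_y} = \sum_{\sigma,\varsigma}\bra{e_\sigma}\bs{a}\ket{e_\varsigma}\sum_{x,y}\bra{f_x}(X^\sigma)^*b_1^*\ket{f_x}\,\bra{f_y}b_2X^\varsigma\ket{f_y}.
\end{equation*}
The inner sum over $x,y$ factorizes into $\trD{(X^\sigma)^*b_1^*}\,\trD{b_2X^\varsigma}$, which coincides with the corresponding term on the left-hand side.

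There is no real obstacle here. The only point needing care is the index bookkeeping: one must check that the ordering $X^\sigma = X^{i_r}_r\cdots X^{i_l}_l$ is the one used in both \eqref{eqn:Gamma_maps} and Lemma \ref{Lem:XPlicit}, which it is, and that the complex conjugate of a trace is handled via $\overline{\trD{M}} = \trD{M^*}$.
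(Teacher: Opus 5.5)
Your proposal is correct and follows the paper's own argument. You expand the left-hand side in the basis $\{\ket{e_\sigma}\}$, write each trace in the basis $\{\ket{f_x}\}$ using $\overline{\trD{b_1X^\sigma}} = \trD{(X^\sigma)^*b_1^*}$, and match term by term via Lemma~\ref{Lem:XPlicit} with $b = b_1^*\ketbra{f_x}{f_y}b_2$. Your opening reduction by sesquilinearity and linearity is never used, since the direct computation already handles general $\bs{a}$, $b_1$, $b_2$.
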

\begin{proof}
We compute 
    \begin{align*}
         \bra{\Gamma^{[l, r]}(b_1)}
            \bs{a}
        \ket{\Gamma^{[l, r]}(b_2)}
            &= 
        \sum_{\sigma, \varsigma\in[\LocalDim]^{[l, r]}}
       \bra{e_\sigma}
        \bs{a}
       \ket{e_\varsigma}
            \overline{\trD{b_1 X^\sigma}}
            \trD{b_2 X^\varsigma}\,,
                \\
            &=
            \sum_{x, y = 1}^\BondDim
             \bra{f_x}
            \left[ 
          \sum_{\sigma, \varsigma\in[\LocalDim]^{[l, r]}}
          \bra{e_\sigma}
        \bs{a}
       \ket{e_\varsigma}
                X^{\sigma *}b_1^*
          \ketbra{f_x}{f_y}
                b_2 X^{\varsigma}
          \right]
                      \ket{f_y}.
    \end{align*}
    The result then follows from Lemma \ref{Lem:XPlicit}.
\end{proof}
\begin{lemma}\label{Lem:Local_action_of_psi}
    For $j, k\in[\BondDim]$, define $B^{j, k}:\Omega\to\MatricesBond$ by $ B^{j, k}
            := 
        \rho_{-1}^{1/2} \ketbra{f_j}{f_k}.$
    Then for any $\bs{a}\in\A{[l, r]}$, 
    \begin{equation*}
        \psi(\bs{a})
            =
            \sum_{j, k=1}^\BondDim 
            \bra{
                \Gamma^{[l, r]}\big(B^{j, k}_{l}\big)
            }
            \bs{a}
            \ket{
                \Gamma^{[l, r]}\big(B^{j, k}_{l}\big)
            }.
    \end{equation*}
    In particular,
    \begin{equation*}
        \psi(\bs{a}) = \psi(G^{[l, r]} \bs{a})
        =
        \psi(\bs{a}G^{[l, r]})
    \end{equation*}
    holds almost surely for all $\bs{a}\in\A{[l, r]}$.
\end{lemma}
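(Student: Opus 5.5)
Write $\rho_{l-1}$ for the random density matrix at site $l-1$, so that $B^{j,k}_l = \rho_{l-1}^{1/2}\ketbra{f_j}{f_k}$ by Notation~\ref{Notation:Probabilist}. The plan is to apply Lemma~\ref{Lem:Basis} with $b_1 = b_2 = B^{j,k}_l$, sum over $j,k$, and match the result with the defining formula \eqref{Eqn:PureGen_MPS}.

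By Lemma~\ref{Lem:Basis}, each summand equals
\begin{equation*}
\sum_{x,y=1}^\BondDim \bra{f_x} E^{[l,r]}_{\bs{a}}\Big(\ketbra{f_k}{f_j}\rho_{l-1}^{1/2}\ketbra{f_x}{f_y}\rho_{l-1}^{1/2}\ketbra{f_j}{f_k}\Big)\ket{f_y}.
\end{equation*}
Summing over $j$ gives $\sum_j \ketbra{f_k}{f_j}\,c\,\ketbra{f_j}{f_k}$, where $c=\rho_{l-1}^{1/2}\ketbra{f_x}{f_y}\rho_{l-1}^{1/2}$. This equals $\tr{c}\,\ketbra{f_k}{f_k} = \bra{f_y}\rho_{l-1}\ket{f_x}\ketbra{f_k}{f_k}$. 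Summing over $k$ then produces $\bra{f_y}\rho_{l-1}\ket{f_x}\,\bondOne$. Substituting this back, the total is
\begin{equation*}
\sum_{x,y}\bra{f_y}\rho_{l-1}\ket{f_x}\bra{f_x}E^{[l,r]}_{\bs{a}}(\bondOne)\ket{f_y} = \trD{\rho_{l-1}E^{[l,r]}_{\bs{a}}(\bondOne)} = \hatrho_{l-1}\circ E^{[l,r]}_{\bs{a}}(\bondOne).
\end{equation*}
By \eqref{Eqn:PureGen_MPS} this is $\psi(\bs{a})$ for pure tensors. Since both sides are linear in $\bs{a}$, it holds for all $\bs{a}\in\A{[l,r]}$ almost surely. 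The only care needed is the bookkeeping of indices and adjoints, and checking that the random shifts are consistent: Lemma~\ref{Lem:Basis} holds almost surely for random $b_i$ because it is pointwise in $\omega$.

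For the second claim, note that each vector $\Gamma^{[l,r]}(B^{j,k}_l)$ lies in $\mathcal{G}^{[l,r]}$, so $G^{[l,r]}\ket{\Gamma^{[l,r]}(B^{j,k}_l)} = \ket{\Gamma^{[l,r]}(B^{j,k}_l)}$. Applying the first formula to $G^{[l,r]}\bs{a}$ and using $\bra{\Gamma}G^{[l,r]} = \bra{G^{[l,r]}\Gamma} = \bra{\Gamma}$ gives $\psi(G^{[l,r]}\bs{a}) = \psi(\bs{a})$. The identity $\psi(\bs{a}G^{[l,r]}) = \psi(\bs{a})$ follows the same way, since $G^{[l,r]}\in\A{[l,r]}$ almost surely. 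No step here is a genuine obstacle; the main risk is an index error in the trace contraction.
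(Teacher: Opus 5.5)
Your proof is correct and is essentially the paper's argument. The paper identifies $\psi(\bs{a}) = \hatrho_{l-1}\circ E^{[l,r]}_{\bs{a}}(\bondOne)$ with the sum of $\Gamma$-matrix elements by inserting resolutions of the identity $\sum_k \ketbra{f_k}{f_k}$, which is the same contraction you perform via Lemma~\ref{Lem:Basis}, run in the opposite direction. The differences are that you handle the whole interval $[l,r]$ at once, where the paper reduces to a single site, and you spell out the $G^{[l,r]}$ identities, which the paper leaves implicit.
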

\begin{proof}
    It is sufficient by Lemmas~\ref{lem:generated} and~\ref{Lem:XPlicit} to show the equation for only a single site $x\in \Z$. 
    Recall that by definition 
    \[
    E_{a}^{[x]}(b) = \sum_{i,j = 1}^{\LocalDim} \<i|a_x|j\>  (X^{j}_{x})^* b X^i_{x}
    \] 
    where $a_x\in\A{\{x\}}$. 
    By inserting the identity $\bondOne = \sum_{k=1}^{\BondDim} \ket{f_k}\bra{f_k}$ twice below, one finds 
    \begin{align*}
    \psi(a_x) &= \sum_{i,j = 1}^{\LocalDim}
    \bra{e_i}a_x\ket{e_j} \TrD{
        (X^{i}_{x}\rho_{x-1}^{1/2})^*X^{j}_{x}\rho_{x-1}^{1/2}
    }
        \\
    &= 
    \sum_{k,k' = 1}^{\BondDim} 
    \sum_{i,j = 1}^{\LocalDim}
        \bra{e_i}a_x\ket{e_j}
        \trD{
            \ketbra{f_k}{f_k}(X^{i}_{x}\rho_{x-1}^{1/2})^*\ketbra{f_{k'}}{f_{k'}}X^{j}_{x}\rho_{x-1}^{1/2}
        }
        \\
    &= 
    \sum_{k,l=1}^D
    \bra{\Gamma^{[x]}(B^{l, k}_x)} a_x \ket{\Gamma^{[x]}(B^{l, k}_x)}
    \end{align*}
    holds almost surely, 
    as claimed. 
\end{proof}
So, $\ker\psi\vert_{\A{[l, r]}}\subset G^{[l, r]\perp}\A{[l, r]}G^{[l, r]\perp}$ almost surely. 
Due to the freedom of choice in the length of the interval, it is clear that $\psi$ is \textit{a} frustration-free ground state of the random interaction $\Psi$ defined by letting $\Psi([l, r])$ be the projection onto the orthogonal complement of $\mathcal{G}^{[l, r]}$. 
In this case, however, $\psi$ need not be the \textit{unique} ground state. 
The standard way to remedy this for matrix product states is to make an assumption of \textit{injectivity} \cite{Perez-Garcia_et_al}. 
Here, this manifests as follows. 
\begin{definition}[Injectivity length]\label{Def:Injectivity_length}
    Given data as in \PureGen, the \textit{preliminary injectivity length} is the random variable $\ell_0 :\Omega\to \N\cup\{\infty\}$ defined by 
    \begin{equation*}
        \begin{split}
            \tilde{\ell}(\omega)
            &:= 
        \inf
        \set{
            k\in\Z
                \,\,:\,\,
            \Gamma^{[0, k]}_\omega\text{ is injective}
        }
        \end{split}
    \end{equation*}
    and the \textit{injectivity length} is the random variable $\ell:\Omega\to\N\cup\{\infty\}$ defined by 
    \begin{equation*}
       \begin{split}
            \ell(\omega)
            &:= 
            \inf 
        \set{k\in\Z
            \,\,:\,\,
        \Gamma^{[0, m]}_\omega\text{ is injective for all $m\geq k$}}
       \end{split}
    \end{equation*}
    where we take $\inf\emptyset := \infty$.
    As above, we use the notation $\ell_x$ to denote the random variable $\ell_{x}(\omega) = \ell(T^x(\omega))$ for $x\in\Z$. 
\end{definition} 
It is not immediately clear from the definitions that $\tilde{\ell} = \ell$, but we prove below that this is the case. 
This aside, we focus on $\ell$ and prove its basic properties.
\begin{lemma}
\label{Lem:Properties_of_inj_length}
\label{Lem:01_law_inj_length}
    $\Pr[\ell < \infty]\in\set{0, 1}$.
\end{lemma}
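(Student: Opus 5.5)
Since $T$ is ergodic, any $T$-invariant event has probability $0$ or $1$. The plan is therefore to show that the event $A := \set{\omega : \ell(\omega) < \infty}$ is invariant up to null sets. Ergodicity is used only through this invariance: showing $T^{-1}(A) \subset A$ up to a null set is enough, because $T$ preserves $\Pr$, so $\Pr(T^{-1}A) = \Pr(A)$ and the inclusion forces $\Pr(T^{-1}A \,\triangle\, A) = 0$. Measurability of $A$ should be routine. Injectivity of $\Gamma^{[0,k]}_\omega$ is the condition that a measurable matrix-valued function of $\omega$ has full rank, so each set $\set{\Gamma^{[0,m]}\text{ injective}}$ is measurable, and $A$ is a countable union of countable intersections of such sets.

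The key step is a monotonicity statement for injectivity. If $\Gamma^{[l,r]}$ is injective, then $\Gamma^{[l',r']}$ should be injective for every $[l',r'] \supset [l,r]$. I will prove it by extending one site to the right at a time; extension to the left is symmetric. Suppose $\Gamma^{[l,r+1]}(b) = 0$. The coefficient of $\ket{e_{\sigma}} \otimes \ket{e_j}$ is
\[
\trD{b X^j_{r+1} X^\sigma} = \trD{(b X^j_{r+1}) X^\sigma},
\]
so $\Gamma^{[l,r]}(b X^j_{r+1}) = 0$ for each $j$. Injectivity of $\Gamma^{[l,r]}$ then gives $b X^j_{r+1} = 0$ for all $j$. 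Since $V$ is an isometry, $\sum_j X^{j*}_{r+1} X^j_{r+1} = \bondOne$, so
\[
b = \sum_j b X^{j*}_{r+1} X^j_{r+1} .
\]
This does not immediately vanish, because we know $bX^j = 0$, not $bX^{j*} = 0$. I need to double-check the ordering convention here. Under the convention $X^\sigma = X^{i_r}_r \cdots X^{i_l}_l$, a new site on the right multiplies $X^\sigma$ on the left, so the coefficient is $\trD{b X^j X^\sigma}$ and we do get $b X^j = 0$ for all $j$. If instead the extension places $X^j$ on the other side, the coefficient is $\trD{X^j b X^\sigma}$ and we get $X^j b = 0$, hence $b = \sum_j X^{j*} X^j b = 0$ directly. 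So extension on the side that produces $X^j b$ is immediate. For the side that produces $bX^j$, I would argue via the transfer map: if $bX^j = 0$ for all $j$, then $E^{[r+1]}_{\localOne}(b^* b) = \sum_j X^{j*} b^* b X^j = 0$. That is not a contradiction by itself, since $b^*b$ need not be invertible.

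This is the main obstacle. I expect only one-sided extension to be automatic; that is exactly why the paper distinguishes $\tilde\ell$ from $\ell$ and defines $\ell$ with "for all $m \ge k$". Fortunately, the $0$–$1$ law does not need two-sided monotonicity. Compare $\ell(\omega)$ with $\ell(T\omega) = \ell_1(\omega)$. By Lemma~\ref{Lem:Gamma_covariant}, $\ell_1 < \infty$ means that $\Gamma^{[1,m]}$ is injective for all large $m$. To conclude $\ell < \infty$, I need $\Gamma^{[0,m]}$ injective for large $m$, which is extension on the \emph{left} by one site. In that case the new factor $X^{i_0}_0$ sits at the right end of the product: the coefficient is $\trD{b X^\sigma X^{i_0}_0} = \trD{X^{i_0}_0 b X^\sigma}$, so injectivity of $\Gamma^{[1,m]}$ gives $X^{i}_0 b = 0$ for all $i$. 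Then $b = \sum_i X^{i*}_0 X^i_0 b = 0$. Hence $\Gamma^{[0,m]}$ is injective whenever $\Gamma^{[1,m]}$ is. This shows $\ell_1 < \infty \Rightarrow \ell < \infty$, that is, $T^{-1}A \subset A$ almost surely, and the argument in the first paragraph finishes the proof.
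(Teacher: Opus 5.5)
Your proof is correct and is essentially the paper's argument: you reduce via ergodicity to the one-step inclusion $\{\ell_1<\infty\}\subset\{\ell<\infty\}$, transport injectivity from $T\omega$ to $\Gamma^{[1,m+1]}$ at $\omega$ by Lemma~\ref{Lem:Gamma_covariant}, and extend one site to the left using $X^j_0 b = 0$ for all $j$ together with $\sum_j X^{j*}_0 X^j_0 = \bondOne$. The paper's text announces the reduction as $\ell_0<\infty\Rightarrow\ell_1<\infty$, but it actually proves the same direction you do, and your measure-preservation remark correctly explains why that single inclusion suffices.
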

\begin{proof}
    By ergodicity of $T$, it suffices to show that $\ell_0 < \infty$ implies $\ell_1 < \infty$ almost surely.
    So, assume we are on the event $\ell<\infty$. 
    Then by Lemma \ref{Lem:Gamma_covariant}, we know that $\Gamma^{[1, 1 + \ell_1 + k]}$ is injective for all $k\in\N$. 
    Now, suppose $\Gamma^{[0, \ell_1 + k]}(b) = 0$ for some $b\in\MatricesLocal$.
    Then we have that 
    \begin{equation*}
        0
            =
        \sum_{j\in[\LocalDim]^{[0, 1)}}
        \sum_{\sigma\in[\LocalDim]^{[1, 1 + \ell_1 + k]}}
            \trD{X^j b X^\sigma}
                \ket{e_j}\otimes \ket{e_\sigma}.
    \end{equation*}
    So, for all $j\in[\LocalDim]$, we have
    \begin{equation*}
         0 
            =
         \sum_{\sigma\in[\LocalDim]^{[1, 1 + \ell_1 + k]}}\trD{X^j b X^\sigma}
            =
        \Gamma^{[1, 1 + \ell_1 + k]}(X^j b).
    \end{equation*}
    Thus, by the injectivity of $\Gamma^{[1, 1 + \ell_1 + k]}$, we have $X^j b = 0$ for all $j\in[\LocalDim]$. 
    Since $\sum_{j=1}^\LocalDim (X^j)^*X^j = \bondOne$, we conclude from this that 
    \begin{equation*}
        b 
        =
        \sum_{j=1}^\LocalDim (X^j)^*X^j b
        =
        0,
    \end{equation*}
    which shows $\Gamma^{[0, \ell_1 + k]}$ is injective and concludes the proof. 
\end{proof}
By the above lemma, we are justified in writing $\ell<\infty$ to mean that $\Pr[\ell < \infty] = 1$.
In the case that $\ell < \infty$, we say that $\psi$ has finite injectivity length, or simply that $\psi$ is injective.
There is a useful implication of this property for the transfer operator ergodic quantum process that we now describe.
Recall a linear map $\varphi:\matrices\to\matrices$ is called strictly positive if $\varphi(\rho)$ is invertible for all $\rho\in\states$. 
\begin{prop}
\label{Prop:Inj_iff_ESP}
\label{prop:ESPiffFiniteInjLength}
    If $\psi$ is injective, then the transfer operator ergodic quantum process satisfies the following condition. 
    \begin{description}
    \hypertarget{ESP}{}
    \item[\ESP] 
    For almost every $\omega\in\Omega$, there exists $M\in\N$ such that $m\geq M$ implies $E^{[0, m]}_{\localOne; \omega}$ is strictly positive.
    \end{description}
\end{prop}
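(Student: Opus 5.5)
The plan is to connect injectivity of $\Gamma^{[0,m]}$ to strict positivity of $E^{[0,m]}_{\localOne}$ through Lemma \ref{Lem:XPlicit}, which gives
\begin{equation*}
    E^{[0,m]}_{\localOne}(b) = \sum_{\sigma\in[\LocalDim]^{[0,m]}} (X^\sigma)^* b X^\sigma .
\end{equation*}
Strict positivity of a positive map $\phi$ on $\MatricesBond$ is equivalent to $\phi(\ketbra{\xi}{\xi})$ being invertible for every unit vector $\xi$: every density matrix dominates a positive multiple of some rank-one projection, and positivity of $\phi$ then lets invertibility pass to the larger matrix. So it suffices to show the following. If $\Gamma^{[0,m]}$ is injective, then for all unit $\xi,\eta\in\HilbertBond$,
\begin{equation*}
    \bra{\eta}E^{[0,m]}_{\localOne}(\ketbra{\xi}{\xi})\ket{\eta} = \sum_{\sigma} \big|\bra{\xi}X^\sigma\ket{\eta}\big|^2 > 0 .
\end{equation*}

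Suppose instead that the sum vanishes for some unit $\xi,\eta$. Then $\bra{\xi}X^\sigma\ket{\eta}=0$ for every $\sigma$. Since $\bra{\xi}X^\sigma\ket{\eta} = \trD{\ketbra{\eta}{\xi}X^\sigma}$, this says exactly that $\Gamma^{[0,m]}(\ketbra{\eta}{\xi}) = 0$ by \eqref{eqn:Gamma_maps}. Injectivity then forces $\ketbra{\eta}{\xi}=0$, which is impossible for unit vectors. Hence $E^{[0,m]}_{\localOne;\omega}$ is strictly positive whenever $\Gamma^{[0,m]}_\omega$ is injective.

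To conclude, I use the hypothesis and Lemma \ref{Lem:01_law_inj_length}, which give $\ell<\infty$ almost surely. By Definition \ref{Def:Injectivity_length}, $\Gamma^{[0,m]}_\omega$ is then injective for every $m\ge \ell(\omega)$. Taking $M=\ell(\omega)$ and applying the previous step yields \ESP\ for almost every $\omega$. The fact that $\ell$ already encodes injectivity for all $m$ beyond a threshold avoids any monotonicity argument; if one only had $\tilde\ell$, monotonicity would follow from the argument in the proof of Lemma \ref{Lem:01_law_inj_length}, adapted by appending a site on the right.

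I expect no serious obstacle. The only care needed is the reduction from density matrices to rank-one projections and the index bookkeeping in $X^\sigma$: the identification $\bra{\xi}X^\sigma\ket{\eta} = \trD{\ketbra{\eta}{\xi}X^\sigma}$ must match the ordering convention in \eqref{Eqn:Xsigma_def}. Since the trace is cyclic, this holds regardless of that ordering.
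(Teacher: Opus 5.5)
Your proof is correct and follows the same route as the paper. Injectivity of $\Gamma^{[0,m]}$ means the $X^\sigma$ span $\MatricesBond$, and this forces strict positivity of $E^{[0,m]}_{\localOne}$; the paper cites the Sanz--P\'erez-Garc\'ia--Wolf--Cirac argument for this, while you carry it out explicitly via $\sum_\sigma\big|\bra{\xi}X^\sigma\ket{\eta}\big|^2>0$, and by working with $\ell$ rather than $\tilde\ell$ you obtain the ``for all $m\geq M$'' clause directly. One caution on your parenthetical aside, which your main argument does not use: appending a site on the right does not follow by mirroring the proof of Lemma~\ref{Lem:01_law_inj_length}, because that step needs $\sum_j X^j X^{j*}$ to be invertible rather than $\sum_j X^{j*}X^j=\bondOne$, and the paper instead uses the faithful invariant state in Lemma~\ref{Lem:Kappa_lemma}(b).
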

\begin{proof}
    Since $\tilde{\ell}\leq \ell$, it suffices to show that $E^{[0, \tilde{\ell}]}_{\localOne}$ is strictly positive almost surely. 
    Now, notice that the injectivity of $\Gamma^{[0, \tilde{\ell}]}$ is equivalent to $\operatorname{span}\set{X^\sigma\,\,:\,\,\sigma\in[\LocalDim]^{[0, \tilde{\ell}]}} = \MatricesBond$. 
    Following the same proof as \cite[Proposition 1]{SanzPerezGarciaWolfCirac}, we conclude that $E^{[0, \tilde{\ell}]}_{\one}$ is strictly positive almost surely, which concludes the proof. 
\end{proof}
In fact, one can show that \ESPa is \textit{equivalent} to injectivity, a fact we prove in Appendix \ref{App:Injectivity}.
The condition \ESPa is called \textit{eventual strict positivity}, a condition which was studied at length in \cite{MovassaghSchenker}. 
A useful consequence of this condition comes from following theorem of \cite{MovassaghSchenker}.
\begin{thm}[\texorpdfstring{\cite[Theorem 1]{MovassaghSchenker}}{l}]\label{Thm:ESP}
    Let $(T, \phi)$ be an ergodic quantum process on $\matrices$.
    If $(T, \phi)$ satisfies \ESP, there is $\hat{\rho}\in\cstates{\Omega, \matrices}$, a universal constant $\mu\in (0, 1)$, and a measurable function $C:\Omega\to (0, \infty)$ such that $\rho$ is faithful almost surely, 
    \begin{equation}
        \hat{\rho}_0\circ\phi_{1} = \hat{\rho}_{1}\,,
    \end{equation}
    holds almost surely, and for all $[l, r]\subset\Z$ and $x\in[l, r]$, 
    \begin{equation}
        \left\|
            \phi_{}^{[l, r]}
            -
            \Delta_{r}
        \right\|
        \leq 
        C_{x} \mu^{r - l}
    \end{equation} 
    holds almost surely, where $\Delta_r:\Omega\to\operatorname{ucp}(\matrices)$ is defined by $\Delta_{r}(a) := \hatrho_{r}(a)\one_d$.
\end{thm}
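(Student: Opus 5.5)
The theorem is quoted from \cite[Theorem 1]{MovassaghSchenker}, so the proof I have in mind retraces their argument rather than invoking anything new. The plan has three steps: build the random fixed state from the ESP condition, establish a Hilbert-metric contraction, and convert that contraction into operator-norm convergence towards the rank-one map $\Delta_r$.

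\textbf{Step 1: construct $\hatrho$.} Work in the dual (Schrödinger) picture. For almost every $\omega$, consider the sets
\[
K_r^{(n)} := \big(\phi^{[r-n+1,\,r]}\big)^{\dual{}}\big(\states\big),
\]
where $\phi^{[l,r]}$ is the composition \eqref{eqn:Generic_EQP}; the last-applied dual map carries the input at site $l$ to the output at site $r$. These sets are nested and compact in $n$, so their intersection is nonempty. I define $\rho_r$ as the unique point of that intersection; uniqueness is exactly what Step 2 delivers. The identity $\hatrho_0\circ\phi_1=\hatrho_1$ is then immediate from the nesting, and measurability follows because $\rho_r$ is a pointwise limit of measurable maps. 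By ESP, far enough back the composition is strictly positive, so its image lies in the invertible density matrices and $\rho$ is faithful.

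\textbf{Step 2: Hilbert-metric contraction.} Use the Hilbert projective metric $\vartheta$ on the cone $\P_d$. Every positive map is a $\vartheta$-contraction. A strictly positive map sends $\P_d\setminus\{0\}$ into a set of finite $\vartheta$-diameter $\Delta$. By Birkhoff's theorem it therefore contracts $\vartheta$ by the factor $\tanh(\Delta/4)<1$.

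By ESP, the random variable $N$ (the first $m$ such that $\phi^{[1,m]}$ is strictly positive) is almost surely finite. Hence there is a deterministic block length $N_0$ and a deterministic $c<1$ such that the event
\[
A:=\big\{N\le N_0 \text{ and the contraction coefficient of } \phi^{[1,N_0]} \text{ is at most } c\big\}
\]
has positive probability. Split $[l,r]$ into consecutive blocks of length $N_0$. The shifted map $T^{N_0}$ need not be ergodic, but it has finitely many ergodic components. Applying the ergodic theorem to each of them shows that almost surely the fraction of blocks on which the event $A$ occurs tends to some positive $p$. The $\vartheta$-diameter of the image of $\phi^{[l,r]}$ is then bounded by
\[
C' c^{\#\text{good blocks}} \le C'\,\mu_0^{\,r-l}
\]
once $r-l$ exceeds a random threshold. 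The uniqueness claim in Step 1 follows by applying the same bound backwards in time.

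\textbf{Step 3: from $\vartheta$-diameter to operator norm (main obstacle).} A small $\vartheta$-diameter of the image of the dual map means that every state $\sigma$ is sent close to $\rho_r$ in trace norm. Dualizing gives
\[
\big\|\phi^{[l,r]}-\Delta_r\big\|\le C\,\vartheta\text{-diam}.
\]
What remains is to turn "eventually, beyond a random threshold" into a bound $C_x\mu^{r-l}$ valid for all $l\le x\le r$. I will absorb the threshold into the prefactor: define
\[
C_x:=\sup_{l\le x\le r}\,\mu^{-(r-l)}\big\|\phi^{[l,r]}-\Delta_r\big\|,
\]
choosing $\mu\in(\mu_0,1)$, and show $C_x<\infty$ almost surely. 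This is the step I expect to be hardest: the bound must control two-sided windows around $x$. The plan is to use that the ergodic averages converge in both directions from $x$, which gives linearly many good blocks on each side. Combined with $\|\phi^{[l,r]}-\Delta_r\|\le 2$ for the finitely many windows that are too short, this should yield finiteness of $C_x$. Measurability of $C_x$ holds because it is a countable supremum.
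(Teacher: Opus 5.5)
The paper does not prove this statement; it is imported from Movassagh--Schenker. Your outline is essentially the contraction argument behind that citation: Birkhoff contraction in the Hilbert projective metric, a positive density of strictly contracting blocks, $\rho_r$ as the singleton pullback limit, and splitting each window at $x$ so that forward and backward ergodic averages from $x$ control the prefactor. The plan is sound.

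One claim in Step~2 is false as written, though it is easily repaired. For a fixed block alignment, the limiting frequency of good blocks is the conditional probability of $A$ on the ergodic component of $T^{N_0}$ containing the starting point. This can be $0$, because $A$ may be carried by some but not all of these components.

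The fix is to count good start times $k$, i.e.\ those with $T^k\omega\in A$. Use Birkhoff's theorem for the ergodic maps $T$ and $T^{-1}$ started at $x$. Then greedily select pairwise disjoint length-$N_0$ blocks among these times; this captures at least a $1/N_0$ fraction of them. Any disjoint family of good blocks is enough, since the maps in between are $\vartheta$-nonexpansive.

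This gives at least $\frac{\Pr[A]}{2N_0}(r-l)-K_x$ disjoint good blocks in every window $[l,r]\ni x$. Here $K_x=K\circ T^x$ is finite and is determined by the random time after which both one-sided averages from $x$ exceed $\Pr[A]/2$. Consequently $C_x\lesssim c^{-K_x}<\infty$ with $\mu=c^{\Pr[A]/(2N_0)}$ directly, and your separate treatment of short windows (and the passage to $\mu>\mu_0$) becomes unnecessary.
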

In light of Proposition \ref{Prop:Inj_iff_ESP}, this theorem may be immediately understood as saying that a finite injectivity length yields relaxation to equilibrium in the long-time limit of the transfer operator ergodic quantum process associated to $\psi$ satisfying \PureGen, with a {site-dependent prefactor}. 
We now reformulate these properties about $(T, E)$ in terms of $\psi$ and $\Gamma$ more explicitly in the following series of lemmas. 
\begin{lemma}\label{Lem:hatrho_as_in_ESP}
    Assume $\psi$ has finite injectivity length. 
    Then $\hatrho\in\cstates{\Omega, \MatricesBond}$ satisfies the properties as in Theorem \ref{Thm:ESP} for the transfer operator ergodic quantum process.
    In particular, $\rho$ is faithful almost surely. 
\end{lemma}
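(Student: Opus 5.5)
The plan is to identify the random state $\hatrho$ from \PureGen with the one produced by Theorem~\ref{Thm:ESP}, using a uniqueness argument. Theorem~\ref{Thm:ESP} applies to the transfer operator ergodic quantum process $(T, E_{\localOne})$. Indeed, by Lemma~\ref{Lem:Generated}(a) this is an ergodic quantum process, and by Proposition~\ref{Prop:Inj_iff_ESP} it satisfies \ESP. So Theorem~\ref{Thm:ESP} supplies a random state $\hat{\sigma}\in\cstates{\Omega,\MatricesBond}$ with the following properties: it is almost surely faithful, it satisfies $\hat\sigma_0\circ E^{[1]}_{\localOne} = \hat\sigma_1$, and it obeys the exponential relaxation bound
\begin{equation*}
\big\|E^{[l,r]}_{\localOne} - \hat\sigma_r(\cdot)\bondOne\big\| \leq C_x\mu^{r-l}.
\end{equation*}
It then suffices to show that $\hatrho = \hat\sigma$ almost surely. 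Once that holds, every property in Theorem~\ref{Thm:ESP} transfers to $\hatrho$, and in particular $\rho$ is faithful.

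For the uniqueness, iterate the equivariance relation $\hatrho_x\circ E^{[x+1]}_{\localOne} = \hatrho_{x+1}$ from \PureGen. This gives, almost surely and for every $l\leq r$,
\begin{equation*}
\hatrho_r = \hatrho_{l-1}\circ E^{[l,r]}_{\localOne}.
\end{equation*}
Fix $r$ (say $r=0$) and $x = l$, and subtract $\hat\sigma_r = \hatrho_{l-1}\big(\hat\sigma_r(\cdot)\bondOne\big)$. Since $\hatrho_{l-1}$ is a state and has norm one, we get
\begin{equation*}
\big|\hatrho_r(b) - \hat\sigma_r(b)\big| \leq C_l\,\mu^{r-l}\|b\|.
\end{equation*}

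The main obstacle is that the prefactor $C_l$ depends on the left endpoint, which is exactly the site we send to $-\infty$, so we cannot simply let $l\to-\infty$. I would handle this with ergodicity. Since $C$ is almost surely finite, there is $K$ with $\Pr[C\leq K] > 0$. Because $C_l(\omega) = C(T^l\omega)$ and $T^{-1}$ is measure-preserving and ergodic, Birkhoff's theorem (or Poincaré recurrence) gives, for almost every $\omega$, infinitely many $l\to-\infty$ with $C_l\leq K$. Along that subsequence the bound tends to zero, so $\hatrho_0 = \hat\sigma_0$ almost surely. Shifting by $T$ then gives $\hatrho_x = \hat\sigma_x$ for all $x$. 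If the statement of Theorem~\ref{Thm:ESP} is read with the prefactor attached to a different point $x\in[l,r]$, I would choose $x = r$ instead, which is fixed, and the limit is immediate.

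Faithfulness of $\rho$ follows directly. If one also wants to avoid relying on the prefactor's form altogether, injectivity gives another route: $E^{[l,r]}_{\localOne}(p)$ is invertible for every projection $p$ once $r-l\geq \ell_l$. Combined with $\hatrho_r(p) = \hatrho_{l-1}\big(E^{[l,r]}_{\localOne}(p)\big)$, this yields $\hatrho_r(p) > 0$, since a state on $\MatricesBond$ cannot vanish on an invertible positive matrix.
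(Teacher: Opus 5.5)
Your proposal is correct and follows the same route as the paper, whose proof consists only of the assertion that the equivariant state of Theorem~\ref{Thm:ESP} is unique. Your iteration $\hat{\rho}_r = \hat{\rho}_{l-1}\circ E^{[l,r]}_{\mathbbm{1}_n}$, combined with the relaxation bound, supplies the verification the paper omits: any random state satisfying the equivariance in \PureGen{} coincides with the one produced by the theorem. The bound is most cleanly applied with the prefactor taken at the fixed site $x=r$, which the theorem's statement permits, so the recurrence argument is not needed.
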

\begin{proof}
    One immediately verifies that any state $\hatrho\in\cstates{\Omega, \matrices}$ as in Theorem \ref{Thm:ESP} is necessarily unique, which is all that is needed to prove the statement. 
\end{proof}
In accordance with this lemma and Theorem \ref{Thm:ESP}, for all $r\in\Z$, we define $\Delta_r:\Omega\to\operatorname{ucp}\!\seq{\MatricesBond}$ by 
\begin{equation*}
    \Delta_{r}(b)
    =
    \hatrho_{r}(b)\bondOne.
\end{equation*}
\begin{lemma}\label{lem:alpha_estimates}
    For $[l, r]\subset\Z$, we define $\alpha^{[l, r]}, \tilde{\alpha}^{[l, r]}:\Omega\to(0, \infty)$ by
    \begin{equation*}
            \alpha^{[l, r]}(\omega) 
                := 
            \trD{\rho_{r; \omega}^{-1}}
            \big\|
                E_{\localOne; \omega}^{[l, r]}
                -
                \Delta_{r; \omega}
            \big\|
            \quad 
            \text{
            and
            }
            \quad 
        \tilde{\alpha}^{[l, r]}(\omega)
            :=
        \BondDim^2 
         \big\|
                E_{\localOne; \omega}^{[l, r]}
                -
                \Delta_{r; \omega}
            \big\|
    \end{equation*}
    Assume $\psi$ has finite injectivity length. 
    Then the following hold. 
     \begin{enumerate}[label = (\alph*)]
        \item Let $C:\Omega\to (0, \infty)$ and $\mu\in (0, 1)$ be as in Theorem \ref{Thm:ESP} as it pertains to the transfer operator ergodic quantum process $(T, E_{\localOne})$. 
        Then for all $[l, r]\subset\Z$ and $x\in[l, r]$,
        \begin{equation*}
            \alpha^{[l, r]}\leq C_{x} \trD{\rho^{-1}_{r}} \mu^{r - l}
            \quad 
            \text{and}
            \quad 
            \tilde{\alpha}^{[l, r]}\leq C_x D^2 \mu^{r - l}
        \end{equation*}
        almost surely. 
        In particular, for all $l < r$, 
        \begin{equation*}
            \lim_{l\to - \infty} \alpha^{[l, r]} = \liminf_{r\to\infty}\alpha^{[l, r]} = 
            \lim_{l\to-\infty}\tilde{\alpha}^{[l, r]} = \lim_{r\to\infty}\tilde{\alpha}^{[l, r]} = 0
        \end{equation*}
        almost surely. 

        \item For all $b_1, b_2\in\MatricesBond$ and almost every $\omega\in\Omega$, 
        \begin{equation*}
            \Big| 
            \braket{\Gamma^{[l, r]}(b_1)}
            {\Gamma^{[l, r]}(b_2)}
                -
            \inner{b_1}{b_2}_{\rho_r}
        \Big|
        \leq 
            \min\Big( 
                \alpha^{[l, r]}
        \|b_1\|_{\rho_r}
        \|b_2\|_{\rho_r},
        \tilde{\alpha}^{[l, r]}
        \|b_1\|
        \|b_2\|
            \Big)
        \end{equation*}
        almost surely. 

        \item For all $b_1, b_2\in\MatricesBond$, $[s, t]\subset[l, r]$, and $\bs{a}\in\A{[s, t]}$,
        \begin{equation*}
          \begin{split}
                \Big| 
            \bra{\Gamma^{[l, r]}(b_1)}
                \bs{a}
            \ket{\Gamma^{[l, r]}(b_2)}
                -
            \psi(\bs{a})
            \inner{b_1}{b_2}_{\rho_r}
        \Big|&\\
        &\hspace{-25mm}\leq 
        \min\Big( 
            \big( 
            \alpha^{[l, s - 1]}
            +
            \alpha^{[t, r]}
        \big)
        \|\bs{a}\|
        \|b_1\|_{\rho_r}
        \|b_2\|_{\rho_r}, 
        \big( 
            \tilde{\alpha}^{[l, s - 1]}
            +
            \tilde{\alpha}^{[t, r]}
        \big)
        \|\bs{a}\|
        \|b_1\|
        \|b_2\|
        \Big)
          \end{split}
        \end{equation*}
        almost surely. 
    \end{enumerate}
\end{lemma}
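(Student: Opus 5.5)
The plan is to prove the three parts in order, deriving (b) and (c) from Lemma~\ref{Lem:Basis} by replacing products of transfer operators with the rank-one maps $\Delta_r$. The error of each replacement is measured by $\|E^{[\cdot,\cdot]}_{\localOne}-\Delta\|$, which is exactly what $\alpha$ and $\tilde\alpha$ record.

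\textbf{Part (a).} This is immediate from Theorem~\ref{Thm:ESP}, applied to $(T,E_{\localOne})$ through Lemma~\ref{Lem:hatrho_as_in_ESP}. That theorem gives $\|E^{[l,r]}_{\localOne}-\Delta_r\|\le C_x\mu^{r-l}$; multiplying by $\trD{\rho_r^{-1}}$, respectively by $\BondDim^2$, yields both inequalities.
\begin{itemize}
\item For $l\to-\infty$ with $r$ fixed, I take $x=r$. Then $C_r$ and $\trD{\rho_r^{-1}}$ are fixed, almost surely finite random variables (since $\rho$ is faithful almost surely), and $\mu^{r-l}\to0$.
\item For $r\to\infty$ with $l$ fixed, I take $x=l$. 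This immediately gives $\tilde\alpha^{[l,r]}\to0$.
\item For $\alpha^{[l,r]}$ as $r\to\infty$, the prefactor $\trD{\rho_r^{-1}}=f(T^r\omega)$ is a stationary, almost surely finite sequence. By Poincaré recurrence (or Birkhoff's theorem applied to $\one_{\{f\le K\}}$ for $K$ large), $f(T^r\omega)\le K$ for infinitely many $r$. Along that subsequence $\alpha^{[l,r]}\to0$, which is why only a $\liminf$ is claimed.
\end{itemize}

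\textbf{Part (b).} I apply Lemma~\ref{Lem:Basis} with $\bs a=\one$, which gives
\[\braket{\Gamma^{[l,r]}(b_1)}{\Gamma^{[l,r]}(b_2)}=\sum_{x,y}\bra{f_x}E^{[l,r]}_{\localOne}(b_1^*\ketbra{f_x}{f_y}b_2)\ket{f_y}.\]
Replacing $E^{[l,r]}_{\localOne}$ by $\Delta_r$ produces $\sum_x\hatrho_r(b_1^*\ketbra{f_x}{f_x}b_2)=\inner{b_1}{b_2}_{\rho_r}$. Each of the $\BondDim^2$ error terms is bounded by
\[\|E^{[l,r]}_{\localOne}-\Delta_r\|\,\|b_1^*f_x\|\,\|b_2^*f_y\|,\]
and summing these gives the $\tilde\alpha$ bound. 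The expression is independent of the choice of orthonormal basis, so for the $\alpha$ bound I take $\{f_x\}$ to be an eigenbasis of $\rho_r$, with eigenvalues $\lambda_x>0$. A weighted Cauchy--Schwarz then gives
\[\sum_x\|b_1^*f_x\|\le\Big(\sum_x\lambda_x^{-1}\Big)^{1/2}\Big(\sum_x\lambda_x\|b_1^*f_x\|^2\Big)^{1/2}.\]
The product of the two such sums is therefore $\trD{\rho_r^{-1}}$ times quadratic forms in $\rho_r$. The one point I must check carefully here is that the conventions in \eqref{eqn:Gamma_maps} (where $b$ appears inside $\trD{bX^\sigma}$) make these quadratic forms come out as $\|b_i\|_{\rho_r}^2$ rather than as $\trD{\rho_r b_ib_i^*}$. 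If needed I will redo the bookkeeping, inserting $\ketbra{f_x}{f_x}$ on the other side of $b_i$.

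\textbf{Part (c).} I factor
\[E^{[l,r]}_{\bs a}=E^{[l,s-1]}_{\localOne}\circ E^{[s,t]}_{\bs a}\circ E^{[t+1,r]}_{\localOne}\]
and telescope in two steps.
\begin{itemize}
\item First I replace the right block by $\Delta_r$. By Lemma~\ref{Lem:Generated}(b), the left two factors have norm at most $\|\bs a\|$, so this costs $\|\bs a\|\,\|E^{[t+1,r]}_{\localOne}-\Delta_r\|$.
\item This leaves $\hatrho_r(Y)\,E^{[l,s-1]}_{\localOne}(E^{[s,t]}_{\bs a}(\bondOne))$, where $Y=b_1^*\ketbra{f_x}{f_y}b_2$. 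I then replace $E^{[l,s-1]}_{\localOne}$ by $\Delta_{s-1}$, at a cost of $\|E^{[l,s-1]}_{\localOne}-\Delta_{s-1}\|\,\|\bs a\|$.
\item The main term becomes $\hatrho_{s-1}(E^{[s,t]}_{\bs a}(\bondOne))\,\hatrho_r(Y)$. The first factor equals $\psi(\bs a)$ by \eqref{Eqn:PureGen_MPS}, and summing the second over $x=y$ gives $\inner{b_1}{b_2}_{\rho_r}$, exactly as in (b).
\end{itemize}
The error terms are summed over $x,y$ with the same two estimates as in (b). Any off-by-one in the interval labels ($[t,r]$ versus $[t+1,r]$) and any mismatch in the $\rho$ prefactor (the first block relaxes to $\rho_{s-1}$, not $\rho_r$) are absorbed by the monotonicity of $\|E_{\localOne}-\Delta\|$ under composition with ucp maps. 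Matching these prefactors, especially the $\trD{\rho_r^{-1}}$ factor for the left block, is the step I expect to require the most care.
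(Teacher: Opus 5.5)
\textbf{The gap is in part (c).} You say the shift from $[t+1,r]$ to $[t,r]$ is absorbed by monotonicity, but monotonicity runs the other way. Since $E^{[t]}_{\localOne}\circ\Delta_r=\Delta_r$, we have $E^{[t,r]}_{\localOne}-\Delta_r=E^{[t]}_{\localOne}\circ(E^{[t+1,r]}_{\localOne}-\Delta_r)$. Hence $\alpha^{[t,r]}\le\alpha^{[t+1,r]}$ and $\tilde\alpha^{[t,r]}\le\tilde\alpha^{[t+1,r]}$, so the bound your telescoping produces (involving $E^{[t+1,r]}_{\localOne}-\Delta_r$) is the weaker one.

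It cannot be upgraded, because (c) as written is false. Take the constant tensor $X^{(i,j)}=D^{-1/2}\ketbra{f_i}{f_j}$ with $n=D^2$ and $D\ge 2$. Then $E_{\localOne}(b)=D^{-1}\trD{b}\bondOne$ and $\rho\equiv\bondOne/D$, so $\alpha$ and $\tilde\alpha$ vanish on every nonempty interval. Choose $l=r-1$, $s=t=r$, $\bs a=\ketbra{e_{(1,1)}}{e_{(1,1)}}$ and $b_1=b_2=\ketbra{f_1}{f_1}$. One computes $\Gamma^{[l,r]}(b_1)=D^{-1}\sum_q\ket{e_{(q,1)}}\otimes\ket{e_{(1,q)}}$, so $\bra{\Gamma^{[l,r]}(b_1)}\bs a\ket{\Gamma^{[l,r]}(b_2)}=D^{-2}$. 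On the other hand $\psi(\bs a)\inner{b_1}{b_2}_{\rho_r}=D^{-3}$, while the right-hand side of (c) is $0$. The same failure occurs with $t<r$ if the sites in $(t,r]$ carry the identity tensor ($X^1=\bondOne$, all other $X^k=0$), which happens with positive probability in an i.i.d.\ mixture of the two tensors.

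What your argument actually proves is (c) with $\alpha^{[t+1,r]}$ and $\tilde\alpha^{[t+1,r]}$. That is also all the paper's proof yields, since its $F_3$ contains $E^{[t+1,r]}_{\localOne}-\Delta_r$. It is also the form used later, for example $\alpha^{[r_0+1,\cdot]}$ in Corollary~\ref{Cor:Unique_ground_state}.

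\textbf{The left-block prefactor.} Monotonicity does not resolve this either: it says nothing about $\trD{\rho_r^{-1}}$ versus $\trD{\rho_{s-1}^{-1}}$. Reusing the estimate from (b) verbatim would leave you with $\trD{\rho_r^{-1}}\|E^{[l,s-1]}_{\localOne}-\Delta_{s-1}\|$, which is not controlled by $\alpha^{[l,s-1]}$. The fix is easy. The second telescoping error equals $\trD{\rho_r b_1^*Zb_2}$, where $Z=(E^{[l,s-1]}_{\localOne}-\Delta_{s-1})(E^{[s,t]}_{\bs a}(\bondOne))$ satisfies $\|Z\|\le\|\bs a\|\,\|E^{[l,s-1]}_{\localOne}-\Delta_{s-1}\|$. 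Cauchy--Schwarz for $\inner{\cdot}{\cdot}_{\rho_r}$ bounds this error by $\|Z\|\,\|b_1\|_{\rho_r}\|b_2\|_{\rho_r}$, and trivially by $\|Z\|\,\|b_1\|\,\|b_2\|$. Since $\trD{\rho_{s-1}^{-1}}\ge D^2$, these lie within $\alpha^{[l,s-1]}$ and $\tilde\alpha^{[l,s-1]}$ respectively.

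\textbf{The rest is sound.} Part (a) matches the paper. In (b) your worry is justified. With the bracketing of Lemma~\ref{Lem:Basis}, each term is controlled by $\|b_1^*f_x\|\,\|b_2^*f_y\|$, which only yields $\trD{\rho_r b_ib_i^*}^{1/2}$; the paper's proof glosses over this. Your re-bracketing $\braket{\Gamma^{[l,r]}(b_1)}{\Gamma^{[l,r]}(b_2)}=\sum_{x,y}\bra{f_x}b_1^*E^{[l,r]}_{\localOne}(\ketbra{f_x}{f_y})b_2\ket{f_y}$ is exactly the fix. Each term is then controlled by $\|b_1f_x\|\,\|b_2f_y\|$, giving $\|b_i\|_{\rho_r}$, and the main term is unchanged at $\inner{b_1}{b_2}_{\rho_r}$. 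Your two-step telescoping in (c) is also correct. By contrast, the paper's split $F_1+F_2+F_3$ is not an identity: it equals $E^{[l,r]}_{\bs a}$ plus the cross term $(E^{[l,s-1]}_{\localOne}-\Delta_{s-1})\circ E^{[s,t]}_{\bs a}\circ(E^{[t+1,r]}_{\localOne}-\Delta_r)$.
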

Before we begin the proof, a remark is in order. 
\begin{remark}
    The reason we define two control functions $\alpha$ and $\tilde \alpha$ is twofold. 
    First, $\alpha$ is the natural candidate to make the machinery of the parent Hamiltonian proof work: namely, it appears in two crucial places in both the definition of $\kappa$ below, and our spectral gap estimates. Second, a corollary of the analogous result in \cite{FannesNachtergaeleWerner} is that the vector states given by $\Gamma^{[l,r]}(\one)$  weakly approximate the infinite volume state defined by the transfer apparatus. To achieve the first goal, it is necessary that a factor of $\rho_x$ appear, but this yields a pre-factor of $\trD{\rho_x^{-1}}$ which may become unbounded \textit{a priori}. However, switching to $\tilde \alpha$ prevents the key mechanism in the proof that the quantity $\kappa$ is monotone with respect to volumetric inclusions in Lemma~\ref{lem:injectivity_number} below. Since $\tilde \alpha$ controls the weak-$\ast$ convergence well, but $\alpha$ is more algebraically suitable for later purposes, we introduce both in the lemma statement. 
\end{remark}

\begin{proof}[Proof of Lemma~\ref{lem:alpha_estimates}]
    The bounds in (a)
    follow immediately from Theorem \ref{Thm:ESP}, and the three limits are then clear. 
    The limit infimum follows by Poincar\'e recurrence \cite[Theorem 1.4]{Walters1982AnTheory} for $T$, so (a) is proved. 
    To prove (b) and (c), we follow the method of \cite[Lemma 5.2]{FannesNachtergaeleWerner}.  
    By Lemma \ref{Lem:Basis}, 
    \begin{equation*}
        \braket{\Gamma^{[l, r]}(b_1)}
        {\Gamma^{[l, r]}(b_2)}
            =
        \sum_{x, y=1}^\BondDim
        \bra{f_x}
            E^{[l, r]}_{\localOne}
                \Big(
                    b_1^* \ketbra{f_x}{f_y}
                    b_2
                \Big)
        \ket{f_y}.
    \end{equation*}
    On the other hand, 
    \begin{equation*}
        \inner{b_1}{b_2}_{\rho_r}
            =
           \sum_{x, y=1}^\BondDim
           \bra{f_x}
        \Delta_{r}\big(b_1^* \ketbra{f_x}{f_y}
                    b_2\big)
                    \ket{f_y}\,.
    \end{equation*}
    Therefore, by Cauchy-Schwarz, 
    \begin{equation}\label{Eqn:Alpha_estimates_Eqn_1}
          \begin{split}
              \Big| 
            \braket{\Gamma^{[l, r]}(b_1)}
            {\Gamma^{[l, r]}(b_2)}
                -
            \inner{b_1}{b_2}_{\rho_r}
        \Big|
        &\leq
        \|
            E_{\localOne}^{[l, r]}
            -
            \Delta_{r}
        \|
        \seq{\sum_{x=1}^\BondDim
            \sqrt{
            \bra{f_x}b_1b_1^*\ket{f_x}}
        }
        \seq{\sum_{y=1}^\BondDim
            \sqrt{
            \bra{f_y}b_2^*b_2\ket{f_y}}
        },
          \end{split}
    \end{equation} 
    where we have used the fact that $\braket{f_z}{f_z} = 1$ for all $z$.
    From here, it is clear that 
    \begin{equation*}
          \Big| 
            \braket{\Gamma^{[l, r]}(b_1)}
            {\Gamma^{[l, r]}(b_2)}
                -
            \inner{b_1}{b_2}_{\rho_r}
        \Big|
        \leq
        \tilde{\alpha}^{[l,r]}\|b_1\|\|b_2\|
    \end{equation*}
    almost surely. 
    However, since (\ref{Eqn:Alpha_estimates_Eqn_1}) is independent of the choice of basis $\set{f_k}_{k=1}^\BondDim$ of $\HilbertBond$, we may assume that $\set{f_k}_{k=1}^\BondDim$ is a basis with respect to which $\rho_{r}\in\states$ is diagonalized, i.e., that $\rho_{r} = \sum_{k=1}^\BondDim \hatrho_{r}(\ketbra{f_k}{f_k}) \ketbra{f_k}{f_k}$.
    Recalling that $\rho_{r}$ is almost surely invertible, for any $b\in\MatricesBond$ Cauchy-Schwarz again yields 
    \begin{align*}
        \sum_{k=1}^\BondDim
            \sqrt{
            \bra{f_k}b^*b\ket{f_k}}
            &\leq 
        \seq{
            \sum_{k=1}^\BondDim
            \hatrho_{r}(\ketbra{f_k}{f_k})
            \bra{f_k}b^*b\ket{f_k}
        }^{1/2}
         \seq{
            \sum_{k=1}^\BondDim
            \hatrho_{r}(\ketbra{f_k}{f_k})^{-1}
        }^{1/2}\,,\\
        &= 
        \|
            b
        \|_{\rho_r}
        \trD{\rho_{r}^{-1}}^{1/2}.
    \end{align*}
    Therefore, (\ref{Eqn:Alpha_estimates_Eqn_1}) yields 
    \begin{equation*}
          \Big| 
            \braket{\Gamma^{[l, r]}(b_1)}
            {\Gamma^{[l, r]}(b_2)}
                -
             \inner{b_1}{b_2}_{\rho_r}
        \Big|
        \leq 
            \alpha^{[l, r]}
        \|b_1\|_{\rho_r}
        \|b_2\|_{\rho_r}
    \end{equation*}
    almost surely,
    so (b) is proved 
    To prove (c), we again start by Lemma \ref{Lem:Basis} with
    \begin{equation*}
         \bra{\Gamma^{[l, r]}(b_1)}
            \bs{a}
        \ket{\Gamma^{[l, r]}(b_2)}
        =
        \sum_{x, y=1}^\BondDim
         \bra{f_x}
            E^{[l, r]}_{\bs{a}}
                \Big(
                    b_1^* \ketbra{f_x}{f_y}
                    b_2
                \Big)
        \ket{f_y}.
    \end{equation*}
    Writing the image of $\bs{a}\in\A{[s, t]}$ in $\A{[l, r]}$ as $\bs{a} = \one_{[l, s - 1]}\otimes \bs{a}\otimes \one_{[t + 1, r]}$, we may rewrite this as 
    \begin{equation*}
         \bra{\Gamma^{[l, r]}(b_1)}
            \bs{a}
        \ket{\Gamma^{[l, r]}(b_2)}
        =
        \sum_{x, y=1}^\BondDim
         \bra{f_x}
            E^{[l, s - 1]}_{\localOne}
            \circ 
            E^{[s, t]}_{\bs{a}}
            \circ 
            E^{[t + 1, r]}_{\localOne}
                \Big(
                    b_1^* \ketbra{f_x}{f_y}
                    b_2
                \Big)
        \ket{f_y}.
    \end{equation*}
    By writing $E^{[l, s - 1]}_{\localOne}
            \circ 
            E^{[s, t]}_{\bs{a}}
            \circ 
            E^{[t + 1, r]}_{\localOne}
            =
            F_1 + F_2 + F_3$
    where 
    \begin{align*}
          F_1 
        &=\Delta_{s - 1; \omega}\circ  E^{[s, t]}_{\bs{a}}
            \circ \Delta_{r},\\
            F_2&=
            \seq{ E^{[l, s - 1]}_{\localOne} - \Delta_{s - 1}}\circ  E^{[s, t]}_{\bs{a}}
            \circ E^{[t + 1, r]}_{\localOne},\\
            F_3&=
            E^{[l, s - 1]}_{\localOne}\circ E^{[s, t]}_{\bs{a}} \circ \seq{E^{[t + 1, r]}_{\localOne} - \Delta_{r}},
    \end{align*}
    we obtain as in the proof of (b) that 
    \begin{equation*}
        \begin{split}
            \Big| 
            \bra{\Gamma^{[l, r]}(b_1)}
            \bs{a}
        \ket{\Gamma^{[l, r]}(b_2)}
            -
        \psi(\bs{a})\inner{b_1}{b_2}_{\rho_r}
        \Big|
        \leq 
        S_1 + S_2\,,
        \end{split}
    \end{equation*} 
    where
    \begin{equation*}
        \begin{split}
            S_1 &:= \min\Big( 
            \big( 
            \alpha^{[l, s - 1]}
            +
            \alpha^{[t, r]}
        \big)
        \|\bs{a}\|
        \|b_1\|_{\rho_r}
        \|b_2\|_{\rho_r}, 
        \big( 
            \tilde{\alpha}^{[l, s - 1]}
            +
            \tilde{\alpha}^{[t, r]}
        \big)
        \|\bs{a}\|
        \|b_1\|
        \|b_2\|
        \Big),\\
        S_2 &:= \left|\sum_{x, y = 1}^\BondDim
            \bra{f_x}
            F_1(b_1^* \ketbra{f_x}{f_x}b_2)\ket{f_y}
                -
            \psi(\bs{a})\inner{b_1}{b_2}_{\rho_r}
        \right|.
        \end{split}
    \end{equation*}
    Recalling the definition of $\psi$ in terms of \PureGen, we see that 
    \begin{align*}
        \sum_{x, y = 1}^\BondDim
            \bra{f_x}
            F_1(b_1^* \ketbra{f_x}{f_x}b_2)\ket{f_y}
            &=   
        \hatrho_{s-1}\!\seq{E^{[s, t]}_{\bs{a}}(\bondOne)}\inner{b_1}{b_2}_{\rho_r}\\
            &=
        \psi(\bs{a})\inner{b_1}{b_2}_{\rho_r}.
    \end{align*}
    Thus, $S_2 = 0$, which concludes the proof. 
\end{proof}
The above proof shows that 
\begin{equation}\label{Eqn:Lower_bound_inj_length}
    \left\|
        \Gamma^{[l, r]}(b)
    \right\|
    \geq 
    \|b\|_{\rho_r}
    \seq{1 - \alpha^{[l, r]}}^{1/2}
\end{equation}
almost surely for all $b\in\MatricesBond$. 
In particular, part (a)  offers a \textit{quantitative} proof of the fact that $\tilde{\ell}<\infty$ almost surely, since $\hatrho$ is faithful almost surely. 
Due to the possibility that $C$ and $\trD{\rho^{-1}}$ are unbounded above, however, we may \textit{not} use (\ref{Eqn:Lower_bound_inj_length}) to conclude a quantitative version of the fact that $\ell<\infty$, at least not \textit{a priori}.
There is nevertheless a way to achieve such a result. 
Towards this end, define for $[l, r]\subset\Z$ the quantity $\kappa^{[l, r]}:\Omega\to(0, \infty)$ by
\begin{equation}\label{Eqn:Def_of_kappa}
    \kappa^{[l, r]}(\omega)
        :=
    \inf_{b\in\MatricesBond\setminus\{0\}}
    \seq{\cfrac{
    \|
        \Gamma^{[l, r]}_\omega(b)
    \|
    }
    {
    \|b\|_{\rho_{r; \omega}}
    }}^2
    \,,
\end{equation}
where we note that $\hatrho$ is almost surely faithful and therefore $\|b\|_{\rho_r}\neq 0$ for all $b\neq 0$ and all $r\in \Z$ almost surely. 
\begin{lemma}\label{lem:injectivity_number}
\label{Lem:Kappa_lemma}
    Let $[l, r]\subset\Z$. 
    Then the following hold. 
    \begin{enumerate}[label = (\alph*)]
        \item $\kappa^{[l, r]}\geq 1 - \alpha^{[l, r]}$ almost surely. 

        \item $\kappa^{[l, r]}\leq \min\set{\kappa^{[l - 1, r]}, \kappa^{[l, r + 1]}}$ almost surely. 
        So, whenever $[s, t]\subset[l, r]$, $\kappa^{[s, t]}\leq \kappa^{[l, r]}$ almost surely.  

        \item Almost surely, $\Gamma^{[l, r]}$ is injective if and only if $\kappa^{[l, r]} > 0$. 
        In particular, $\tilde{\ell} = \ell$ almost surely. 
    \end{enumerate}
\end{lemma}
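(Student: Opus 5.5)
The plan is to verify the three parts in order, using only the definition \eqref{Eqn:Def_of_kappa}, Lemma~\ref{lem:alpha_estimates}, and two one-site decompositions of the $\Gamma$-maps. Part (b) carries the real content.

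\textbf{Part (a).} Take $b_1=b_2=b$ in Lemma~\ref{lem:alpha_estimates}(b). This gives
\[
\|\Gamma^{[l,r]}(b)\|^2\geq \|b\|_{\rho_r}^2\bigl(1-\alpha^{[l,r]}\bigr),
\]
which is exactly \eqref{Eqn:Lower_bound_inj_length} squared. Dividing by $\|b\|_{\rho_r}^2$, which is nonzero since $\rho_r$ is almost surely faithful by Lemma~\ref{Lem:hatrho_as_in_ESP}, and taking the infimum over $b\neq 0$ yields (a).

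\textbf{Part (b).} I will decompose $\Gamma$ one site at a time, using cyclicity of the trace and the ordering $X^{\sigma}=X^{i_r}_r\cdots X^{i_l}_l$.

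\emph{Extension to the right.} The decomposition is
\[
\Gamma^{[l,r+1]}(b)=\sum_{i=1}^{\LocalDim}\Gamma^{[l,r]}\bigl(bX^{i}_{r+1}\bigr)\otimes\ket{e_i}.
\]
The summands are mutually orthogonal, so
\[
\|\Gamma^{[l,r+1]}(b)\|^2=\sum_i\|\Gamma^{[l,r]}(bX^i_{r+1})\|^2\geq\kappa^{[l,r]}\sum_i\|bX^i_{r+1}\|_{\rho_r}^2 .
\]
The last sum equals $\hatrho_r\bigl(E^{[r+1]}_{\localOne}(b^*b)\bigr)$. By the invariance relation $\hatrho_r\circ E^{[r+1]}_{\localOne}=\hatrho_{r+1}$ from \PureGen, this is $\|b\|_{\rho_{r+1}}^2$. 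Hence $\kappa^{[l,r+1]}\geq\kappa^{[l,r]}$.

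\emph{Extension to the left.} The decomposition is
\[
\Gamma^{[l-1,r]}(b)=\sum_i\ket{e_i}\otimes\Gamma^{[l,r]}\bigl(X^{i}_{l-1}b\bigr),
\]
since $\trD{bX^{\sigma}X^i_{l-1}}=\trD{X^i_{l-1}bX^\sigma}$. Orthogonality again gives the lower bound
\[
\kappa^{[l,r]}\sum_i\hatrho_r\bigl(b^*X^{i*}_{l-1}X^i_{l-1}b\bigr).
\]
By $\sum_i X^{i*}X^i=\bondOne$ this sum is $\|b\|_{\rho_r}^2$. The normalizing density matrix is still $\rho_r$ because the right endpoint has not moved, so $\kappa^{[l-1,r]}\geq\kappa^{[l,r]}$.

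\emph{Nested intervals.} Iterating the two one-site bounds proves $\kappa^{[s,t]}\leq\kappa^{[l,r]}$ for $[s,t]\subset[l,r]$.

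The only delicate point I expect is bookkeeping: matching the index order of $X^\sigma$ with the concatenation convention, and checking that it is the right-endpoint state that changes under right extension while the isometry condition handles left extension.

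\textbf{Part (c).} The map $\Gamma^{[l,r]}$ is linear on the finite-dimensional space $\MatricesBond$, and $\|\cdot\|_{\rho_r}$ is a norm there. So the infimum in \eqref{Eqn:Def_of_kappa} is a minimum over the compact $\|\cdot\|_{\rho_r}$-unit sphere. It is therefore positive exactly when $\ker\Gamma^{[l,r]}=\{0\}$.

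For the statement about injectivity lengths, $\tilde\ell\leq\ell$ holds trivially. Conversely, if $\Gamma^{[0,\tilde\ell]}$ is injective, then $\kappa^{[0,\tilde\ell]}>0$. By (b), $\kappa^{[0,m]}\geq\kappa^{[0,\tilde\ell]}>0$ for every $m\geq\tilde\ell$, so every such $\Gamma^{[0,m]}$ is injective. Hence $\ell\leq\tilde\ell$, and $\ell=\tilde\ell$ almost surely.
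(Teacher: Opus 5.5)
Your proposal is correct and follows essentially the same argument as the paper. Part (a) comes from the squared form of \eqref{Eqn:Lower_bound_inj_length}. Part (b) uses the same one-site decompositions, with $\sum_i X^{i*}X^i=\bondOne$ handling left extension and $\hatrho_r\circ E^{[r+1]}_{\localOne}=\hatrho_{r+1}$ handling right extension. Part (c) spells out the compactness argument and the use of (b) to get $\tilde\ell=\ell$, which the paper simply calls automatic.
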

\begin{proof}
    (a) follows from (\ref{Eqn:Lower_bound_inj_length}). 
    To see (b), we first show $\kappa^{[l, r]}\leq \kappa^{[l - 1, r]}$. 
    Towards this end, let $b\in\MatricesBond$.
    Then 
    \begin{align*}
        \|
            \Gamma^{[l - 1, r]}(b)
        \|^2
            &= 
        \sum_{\sigma\in[\LocalDim]^{[l-1, r]}}
        \big| 
            \trD{b X^\sigma}
        \big|^2
            \\
            &=
        \sum_{i=1}^{\BondDim}
        \seq{
        \cfrac{
        \|
            \Gamma^{[l, r]}(X^{i}_{l-1}b)
        \|
        }{\|
            X^i_{l-1}b
        \|_{\rho_{r}}
        }}^2
        \|
         X^i_{l-1}b
        \|^2_{\rho_{r}}
            \\
            &\geq 
        \kappa^{[l, r]}
        \inner{b}{\left[\sum_{i=1}^\BondDim
        X_{l-1}^{i*}X_{l-1}^i\right]b}_{\rho_r}
            \\
            &=
        \kappa^{[l, r]}
       \|b\|_{\rho_r}^2,
    \end{align*}
    which follows from the assumption  $\Pr[\sum_{i=1}^\BondDim
        X_{l-1}^{i*}X_{l-1}^i = E_{\localOne}^{[l-1]}(\bondOne)  = \bondOne] =1$. 
    This shows $\kappa^{[l, r]}\leq \kappa^{[l - 1, r]}$ almost surely. 
    Next, we show that $\kappa^{[l, r]}\leq \kappa^{[l, r + 1]}$. 
    To see this, we compute 
    \begin{align*}
        \|
            \Gamma^{[l, r + 1]}(b)
        \|^2    
            = 
        \sum_{\sigma\in[\LocalDim]^{[l, r+1]}}
        \big| 
            \trD{b X^\sigma}
        \big|^2
            = 
        \sum_{i=1}^\BondDim
        \|
            \Gamma^{[l, r]}(b X^{i}_{r+1})
        \|^2.
    \end{align*}
    Therefore, 
    \begin{align*}
        \|
            \Gamma^{[l, r + 1]}(b)
        \|^2 
            \geq 
        \kappa^{[l, r]}
         \hatrho_{r}
         \!\seq{
         \sum_{i=1}^\BondDim
         X^{i*}_{r+1}(b^*b)X^{i}_{r+1}
         }
            &=  
        \kappa^{[l, r]}
         \hatrho_{r}
         \circ E_{\localOne}^{[r+1]}(b^*b)\,,
            \\
            &=
        \hatrho_{r+1}(b^*b),
    \end{align*}
    which shows $\kappa^{[l, r]}\leq \kappa^{[l, r + 1]}$ almost surely, concluding the proof of (b). 
    (c) is automatic from the definitions, so the proof of the lemma is done.
\end{proof}
%

\section{Parent Hamiltonians}\label{sec:PHams}
We now direct our attention to studying parent Hamiltonians of $\psi$ satisfying \PureGena with finite injectivity length, i.e., studying random interactions for which $\psi$ is (at the very least) a ground state. 
We shall see that the randomness inherent in our model causes a stark departure from the parent Hamiltonian constructed in \cite{FannesNachtergaeleWerner}. 
Namely, that there is not necessarily a \textit{finite-ranged} random interaction $\Psi$ for which $\psi$ is almost surely the ground state, at least not \textit{a priori}, which, as we shall see, is due to the possibility that $\ell\not\in L^\infty(\Omega)$. 
Nevertheless, we are able to show that there is a locally finite-ranged random interaction $\Psi$ for which $\psi_\omega$ is almost surely the unique ground state of $\Psi_\omega$, and, moreover, that $\Psi_\omega$ generates dynamics on the quasilocal algebra. 
To do this, we first define local Hamiltonians for which $\psi_\omega$ is clearly a ground state, then use this to produce a canonical interaction for which $\psi_\omega$ is a ground state.
To begin this discussion, we start with a motivating toy example that demonstrates it is not even necessary that $\ell\in L^1(\Omega)$. 
\begin{example}[Injectivity lengths defined by entry times]
\label{exmp:Unbounded_Injectivity}\label{Ex:Unbounded_inj}
    This example shows that we may define an ergodic matrix product state whose injectivity length is equal to the entry time into an arbitrary subset of an arbitrary dynamical system. 
    In particular, this gives a class of examples demonstrating that the injectivity length may be poorly-behaved in full generality. 
    Let $(\Omega, \mathcal{F}, \Pr)$ be any probability space, let $T:\Omega\to\Omega$ be an invertible measure preserving transformation, and let $A\subset\Omega$ be a measurable set with $\Pr[A]\in (0, 1)$. 
    Let $t_A:\Omega\to\N$ be the entry time function
    \begin{equation*}
        t_A(\omega)
            :=
        \min\set{k\in\N\,\,:\,\, T^k(\omega)\in A}.
    \end{equation*}
    Let $\BondDim\in\N$ be arbitrary, and let $\LocalDim \geq \BondDim^2$.
    For $i, j\in[\BondDim]$, let $B_{ij} = \ketbra{f_i}{f_j}$, and define $B_k = 0$ if $k > \BondDim^2$. 
    For $k\in[\LocalDim]$, define $X^k:\Omega\to\MatricesBond$ by 
    \begin{equation*}
        X^k(\omega)
            :=
        \begin{cases}
            B_k &\text{if }\omega\in A\\
            \bondOne &\text{if }\omega\in \Omega\setminus A.
        \end{cases}
    \end{equation*}
    Then it is clear that $\ell = t_A$ in this example. 
    There are many examples of invertible ergodic dynamical systems $T:\Omega\to\Omega$ for which $t_A \not\in L^1(\Omega)$ (see \cite{Haydn2013EntryDistribution}), therefore we have constructed a simple class of EMPS where $\ell$ may be unbounded. 
\end{example}
Therefore, the injectivity length $\ell$ may not only be unbounded, it need not be integrable. We may now state our main result on parent Hamiltonians. 
\begin{thm}[Theorem~\ref{Thm:Intro:PH}]\label{thm:InfinitePHam}
\label{Thm:Parent_Hamiltonian}
    Let $\psi$ be an injective ergodic matrix product state. 
    Then $\psi$ is almost surely pure.
    Moreover, there is an ergodic random locally finite-ranged interaction $\Psi$ with the following properties. 
    \begin{enumerate}[label = (\alph*)]
        \item 
        For all $\Lambda\Subset\Z$, $\Psi(\Lambda) \geq 0$ almost surely. 

        \item 
        $\psi$ is almost surely the unique frustration-free ground state of $\Psi$.

        \item 
        There is a deterministic constant $M>0$ and a random increasing and absorbing sequence $(\Lambda_n)_{n\in\N}$ such that 
        \begin{equation*}
            \left\|
                \sum_{\substack{\Lambda\cap\Lambda_n\neq\emptyset\\
                \Lambda\cap\Lambda_n^c\neq\emptyset}}
                \Psi(\Lambda)
            \right\|
            \leq 
            M
        \end{equation*} 
        almost surely. 
        In particular, $\Psi$ almost surely generates dynamics on the quasilocal algebra. 
    \end{enumerate}
\end{thm}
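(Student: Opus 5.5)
We will construct $\Psi$ from a sparse, ergodically chosen family of projections $\one - G^{[\cdot,\cdot]}$. First we prove uniqueness of the frustration-free ground state using an intersection property for the spaces $\mathcal{G}^{[l,r]}$ together with Lemmas \ref{lem:alpha_estimates} and \ref{Lem:Kappa_lemma}. Purity and the surface-energy bound then follow.

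\emph{Construction.} Since $\ell<\infty$ almost surely, we fix a deterministic $L\in\N$ with $\Pr[\ell\le L]>0$. Call $k\in\Z$ \emph{good} if $\ell_k\le L$; by the definition of $\ell$, $\Gamma^{[k,m]}$ is then injective for every $m\ge k+L$. By ergodicity and Poincar\'e recurrence, good sites occur almost surely with infinitely many in both directions. For a good site $k$ with next good site $k'>k$, set
\begin{equation*}
\Psi\big([k,k'+L]\big) := \one - G^{[k,k'+L]},
\end{equation*}
and $\Psi(\Lambda)=0$ otherwise. Measurability and the covariance \eqref{eqn:ergodic_interaction} follow from Lemma \ref{Lem:Gamma_covariant}, since goodness is a translation-covariant event. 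Positivity (a) is clear, and frustration-freeness of $\psi$ is Lemma \ref{Lem:Local_action_of_psi}. The ranges $k'-k+L$ are unbounded when return times are, but each site lies in at most $L+2$ supports, so $\Psi$ is locally finite ranged.

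\emph{Uniqueness.} The key step is the ergodic analogue of the intersection property of \cite[Lemma 5.?]{FannesNachtergaeleWerner}. If $[l,r]$ and $[l',r']$ overlap in an interval $[l',r]$ on which $\Gamma$ is injective, then
\begin{equation*}
\mathcal{G}^{[l,r]}\otimes\Hilb{[r+1,r']}\,\cap\,\Hilb{[l,l'-1]}\otimes\mathcal{G}^{[l',r']} = \mathcal{G}^{[l,r']}.
\end{equation*}
To prove it, we expand a vector in the intersection in both ways, use injectivity on the overlap to identify the coefficient matrices, and factor through $X^\sigma$ using the concatenation identity $X^{\sigma_1\sigma_2}=X^{\sigma_2}X^{\sigma_1}$. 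Consecutive supports $[k,k'+L]$ and $[k',k''+L]$ overlap in $[k',k'+L]$, which is injective because $k'$ is good. Induction then shows that the joint kernel of $\sum\Psi(\Lambda)$ over all supports inside $[k_{-m},k_m+L]$ is exactly $\mathcal{G}^{[k_{-m},k_m+L]}$.

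Now let $\varphi$ be any frustration-free state and write $[l,r]=[k_{-m},k_m+L]$. By the previous paragraph, $\varphi|_{\A{[l,r]}}$ is a convex combination of vector states $\Gamma^{[l,r]}(b_i)/\|\Gamma^{[l,r]}(b_i)\|$. For $\bs{a}\in\A{[s,t]}$, Lemma \ref{lem:alpha_estimates}(c) and $\|b\|_{\rho_r}^2\le\|\Gamma^{[l,r]}(b)\|^2/\kappa^{[l,r]}$ give
\begin{equation*}
|\varphi(\bs{a})-\psi(\bs{a})|\le \big(\alpha^{[l,s-1]}+\alpha^{[t,r]}\big)\|\bs{a}\|\big/\kappa^{[l,r]}.
\end{equation*}
By Lemma \ref{Lem:Kappa_lemma}, $\kappa^{[l,r]}$ is monotone and at least $1-\alpha^{[l,r]}$. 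Letting $l\to-\infty$ and choosing $r\to\infty$ along a subsequence realizing $\liminf_r\alpha^{[t,r]}=0$ (Lemma \ref{lem:alpha_estimates}(a)), we obtain $\varphi=\psi$.

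\emph{Purity and dynamics.} Purity follows immediately. If $\psi=\lambda\varphi_1+(1-\lambda)\varphi_2$ with $\lambda\in(0,1)$, then positivity of each $\Psi(\Lambda)$ forces $\varphi_i(\Psi(\Lambda))=0$, so both $\varphi_i$ are frustration-free and hence equal $\psi$. For (c), take $\Lambda_n=[k_{-n},k_n]$ with endpoints at good sites. Any support crossing $\partial\Lambda_n$ contains a boundary point. Since every site lies in at most $L+2$ supports, each boundary point contributes at most $L+2$ projections of norm $1$, which gives the deterministic bound $M=2(L+2)$. Proposition \ref{Prop:Dynamics_from_loc_fin_interactions} then yields the dynamics.

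I expect the main obstacle to be the intersection property in the random setting, with overlaps of only length $L$ and injectivity holding only from good sites. A secondary difficulty is controlling the $\kappa^{[l,r]}$ and $\trD{\rho_r^{-1}}$ factors in the uniqueness limit, which is why $r$ must be taken along a recurrent subsequence.
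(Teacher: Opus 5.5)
The gap is in the last step of your uniqueness argument. There $r$ is not a free parameter. It has to be a right endpoint $k_m+L$ of your chain of supports, because for $k_m+L<r<k_{m+1}+L$ the frustration-free constraints only confine $\varphi$ to $\mathcal{G}^{[l,k_m+L]}\otimes\Hilb{(k_m+L,r]}$. The error term $\alpha^{[t,r]}=\operatorname{Tr}(\rho_r^{-1})\,\|E^{[t,r]}_{\localOne}-\Delta_r\|$ carries the stationary but possibly unbounded factor $\operatorname{Tr}(\rho_r^{-1})$. Lemma~\ref{lem:alpha_estimates}(a) gives $\liminf_{r\to\infty}\alpha^{[t,r]}=0$ only over \emph{all} $r$, via recurrence of $\operatorname{Tr}(\rho_r^{-1})$ to a bounded set. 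It says nothing about the sparse set $\set{k+L:\ell_k\le L}$, so your step ``choosing $r$ along a subsequence realizing the liminf'' is not justified by what you cite. This is exactly the point the paper isolates as condition (c) of Definition~\ref{Def:Interaction_exposing_psi}, and it verifies that condition for its canonical interaction in Lemma~\ref{Lem:Canonical_inj_sequence_nondegen}. A related caution: for $\kappa^{[l,r]}$ you should rely on monotonicity, $\kappa^{[l,r]}\ge\kappa^{[k_0,k_0+L]}>0$ for a fixed good $k_0$. The bound $\kappa^{[l,r]}\ge 1-\alpha^{[l,r]}$ has the same $\operatorname{Tr}(\rho_r^{-1})$ problem.

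The repair is short; either of the following works.
\begin{itemize}
\item \emph{Joint recurrence.} Since $\rho$ is a.s.\ faithful, $\Pr[\operatorname{Tr}\rho^{-1}\le K]\to1$, so $\Pr[\set{\ell\le L}\cap T^{-L}\set{\operatorname{Tr}\rho^{-1}\le K}]>0$ for large $K$. Ergodicity then gives infinitely many good $k>0$ with $\operatorname{Tr}(\rho^{-1}_{k+L})\le K$, and along these $\alpha^{[t,k+L]}\le C_tK\mu^{k+L-t}\to0$. Equivalently, add $\operatorname{Tr}(\rho^{-1}_{k+L})\le K$ to the definition of a good site.
\item \emph{Decoupling the two roles of $r$.} For $t\le r'\le r$ and $\sigma\in[\LocalDim]^{(r',r]}$ one has $\Gamma^{[l,r]}(b)=\sum_\sigma\Gamma^{[l,r']}(bX^\sigma)\otimes\ket{e_\sigma}$ and $\sum_\sigma\|bX^\sigma\|^2_{\rho_{r'}}=\hatrho_{r'}\circ E^{[r'+1,r]}_{\localOne}(b^*b)=\|b\|^2_{\rho_r}$. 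Applying Lemma~\ref{lem:alpha_estimates}(c) on $[l,r']$ then gives $|\varphi(\bs a)-\psi(\bs a)|\le2(\alpha^{[l,s-1]}+\alpha^{[t,r']})\|\bs a\|/\kappa^{[l,r]}$. The unconstrained liminf in $r'$ now applies. The factor $2$, which is missing from your displayed bound, comes from also comparing $\|\Gamma^{[l,r']}(c)\|^2$ with $\|c\|^2_{\rho_{r'}}$.
\end{itemize}

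Apart from this, your route is sound and somewhat leaner than the paper's. The paper's $\Psi^\eta$ places a term $[x,r_\eta(x)]$ at every site $x$. Keeping only terms that start at good sites means each site meets at most $L+2$ supports, which makes locally finite range and the surface-energy bound $M=2(L+2)$ immediate.
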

We devote the remainder of this section to proving this result in full. 
The issue of how poorly $\ell$ may behave becomes a technically challenging issue when following the template laid out in \cite{FannesNachtergaeleWerner} for defining parent Hamiltonians of matrix product states. 
Nevertheless, the first step in the construction of \cite{FannesNachtergaeleWerner} remains unchanged:
for all $[l, r]\subset\Z$, recall $\mathcal{G}^{[l, r]}\subset \mathcal{H}^{[l, r]}$ was defined to be the random subspace  
\begin{equation}\label{Eqn:Def_of_mcG}
    \mathcal{G}^{[l, r]} 
        =
    \operatorname{ran}\Gamma^{[l, r]},
\end{equation}
and $G^{[l, r]}:\Omega\to\A{[l, r]}$ was the random projection defined by $G^{[l, r]} := \operatorname{proj} \mathcal{G}^{[l, r]}$, i.e., $G^{[l, r]}$ is the orthogonal projection onto $\mathcal{G}^{[l, r]}$. 
Define now 
\begin{equation*}
    K^{[l, r]} 
        :=
    \one_{[l, r]} - G^{[l , r]}.
\end{equation*}
From Lemma \ref{Lem:Local_action_of_psi}, we know that $\psi(K^{[l, r]}) = 0$ almost surely for all $[l, r]\subset\Z$, which gives us a template for constructing the random interaction of Theorem \ref{Thm:Parent_Hamiltonian}. 
Indeed, it is already clear that if we define $\Psi([l, r]):= K^{[l, r]}$, then $\psi$ is a frustration-free ground state of $\Psi$. 
However, in order to achieve uniqueness of this ground state property, we must take care in tuning the \textit{scale} of our random interaction. 
Towards this, we make the following definition. 
\begin{definition}[Injectivity sequence]
\label{Def:Inj_sequence}
    For $x\in\Z$, recall $\ell_x:\Omega\to\N$ is defined by $\ell_x(\omega):= \ell(T^x(\omega))$. 
    A sequence of measurable functions $(\mathcal{L}_x:\Omega\to \N)_{x\in\Z}$ is called an \textbf{injectivity sequence} if the following conditions hold. 
    \begin{enumerate}[label = (\alph*)]
        \item For all $x\in\Z$, $\mathcal{L}_x\geq \ell_x$ almost surely.

        \item For all $[l, r]\subset\Z$, $\max\set{x + \mathcal{L}_x\,\,:\,\,x\in[l, r]}\geq r + 1 + \ell_{r + 1}$ almost surely. 
    \end{enumerate}
\end{definition}
It is not \textit{a priori} clear that an injectivity sequence exists. 
Towards this end, we make the following definition. 
\begin{definition}
    If $\eta\in\N$ is such that $\Pr[\ell\leq\eta] > 0$, we call $\eta$ \textbf{admissible}. 
    For such $\eta$, we let $k^\eta:\Omega\to\N$ denote the entry time function 
    \begin{equation*}
        k^\eta(\omega)
            :=
         \min\set{k\in\N\,\,:\,\, T^k(\omega)\in\set{\ell\leq \eta}},
    \end{equation*}
    which is almost surely finite by Poincar\'e recurrence. 
    For any admissible $\eta\in\N$ and $x\in\Z$, define $k^\eta_x(\omega):= k^\eta(T^x(\omega))$. 
\end{definition}
\begin{lemma}
\label{Lem:kx_lemma}
    For any admissible $\eta\in\N$ and $x\in\Z$, $\ell_x \leq k_x^\eta + \eta$ almost surely.  
\end{lemma}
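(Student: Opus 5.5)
The plan is to reduce the claim to two facts: a left-extension property of injectivity, and the translation covariance of the $\Gamma$-maps. Fix an admissible $\eta$ and $x\in\Z$, and write $k := k^\eta_x$. By definition of the entry time, $T^{x+k}(\omega)\in\{\ell\le\eta\}$, i.e.\ $\ell_{x+k}\le\eta$ almost surely. Combining the definition of $\ell$ (Definition~\ref{Def:Injectivity_length}) with Lemma~\ref{Lem:Gamma_covariant}, this says that $\Gamma^{[x+k,\,x+k+m]}$ is almost surely injective for every $m\ge \eta$. Again by covariance, the statement $\ell_x\le k+\eta$ is equivalent to: $\Gamma^{[x,\,x+m']}$ is injective for every $m'\ge k+\eta$. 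This is what we will prove.

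Fix $m'\ge k+\eta$. The interval $[x+k,\,x+m']$ has the form $[x+k,\,x+k+m]$ with $m = m'-k\ge\eta$, so $\Gamma^{[x+k,\,x+m']}$ is injective. Since $[x+k,\,x+m']\subset[x,\,x+m']$, Lemma~\ref{Lem:Kappa_lemma}(b) gives
\begin{equation*}
\kappa^{[x,\,x+m']}\;\ge\;\kappa^{[x+k,\,x+m']}.
\end{equation*}
Lemma~\ref{Lem:Kappa_lemma}(c) then gives $\kappa^{[x+k,\,x+m']}>0$, hence $\kappa^{[x,\,x+m']}>0$, hence $\Gamma^{[x,\,x+m']}$ is injective. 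As an alternative that avoids $\kappa$, one can iterate the left-extension step from the proof of Lemma~\ref{Lem:01_law_inj_length} $k$ times. That step goes as follows: if $\Gamma^{[l+1,r]}$ is injective and $\Gamma^{[l,r]}(b)=0$, then $X^j_l b=0$ for all $j$, and the identity $\sum_j X^{j*}_l X^j_l=\bondOne$ forces $b=0$.

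Since $m'\ge k+\eta$ was arbitrary, we conclude $\ell_x\le k^\eta_x+\eta$. The only point needing care is measure-theoretic bookkeeping: each of the countably many almost-sure statements (one for each $m'$, and the shifted versions at the random site $x+k$) must hold on a common full-measure set. This is handled by intersecting countably many full-measure events, and by noting that $k$ takes values in $\N$, so we may work on each event $\{k^\eta_x = k\}$ separately. I do not expect any step to be a genuine obstacle. The main thing to check is that the direction of monotonicity in Lemma~\ref{Lem:Kappa_lemma}(b) is the one needed here: enlarging the interval can only increase $\kappa$.
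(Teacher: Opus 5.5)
Your proof is correct and follows the same route as the paper. Both arguments get injectivity on intervals starting at the entry site $x+k^\eta_x$ from $\ell_{x+k^\eta_x}\le\eta$, and then extend leftward to $x$ using the monotonicity of $\kappa$ in Lemma~\ref{Lem:Kappa_lemma}(b),(c). The one difference is that the paper uses only the single interval $[x+k^\eta_x,\,x+k^\eta_x+\eta]$ and tacitly relies on right-extension (i.e.\ $\tilde{\ell}=\ell$) to conclude about $\ell_x$, whereas you use injectivity for every $m'\ge k^\eta_x+\eta$, which matches the definition of $\ell$ directly.
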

\begin{proof}
    If $\ell_x(\omega) < k_x^\eta(\omega) + \eta$, there is nothing to prove, so suppose that $\ell_x(\omega)\geq k_x^\eta(\omega) + \eta$. 
    Then it holds that $\Gamma^{[x + k_x^\eta(\omega), x + k_x^\eta(\omega) + \eta]}_\omega$ is injective. 
    So, by Lemma \ref{Lem:Kappa_lemma}, we have that $\Gamma^{[x, x + k_x^\eta(\omega) + \eta]}_\omega$ is injective. 
    Therefore, $\ell_x(\omega) = k_x^\eta(\omega) + \eta$, as desired. 
\end{proof}
Motivated by this lemma, we define for $\eta\in\N$ with $\Pr[\ell \leq \eta]>0$ and $x\in\Z$ the function 
 \begin{equation}
 \label{Eqn:Canonical_inj_sequence}
        \mcL^\eta_x:\Omega\to\N,\quad\mcL^\eta_x(\omega) := k_x^\eta + \eta,
\end{equation}
which by the above lemma satisfies the first condition required of an injectivity sequence. 
We now show $(\mcL^\eta_x)_{x\in\Z}$ is, in fact, an injectivity sequence. 
\begin{lemma}
\label{Lem:Canonical_inj_sequence}
    For any admissible $\eta\in\N$, $\seq{\mcL_x^\eta}_{x\in\Z}$ is an injectivity sequence. 
\end{lemma}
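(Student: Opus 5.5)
Condition (a) of Definition \ref{Def:Inj_sequence} is exactly Lemma \ref{Lem:kx_lemma}, so it remains to verify condition (b). The plan is to use the largest index $x=r$, i.e.\ to show $r + \mcL^\eta_r \geq r + 1 + \ell_{r+1}$, which amounts to
\begin{equation*}
    k^\eta_r + \eta \geq 1 + \ell_{r+1}
\end{equation*}
almost surely. Since the maximum over $x\in[l,r]$ dominates the term at $x = r$, this suffices.

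To prove this inequality, I first relate the entry times at consecutive sites. Note that $k^\eta_r\geq 1$ and $T^{r + k^\eta_r}(\omega)\in\set{\ell\leq\eta}$. I distinguish two cases.

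\emph{Case $k^\eta_r\geq 2$.} The first entry of the orbit of $T^{r+1}(\omega)$ into $\set{\ell\leq\eta}$ occurs at the same absolute site $r+k^\eta_r$, since no site in $(r, r+k^\eta_r)$ is in the set. Hence $k^\eta_{r+1} = k^\eta_r - 1$. By Lemma \ref{Lem:kx_lemma} at $x = r+1$,
\begin{equation*}
    \ell_{r+1}\leq k^\eta_{r+1}+\eta = k^\eta_r - 1 + \eta,
\end{equation*}
which is the desired inequality.

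\emph{Case $k^\eta_r = 1$.} Here $\ell_{r+1}\leq \eta$ directly, so $1+\ell_{r+1}\leq 1+\eta = k^\eta_r+\eta$.

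Both cases hold outside a null set, namely where Lemma \ref{Lem:kx_lemma} fails or $k^\eta$ is infinite. Taking the countable union over $r$, and noting the definitions only use the orbit of $\omega$, gives the claim almost surely for all $[l,r]$ simultaneously.

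The only delicate point is the entry-time bookkeeping $k^\eta_{r+1} = k^\eta_r - 1$ when $k^\eta_r\geq 2$. This follows from the definition of $k^\eta$ as a minimum over $k\in\N$ starting at $1$, so that the site $r+1$ itself is checked when computing $k^\eta_r$. I expect no genuine obstacle beyond stating this carefully.
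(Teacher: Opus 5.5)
Your proof is correct and follows essentially the paper's route: reduce condition (b) to the $x=r$ term and prove $\ell_{r+1}\le k^\eta_r-1+\eta$. The paper differs in two minor ways. It first shows that $x+\mcL^\eta_x$ is nondecreasing, so the maximum equals $r+\mcL^\eta_r$; this is not needed here, since the maximum dominates the $x=r$ term. It then gets the key inequality directly from monotonicity of injectivity under enlarging intervals (Lemma \ref{Lem:Kappa_lemma}), whereas you obtain it from Lemma \ref{Lem:kx_lemma} at $r+1$ via the entry-time shift. Both arguments rely on the convention $k^\eta\geq 1$, which you correctly flag.
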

\begin{proof}
    Condition (a) in Definition \ref{Def:Inj_sequence} holds by Lemma \ref{Lem:kx_lemma}, so we only need to show condition (b) in Definition \ref{Def:Inj_sequence} holds. 
    To do this, observe that $k_x^\eta \leq k_{x+1}^\eta + 1$, since $\ell_{x + k^\eta_{x+1} + 1}\leq \eta$. 
    Therefore, $x + \mcL^\eta_x\leq x + 1 + \mcL^\eta_{x+1}$ holds almost surely for all $x\in\Z$.
    So, for any $[l, r]\subset\Z$, we have 
    \begin{equation*}
        \max
        \set{x + \mcL^\eta_x\,\,:\,\,x\in[l, r]}
            =
        r + \mcL^\eta_r.
    \end{equation*}
    Therefore, we just need to show that $k_r^\eta - 1 + \eta \geq \ell_{r+1}$ holds almost surely.
    This, however, follows from Lemma \ref{Lem:Kappa_lemma} upon noting that $\Gamma^{[r, r + k^\eta_r + \eta]}$ is injective, therefore $\Gamma^{[r + 1, r + 1 + (k_r^\eta - 1 + \eta)]}$ is injective, concluding the proof. 
\end{proof}
\begin{definition}[Canonical injectivity sequence]
    For admissible $\eta\in\N$, 
    we call $\mathscr{L}_\eta = (\mcL^\eta_x)_{x\in\Z}$ defined in (\ref{Eqn:Canonical_inj_sequence}) the \textbf{canonical injectivity sequence of radial tolerance} $\eta$. 
    We write $r_\eta:\Z\to\Z$ to denote the corresponding random function $r_\eta(x) = x + \mathcal{L}_x^\eta$. 
\end{definition}
It is worth recording the following fact which is a corollary of the proof of Lemma \ref{Lem:Canonical_inj_sequence}.
\begin{lemma}
\label{Lem:r_eta_strictly_increasing}
    For any admissible $\eta$, $r_\eta$ is nondecreasing almost surely. 
\end{lemma}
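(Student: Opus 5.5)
The claim is that $r_\eta(x) = x + \mcL^\eta_x$ satisfies $r_\eta(x) \leq r_\eta(x+1)$ for all $x\in\Z$ almost surely. This is exactly the inequality $x + \mcL^\eta_x \leq x + 1 + \mcL^\eta_{x+1}$ that appeared in the proof of Lemma~\ref{Lem:Canonical_inj_sequence}, so I would simply isolate that step.

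First, fix $x\in\Z$ and work on the full-measure event where $k^\eta_x$ and $k^\eta_{x+1}$ are finite. Finiteness holds by Poincar\'e recurrence, since $\eta$ is admissible and the relevant shifts are measure preserving. By definition, $k^\eta_{x+1}$ is the least $k\in\N$ with $\ell_{x+1+k}\leq \eta$. Setting $k' = k^\eta_{x+1}+1$, we get $\ell_{x+k'}\leq\eta$. Since $k^\eta_x$ is the minimal such index starting from $x$, it follows that
\begin{equation*}
k^\eta_x\leq k^\eta_{x+1}+1.
\end{equation*}
Adding $x+\eta$ to both sides gives
\begin{equation*}
r_\eta(x) = x+k^\eta_x+\eta \leq (x+1)+k^\eta_{x+1}+\eta = r_\eta(x+1).
\end{equation*}

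Finally, intersect these full-measure events over the countably many $x\in\Z$; the intersection still has full measure. On it, $r_\eta$ is nondecreasing on all of $\Z$ by chaining the one-step inequalities.

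There is no real obstacle here. The only point requiring care is the convention that $\N$ starts at $1$ in the definition of $k^\eta$, which is what makes $k'=k^\eta_{x+1}+1\geq 1$ an admissible index for the minimum defining $k^\eta_x$. If $\N$ included $0$, the same argument would go through verbatim, since $k^\eta_{x+1}+1\geq 1\geq 0$ in that case as well.
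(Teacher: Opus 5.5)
Your proof is correct and matches the paper's: the paper states this lemma as a corollary of the step $k^\eta_x\leq k^\eta_{x+1}+1$ (because $\ell_{x+k^\eta_{x+1}+1}\leq\eta$) in the proof of Lemma~\ref{Lem:Canonical_inj_sequence}, and you isolate exactly that step before intersecting over the countably many $x\in\Z$. One small remark: almost-sure finiteness of $k^\eta$ for every starting point really uses ergodicity of $T$, not Poincar\'e recurrence alone, though the one-step inequality holds pointwise whenever $k^\eta_{x+1}$ is finite.
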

Now that we have established the existence of injectivity sequences, we prove that they describe a scale at which the following intersection property holds. 
\begin{prop}[Intersection property of injectivity sequences]
\label{Prop:Frustration_free_infinitely_often}
\label{Prop:Intersection_property}
    Let $(\mcL_x:\Omega\to \N)_{x\in\Z}$ be any injectivity sequence and fix $l\in\Z$. 
    For $k\geq l$, define $M_k:\Omega\to\N$ by
    \begin{equation*}
        M_k(\omega) := \max\set{x + \mcL_x(\omega)\,\,:\,\, x\in[l, k]}.
    \end{equation*}
    Then for all $k\in\N$, 
    \begin{equation}\label{Eqn:Frustration_free_infinitely_often}
        \mathcal{G}^{[l, M_k]} 
            =
        \bigcap_{x\in[l, k]}
         \mathcal{H}^{[l, x)}
                \otimes 
            \mathcal{G}^{[x, x + \mcL_x]} 
                \otimes 
            \mathcal{H}^{(x + \mcL_x, M_k]}\,,
    \end{equation}
    holds almost surely. 
\end{prop}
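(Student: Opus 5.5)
We prove the inclusion $\subseteq$ directly and the inclusion $\supseteq$ by induction on $k$. In the inductive step the only nontrivial case is a two-interval intersection, which we handle with the method of \cite{FannesNachtergaeleWerner}, using injectivity on the overlap.

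\emph{Inclusion $\subseteq$.} Fix $x\in[l,k]$ and write $\sigma\in[\LocalDim]^{[l,M_k]}$ as a concatenation $\sigma_1\sigma_2\sigma_3$ with $\sigma_1\in[\LocalDim]^{[l,x)}$, $\sigma_2\in[\LocalDim]^{[x,x+\mcL_x]}$ and $\sigma_3\in[\LocalDim]^{(x+\mcL_x,M_k]}$. Since $X^\sigma=X^{\sigma_3}X^{\sigma_2}X^{\sigma_1}$, cyclicity of the trace gives
\begin{equation*}
\trD{bX^{\sigma}}=\trD{X^{\sigma_1}bX^{\sigma_3}X^{\sigma_2}} .
\end{equation*}
Hence
\begin{equation*}
\Gamma^{[l,M_k]}(b)=\sum_{\sigma_1,\sigma_3}\ket{e_{\sigma_1}}\otimes\Gamma^{[x,x+\mcL_x]}\big(X^{\sigma_1}bX^{\sigma_3}\big)\otimes\ket{e_{\sigma_3}},
\end{equation*}
which lies in the $x$-th space on the right-hand side of \eqref{Eqn:Frustration_free_infinitely_often}. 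The same computation with $l$ replaced by any left endpoint shows that $\mathcal{G}^{[a,c]}\subseteq\mathcal{H}^{[a,b)}\otimes\mathcal{G}^{[b,d]}\otimes\mathcal{H}^{(d,c]}$ whenever $[b,d]\subseteq[a,c]$. We will use this general form below.

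\emph{Inclusion $\supseteq$, by induction on $k$.} For $k=l$ we have $M_l=l+\mcL_l$, so the claim is trivial. Suppose the identity holds for $k$, and let $M'=M_{k+1}=\max(M_k,\,k+1+\mcL_{k+1})$.

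If $M_k\ge k+1+\mcL_{k+1}$, then $M'=M_k$. The new space indexed by $x=k+1$ already contains $\mathcal{G}^{[l,M_k]}$ by the general form of the first step, so intersecting with it changes nothing.

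Otherwise $M'=k+1+\mcL_{k+1}>M_k$. Since tensoring with $\mathcal{H}^{(M_k,M']}$ commutes with intersections, the inductive hypothesis reduces the right-hand side for $k+1$ to
\begin{equation*}
\big(\mathcal{G}^{[l,M_k]}\otimes\mathcal{H}^{(M_k,M']}\big)\cap\big(\mathcal{H}^{[l,k+1)}\otimes\mathcal{G}^{[k+1,M']}\big).
\end{equation*}
The two intervals overlap on $[k+1,M_k]$. Condition (b) of Definition~\ref{Def:Inj_sequence}, applied to $[l,k]$, gives $M_k\ge k+1+\ell_{k+1}$. By the definition of $\ell$ this means $\Gamma^{[k+1,M_k]}$ is injective almost surely, and in particular the overlap is nonempty.

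\emph{Two-interval lemma.} Take a vector $\Psi$ in the intersection. Write it in two ways:
\begin{equation*}
\Psi=\sum_{\sigma_3}\Gamma^{[l,M_k]}(B_{\sigma_3})\otimes\ket{e_{\sigma_3}}=\sum_{\sigma_1}\ket{e_{\sigma_1}}\otimes\Gamma^{[k+1,M']}(C_{\sigma_1}),
\end{equation*}
where $\sigma_1$ ranges over $[\LocalDim]^{[l,k]}$ and $\sigma_3$ over $[\LocalDim]^{(M_k,M']}$. Expanding each $\Gamma$ through the middle interval $[k+1,M_k]$, as in the first step, turns both sides into sums of the form $\sum_{\sigma_1,\sigma_3}\ket{e_{\sigma_1}}\otimes\Gamma^{[k+1,M_k]}(\cdot)\otimes\ket{e_{\sigma_3}}$. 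Comparing coefficients gives
\begin{equation*}
\Gamma^{[k+1,M_k]}\big(X^{\sigma_1}B_{\sigma_3}\big)=\Gamma^{[k+1,M_k]}\big(C_{\sigma_1}X^{\sigma_3}\big)\quad\text{for all }\sigma_1,\sigma_3 .
\end{equation*}
Injectivity of $\Gamma^{[k+1,M_k]}$ then yields $X^{\sigma_1}B_{\sigma_3}=C_{\sigma_1}X^{\sigma_3}$.

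Iterating $\sum_j X^{j*}_yX^j_y=\bondOne$ over the sites of $[l,k]$ gives $\sum_{\sigma_1}X^{\sigma_1*}X^{\sigma_1}=\bondOne$ almost surely. Set $b:=\sum_{\sigma_1}X^{\sigma_1*}C_{\sigma_1}$. Then
\begin{equation*}
B_{\sigma_3}=\sum_{\sigma_1}X^{\sigma_1*}X^{\sigma_1}B_{\sigma_3}=\sum_{\sigma_1}X^{\sigma_1*}C_{\sigma_1}X^{\sigma_3}=bX^{\sigma_3}.
\end{equation*}
Therefore $\Psi=\sum_{\sigma_3}\Gamma^{[l,M_k]}(bX^{\sigma_3})\otimes\ket{e_{\sigma_3}}=\Gamma^{[l,M']}(b)\in\mathcal{G}^{[l,M']}$, which closes the induction.

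The main obstacle is this two-interval step: one must see that injectivity on the overlap is exactly what condition (b) of an injectivity sequence supplies. The reconstruction of $b$ needs only the isometry relation, not injectivity of the left block.
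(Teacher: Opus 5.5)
Your proof is correct and follows the paper's argument essentially step for step. You prove $\subseteq$ by cyclicity of the trace. You then induct on $k$: the case $M_k\ge k+1+\mcL_{k+1}$ is trivial, and the two-interval case is handled by comparing the two expansions through $[k+1,M_k]$, using injectivity there, and reconstructing $b=\sum_{\sigma_1}X^{\sigma_1*}C_{\sigma_1}$ from the isometry relation. Your use of condition (b) of Definition~\ref{Def:Inj_sequence} to get injectivity of $\Gamma^{[k+1,M_k]}$ is exactly the right justification. The paper's text at that point says ``since $M_k\geq r(k+1)$,'' which is a slip, because that case assumes $M_k<r(k+1)$.
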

\begin{proof}
    Let $r(x) := x + \mathcal{L}_x$. 
    We prove the statement for $l = 0$ for simplicity. 
    Write $\tilde{\mathcal{G}}^k$ to denote the random subspace on the right-hand side of (\ref{Eqn:Frustration_free_infinitely_often}).
    We first show $\mathcal{G}^{[0, M_{k}]}  \subset \tilde{\mathcal{G}}^k$. 
    To see this, let $v\in  \mathcal{G}^{[0, M_{k}]}$. 
    Then write 
    \begin{equation*}
        v
            =
        \sum_{\sigma\in[\LocalDim]^{[0, M_{k}]}}
            \trD{b X^{\sigma}}
            \ket{e_{\sigma}}
    \end{equation*}
    for some $b\in\MatricesBond$.
    Then by cyclicity of trace, for any $x\in[0, k]$, we have that 
    \begin{equation*}
        v
            =
         \sum_{
         \substack{
            \sigma'\in[\LocalDim]^{[0, x]}\\
            \sigma''\in[\LocalDim]^{(r(x), M_k]}
            }
         }
         \ket{e_{\sigma'}}
         \otimes 
         \left[
        \sum_{\sigma\in[\LocalDim]^{[x, r(x)]}}
            \trD{b^{\sigma', \sigma''} X^\sigma}
            \ket{e_\sigma}
            \right] 
        \otimes 
        \ket{e_{\sigma''}}
    \end{equation*}
    where $b^{\sigma', \sigma''} = X^{\sigma'} b X^{\sigma''}$. 
    From this, it is clear that $v\in  \tilde{\mathcal{G}}^k$, hence $\mathcal{G}^{[0, M_{k}]}  \subset \tilde{\mathcal{G}}^k$ almost surely. 
    Next, we show the other inclusion. 
    To do this, we induct on $k$. 
    It is clear when $k = 0$, so assume $k \geq 1$. 
    Then we have that 
    \begin{equation*}
        \tilde{\mathcal{G}}^{k+1}
            =
        \Big[ 
            \mathcal{G}^{[0, M_k]} 
            \otimes 
            \mathcal{H}^{(M_k, M_{k+1}]}
        \Big]
        \cap 
        \Big[ 
            \mathcal{H}^{[0, k+1)}
            \otimes 
            \mathcal{G}^{[k + 1, r(k+1)]}
            \otimes 
             \mathcal{H}^{(r(k+1), M_{k+1}]}
        \Big].
    \end{equation*}
    There are now two cases: either $M_k\geq r(k+1)$, or $M_k < r(k+1)$.
    In the first case, $M_k\geq r(k+1)$, we have that $M_k = M_{k+1}$, so we just need to show that 
    \begin{equation*}
     \mathcal{G}^{[0, M_{k+1}]}
     \cap 
     \left[
         \mathcal{H}^{[0, k+1)}
            \otimes 
            \mathcal{G}^{[k + 1, r(k+1)]}
            \otimes 
             \mathcal{H}^{(r(k+1), M_{k}]}
        \right]
        =
         \mathcal{G}^{[0, M_{k+1}]}.
    \end{equation*}
    This, however, follows by the argument above, since that showed 
    \begin{equation*}
        \mathcal{G}^{[0, M_{k+1}]}\subset  \mathcal{H}^{[0, k+1)}
            \otimes 
            \mathcal{G}^{[k + 1, r(k+1)]}
            \otimes 
             \mathcal{H}^{(r(k+1), M_{k+ 1}]}.
    \end{equation*}
    So, we may now assume we are in the second case, i.e., that $M_k < r(k+1)$.
    In this case, we have that $M_{k + 1} = r(k+1)$. 
    So, we have that 
   \begin{equation*}
        \tilde{\mathcal{G}}^{k+1} 
            =
        \Big[ 
            \mathcal{G}^{[0, M_k]} 
            \otimes 
            \mathcal{H}^{(M_k, M_{k+1}]}
        \Big]
        \cap 
        \Big[ 
            \mathcal{H}^{[0, k+1)}
            \otimes 
            \mathcal{G}^{[k + 1, M_{k+1}]}
        \Big],
    \end{equation*}
    and we want to show that $\tilde{\mathcal{G}}^{k+1}  \subset  \mathcal{G}^{[0, M_{k + 1}]}$.
    Towards this end, let $v\in \tilde{\mathcal{G}}^{k+1}$.
    Let $I_1 = [0, M_k]$, $I_2 = (M_k, M_{k+1}]$, $J_1 = [0, k+1)$, and $J_2 = [k + 1, M_{k+1}]$. 
    Then there are $b_{\sigma_2}\in\MatricesBond$ for all $\sigma_2\in[\LocalDim]^{I_2}$ so that 
     \begin{equation*}
        v = \sum_{
            \substack{
            \sigma_1\in [\LocalDim]^{I_1}\\
            \sigma_2\in [\LocalDim]^{I_2}
            }
        }
           \trD{b_{\sigma_2}X^{\sigma_1}}
           \ket{e_{\sigma_1}}\otimes \ket{e_{\sigma_2}}\,,
    \end{equation*}
    and, similarly, there are $c_{\varsigma_1}\in\MatricesBond$ for all $\varsigma_1\in[\LocalDim]^{J_1}$ so that 
    \begin{equation*}
        v = \sum_{
            \substack{
            \varsigma_1\in [\LocalDim]^{J_1}\\
            \varsigma_2\in [\LocalDim]^{J_2}
            }
        }
           \trD{c_{\varsigma_1}X^{\varsigma_2}}
           \ket{e_{\varsigma_1}}\otimes \ket{e_{\varsigma_2}}\,.
    \end{equation*}
    Since these expressions are equal, if we let $\sigma_1\sigma_2$ and $\varsigma_1\varsigma_2$ denote the concatenations, we have that 
    \begin{equation*}
        \trD{b_{\sigma_2}X^{\sigma_1}} - \trD{c_{\varsigma_1}X^{\varsigma_2}}
            =
        0
    \end{equation*}
    whenever $\sigma_1\sigma_2 = \varsigma_1\varsigma_2$.  
    Writing $\sigma_1\sigma_2 = \varsigma_1\varsigma_2 = (i_0, \dots, i_{M_{k+1}})$, we have that
    \begin{equation*}
        \begin{split}
            \sigma_j 
                &=
            \begin{cases}
                (i_0, \dots, i_{M_k}) &j = 1\\
                (i_{M_{k} + 1}, \dots, i_{M_{k+1}}) &j = 2
            \end{cases} \\
            \varsigma_j
                &= 
            \begin{cases}
                 (i_0, \dots, i_{k}) &j = 1\\
                (i_{k + 1}, \dots, i_{M_{k+1}}) \phantom{...} &j = 2,
            \end{cases}
        \end{split}
    \end{equation*}
    so by cyclicity of trace, 
    \begin{equation*}
        \trD{
        \left( 
            X^{\varsigma_1} b_{\sigma_2}
            -
            c_{\varsigma_1}
            X^{\sigma_{2}}
        \right)
        X^{(i_{k + 1}, \dots, i_{M_{k}})}
        }   
        =
        0.
    \end{equation*}
    Now, note that $\varsigma_1$ and $\sigma_2$ are independent of $(i_{k + 1}, \dots, i_{M_{k}})$.
    In particular, since the above equation holds for all $(i_{k + 1}, \dots, i_{M_{k}(\omega)})\in[\LocalDim]^{[k + 1, M_k]}$, we have that 
    \begin{equation*}
        \Gamma^{[k+1, M_{k}]}\!\seq{ X^{\varsigma} b_{\sigma}
            -
            c_{\varsigma}
            X^{\sigma}}
                =
                0\,,
    \end{equation*}
    almost surely 
    for any $\varsigma\in[\LocalDim]^{J_1}$ and $\sigma\in[\LocalDim]^{I_2}$.
    Since $M_{k}\geq r(k+1)$,  however, this implies that
    \begin{equation}\label{Eqn:Frustration_free_infinitely_often_Eqn_1}
            X^{\varsigma} b_{\sigma}
            -
            c_{\varsigma}
            X^{\sigma}
            =
            0\,,
    \end{equation}
    almost surely 
    for all $\varsigma\in[\LocalDim]^{J_1}$ and $\sigma\in[\LocalDim]^{I_2}$.
    Now, since since $E^{[l, r]}_{\localOne}(\bondOne) = \bondOne$ almost surely for all $[l, r]\subset\Z$, we have that 
    \begin{equation*}
        \bondOne
            =
        \sum_{\varsigma\in[\LocalDim]^{J_1}}
         X^{\varsigma*}
        X^{\varsigma}.
    \end{equation*}
    Therefore, from (\ref{Eqn:Frustration_free_infinitely_often_Eqn_1}) we conclude that 
    \begin{equation*}
        b_\sigma 
            =
        \sum_{\varsigma\in[\LocalDim]^{J_1}}
         X^{\varsigma*}
        X^{\varsigma}b_\sigma 
            =
        \sum_{\varsigma\in[\LocalDim]^{J_1}}
         X^{\varsigma*}
         c_{\varsigma}
        X^{\sigma}
            =
        \seq{
        \sum_{\varsigma\in[\LocalDim]^{J_1}}
         X^{\varsigma*}
         c_{\varsigma}
         }
         X^{\sigma}
    \end{equation*}
    for all $\sigma\in[\LocalDim]^{I_2}$.
    So, if we let $b' := \sum_{\varsigma\in[\LocalDim]^{J_1}}
         X^{\varsigma*}
         c_{\varsigma}$, then we see 
    \begin{align*}
        \Gamma^{[0, M_{k+1}]}(b')
            &=
        \sum_{\xi\in[\LocalDim]^{[0, M_{k+1}]}}
        \trD{b'
            X^{\xi}
        }
        \ket{e^{[0, M_{k+1}]}_\xi}\,,\\
            &= 
        \sum_{\substack{
        \sigma_1\in[\LocalDim]^{I_1}\\
        \sigma_2\in[\LocalDim]^{I_2}
        }}
        \trD{b'
            X^{\sigma_1\sigma_2}
        }
        \ket{e^{[0, M_{k+1}]}_{\sigma_1\sigma_2}}\,,\\
        &= 
       \sum_{\substack{
        \sigma_1\in[\LocalDim]^{I_1}\\
        \sigma_2\in[\LocalDim]^{I_2}
        }}
        \trD{b_{\sigma_2}
            X^{\sigma_1}
        }
        \ket{e^{I_1}_{\sigma_1}}\otimes \ket{e^{I_2}_{\sigma_2}}\,,\\
        &= 
        v,
    \end{align*}
    hence $v\in\tilde{\mathcal{G}}^{k+1}$, which ends the proof. 
\end{proof}
This motivates the following definition. 
\begin{definition}[Interactions exposing $\psi$]
\label{Def:Interaction_exposing_psi}
    We say a random interaction $\Psi$ exposes $\psi$ if $\Psi(\Lambda)\geq 0$ almost surely for all $\Lambda\Subset\Z$ and if there is an injectivity sequence $(\mcL^\Psi_x)_{x\in\Z}$ associated to $\Psi$ satisfying the following conditions. 
    \begin{enumerate}[label = (\alph*)]
        \item For all $x\in\Z$,
        \begin{equation}
            \operatorname{ker}\Psi_\omega\big([x, x + \mcL^\Psi_x(\omega)]\big) = \mcG^{[x, x + \mcL^\Psi_x(\omega)]}_\omega
        \end{equation}
        holds for almost every $\omega\in\Omega$. 

        \item For almost every $\omega\in\Omega$, if $\Lambda\not\in \set{[x, x + \mcL^\Psi_x(\omega)]\,\,:\,\, x\in\Z}$, then $\Psi_\omega(\Lambda) = 0$. 

        \item If for all $[l, r]\subset\Z$ we define $M^{[l, r]}(\omega):=\max\set{x + \mcL^\Psi_x(\omega)\,\,:\,\, x\in[l, r]}$, then
        \begin{equation}\label{Eqn:Nondegenerate_inj_sequence}
            \liminf_{k\to\infty}\trD{\rho^{-1}_{M^{[l, r]}}} < \infty
        \end{equation}
        holds almost surely. 
    \end{enumerate}
\end{definition}
In light of the intersection property of injectivity sequences established in Proposition \ref{Prop:Frustration_free_infinitely_often}, we may interpret conditions (a) and (b) as saying that our interaction $\Psi$ satisfies a corresponding intersection property. 
Condition (c) requires that the injectivity sequence $(\mcL^\Psi_x)_{x\in\Z}$ be ``ergodic enough". 
We now show that the canonical injectivity sequence $(\mcL^\eta_x)_{x\in\Z}$ satisfies condition (c), thereby giving us our first example of a random interaction exposing $\psi$. 
\begin{lemma}
\label{Lem:Canonical_inj_sequence_nondegen}
    Fix $\eta\in\N$ admissible. 
    Then $(\mcL^\eta_x)_{x\in\Z}$ satisfies (\ref{Eqn:Nondegenerate_inj_sequence}).
\end{lemma}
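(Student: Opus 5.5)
The plan is to compute $M^{[l,k]}$ explicitly for the canonical injectivity sequence and then reduce \eqref{Eqn:Nondegenerate_inj_sequence} to a recurrence statement for the ergodic map $T$. I read the condition with $l$ fixed and $k\to\infty$ along intervals $[l,k]$. By Lemma~\ref{Lem:r_eta_strictly_increasing}, $r_\eta$ is almost surely nondecreasing, so
\begin{equation*}
M^{[l,k]} = \max\set{x+\mcL^\eta_x : x\in[l,k]} = r_\eta(k) = k + k^\eta_k + \eta .
\end{equation*}
Write $A := \set{\ell\leq\eta}$, which has $\Pr[A]>0$ by admissibility. By definition of the entry time, $k + k^\eta_k$ is the first time $m>k$ with $T^m(\omega)\in A$. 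In particular, if $m>l$ is any visit time of the orbit to $A$, choosing $k=m-1\geq l$ gives $M^{[l,m-1]} = m+\eta$.

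Next I introduce the random variable $f := \trD{\rho_\eta^{-1}}$, so that $f\circ T^m = \trD{\rho_{m+\eta}^{-1}}$. By Lemma~\ref{Lem:hatrho_as_in_ESP}, $\rho$ is faithful almost surely, so $f<\infty$ almost surely. Since $\Pr[A]>0$, I can choose a deterministic $C<\infty$ with
\begin{equation*}
\Pr\big[A\cap\set{f\leq C}\big]>0 .
\end{equation*}
Set $B := A\cap\set{f\leq C}$. By ergodicity of $T$ (Birkhoff's theorem applied to $\one_B$, or Poincar\'e recurrence combined with ergodicity), almost every orbit satisfies $T^m(\omega)\in B$ for infinitely many $m>l$.

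For each such $m$, the index $k=m-1$ satisfies $M^{[l,k]} = m+\eta$, and hence
\begin{equation*}
\trD{\rho^{-1}_{M^{[l,k]}}} = f(T^m\omega)\leq C .
\end{equation*}
These indices $k$ tend to infinity, so $\liminf_{k\to\infty}\trD{\rho^{-1}_{M^{[l,k]}}}\leq C<\infty$ almost surely. Intersecting over the countably many $l\in\Z$ gives the statement almost surely for all $l$ simultaneously.

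The main point needing care is the identification $M^{[l,k]} = m+\eta$ with $m$ a visit time to $A$. This uses $k^\eta_k\geq 1$ (entry times are taken over $\N$ starting at $1$) and the monotonicity from Lemma~\ref{Lem:r_eta_strictly_increasing}. The rest is a standard positive-measure recurrence argument.
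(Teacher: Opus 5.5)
Your proof is correct and follows essentially the same route as the paper. The paper likewise uses Lemma~\ref{Lem:r_eta_strictly_increasing} to write $M^{[l,r]} = r+\eta+k^\eta_r$, and then applies recurrence to the positive-measure set $\set{\ell\leq\eta}\cap T^{-\eta}\set{\trD{\rho^{-1}}\leq M}$, which is exactly your $B$. Your write-up is slightly more explicit than the paper's on two points: it shows that a visit at time $m$ yields the index $k=m-1$ with $M^{[l,k]}=m+\eta$, and it invokes ergodicity (rather than bare Poincar\'e recurrence) to get infinitely many visits for almost every $\omega$.
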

\begin{proof}
    For $\eta\in\N$, let $A_\eta = \set{\ell\leq \eta}$, and for $M>0$ let $\tilde{A}_M = \set{\trD{\rho^{-1}}\leq M}$. 
    Fix $\eta\in\N$ admissible, and let $M>0$ be large enough so that $\Pr[\tilde{A}_M] > 1 - \Pr[A_\eta]$.
    Then because $T$ is a measure preserving transformation, $\Pr[A_\eta] > 1 - \Pr[T^{-\eta}(\tilde{A}_M)]$, hence $\Pr[A_\eta\cap T^{-\eta}(\tilde{A}_M)] > 0$. 
    Thus, by Poincar\'e recurrence, we know that 
    \begin{equation}
    \label{Eqn:Canonical_inj_sequence_nondegen:Eqn_1}
        \#\set{k\in\N\,\,:\,\, 
        T^k(\omega)\in 
        A_\eta\cap T^{-\eta}(\tilde{A}_M)
        }
        =
        \infty
    \end{equation}
    holds for almost every $\omega\in\Omega$. 
    However, from Lemma \ref{Lem:r_eta_strictly_increasing}, we know that $M^{[l, r]} = r + \mcL^\eta_r = r + \eta + k_r^\eta$ almost surely. 
    But notice that 
    \begin{equation*}
        k_r^\eta(\omega)
        =
        \min\set{k\in\N\,\,:\,\, T^{r + k}(\omega)\in A_\eta}.
    \end{equation*}
    Therefore, from (\ref{Eqn:Canonical_inj_sequence_nondegen:Eqn_1}), we are able to conclude that (\ref{Eqn:Nondegenerate_inj_sequence}) holds almost surely, which concludes the proof. 
\end{proof}
\begin{cor}
\label{Cor:Canonical_interaction_of_tolerance_eta}
    Let $\eta\in\N$ be admissible. 
    The random interaction $\Psi^\eta$ defined by 
    \begin{equation}
    \label{Eqn:Canonical_interaction_eta}
        \Psi^\eta(\Lambda)
            :=
        \begin{cases}
            \one_{[x, r_\eta(x)]}
            -
            G^{[x, r_\eta(x)]} 
            &\text{if }\Lambda = [x, r_\eta(x)]\\
            0 &\text{else}
        \end{cases}
    \end{equation}
    exposes $\psi$. 
\end{cor}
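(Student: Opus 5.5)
We verify the three conditions of Definition~\ref{Def:Interaction_exposing_psi} for the injectivity sequence $(\mcL^\eta_x)_{x\in\Z}$, which is an injectivity sequence by Lemma~\ref{Lem:Canonical_inj_sequence}. Two preliminary points are immediate. Positivity holds because each $\Psi^\eta(\Lambda)$ is either $0$ or the orthogonal projection $\one_{[x, r_\eta(x)]} - G^{[x, r_\eta(x)]}$, hence $\Psi^\eta(\Lambda)\geq 0$ almost surely. Measurability of $\omega\mapsto\Psi^\eta_\omega(\Lambda)$ follows because $G^{[l,r]}$ is the projection onto the range of the measurable map $\Gamma^{[l,r]}$. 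For a fixed interval $[l,r]$, the indicator that $r = r_\eta(l)$, i.e.\ $k^\eta_l = r - l - \eta$, is measurable.

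Condition (a) is then the identity $\ker(\one - G^{[x,r_\eta(x)]}) = \operatorname{ran} G^{[x,r_\eta(x)]} = \mcG^{[x, r_\eta(x)]}$, which holds pointwise by definition of $G$.

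Condition (b) holds because, by construction in \eqref{Eqn:Canonical_interaction_eta}, $\Psi^\eta_\omega(\Lambda)$ vanishes unless $\Lambda = [x, x+\mcL^\eta_x(\omega)]$ for some $x$. One subtle point should be recorded here. Since $r_\eta$ is nondecreasing (Lemma~\ref{Lem:r_eta_strictly_increasing}) and $r_\eta(x)\ge x$, each interval $[x, r_\eta(x)]$ is determined by its left endpoint. So the case distinction in \eqref{Eqn:Canonical_interaction_eta} is well defined, and there is no ambiguity about which $x$ is meant.

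Condition (c) is exactly Lemma~\ref{Lem:Canonical_inj_sequence_nondegen}, which states that the canonical sequence satisfies \eqref{Eqn:Nondegenerate_inj_sequence}. I would cite it directly. Its proof uses that $M^{[l,r]} = r+\eta+k^\eta_r$ and, via Poincar\'e recurrence, returns infinitely often to $A_\eta\cap T^{-\eta}(\tilde A_M)$, giving a finite $\liminf$.

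I would also check ergodicity of $\Psi^\eta$, since later results need it even though the definition of ``exposes'' does not. Ergodicity follows from Lemma~\ref{Lem:Gamma_covariant}: it gives $\tau_k(G^{[l,r]}_{k}) = G^{[l+k,r+k]}$, and together with $k^\eta_{x+k} = k^\eta_x\circ T^k$ this yields \eqref{eqn:ergodic_interaction}.

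I expect no real obstacle. The only delicate step is the bookkeeping in (c), and that is already settled by the preceding lemma, so the proof is essentially an assembly of earlier results.
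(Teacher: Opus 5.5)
Your proposal is correct and matches the paper's implicit argument: the corollary is stated there without a separate proof, resting on Lemma~\ref{Lem:Canonical_inj_sequence} for the injectivity sequence, on the construction (each term is either $0$ or the projection $\one_{[x,r_\eta(x)]}-G^{[x,r_\eta(x)]}$, whose kernel is $\mcG^{[x,r_\eta(x)]}$) for positivity and conditions (a)--(b), and on Lemma~\ref{Lem:Canonical_inj_sequence_nondegen} for condition (c). Your extra checks of measurability and of ergodicity via Lemma~\ref{Lem:Gamma_covariant} are also correct, though you do not need monotonicity of $r_\eta$ for well-definedness, since any nonempty interval already determines its left endpoint.
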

\begin{definition}[Canonical interaction]
\label{Def:Canonical_interaction_of_tolerance_eta}
    For admissible $\eta\in\N$, we call the random interaction $\Psi^\eta$ defined in (\ref{Eqn:Canonical_interaction_eta}) the \textbf{canonical interaction of radial tolerance} $\eta$. 
\end{definition}
Now that we know Definition \ref{Def:Interaction_exposing_psi} is non-vacuous, we move on to prove that all such interactions have $\psi$ as a unique frustration-free ground state, and, as a result, $\psi$ is almost surely a pure state. 
\begin{corollary}[$\psi$ is a unique frustration-free ground state]
\label{Cor:Unique_ground_state}
    Let $\psi$ be an injective ergodic matrix product state, and
    let $\Psi$ be a random interaction exposing $\psi$. 
    For almost every $\omega\in\Omega$, if $\varphi\in\cstates{\A{\Z}}$ satisfies $\varphi(\Psi_\omega(\Lambda)) = 0$ for all $\Lambda\Subset\Z$, then $\varphi = \psi_\omega$. 
\end{corollary}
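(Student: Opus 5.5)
We follow the strategy of \cite{FannesNachtergaeleWerner}: show that any frustration-free $\varphi$ is supported, on suitable finite intervals, inside the ground space $\mathcal{G}$. Then use the quantitative estimates of Lemma \ref{lem:alpha_estimates} to show that every state supported on $\mathcal{G}^{[l,r]}$ looks like $\psi$ on a fixed observable once the interval is long. Throughout, fix $\omega$ in the full-measure set where all the almost-sure statements above hold simultaneously. Countably many intervals are involved, so this is legitimate.

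\textbf{Step 1: reduction to the ground space.} Fix $\bs{a}\in\A{[s,t]}$ and an $l<s$. Let $\mcL^\Psi$ be the injectivity sequence attached to $\Psi$, and for $k\ge t$ set $M_k=\max\{x+\mcL^\Psi_x : x\in[l,k]\}$. Since $\Psi_\omega([x,x+\mcL^\Psi_x])\ge0$ has kernel $\mcG^{[x,x+\mcL^\Psi_x]}$ and is a finite-dimensional operator, $\varphi(\Psi_\omega(\cdot))=0$ gives
\[
\varphi\bigl(\one-G^{[x,x+\mcL^\Psi_x]}\bigr)=0 .
\]
This holds because $\Psi\ge c(\one-G)$ for some $c>0$. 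By Cauchy--Schwarz, $\varphi$ is then supported on each projection $G^{[x,x+\mcL^\Psi_x]}$. For a family of commuting-or-not projections $P_i$ with $\varphi(\one-P_i)=0$ for all $i$, the state also annihilates the projection onto the complement of $\bigcap_i\operatorname{ran}P_i$. The cleanest way to see this is to pass to the restriction of $\varphi$ to $\A{[l,M_k]}$, a density matrix $\sigma$, and note that $\operatorname{ran}\sigma\subset\operatorname{ran}P_i$ for all $i$.

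Proposition \ref{Prop:Intersection_property} then identifies this intersection with $\mathcal{G}^{[l,M_k]}$. Hence the restriction of $\varphi$ to $\A{[l,M_k]}$ is a density matrix supported on $\mathcal{G}^{[l,M_k]}$, so it can be written as
\[
\sigma=\sum_i \lambda_i\,\ketbra{\Gamma^{[l,M_k]}(b_i)}{\Gamma^{[l,M_k]}(b_i)} .
\]

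\textbf{Step 2: comparison with $\psi$.} Apply Lemma \ref{lem:alpha_estimates}(c) with the $\rho$-norm version, and part (b) for normalization, writing $r=M_k$. This gives
\[
\bigl|\varphi(\bs{a})-\psi_\omega(\bs{a})\bigr|\le C\,\bigl(\alpha^{[l,s-1]}+\alpha^{[t,M_k]}\bigr)\,\|\bs{a}\|\big/\bigl(1-\alpha^{[l,M_k]}\bigr).
\]
Here we use $\sum_i\lambda_i\|\Gamma(b_i)\|^2=1$ together with $\|\Gamma(b)\|^2\ge(1-\alpha)\|b\|^2_{\rho_r}$ from (\ref{Eqn:Lower_bound_inj_length}). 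The key point is that $\alpha^{[l,r]}$ carries the factor $\trD{\rho_r^{-1}}$. To drive it to zero we need $\trD{\rho^{-1}_{M_k}}$ bounded along a subsequence, and this is exactly condition (c) of Definition \ref{Def:Interaction_exposing_psi}. Combined with $\|E^{[t,M_k]}-\Delta_{M_k}\|\le C_t\mu^{M_k-t}$ from Theorem \ref{Thm:ESP}, and the fact that $M_k\to\infty$, this gives $\liminf_k\alpha^{[t,M_k]}=0$. The same reasoning controls $\alpha^{[l,M_k]}$. Letting $k\to\infty$ along that subsequence leaves only the term $\alpha^{[l,s-1]}$, which carries $\trD{\rho^{-1}_{s-1}}$ with $s$ fixed. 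Sending $l\to-\infty$ then kills it by Lemma \ref{lem:alpha_estimates}(a).

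We conclude that $\varphi=\psi_\omega$ on $\A{\Z}^{\loc}$, hence everywhere. We expect the main obstacle to be this step: the right-end error must be controlled by a uniform bound on $\trD{\rho^{-1}_{M_k}}$, which is why the nondegeneracy condition (c) is needed and why the limit is taken only along a subsequence. Some care is also needed in using the $\min$ in Lemma \ref{lem:alpha_estimates}, since the normalization is in $\rho$-norm and not operator norm.
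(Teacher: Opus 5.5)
Your proposal is correct and follows essentially the same route as the paper: restrict $\varphi$ to $\A{[l,M_k]}$, use the intersection property to place its density matrix inside $\mathcal{G}^{[l,M_k]}$, compare with $\psi$ via the $\rho$-norm estimate of Lemma~\ref{lem:alpha_estimates}(c), let $k\to\infty$ along the subsequence supplied by condition (c) of Definition~\ref{Def:Interaction_exposing_psi}, and then send $l\to-\infty$. The differences are cosmetic: the paper normalizes with $\kappa^{[l,M_k]}$ rather than $1-\alpha^{[l,M_k]}$ (and $\kappa^{[l,M_k]}$ is nondecreasing in $M_k$ by Lemma~\ref{Lem:Kappa_lemma}(b)), while your explicit appeal to part (b) to absorb the mismatch between $\|\Gamma(b_i)\|^2$ and $\|b_i\|^2_{\rho_{M_k}}$ fills in a step that the paper's displayed estimate passes over.
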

\begin{proof}
    Fix $\Psi$ a random interaction exposing $\psi$. 
    Let $\omega\in\Omega$ be such that the conclusion of Proposition \ref{Prop:Frustration_free_infinitely_often} holds for $\seq{\mcL^\Psi_x}_{x\geq l}$ for all $l\in\Z$, and note that the set of all such $\omega$ is a set of probability 1. 
    Let $M^{[l, r]}(\omega) := \max\set{x + \mcL^\Psi_x(\omega)\,\,:\,\,x\in[l, r]}$ for all $[l, r]\subset\Z$. 
    Suppose $\varphi\in\cstates{\Omega, \A{\Z}}$ is such that $\varphi(\Psi_\omega(\Lambda)) = 0$ for all $\Lambda\Subset\Z$.
    For $[l, r]\subset\Z$, let $W^{[l, r]}\in\A{[l, M^{[l, r]}(\omega)]}$ be the density matrix of $\varphi\vert_{\A{[l, M^{[l, r]}(\omega)]}}$.
    Because $\varphi(\Psi_\omega(\Lambda)) = 0$ for all $\Lambda\Subset\Z$, we know that $W^{[l, r]}$ is supported on the intersection 
    \begin{equation*}
         \bigcap_{x\in[l, k]}
         \mathcal{H}^{[l, x)}
                \otimes 
            \mathcal{G}^{[x, x + \mcL^\Psi_x(\omega)]}_{\omega} 
                \otimes 
            \mathcal{H}^{(x + \mcL^\Psi_x(\omega), M_k^l(\omega)]}\,,
    \end{equation*}
    which, by Proposition \ref{Prop:Frustration_free_infinitely_often}, is equal to $\mathcal{G}_\omega^{[l, M^{[l, r]}(\omega)]}$. 
    So, $W^{[l, r]} = \sum_{j=1}^{\BondDim^2}\ketbra{\Gamma^{[l, M^{[l, r]}(\omega)]}_\omega(b_{j})}{\Gamma^{[l, M^{[l, r]}(\omega)]}_\omega(b_{j})}$ for some $\set{b_{j}}_{j=1}^{\BondDim^2}\subset\MatricesBond$, whence we conclude 
         \begin{equation*}
        \varphi(\bs{a})
        =
        \sum_{j=1}^{\BondDim^2}
        \bra{\Gamma^{[l, M^{[l, r]}(\omega)]}_\omega(b_{j})}
            \bs{a}
        \ket{\Gamma^{[l, M^{[l, r]}(\omega)]}_\omega(b_{j})}
        \end{equation*}
    for all $\bs{a}\in\A{[l, M^{[l, r]}(\omega)]}.$
    Since $\varphi$ is a state, it holds in particular that
    \begin{equation}\label{Eqn:Prop:Unique_ground_state_Eqn_1}
        1
        =
        \varphi(\one_{[l, M^{[l, r]}(\omega)]})
        =
        \sum_{j=1}^{\BondDim^2}
       \left\|\Gamma^{[l, M^{[l, r]}(\omega)]}_\omega(b_{j})\right\|^2\,.
    \end{equation}
    Thus, for all $\bs{a}\in\A{[l_0, r_0]}$ where $[l_0, r_0]\subsetneq [l, M^{[l, r]}(\omega)]$, we see by Lemma \ref{lem:alpha_estimates} (c) that 
    \begin{align*}
        |
            \varphi(\bs{a})
                -
            \psi_\omega(\bs{a})
        |
            &=
        \left|
        \sum_{j=1}^{\BondDim^2}
            \bra{\Gamma^{[l, M^{[l, r]}(\omega)]}_\omega(b_{j})}
            \bs{a}
        \ket{\Gamma^{[l, M^{[l, r]}(\omega)]}_\omega(b_{j})}
        -
        \psi_\omega(\bs{a})
        \braket{\Gamma^{[l, M^{[l, r]}(\omega)]}_\omega(b_{j})}
        {\Gamma^{[l, M^{[l, r]}(\omega)]}_\omega(b_{j})}
        \right|\,,\\
            &\leq 
        \big(
            \alpha^{[l, l_0-1]}(\omega)
            +
            \alpha^{[r_0 + 1, M^{[l, r]}(\omega)]}(\omega)
        \big)
        \|\bs{a}\|
        \sum_{j=1}^{\BondDim^2}
        \hatrho_{T^{M^{[l, r]}(\omega)}(\omega)}(b_{j}^*b_{j})\,,\\
            &\leq 
        \cfrac{ \alpha^{[l, l_0-1]}(\omega)
            +
            \alpha^{[r_0 + 1, M^{[l, r]}(\omega)]}(\omega)}{
            \kappa^{[l, M^{[l, r]}(\omega)]}(\omega)
            }
            \|\bs{a}\|\,,
    \end{align*}
    holds almost surely by (\ref{Eqn:Prop:Unique_ground_state_Eqn_1}).
    Therefore, by condition (c) in Definition \ref{Def:Interaction_exposing_psi} and Lemma \ref{lem:alpha_estimates} (a), we have 
    \begin{align*}
         |
            \varphi(\bs{a})
                -
            \psi_\omega(\bs{a})
        |
        \leq 
        \liminf_{r\to\infty} 
         \cfrac{ \alpha^{[l, l_0-1]}(\omega)
            +
            \alpha^{[r_0 + 1, M^{[l, r]}(\omega)]}(\omega)}{
            \kappa^{[l, M^{[l, r]}(\omega)]}(\omega)
            }
            \|\bs{a}\|
            =
             \alpha^{[l, l_0-1]}(\omega)
            \|\bs{a}\|\,.
    \end{align*}
    By another application of Lemma \ref{lem:alpha_estimates} (a), the proof is done upon taking the limit infimum as $l\to-\infty$.
\end{proof}
\begin{corollary}[$\psi$ is pure]
\label{Cor:Pure_state}
    Let $\psi$ be an injective ergodic matrix product state. 
    Then $\psi$ is pure almost surely. 
\end{corollary}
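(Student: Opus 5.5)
The plan is to deduce purity from the uniqueness statement of Corollary~\ref{Cor:Unique_ground_state}, applied to the canonical interaction $\Psi^\eta$. Fix an admissible $\eta$; one exists since $\ell<\infty$ almost surely. By Corollary~\ref{Cor:Canonical_interaction_of_tolerance_eta}, $\Psi^\eta$ exposes $\psi$. Let $\Omega_0$ be the full-measure set of $\omega$ on which:
\begin{itemize}
\item[(i)] $\psi_\omega(\Psi^\eta_\omega(\Lambda))=0$ for all $\Lambda\Subset\Z$; this holds by Lemma~\ref{Lem:Local_action_of_psi}, since there are only countably many intervals;
\item[(ii)] the conclusion of Corollary~\ref{Cor:Unique_ground_state} holds.
\end{itemize}
Fix $\omega\in\Omega_0$.

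Suppose $\psi_\omega = t\varphi_1 + (1-t)\varphi_2$ with $t\in(0,1)$ and $\varphi_1,\varphi_2\in\cstates{\A{\Z}}$. Each $\Psi^\eta_\omega(\Lambda)$ is a positive element (a projection $\one-G$, or $0$). So from
\[
0=\psi_\omega(\Psi^\eta_\omega(\Lambda)) = t\varphi_1(\Psi^\eta_\omega(\Lambda)) + (1-t)\varphi_2(\Psi^\eta_\omega(\Lambda)),
\]
in which both terms are nonnegative, we get $\varphi_i(\Psi^\eta_\omega(\Lambda))=0$ for all $\Lambda$ and $i=1,2$. Corollary~\ref{Cor:Unique_ground_state} then gives $\varphi_1=\varphi_2=\psi_\omega$, so $\psi_\omega$ is an extreme point of $\cstates{\A{\Z}}$, i.e.\ pure.

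The only subtle point is that the null set in Corollary~\ref{Cor:Unique_ground_state} does not depend on $\varphi$. That corollary is stated as ``for almost every $\omega$, for every $\varphi$,'' which is exactly what is needed, because $\varphi_1$ and $\varphi_2$ are arbitrary, not random, states. If one worries that its proof chose the good set depending on $\varphi$, rereading it shows the set was fixed beforehand, via Proposition~\ref{Prop:Intersection_property} for all $l$, together with the almost-sure estimates of Lemma~\ref{lem:alpha_estimates} and condition (c). All of these are countably many statements about $\omega$ alone, so the argument applies verbatim.
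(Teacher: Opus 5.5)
Your argument is correct and is essentially the paper's proof. The paper likewise takes a convex component $\varphi$ of $\psi_\omega$ and uses $0\le\varphi\le\lambda\psi_\omega$, together with positivity of the interaction terms, to get $\varphi(\Psi_\omega(\Lambda))=0$; it then concludes $\varphi=\psi_\omega$ from Corollary~\ref{Cor:Unique_ground_state}. Your explicit checks that $\psi_\omega$ annihilates every $\Psi^\eta_\omega(\Lambda)$ on one full-measure set, and that the exceptional set in the uniqueness corollary does not depend on $\varphi$, spell out what the paper leaves implicit.
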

\begin{proof}
    Fix a random interaction $\Psi$ exposing $\psi$. 
   Let $\omega\in\Omega$ be such that the conclusion of Corollary \ref{Cor:Unique_ground_state} holds. 
   For any such $\omega$, let $\varphi\in\cstates{\A{\Z}}$ be a convex component of $\psi_\omega$. 
   Then there is $\lambda > 0$ such that $0\leq \varphi \leq \lambda \psi_\omega$. 
   This, however, implies that $\varphi(\Psi_\omega(\Lambda)) = 0$ for all $\Lambda\Subset\Z$. 
   Thus, by Corollary \ref{Cor:Unique_ground_state}, $\varphi = \psi_\omega$, which concludes the proof. 
\end{proof}
    If $\ell$ is essentially unbounded, then any random interaction $\Psi$ exposing $\psi$ is infinite-range almost surely. 
    Indeed, by ergodicity of $T$, $\ell$ being essentially unbounded is equivalent to $\sup_{x\geq l}\ell_x(\omega) = \infty$ for almost every $\omega\in\Omega$ for any $l\in\Z$. 
    In particular, for any such $\Psi$, we see that 
    \begin{equation*}
        \sup
        \{
            \operatorname{diam}(\Lambda)\,\,:\,\,
            \Psi_\omega^\eta(\Lambda)\neq 0
        \}
        \geq  
        \sup 
        \{ 
            \mcL^\eta_x(\omega)\text{ for $x\geq 0$}
        \}
        \geq 
        \sup_{x\geq 0}\ell_x(\omega) = \infty\,.
    \end{equation*}
This aside, we can show that $\Psi^\eta$ is {locally} finite-range almost surely, which we now do. 
\begin{lemma}
\label{Lem:Loc_fin_range}
    For any admissible $\eta\in\N$, $\Psi^\eta$ is locally finite-range.
\end{lemma}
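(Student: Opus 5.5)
We must show that almost surely, for every $y\in\Z$, the set of $\Lambda$ with $y\in\Lambda$ and $\Psi^\eta_\omega(\Lambda)\neq0$ is finite. By construction (Corollary~\ref{Cor:Canonical_interaction_of_tolerance_eta}), the only such sets are intervals $[x,r_\eta(x)]$ with $x\in\Z$. It therefore suffices to show that, almost surely, for every $y$ only finitely many $x$ satisfy $x\le y\le r_\eta(x)$. The constraint $x\le y$ bounds $x$ from above, so the task is to exclude infinitely many $x\to-\infty$ with $r_\eta(x)\ge y$.

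The key input is that $r_\eta(x)=x+\eta+k^\eta_x$, where $k^\eta_x$ is the entry time of $T^x\omega$ into $A_\eta=\{\ell\le\eta\}$. By Poincar\'e recurrence applied to $T^{-1}$, which is also measure preserving and ergodic, and since $\Pr[A_\eta]>0$, almost surely there are infinitely many negative times $m$ with $T^m\omega\in A_\eta$. Fix such an $\omega$ and fix $y$. Choose any $m<y-\eta$ with $T^m\omega\in A_\eta$. For every $x<m$, the entry time satisfies $x+k^\eta_x\le m$, because $m-x\in\N$ is a visit time. Hence $r_\eta(x)\le m+\eta<y$, so no interval starting at such an $x$ contains $y$.

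Consequently, the only candidates are $x\in[m,y]$, a finite set, and each $x$ contributes exactly one interval. This gives $n_{\Psi^\eta_\omega}(y)\le y-m+1<\infty$ for all $y$ on a full-measure set, which is independent of $y$ since it is just the set of $\omega$ with infinitely many negative visits. Alternatively, one can invoke Lemma~\ref{Lem:r_eta_strictly_increasing}: since $r_\eta$ is nondecreasing, $\{x: r_\eta(x)\ge y\}$ is an up-set. Its being bounded below is exactly the existence of one $x$ with $r_\eta(x)<y$, which is what the visit $m$ provides.

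The only delicate point is the precise convention for the entry time $k^\eta$. It is defined as a minimum over $k\in\N$, and I need $k=0$ to be allowed when $\omega\in A_\eta$, or at least an off-by-one bound such as $r_\eta(m)\le m+\eta+1$. Either convention suffices, since we simply choose $m$ slightly smaller, say $m<y-\eta-1$. Measurability and the requirement that the statement hold simultaneously for all $y$ are handled by a countable intersection of full-measure events.
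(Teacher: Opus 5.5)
Your proof is correct and follows essentially the same route as the paper's. The paper uses recurrence to find $x_0<x-2\eta$ with $k^\eta_{x_0}\le\eta$, so that $r_\eta(x_0)<x$, and then, implicitly using that $r_\eta$ is nondecreasing, confines every interval containing $x$ to start in $[x_0+1,x]$; this is your argument with the visit time $m$ playing the role of $x_0+k^\eta_{x_0}$, and your direct bound $k^\eta_x\le m-x$ for $x<m$ avoids the monotonicity lemma. Your worry about the convention for $k^\eta$ is moot: for $x<m$ you have $m-x\ge 1$, so that bound holds whether or not $0\in\N$.
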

\begin{proof}
    By Poincar\'e recurrence, for any $x\in\Z$ and almost every $\omega\in\Omega$ there is $x_0(\omega) < x - 2\eta$ such that 
    \begin{equation*}
        k^\eta_{x_0(\omega)}(\omega)\leq \eta.
    \end{equation*}
    Therefore, $x_0(\omega) + \mcL^\eta_{x_0(\omega)}(\omega) < x$. 
    It is then clear that
    \begin{equation*}
        \set{y\in\Z\,\,:\,\, x\in[y, y + \mcL^\eta_y(\omega)]}
        \subseteq  
        [x_0(\omega) + 1, x],
    \end{equation*}
    from which the lemma follows. 
\end{proof}
With this in hand, the first part of Theorem \ref{thm:InfinitePHam} is nearly proved.
All that remains is to show that $\Psi^\eta_\omega$ generates dynamics on the quasilocal algebra. 
To prove this, it suffices by Proposition \ref{Prop:Dynamics_from_loc_fin_interactions} to establish the following. 
\begin{lemma}[Bounded surface energy]
\label{Lem:Quasilocal_dynamics_for_eta_interaction}
    For any admissible $\eta\in\N$,
    there are random variables $l_n, r_n:\Omega\to\Z$ with the following properties. 
    \begin{enumerate}[label = (\alph*)]
        \item The intervals $\Lambda_n := [l_n, r_n]$ are increasing and absorbing.

        \item There is a deterministic constant $M\in (2\eta, \infty)$ such that 
        \begin{equation}
        \label{Eqn:Quasilocal_dynamics_for_eta_interaction_Eqn_1}
            \left\|
                \sum_{\substack{\Lambda\cap\Lambda_n\neq\emptyset\\
                \Lambda\cap\Lambda_n^c\neq\emptyset}}
                \Psi^\eta(\Lambda)
            \right\|
            \leq 
            M
        \end{equation} 
        almost surely. 
    \end{enumerate}
\end{lemma}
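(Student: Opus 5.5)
We choose the endpoints of $\Lambda_n$ at sites where the interaction is short, so that only boundedly many terms can straddle each endpoint. Every nonzero term of $\Psi^\eta$ is a projection $\Psi^\eta([x, r_\eta(x)]) = K^{[x, r_\eta(x)]}$, hence has norm at most $1$. It therefore suffices to bound, uniformly and deterministically, the \emph{number} of intervals $[x, r_\eta(x)]$ that meet both $\Lambda_n$ and $\Lambda_n^c$.

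\textbf{Right endpoints.} By Poincar\'e recurrence applied to $\set{\ell\leq\eta}$ (equivalently $\set{k^\eta = 0}$), almost surely there are infinitely many $y\geq 0$ and infinitely many $y\leq 0$ with $k^\eta_y = 0$, i.e.\ $r_\eta(y) = y + \eta$. Fix such $y$. By Lemma \ref{Lem:r_eta_strictly_increasing}, $r_\eta$ is nondecreasing, and every $x$ satisfies $r_\eta(x)\geq x+\eta$. Consequently, if $x\leq y$ then $r_\eta(x)\leq r_\eta(y) = y+\eta$. Choose the right endpoint $r_n := y_n$ along an increasing sequence of such recurrence sites. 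An interval $[x, r_\eta(x)]$ crossing $r_n$ (containing both $r_n$ and $r_n+1$) must have $x\leq r_n$ and $r_\eta(x)\geq r_n+1$. Every such interval lies inside $[x, r_n+\eta]$, but that alone does not bound the number of $x$.

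To bound the count, we use the monotonicity more sharply. For $x\leq r_n$ we have $r_\eta(x)\leq r_n+\eta$, while $r_\eta(x) = x + \eta + k^\eta_x$ with $k^\eta_x\leq r_n - x$. So the crossing terms are exactly those $x\leq r_n$ whose next good site (a site $z\geq x$ with $k^\eta_z=0$) lies past $r_n - \eta$. We therefore refine the choice of $r_n$: take $r_n$ to be a good site, and handle the rest through the left endpoint as follows.

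\textbf{Left endpoints.} For the left endpoint $l_n$, a crossing interval has $x<l_n\leq r_\eta(x)$. Choose $l_n := y'+1$ where $y'$ is a good site, so $r_\eta(y') = y'+\eta$. For $x<l_n$ with $r_\eta(x)\geq l_n$, monotonicity gives $r_\eta(x)\leq r_\eta(y')$ when $x\leq y'$, so $r_\eta(x) \leq y'+\eta$. This does not by itself bound the number of such $x$, since arbitrarily many $x$ far to the left could have a long reach.

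This is the main obstacle: the number of $x$ whose next good site lies in a given window can be unbounded. The fix is to use the structure of $r_\eta$. Since $r_\eta(x) = (\text{next good site} \geq x)+\eta$, the map $r_\eta$ takes the \emph{same} value $g+\eta$ on all $x$ in a gap $(g_{\mathrm{prev}}, g]$ between consecutive good sites. All those $x$ produce intervals $[x, g+\eta]$ which are nested, with common right endpoint. Such a nested chain still has unbounded cardinality, so the plan is instead to bound the norm of the sum directly. The intervals $[x, g+\eta]$ for $x\in(g_{\mathrm{prev}}, g]$ are nested, and by Lemma \ref{Lem:Kappa_lemma}(b) (injectivity is inherited by larger intervals) together with Proposition \ref{Prop:Intersection_property}, the corresponding ground spaces $\mathcal{G}^{[x,g+\eta]}\otimes\mathcal H$ shrink as $x$ decreases, so the projections $K^{[x,g+\eta]}$ are increasing. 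A sum of increasing projections can still have large norm, so instead of using the canonical $\Psi^\eta$ naively, we exploit the bounded count of \emph{distinct right endpoints} near each boundary.

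\textbf{Counting.} A crossing term at the left boundary $l_n = y'+1$ has right endpoint in $[l_n, y'+\eta]$: at most $\eta$ distinct values, each equal to (good site)$+\eta$. Likewise at $r_n$, crossing terms have $x\in(r_n-\eta, r_n]$ or come from good sites in $(r_n, r_n+\eta]$, again $O(\eta)$ groups. For a nested family with a common right endpoint we would bound
\[
\Big\|\sum_{x} K^{[x,g+\eta]}\Big\|,
\]
and we expect this to be the hard step. If the nested sum genuinely fails to be bounded, we would choose $l_n$ so that the gap preceding $y'$ is short, namely a site where $k^\eta=0$ at both $y'$ and $y'-1,\dots,y'-\eta$. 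Such sites recur by Poincar\'e recurrence provided $\Pr[\ell_0\leq\eta,\dots,\ell_\eta\leq\eta]>0$, which we would arrange by enlarging the tolerance slightly. Every crossing $x$ then lies within $2\eta$ of the boundary, which gives a deterministic count, and hence $M$ of order $4\eta$, consistent with $M>2\eta$.
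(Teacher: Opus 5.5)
Your proposal has a genuine gap: for a given admissible $\eta$ it never produces endpoints that actually work for $\Psi^\eta$.

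You correctly observe what goes wrong if an endpoint is placed at, or immediately after, a site $y$ with $r_\eta(y)=y+\eta$. Every $x$ in the gap preceding $y$ then gives a crossing term, and these nested projections $K^{[x,y+\eta]}$ sum to an operator whose norm equals their number. Your right endpoint $r_n$ is left in exactly this state and is never repaired.

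Your final repair asks for $\eta+1$ \emph{consecutive} good sites $y'-\eta,\dots,y'$, i.e.\ $\Pr[\ell_0\le\eta,\dots,\ell_\eta\le\eta]>0$. Admissibility does not give this. For example, take $D=2$ and three site types $A,B,U$ with tensors $\{\one,\sigma^x\}/\sqrt2$, $\{\one,\sigma^z\}/\sqrt2$ and $\{\one\}$, where $\sigma^x:=\sigma^{+}+\sigma^{-}$. Arrange them ergodically as blocks $AB$ separated by nonempty runs of $U$'s, e.g.\ periodically as $ABUABU\cdots$ with a uniformly random phase. An interval is injective iff it contains both an $A$ and a $B$. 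Hence $\ell=1$ exactly at the $A$-sites and $\ell\ge 2$ elsewhere, so for the admissible $\eta=1$ no two consecutive sites are ever good.

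Enlarging the tolerance does not rescue the argument. It replaces $\Psi^\eta$ by a different interaction $\Psi^{\eta'}$, while the lemma is asserted for $\Psi^\eta$ for every admissible $\eta$.

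The missing idea is that a single good site suffices, provided the boundary is placed \emph{beyond its reach} rather than at it.
\begin{itemize}
\item Left end: if $r_\eta(z)=z+\eta$ and $l_n:=z+\eta+1$, monotonicity of $r_\eta$ gives $r_\eta(x)\le z+\eta<l_n$ for every $x\le z$. So every term with $x<l_n\le r_\eta(x)$ has $x\in(z,l_n)$, and there are at most $\eta$ of them.
\item Right end: symmetrically, with $r_n:=z'+\eta$ for another such site $z'$, every term with $x\le r_n<r_\eta(x)$ has $x\in(z',r_n]$.
\item Choosing $z_n\downarrow-\infty$ and $z'_n\uparrow\infty$ by recurrence in both directions gives increasing, absorbing $\Lambda_n$ with at most $2\eta$ crossing terms, each of norm at most $1$.
\end{itemize}
This is exactly the paper's proof. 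It picks $t_n<0<s_n$ with $\mathcal{L}^\eta_{t_n},\mathcal{L}^\eta_{s_n}<M$, sets $l_n=t_n+M$ and $r_n=s_n+M$, and shows $S_1,S_2\le M$ and $S_3=0$.

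In your own run construction, the only fact ever used is that $y'-\eta$ is good. The other $\eta$ good sites are superfluous, and dropping them removes the unjustified hypothesis.
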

\begin{proof}
    Fix $M > 0$ such that $\Pr[\mcL^\eta < M]>0$, and notice that by the definition of $\mcL^\eta$ it is necessary that $M > \eta$. 
    Define $t_n:\Omega\to\Z^{<0}$ recursively by $t_1(\omega):= \max\set{t < 0\,\,:\,\, \mcL^\eta_t < M}$ and 
    \begin{equation*}
        t_n(\omega)
            :=
        \max\set{t < t_{n-1}(\omega)\,\,:\,\, \mcL^\eta_t < M}
    \end{equation*}
    for $n > 1$. 
    The maxima of these sets are defined by Poincar\'e recurrence and the choice of $M$. 
    Next, define $s_n:\Omega\to\Z^{>0}$ recursively by $s_1(\omega):=\max\set{s > \max\set{\mcL^\eta_{t_1 + M}(\omega) + t_1(\omega), 0}\,\,:\,\, \mcL^\eta_s(\omega) < M}$ and 
    \begin{equation*}
        s_n(\omega)
            :=
        \max\set{s > \max\set{ \mcL^\eta_{t_n + M}(\omega) + t_n(\omega), s_{n-1}(\omega)}\,\,:\,\, \mcL^\eta_s(\omega) < M}
    \end{equation*}
    for $n > 1$, where again the minima of these sets are defined by Poincar\'e recurrence. 
    Then for $n\geq 1$, let $l_n = t_n + M$ and let $r_n = s_n + M$. 
    Notice that (\ref{Eqn:Quasilocal_dynamics_for_eta_interaction_Eqn_1}) is bounded above by $S_1 + S_2 + S_3$ where 
    \begin{equation*}
        S_j
            :=
        \begin{cases}
            \#\set{
        x\in\Z\,\,:\,\,
            l_n\leq x\leq r_n < x + \mcL^\eta_x
        }&\text{$j = 1$},\\
        \#\set{
        x\in\Z\,\,:\,\,
        x< l_n\leq x + \mcL^\eta_x\leq r_n
        }&\text{$j = 2$},\\
        \#\set{
        x\in\Z\,\,:\,\,
            x< l_n\leq r_n< x + \mcL^\eta_x
        }&\text{$j = 3$}.
        \end{cases}
    \end{equation*}
    Now, because $x + \mcL^\eta_x$ is increasing in $x$ almost surely, by choice of $l_n$ and $r_n$, we have that 
    \begin{equation*}
        l_n + \mcL_{l_n}^\eta \leq r_n\,,
    \end{equation*}
    almost surely, so $S_3 = 0$ almost surely. 
    Moreover, since $l_n - M = t_n$ and $\mcL_{t_n}^\eta < M$, it holds that $l_n - M + \mcL_{l_n - M}^\eta < l_n$, hence for all $x \leq l_n - M$, $x + \mcL^\eta_x < l_n$. 
    In particular, $S_2 \leq M$ almost surely. 
    Lastly, by the same argument, since $r_n - M = s_n$ and $\mcL_{s_n}^\eta < M$, we have that $r_n - M + \mcL_{r_n - M}^\eta < r_n$, so for all $x \leq r_n - M$, we have $x + \mcL^\eta_x < r_n$, whence we conclude $S_1 \leq M$ almost surely. 
    Therefore, (\ref{Eqn:Quasilocal_dynamics_for_eta_interaction_Eqn_1}) is bounded by $2M$ almost surely, which concludes the proof.  
\end{proof}
%
%
%
%
%
%
%
%
\section{Local Spectral Gap Estimates}
\label{sec:GapEstimates}
In this section, we produce lower bounds on the spectral gap of the canonical Hamiltonian $\Psi^\eta$ that depend on the local statistics of the ergodic matrix product state and the dynamical properties of the transfer apparatus ergodic quantum process.  
We follow the martingale method of Nachtergaele introduced in \cite{Nachtergaele}, keeping track of local statistical information along the way. 
The martingale method has become a mainstay for proving the existence of a nonzero bulk gap (see e.g. \cite{BachmannHamzaNachtergaeleYoung, BishopNachtergaeleYoung, NachtergaeleSimsYoung_2, Young2024OnModels, Young_thesis}). 
Here, $\Psi^\eta$ need not be gapped for any $\eta$, as was demonstrated by the example in \cite{RoonSchenker}. 
Therefore, the purpose of this section is twofold: 
first, the investigation of how the local statistics of the EMPS come to play with regards to the spectral gap serves an explanatory role in describing precisely why the gap vanishes for certain examples of EMPS; second, this investigation produces conditions that, when satisfied, show that $\Psi^\eta$ is, in fact, gapped. 
A key technical part of the martingale method is good approximations for local ground state projectors. 
Here, this manifests as follows. 
Let $x_0 < x_1 < x_2 < x_3$. 
Note that for any $\xi\in\mathcal{G}^{[x_0, x_2]}\otimes\Hilb{(x_2, x_3]}$, we can write 
\begin{equation*}
    \xi 
    =
     \sum_{
        \substack{
        \sigma_1\in[\LocalDim]^{[x_0, x_1)}\\
        \sigma_2\in[\LocalDim]^{[x_1, x_2]}\\
        \sigma_3\in[\LocalDim]^{(x_2, x_3]}
        }
        }
            \trD{X^{\sigma_1\sigma_2}b^{\sigma_3}}
        \ket{\sigma_1\sigma_2\sigma_3}\,,
\end{equation*}
for some collection $\set{b^{\sigma_3}}_{\sigma_3\in[\LocalDim]^{[x_1, x_3]}}\subset\MatricesBond$ determined by $\xi$, and, similarly, for any $\zeta\in\Hilb{[x_0, x_1)}\otimes\mathcal{G}^{[x_1, x_3]}$, we can write
\begin{equation*}
    \zeta
    =
     \sum_{
        \substack{
        \sigma_1\in[\LocalDim]^{[x_0, x_1)}\\
        \sigma_2\in[\LocalDim]^{[x_1, x_2]}\\
        \sigma_3\in[\LocalDim]^{(x_2, x_3]}
        }
        }
            \trD{
            c^{\sigma_1}
            X^{\sigma_2\sigma_3}}
        \ket{\sigma_1\sigma_2\sigma_3}\,,
\end{equation*}
for some collection
$\set{c^{\sigma_1}}_{\sigma_1\in[\LocalDim]^{[x_0, x_2]}}\subset\MatricesBond$ determined by $\zeta$. 
\begin{lemma}
\label{lem:Delta_estimates}
\label{Lem:Good_approximations_part_1}
    For $j\in\set{0, 1, 2, 3}$, let $x_j:\Omega\to\Z$ be random variables such that $x_0 < x_1 < x_2 < x_3$, $\Gamma^{[x_1, x_2]}$ is injective, and  $x_3 > x_2 + 1 + \ell_{x_2 + 1}$ almost surely. 
    Fix $\xi\in\mathcal{G}^{[x_0, x_2]}\otimes\Hilb{(x_2, x_3]}$ and $\zeta\in\Hilb{[x_0, x_1)}\otimes \mathcal{G}^{[x_1, x_3]}$, and, as described above, write
    \begin{equation*}
         \xi 
    =
     \sum_{
        \substack{
        \sigma_1\in[\LocalDim]^{[x_0, x_1)}\\
        \sigma_2\in[\LocalDim]^{[x_1, x_2]}\\
        \sigma_3\in[\LocalDim]^{(x_2, x_3]}
        }
        }
            \trD{X^{\sigma_1\sigma_2}b^{\sigma_3}}
        \ket{\sigma_1\sigma_2\sigma_3}\,,
        \quad\text{and}\quad 
        \zeta
    =
     \sum_{
        \substack{
        \sigma_1\in[\LocalDim]^{[x_0, x_1)}\\
        \sigma_2\in[\LocalDim]^{[x_1, x_2]}\\
        \sigma_3\in[\LocalDim]^{(x_2, x_3]}
        }
        }
            \trD{
            c^{\sigma_1}
            X^{\sigma_2\sigma_3}}
        \ket{\sigma_1\sigma_2\sigma_3}\,,
    \end{equation*}
    for some sets of (random) matrices $\set{b^{\sigma_3}}_{\sigma_3\in[\LocalDim]^{[x_1, x_3]}}\subset\MatricesBond$  and $\set{c^{\sigma_1}}_{\sigma_1\in[\LocalDim]^{[x_0, x_2]}}\subset\MatricesBond$.
    Define random matrices $\tilde \Delta_\xi, \Delta_\zeta\in\MatricesBond$ by 
    \begin{equation*}
        \tilde\Delta_\xi 
            :=
        \sum_{\sigma\in[\LocalDim]^{(x_2, x_3]}}
        b^\sigma \rho_{x_2}(X^\sigma)^*\rho_{x_2}^{-1}\,,
        \quad\text{and}\quad
        \Delta_\zeta 
            :=
        \sum_{\sigma\in[\LocalDim]^{[x_0, x_1)}}
        (X^\sigma)^* c^{\sigma}\,.
    \end{equation*}
    Then almost surely 
    \begin{equation}
    \label{Eqn:Good_approximations_part_1_Eqn_1}
        \Big| 
            \braket{\xi}{\zeta}
                -
            \inner{\tilde{\Delta}_\xi}{\Delta_\zeta}_{\rho_{x_2}}
        \Big|
        \leq 
        \|\xi\|\|\zeta\|
        \cfrac{\alpha^{[x_1, x_2]}}{\kappa^{[x_1, x_2]}}.
    \end{equation}    
    Moreover, if $\xi, \zeta\in\big(\mathcal{G}^{[x_0, x_3]}\big)^\perp$, then 
    \begin{equation}
    \label{Eqn:Good_approximations_part_1_Eqn_2}
        \|
            \tilde{\Delta}_{\xi}
        \|_{\rho_{x_2}}
            \leq 
        \|\xi\|
        \cfrac{\alpha^{[x_1, x_2]}}{\sqrt{\kappa^{[x_1, x_2]}}}\,,
        \qquad\text{and}\qquad 
        \|
            \Delta_{\zeta}
        \|_{\rho_{x_3}}\leq 
        \|\zeta\|
        \cfrac{\alpha^{[x_1, x_2]}}{\sqrt{\kappa^{[x_1, x_2]}}}\,,
    \end{equation}
    almost surely. 
\end{lemma}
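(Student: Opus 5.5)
Both $\xi$ and $\zeta$ are built around the middle block $[x_1,x_2]$, so the plan is to expand $\braket{\xi}{\zeta}$ over $\sigma_1,\sigma_3$ and reduce everything to inner products of $\Gamma^{[x_1,x_2]}$-vectors. By cyclicity of the trace,
\begin{equation*}
    \braket{\xi}{\zeta}
    =\sum_{\sigma_1,\sigma_3}
    \braket{\Gamma^{[x_1,x_2]}(b^{\sigma_3}X^{\sigma_1})}{\Gamma^{[x_1,x_2]}(X^{\sigma_3}c^{\sigma_1})},
\end{equation*}
using $\trD{X^{\sigma_1\sigma_2}b^{\sigma_3}}=\trD{b^{\sigma_3}X^{\sigma_1}X^{\sigma_2}}$ and $\trD{c^{\sigma_1}X^{\sigma_2\sigma_3}}=\trD{X^{\sigma_3}c^{\sigma_1}X^{\sigma_2}}$. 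Lemma~\ref{lem:alpha_estimates}(b) then replaces each summand by $\inner{b^{\sigma_3}X^{\sigma_1}}{X^{\sigma_3}c^{\sigma_1}}_{\rho_{x_2}}$, with an error at most $\alpha^{[x_1,x_2]}\|b^{\sigma_3}X^{\sigma_1}\|_{\rho_{x_2}}\|X^{\sigma_3}c^{\sigma_1}\|_{\rho_{x_2}}$.

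Next I sum the main terms. Cyclicity gives $\trD{\rho_{x_2}(b^{\sigma_3}X^{\sigma_1})^*X^{\sigma_3}c^{\sigma_1}}$. Summing over $\sigma_1$ produces $\Delta_\zeta$, and summing over $\sigma_3$ with the $\rho_{x_2}$, $\rho_{x_2}^{-1}$ bookkeeping produces $\tilde\Delta_\xi$. Together these give exactly $\inner{\tilde\Delta_\xi}{\Delta_\zeta}_{\rho_{x_2}}$; this is why $\tilde\Delta_\xi$ carries the $\rho_{x_2}$-adjoint of $X^{\sigma}$.

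For the error, Cauchy--Schwarz over $(\sigma_1,\sigma_3)$ bounds it by $\alpha^{[x_1,x_2]}\bigl(\sum\|b^{\sigma_3}X^{\sigma_1}\|^2_{\rho_{x_2}}\bigr)^{1/2}\bigl(\sum\|X^{\sigma_3}c^{\sigma_1}\|^2_{\rho_{x_2}}\bigr)^{1/2}$. The definition \eqref{Eqn:Def_of_kappa} gives $\kappa^{[x_1,x_2]}\|b\|^2_{\rho_{x_2}}\le\|\Gamma^{[x_1,x_2]}(b)\|^2$, and $\Gamma^{[x_1,x_2]}$ is injective, so $\kappa^{[x_1,x_2]}>0$. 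Combined with orthogonality of the $\sigma_1,\sigma_3$ blocks, this gives $\sum\|b^{\sigma_3}X^{\sigma_1}\|^2_{\rho_{x_2}}\le\|\xi\|^2/\kappa^{[x_1,x_2]}$, and similarly for $\zeta$. That proves \eqref{Eqn:Good_approximations_part_1_Eqn_1}.

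For \eqref{Eqn:Good_approximations_part_1_Eqn_2}, I take $\xi\perp\mathcal{G}^{[x_0,x_3]}$ and test it against a special $\zeta$. For any $c\in\MatricesBond$, the vector $\zeta_c:=\Gamma^{[x_0,x_3]}(c)$ lies in $\mathcal{G}^{[x_0,x_3]}\subset\Hilb{[x_0,x_1)}\otimes\mathcal{G}^{[x_1,x_3]}$, with $c^{\sigma_1}=cX^{\sigma_1}$. Then $\Delta_{\zeta_c}=E^{[x_0,x_1)}_{\localOne}(c)$, which is $c$ when $c=\bondOne$; more generally I choose $c$ so that this $\Delta$ is what I want. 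I would instead use the symmetric choice $\zeta=\xi'$ for a suitable element whose $\Delta$ equals $\tilde\Delta_\xi$. Then $0=\braket{\xi}{\zeta}$, and \eqref{Eqn:Good_approximations_part_1_Eqn_1} gives $\|\tilde\Delta_\xi\|^2_{\rho_{x_2}}\le\|\xi\|\|\zeta\|\alpha/\kappa$. It then remains to bound $\|\zeta\|$ by $\|\tilde\Delta_\xi\|_{\rho_{x_2}}$ up to constants, using Lemma~\ref{lem:alpha_estimates}(b) and Lemma~\ref{Lem:Kappa_lemma}. Concretely, I try $\zeta=\Hilb{}\text{-vector}\ \Gamma^{[x_0,x_3]}(d)$ with $d$ chosen so that $\Delta_\zeta=\tilde\Delta_\xi$. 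This is solvable because $\Gamma^{[x_2+1,x_3]}$ is injective by the hypothesis $x_3>x_2+1+\ell_{x_2+1}$, and that injectivity is also what lets me identify $\Delta_\zeta$ with $d$. The $\zeta$ case is symmetric, with the roles of the left and right blocks swapped.

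The main obstacle is this last step: arranging $\Delta_\zeta=\tilde\Delta_\xi$ with $\|\zeta\|$ controlled by $\|\tilde\Delta_\xi\|$. A square root of $\kappa$ remains, so the obstacle is making $\|\zeta\|\lesssim\|\tilde\Delta_\xi\|_{\rho}/\sqrt{\kappa}$ (or better) precise. I expect this to follow from the monotonicity in Lemma~\ref{Lem:Kappa_lemma}(b) together with the normalization $\sum X^{\sigma*}X^{\sigma}=\bondOne$.
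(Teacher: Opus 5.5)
Your proof of \eqref{Eqn:Good_approximations_part_1_Eqn_1} is sound, and it is slightly more direct than the paper's. You get $\sum_{\sigma_1,\sigma_3}\|X^{\sigma_1}b^{\sigma_3}\|^2_{\rho_{x_2}}\le\|\xi\|^2/\kappa^{[x_1,x_2]}$ straight from the block decomposition $\|\xi\|^2=\sum_{\sigma_1,\sigma_3}\|\Gamma^{[x_1,x_2]}(X^{\sigma_1}b^{\sigma_3})\|^2$. The paper instead first collapses the sums to $\sum_{\sigma_3}\|b^{\sigma_3}\|^2_{\rho_{x_2}}$ and $\sum_{\sigma_1}\|c^{\sigma_1}\|^2_{\rho_{x_3}}$, then invokes $\kappa^{[x_0,x_2]},\kappa^{[x_1,x_3]}\ge\kappa^{[x_1,x_2]}$ from Lemma~\ref{Lem:Kappa_lemma}(b).

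You do need to fix the ordering. The paper's convention is $X^{\sigma_1\sigma_2}=X^{\sigma_2}X^{\sigma_1}$, so the blocks are $\Gamma^{[x_1,x_2]}(X^{\sigma_1}b^{\sigma_3})$ and $\Gamma^{[x_1,x_2]}(c^{\sigma_1}X^{\sigma_3})$, not the ones you wrote. Only in this order do the main terms add up to $\inner{\tilde\Delta_\xi}{\Delta_\zeta}_{\rho_{x_2}}$ with the stated definitions. Likewise $\Gamma^{[x_0,x_3]}(c)$ has $c^{\sigma_1}=X^{\sigma_1}c$, hence $\Delta_{\Gamma^{[x_0,x_3]}(c)}=\sum_\sigma X^{\sigma*}X^{\sigma}c=c$ exactly. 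Nothing needs to be solved, and injectivity of $\Gamma^{[x_2+1,x_3]}$ plays no role there.

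The genuine gap is \eqref{Eqn:Good_approximations_part_1_Eqn_2}: you leave it open, and the route you sketch cannot reach the stated constant.
\begin{itemize}
\item Plugging a single test vector $\zeta=\Gamma^{[x_0,x_3]}(d)$ into \eqref{Eqn:Good_approximations_part_1_Eqn_1} as a black box gives $\|\tilde\Delta_\xi\|^2_{\rho_{x_2}}\le\|\xi\|\|\zeta\|\alpha^{[x_1,x_2]}/\kappa^{[x_1,x_2]}$.
\item The natural upper bound from Lemma~\ref{lem:alpha_estimates}(b) is $\|\zeta\|^2\le(1+\alpha^{[x_0,x_3]})\|d\|^2_{\rho_{x_3}}$, so you end with $\alpha/\kappa$, not $\alpha/\sqrt{\kappa}$.
\item The estimate you say you need, $\|\zeta\|\lesssim\|\tilde\Delta_\xi\|/\sqrt{\kappa}$, points the wrong way and would only give $\kappa^{-3/2}$.
\end{itemize}

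The missing idea, which is essentially the paper's route, is to redo the Cauchy--Schwarz step rather than cite \eqref{Eqn:Good_approximations_part_1_Eqn_1}. Test against all $\chi=\Gamma^{[x_0,x_3]}(b)$ with $b\in\MatricesBond$:
\begin{itemize}
\item The blocks of $\chi$ are $\Gamma^{[x_1,x_2]}(X^{\sigma_1}bX^{\sigma_3})$, and $\sum_{\sigma_1,\sigma_3}\|X^{\sigma_1}bX^{\sigma_3}\|^2_{\rho_{x_2}}=\|b\|^2_{\rho_{x_3}}$ exactly, using $\sum_\sigma X^{\sigma*}X^\sigma=\bondOne$ and $\sum_\sigma X^\sigma\rho_{x_2}X^{\sigma*}=\rho_{x_3}$. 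So no $\kappa$ enters on the test side.
\item Since $\tilde\Delta_\chi=b\rho_{x_3}\rho_{x_2}^{-1}$, the main term becomes $\inner{b}{\Delta_\zeta}_{\rho_{x_3}}$.
\item For $\zeta\perp\mathcal{G}^{[x_0,x_3]}$ this yields $|\inner{b}{\Delta_\zeta}_{\rho_{x_3}}|\le\alpha^{[x_1,x_2]}\|b\|_{\rho_{x_3}}\|\zeta\|/\sqrt{\kappa^{[x_1,x_2]}}$ for every $b$. Taking $b=\Delta_\zeta$ gives the second bound.
\end{itemize}

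Be aware that the $\tilde\Delta_\xi$ case is not literally symmetric. The same computation gives $|\inner{\tilde\Delta_\xi}{c}_{\rho_{x_2}}|\le\alpha^{[x_1,x_2]}\|\xi\|\|c\|_{\rho_{x_3}}/\sqrt{\kappa^{[x_1,x_2]}}$, which pairs in $\rho_{x_2}$ against a $\rho_{x_3}$-norm. Taking $c=\tilde\Delta_\xi$ therefore leaves an extra factor $\|\rho_{x_2}^{-1/2}\rho_{x_3}\rho_{x_2}^{-1/2}\|^{1/2}$. The paper's ``similar argument'' does not address this. You must either carry that factor or bound $\tilde\Delta_\xi\rho_{x_2}\rho_{x_3}^{-1}$ in $\|\cdot\|_{\rho_{x_3}}$ instead.
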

\begin{proof} 
    From direct computation, we see  
    \begin{align*}
        \braket{\xi}{\zeta}
            &=
        \sum_{\substack{
            \sigma_1\in[\LocalDim]^{[x_0, x_1)}\\
            \sigma_3\in[\LocalDim]^{(x_2, x_3]}
        }}
        \braket{
        \Gamma^{[x_1, x_2]}\big(
            X^{\sigma_1}b^{\sigma_3}
        \big)
        }
        {
        \Gamma^{[x_1, x_2]}\big(
            c^{\sigma_1}X^{\sigma_3}
        \big)
        }\,,
    \end{align*}
    and,
    \begin{align*}
       \inner{\tilde{\Delta}_\xi}{\Delta_\zeta}_{\rho_{x_2}}
            =
         \sum_{\substack{
            \sigma_1\in[\LocalDim]^{[x_0, x_1)}\\
            \sigma_3\in[\LocalDim]^{(x_2, x_3]}
        }}
        \inner{X^{\sigma_1}b^{\sigma_3}}{c^{\sigma_1}X^{\sigma_3}}_{\rho_{x_2}}.
    \end{align*}
    hold almost surely. 
    So, by Lemma \ref{lem:alpha_estimates} (b), we have 
    \begin{align*}
        \Big| 
            \braket{\xi}{\zeta}
                -
            \inner{\tilde{\Delta}_\xi}{\Delta_\zeta}_{\rho_{x_2}}
        \Big|
            &\leq 
        \alpha^{[x_1, x_2]}
         \sum_{\substack{
            \sigma_1\in[\LocalDim]^{[x_0, x_1)}\\
            \sigma_3\in[\LocalDim]^{(x_2, x_3]}
        }}
        \|
            X^{\sigma_1}b^{\sigma_3}
        \|_{\rho_{x_2}}
        \|
            c^{\sigma_1}X^{\sigma_3}
        \|_{\rho_{x_2}}\\
            &\leq 
         \alpha^{[x_1, x_2]}
         \seq{\sum_{\substack{
            \sigma_1\in[\LocalDim]^{[x_0, x_1)}\\
            \sigma_3\in[\LocalDim]^{(x_2, x_3]}
        }}
        \|
            X^{\sigma_1}b^{\sigma_3}
        \|_{\rho_{x_2}}^2
        }^{1/2}
        \seq{
         \sum_{\substack{
            \sigma_1\in[\LocalDim]^{[x_0, x_1)}\\
            \sigma_3\in[\LocalDim]^{(x_2, x_3]}
        }}
        \|
            c^{\sigma_1}X^{\sigma_3}
        \|_{\rho_{x_2}}^2
        }^{1/2}\,.
    \end{align*}
    Now, notice that 
    \begin{align*}
        \sum_{\substack{
            \sigma_1\in[\LocalDim]^{[x_0, x_1)}\\
            \sigma_3\in[\LocalDim]^{(x_2, x_3]}
        }}
        \|
            X^{\sigma_1}b^{\sigma_3}
        \|_{\rho_{x_2}}^2
            &=
        \sum_{\substack{
            \sigma_1\in[\LocalDim]^{[x_0, x_1)}\\
            \sigma_3\in[\LocalDim]^{(x_2, x_3]}
        }}
        \trD{
        \rho_{x_2}
        b^{\sigma_3*}X^{\sigma_1*}
        X^{\sigma_1}b^{\sigma_3}
        }\\
            &= 
         \sum_{\sigma_3\in[\LocalDim]^{(x_2, x_3]}
        }
        \|
            b^{\sigma_3}
        \|_{\rho_{x_2}}^2\,,
    \end{align*}
    where we have used that $\sum_{\sigma}X^{\sigma*}X^\sigma = \bondOne$.
    Similarly, 
    \begin{align*}
        \sum_{\substack{
            \sigma_1\in[\LocalDim]^{[x_0, x_1)}\\
            \sigma_3\in[\LocalDim]^{(x_2, x_3]}
        }}
        \|
            c^{\sigma_1}X^{\sigma_3}
        \|_{\rho_{x_2}}^2
            &=
        \sum_{\substack{
            \sigma_1\in[\LocalDim]^{[x_0, x_1)}\\
            \sigma_3\in[\LocalDim]^{(x_2, x_3]}
        }}
        \trD{
        \rho_{x_2}
        X^{\sigma_3*}c^{\sigma_1*}
        c^{\sigma_1}X^{\sigma_3}
        }\\
            &= 
         \sum_{\sigma_1\in[\LocalDim]^{[x_0, x_1)}}
        \trD{
        E_{\one}^{(x_2, x_3]}(
        \rho_{x_2})
        c^{\sigma_1*}
        c^{\sigma_1}
        }\\
        &= 
         \sum_{\sigma_1\in[\LocalDim]^{[x_0, x_1)}}
        \trD{
        \rho_{x_3}
        c^{\sigma_1*}
        c^{\sigma_1}
        }\\
        &= 
         \sum_{\sigma_1\in[\LocalDim]^{[x_0, x_1)}}
        \|c^{\sigma_1}\|_{\rho_{x_3}}^2.
    \end{align*}
    Thus, from Lemmas \ref{lem:alpha_estimates} and \ref{Lem:Kappa_lemma}, we conclude that 
    \begin{align*}
         \Big| 
            \braket{\xi}{\zeta}
                -
            \inner{\tilde{\Delta}_\xi}{\Delta_\zeta}_{\rho_{x_2}}
        \Big|
                &\leq 
         \alpha^{[x_1, x_2]}
         \seq{ \sum_{\sigma_3\in[\LocalDim]^{(x_2, x_3]}
        }
        \|
            b^{\sigma_3}
        \|_{\rho_{x_2}}^2}^{1/2}
        \seq{ \sum_{\sigma_1\in[\LocalDim]^{[x_0, x_1)}}
        \|c^{\sigma_1}\|_{\rho_{x_3}}^2}^{1/2}
                    \\
                &\leq 
        \frac{\alpha^{[x_1, x_2]}}{\kappa^{[x_1, x_2]}}
         \seq{ 
            \sum_{\sigma_3\in[\LocalDim]^{(x_2, x_3]}
        }
       \left\|
        \Gamma^{[x_0, x_2]}(b^{\sigma_3})
       \right\|^2
        }^{1/2}
        \seq{ 
            \sum_{\sigma_1\in[\LocalDim]^{[x_0, x_1)}}
        \left\|
        \Gamma^{[x_1, x_3]}(c^{\sigma_1})
       \right\|^2
        }^{1/2}
                    \\
                &= 
        \|\xi\|
        \|\zeta\|
        \frac{\alpha^{[x_1, x_2]}}{\kappa^{[x_1, x_2]}},
    \end{align*}
    which is \eqref{Eqn:Good_approximations_part_1_Eqn_1}.
    To prove \eqref{Eqn:Good_approximations_part_1_Eqn_2}, first notice that for any $b\in\MatricesBond$, if we let $\chi = \Gamma^{[x_0, x_3]}(b)$, by the intersection property we have that $\chi\in \mathcal{G}^{[x_0, x_2]}\otimes\Hilb{(x_2, x_3]}$, so 
    \begin{equation*}
        \chi
            =
          \sum_{
        \substack{
        \sigma_1\in[\LocalDim]^{[x_0, x_1)}\\
        \sigma_2\in[\LocalDim]^{[x_1, x_2]}\\
        \sigma_3\in[\LocalDim]^{(x_2, x_3]}
        }
        }
            \trD{X^{\sigma_1\sigma_2}d^{\sigma_3}}
        \ket{\sigma_1\sigma_2\sigma_3}
    \end{equation*}
    for some $\set{d^{\sigma_{3}}}_{ \sigma_3\in[\LocalDim]^{(x_2, x_3]}}\subset\MatricesBond$. 
    Because $\Gamma^{[x_0, x_2]}$ is injective, it is immediate that $d^{\sigma_3} = b X^{\sigma_3}$ for all $ \sigma_3\in[\LocalDim]^{(x_2, x_3]}$. 
    So, 
    \begin{equation*}
        \tilde{\Delta}_{\chi}
            =
            b
        \seq{\sum_{\sigma\in[\LocalDim]^{(x_2, x_3]}}
        X^{\sigma} \rho_{x_2} (X^\sigma)^*}\rho_{x_2}^{-1}
        =
        b
            E_{\one}^{(x_2, x_3]}(\rho_{x_2})
        \rho_{x_2}^{-1}
        =
        b\rho_{x_3}\rho_{x_2}^{-1}.
    \end{equation*}    
    So, $ \inner{b}{\Delta_{\zeta}}_{\rho_{x_3}} =  \inner{ \tilde{\Delta}_{\chi}}{\Delta_{\zeta}}_{\rho_{x_2}}$.
    Thus, by the above arguments proving \eqref{Eqn:Good_approximations_part_1_Eqn_1}, whenever $\zeta\perp\mathcal{G}^{[x_0, x_3]}$, for any $b\in\MatricesBond$ we find that 
    \begin{equation*}
        \left|
         \inner{b}{\Delta_{\zeta}}_{\rho_{x_3}}
        \right|
        =
        \left| 
            \braket{\chi}{\zeta}
                -
            \inner{ \tilde{\Delta}_{\chi}}{\Delta_{\zeta}}_{\rho_{x_2}}
        \right|
        \leq 
        \|b\|_{\rho_{x_3}}
        \|\zeta\|
         \frac{\alpha^{[x_1, x_2]}}{\kappa^{[x_1, x_2]}},
    \end{equation*}
    which shows the corresponding bound in \eqref{Eqn:Good_approximations_part_1_Eqn_2}. 
    A similar argument shows the other bound of \eqref{Eqn:Good_approximations_part_1_Eqn_2}, which concludes the proof. 
\end{proof}
\begin{corollary}
\label{Lem:good_approximations_part_2}
\label{Cor:good_approximations_part_2}
    For $j\in\set{0, 1, 2, 3}$, let $x_j:\Omega\to\Z$ be as in Lemma \ref{Lem:Good_approximations_part_1}.
    Then 
    \begin{equation*}
        \big\|
            (
            G^{[x_0, x_2]}
                \otimes 
            \one_{(x_2, x_3]}
            )
            (
            \one_{[x_0, x_1)}
                \otimes 
            G^{[x_1, x_3]}    
            )
            -
            G^{[x_0, x_3]}
        \big\|
        \leq 
        \alpha^{[x_1, x_2]}
        \left(
        \cfrac{
        \big\| 
            \rho_{x_3}^{-1/2}\rho_{x_2}\rho_{x_3}^{-1/2}
        \big\|^{1/2}
        +
            \alpha^{[x_1, x_2]}
        }{
        \kappa^{[x_1, x_2]}
        }
        \right)
    \end{equation*}
    holds almost surely. 
\end{corollary}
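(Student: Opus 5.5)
We follow the standard route of \cite[Lemma 6.2]{FannesNachtergaeleWerner}, with Lemma~\ref{Lem:Good_approximations_part_1} as the main input. Write $P := G^{[x_0, x_2]}\otimes\one_{(x_2, x_3]}$, $Q := \one_{[x_0, x_1)}\otimes G^{[x_1, x_3]}$ and $G := G^{[x_0, x_3]}$. By Proposition~\ref{Prop:Intersection_property} (applied exactly as in the proof of Lemma~\ref{Lem:Good_approximations_part_1}, using $x_3 > x_2 + 1 + \ell_{x_2+1}$ and injectivity of $\Gamma^{[x_1, x_2]}$), we have $\mathcal{G}^{[x_0, x_3]} = \operatorname{ran}P\cap\operatorname{ran}Q$. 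Hence $G\le P$, $G\le Q$, $PG = QG = G$, and
\begin{equation*}
PQ - G = (P - G)(Q - G).
\end{equation*}
So it suffices to bound $|\langle \xi, \zeta\rangle|$ for unit vectors $\xi\in\operatorname{ran}(P - G) = \operatorname{ran}P\cap(\mathcal{G}^{[x_0, x_3]})^\perp$ and $\zeta\in\operatorname{ran}(Q - G)$.

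For such $\xi,\zeta$ we use the representations of Lemma~\ref{Lem:Good_approximations_part_1} and apply \eqref{Eqn:Good_approximations_part_1_Eqn_1}, which gives
\begin{equation*}
|\langle\xi,\zeta\rangle| \le \frac{\alpha^{[x_1, x_2]}}{\kappa^{[x_1, x_2]}} + \bigl|\langle\tilde\Delta_\xi, \Delta_\zeta\rangle_{\rho_{x_2}}\bigr|.
\end{equation*}
This produces the term $(\alpha^{[x_1,x_2]})^2/\kappa^{[x_1,x_2]}$ in the final bound, up to the bookkeeping of which factor of $\alpha$ it carries; see the last paragraph. For the remaining inner product we change weights:
\begin{equation*}
\langle \tilde\Delta_\xi,\Delta_\zeta\rangle_{\rho_{x_2}} = \operatorname{Tr}_D\bigl(\rho_{x_2}\tilde\Delta_\xi^*\Delta_\zeta\bigr).
\end{equation*}
Applying Cauchy--Schwarz with the insertion $\rho_{x_2} = \rho_{x_2}^{1/2}\rho_{x_3}^{-1/2}\rho_{x_3}^{1/2}\cdots$, or more simply
\begin{equation*}
|\operatorname{Tr}_D(\rho_{x_2}A^*B)| \le \|A\|_{\rho_{x_2}}\,\operatorname{Tr}_D(\rho_{x_2}B^*B)^{1/2}, \qquad \operatorname{Tr}_D(\rho_{x_2}B^*B) \le \bigl\|\rho_{x_3}^{-1/2}\rho_{x_2}\rho_{x_3}^{-1/2}\bigr\|\,\|B\|_{\rho_{x_3}}^2,
\end{equation*}
we get $|\langle\tilde\Delta_\xi,\Delta_\zeta\rangle_{\rho_{x_2}}| \le \|\tilde\Delta_\xi\|_{\rho_{x_2}}\,\|\rho_{x_3}^{-1/2}\rho_{x_2}\rho_{x_3}^{-1/2}\|^{1/2}\,\|\Delta_\zeta\|_{\rho_{x_3}}$. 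The second inequality follows by conjugation, since $\operatorname{Tr}_D(\rho_{x_2}B^*B) = \operatorname{Tr}_D\bigl(\rho_{x_3}^{-1/2}\rho_{x_2}\rho_{x_3}^{-1/2}\cdot\rho_{x_3}^{1/2}B^*B\rho_{x_3}^{1/2}\bigr)$; the correct ordering of $B^*B$ versus $BB^*$ should be checked against the definition $\langle a,b\rangle_\rho = \operatorname{Tr}(\rho a^*b)$. Now \eqref{Eqn:Good_approximations_part_1_Eqn_2} applies, since $\xi,\zeta\perp\mathcal{G}^{[x_0,x_3]}$, and bounds each norm by $\alpha^{[x_1,x_2]}/\sqrt{\kappa^{[x_1,x_2]}}$. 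The product of these two bounds contributes $\alpha^2\|\cdot\|^{1/2}/\kappa$.

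The main obstacle is getting the exponents to match the stated form
\begin{equation*}
\alpha\left(\frac{\|\cdot\|^{1/2} + \alpha}{\kappa}\right).
\end{equation*}
The naive assembly above gives $\alpha/\kappa + \alpha^2\|\cdot\|^{1/2}/\kappa$, with the powers of $\alpha$ swapped relative to the target. The fix I would try first is to not use \eqref{Eqn:Good_approximations_part_1_Eqn_1} crudely. Instead, decompose $\xi = \xi' + \xi''$, where $\xi'$ is the part corresponding to $\tilde\Delta_\xi$, that is, the component of $\xi$ lying in $\mathcal{G}^{[x_0,x_3]}$ after the projection $Q$ is applied. Then the leading term $\langle\tilde\Delta_\xi,\Delta_\zeta\rangle$ appears only once, bounded by $\|\cdot\|^{1/2}\alpha/\sqrt\kappa$ from one factor with $\|\xi\|\le1$ on the other, while the error term from \eqref{Eqn:Good_approximations_part_1_Eqn_1} picks up an extra $\alpha$ because one of its arguments is already small. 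Concretely, I would bound $\|(P-G)(Q-G)\zeta\|$ through its square, as $\langle (Q-G)\zeta, (P-G)(Q-G)\zeta\rangle$, using one factor of \eqref{Eqn:Good_approximations_part_1_Eqn_2} and using $\kappa\le 1$ to absorb the square roots. The routine steps are the intersection identity and the Cauchy--Schwarz estimates; the delicate step is arranging this bookkeeping so that exactly one factor of $\alpha$ multiplies the norm term.
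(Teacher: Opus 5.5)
Your argument up to the ``naive assembly'' is correct and follows the paper's route exactly:
\begin{itemize}
\item the intersection identity giving $PQ-G=(P-G)(Q-G)$;
\item the reduction to $|\langle\xi,\zeta\rangle|$ with $\xi\in\operatorname{ran}(P-G)$ and $\zeta\in\operatorname{ran}(Q-G)$;
\item \eqref{Eqn:Good_approximations_part_1_Eqn_1} followed by Cauchy--Schwarz;
\item the change of weight, where your ordering is fine: with $M:=\rho_{x_3}^{-1/2}\rho_{x_2}\rho_{x_3}^{-1/2}$ one has $\operatorname{Tr}_D(\rho_{x_2}B^*B)=\operatorname{Tr}_D(M\rho_{x_3}^{1/2}B^*B\rho_{x_3}^{1/2})\le\|M\|\,\|B\|_{\rho_{x_3}}^2$;
\item and \eqref{Eqn:Good_approximations_part_1_Eqn_2}.
\end{itemize}
Your bookkeeping is also right. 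Write $\alpha=\alpha^{[x_1,x_2]}$, $\kappa=\kappa^{[x_1,x_2]}$ and $c=\|M\|^{1/2}$. These ingredients give $\frac{\alpha}{\kappa}(1+\alpha c)$, not $\frac{\alpha}{\kappa}(c+\alpha)$. The paper's own proof does this same assembly and then writes down the stated form directly, so the mismatch you noticed is really there.

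The gap in your proposal is the last paragraph: the repair you suggest cannot work. As you describe it, $\xi'$ is $GQ\xi=G\xi=0$, so the decomposition is vacuous. Bounding through the square does not help either: set $\xi:=(P-G)\zeta$, so that $\|\xi\|^2=\langle\xi,\zeta\rangle$. The same estimates then return only $\|\xi\|^2\le\|\xi\|\frac{\alpha}{\kappa}(1+\alpha c)$. The reason is that the error term in \eqref{Eqn:Good_approximations_part_1_Eqn_1} is $\|\xi\|\,\|\zeta\|\,\alpha/\kappa$, however small $\tilde\Delta_\xi$ and $\Delta_\zeta$ are. Nothing produces a second factor of $\alpha$ on that term, and using $\kappa\le 1$ to absorb square roots changes only powers of $\kappa$, not of $\alpha$.

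The missing observation is that no improvement is needed. Since $M\ge 0$ and $\operatorname{Tr}_D(\rho_{x_2})=\operatorname{Tr}_D(\rho_{x_3})=1$, we have $1=\operatorname{Tr}_D(\rho_{x_3}^{1/2}M\rho_{x_3}^{1/2})\le\|M\|$, so $c\ge 1$.

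If $\alpha\le 1$, then
\[
\frac{\alpha}{\kappa}(c+\alpha)-\frac{\alpha}{\kappa}(1+\alpha c)=\frac{\alpha}{\kappa}(c-1)(1-\alpha)\ge 0,
\]
so the bound you already have implies the stated one.

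If $\alpha>1$, Lemma~\ref{lem:alpha_estimates}(b) with $b_1=b_2$ gives $\kappa\le 1+\alpha$. The right-hand side of the corollary is then at least $\alpha(c+\alpha)/(1+\alpha)\ge\alpha>1\ge\|(P-G)(Q-G)\|$, so the claim holds trivially.

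Replacing your final paragraph with this two-case remark completes the proof. It also shows that for $\alpha\le 1$ this route gives the sharper constant $\frac{\alpha}{\kappa}(1+\alpha c)$.
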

\begin{proof}
    Note that our hypotheses ensure that 
    \begin{equation*}
        \mathcal{G}^{[x_0, x_3]}
            =
        \seq{\mathcal{G}^{[x_0, x_2]} \otimes\Hilb{(x_2, x_3]}}
        \cap
        \seq{\Hilb{x_0, x_1)}\otimes \mathcal{G}^{[x_1, x_3]}}
    \end{equation*}
    by the argument of Proposition \ref{Prop:Intersection_property}.
    In particular, 
    \begin{align*}
         \big\| (
            G^{[x_0, x_2]}
                \otimes 
            \one_{(x_2, x_3]}
            )
            (
            \one_{[x_0, x_1)}
                \otimes 
            G^{[x_1, x_3]}    
            )
            -
            G^{[x_0, x_3]}\big\|  
            &\\
            &\hspace{-35mm}=
          \big\|(
            G^{[x_0, x_2]}
                \otimes 
            \one_{(x_2, x_3]}
            -
             G^{[x_0, x_3]} 
            )
            (
            \one_{[x_0, x_1)}
                \otimes 
            G^{[x_1, x_3]}  
            - G^{[x_0, x_3]} 
            )\big\|
    \end{align*}
    holds almost surely. 
    Therefore, 
    \begin{equation*}
         \big\| (
            G^{[x_0, x_2]}
                \otimes 
            \one_{(x_2, x_3]}
            )
            (
            \one_{[x_0, x_1)}
                \otimes 
            G^{[x_1, x_3]}    
            )
            -
            G^{[x_0, x_3]}\big\|  
            = 
        \sup_{\substack{
            \xi\in \mathcal{G}^{[x_0, x_2]}\otimes \mathcal{H}^{(x_2,x_3]}\\
            \zeta\in \mathcal{H}^{[x_0, x_1)}\otimes \mathcal{G}^{[x_1, x_3]}\\
            \xi, \zeta\perp \mathcal{G}^{[x_0, x_3]}
        }}
        \cfrac{\big|
            \braket{\xi}{\zeta}
        \big|}{
        \|\xi\|
        \|\zeta\|
        }
    \end{equation*}
    By Lemma \ref{Lem:Good_approximations_part_1}, for all $\xi, \zeta$ as in Lemma~\ref{Lem:Good_approximations_part_1} which additionally satisfy $\xi, \eta\perp \mathcal{G}^{[x_0, x_3]}$, we have that 
    \begin{align*}
        \big|
            \braket{\xi}{\zeta}
        \big|
            &\leq 
           \left|
            \inner{\tilde{\Delta}_\xi}{\Delta_\zeta}_{\rho_{x_2}}
            \right|
            +
             \|\xi\|
              \|\zeta\|
        \cfrac{\alpha^{[x_1, x_2]}}{\kappa^{[x_1, x_2]}}
            \\
        &\leq 
            \|\tilde{\Delta}_\xi\|_{\rho_{x_2}}
            \|\Delta_\zeta\|_{\rho_{x_2}}
             +
             \|\xi\|
              \|\zeta\|
        \cfrac{\alpha^{[x_1, x_2]}}{\kappa^{[x_1, x_2]}}
    \end{align*}
    by Cauchy-Schwartz applied to $\inner{\cdot}{\cdot}_{\rho_2}$.
    So, since 
    \begin{align*}
        \|
            \Delta_{\zeta}
        \|_{\rho_{x_2}}^2
                &=
        \left|
            \trD{\rho_{x_3}^{-1/2}\rho_{x_2}\rho_{x_3}^{-1/2}
            (\Delta_\zeta\rho_{x_3}^{1/2})^*\Delta_\zeta\rho_{x_3}^{1/2}
            }
        \right|
                \\
                &\leq 
        \big\| 
            \rho_{x_3}^{-1/2}\rho_{x_2}\rho_{x_3}^{-1/2}
        \big\|
        \big\|
            \Delta_{\zeta}
        \big\|_{\rho_{x_3}}^2,
    \end{align*}
    we conclude from Lemma \ref{Lem:Good_approximations_part_1} that 
    \begin{equation*}
        \big|
            \braket{\xi}{\zeta}
        \big|
        \leq 
         \|\xi\|
              \|\zeta\|   
        \alpha^{[x_1, x_2]}
        \left(
        \cfrac{
         \big\| 
            \rho_{x_3}^{-1/2}\rho_{x_2}\rho_{x_3}^{-1/2}
        \big\|^{1/2}
        +
            \alpha^{[x_1, x_2]}
        }{
        \kappa^{[x_1, x_2]}
        }
        \right),
    \end{equation*}
    ending the proof. 
\end{proof}
%
%
%
%
\subsection{Coarse graining}
With Corollary \ref{Cor:good_approximations_part_2} in hand, we now have the main technical tool required to establish our lower spectral bound. 
The idea from here is to follow the coarse graining procedure described by Fannes, Nachtergaele, and Werner in \cite{FannesNachtergaeleWerner}, which, owed to the disordered setting, is necessarily nonuniform and site-dependent. 
Nevertheless, the core idea described in \cite{FannesNachtergaeleWerner} is unchanged: 
instead of proving lower spectral bounds on $\Psi^\eta$ itself, we produce such bounds for a modification of this interaction with nearest-neighbor interaction on a ``stochastically regrouped" spin chain. 
\begin{definition}[Stochastic regrouping]
\label{Def:Stochastic_regrouping}
    Fix $x\in\Z$ and admissible $\eta\in\N$. 
    A stochastic regrouping of $\Z$ started at $x$ with tolerance $\eta$ is a sequence of random variables $\mathscr{R}(x, \eta) := \seq{x_j:\Omega\to [x, \infty)}_{j\geq 0}$ defined by 
    \begin{equation*}
        x_j := 
        r_\eta^{p_j}(x)
    \end{equation*}
    where $p_j:\Omega\to\N$ is a sequence of random variables with $p_j < p_{j+1}$ almost surely for all $j\geq 0$. 
    Given such a regrouping $\mathscr{R}(x, \eta)$, for any $j\geq 0$ define
    \begin{equation}
    \label{Eqn:Stochastic_regrouping}
        \mathfrak{h}^{j, j + 1}_{\mathscr{R}(x, \eta)}
            :=
        \displaystyle\sum_{y = x_j}^{r_\eta^{-1}(x_{j+2})}
            \Psi^\eta\big(
                [y, r_\eta(y)]
            \big)
            \in 
            \A{[x_j, x_{j+2}]},
    \end{equation}
    where $r_\eta^{-1}(x_{j+2}) = r_\eta^{p_{j+2} - 1}(x)$.
\end{definition}
\begin{figure}[H]
      \centering
      \begin{tikzpicture}[scale=1.5, every node/.style={scale=.8}]

        \node[draw, circle, black, fill=black, minimum size=.2cm, inner sep=0pt, label=$x$](n1) at (0.7, 0) {1};

        \node[draw, circle, black, fill=black, minimum size=.2cm, inner sep=0pt, label=$r_\eta^{p_1}(x)$](n2) at (2.2, 0) {2};

        \node[draw, circle, black, fill=black, minimum size=.2cm, inner sep=0pt, label=$r_\eta^{p_2}(x)$](n3) at (2.8, 0) {3};

        \node[draw, circle, black, fill=black, minimum size=.2cm, inner sep=0pt, label=$r_\eta^{p_3}(x)$](n4) at (4.2, 0) {4};

        \node[draw, circle, black, fill=black, minimum size=.2cm, inner sep=0pt, label=$r_\eta^{p_{m-1}}(x)$](n5) at (7.4, 0) {5};

        \node[draw, circle, black, fill=black, minimum size=.2cm, inner sep=0pt, label=$r_\eta^{p_m}(x)$](n6) at (9.1, 0) {6};

        \node[label=$\cdots$](n7) at (5.8, 0) {\phantom{7}};
        
        \draw[thick, black](0.7, 0) -- (9.5, 0);
        \draw[thick, dotted, black] (9.5, 0) -- (10, 0);


        \draw[line width = 0.39cm, PineGreen, draw opacity = 0.32, line cap=round] (0.7, 0) -- (2.8, 0);

        \draw[line width = 0.39cm, DarkOrchid, draw opacity = 0.32, line cap=round] (2.2, 0) -- (4.2, 0);

        \draw[line width = 0.39cm, YellowGreen, draw opacity = 0.32, line cap=round] (4.2, 0) -- (5.2, 0);

        \draw[line width = 0.39cm, Goldenrod, draw opacity = 0.32, line cap=round] (6.3, 0) -- (7.4, 0);
        
        \draw[line width = 0.39cm, Bittersweet, draw opacity = 0.32, line cap=round] (7.4, 0) -- (9.1, 0);


        \draw[
        black, 
        decoration={brace}, decorate] (0.7, 0.5) -- (2.7, 0.5);  
        \node[PineGreen] (H) at (1.7, 0.7){$ \mathfrak{h}^{0, 1}_{\mathscr{R}(x, \eta)}$};

        \draw[
        black, 
        decoration={brace}, decorate] (4.2, -0.4) -- (2.2, -0.4);  
        \node[DarkOrchid] (H) at (3.2, -0.6){$ \mathfrak{h}^{1, 2}_{\mathscr{R}(x, \eta)}$};

        \draw[
        black, 
        decoration={brace}, decorate] (7.4, 0.5) -- (9.1, 0.5);  
        \node[Bittersweet] (H) at (8.25, 0.7){$ \mathfrak{h}^{m-1, m}_{\mathscr{R}(x, \eta)}$};
        
    \end{tikzpicture}
    \caption{Stochastic regrouping}
\end{figure}
Given a stochastic regrouping $\mathscr{R}(x, \eta)$, we understand $ \mathfrak{h}^{j, j + 1}_{\mathscr{R}(x, \eta)}$ as defining a nearest-neighbor interaction on the regrouped chain with on-site algebra $\A{[x_j, x_{j+1})}$ at site $j$. 
Notice that 
\begin{equation*}
    H^{\Psi^\eta, [x, x_{m}]}
    \geq 
    \frac{1}{2}
    \sum_{j=0}^{m-2}
     \mathfrak{h}^{j, j + 1}_{\mathscr{R}(x, \eta)}
\end{equation*}
holds almost surely for any $m\geq 0$. 
Thus, to produce lower bounds on the spectral gap of $H^{\Psi^\eta, [x, x_{m}]}$, it suffices to produce lower bounds for the sum $\sum_{j=0}^{m-2}
     \mathfrak{h}_{\mathscr{R}(x, \eta)}^{j, j + 1}$.
Towards this end, notice by the intersection property (Proposition \ref{Prop:Intersection_property}) that 
\begin{equation}
      \mathfrak{h}_{\mathscr{R}(x, \eta)}^{j, j + 1}
        \geq 
    \tilde{\gamma}_{x, j} \seq{\one_{[x_j, x_{j+2}]} - G^{[x_j, x_{j+2}]}}
\end{equation}
where $\tilde{\gamma}_{x, j}$ is the smallest nonzero eigenvalue of $\mathfrak{h}_{\mathscr{R}(x, \eta)}^{j, j + 1}$. 
Therefore, 
\begin{equation}
\label{Eqn:Lower_bound_first_one}
    H^{\Psi^\eta, [x, r_\eta^{m}(x)]}
    \geq 
    \frac{1}{2}
    \seq{\min_{j\in\set{0, \dots, m-2}}
    \tilde{\gamma}_{x, j}}
    \sum_{j=0}^{m-2}
        \seq{\one_{[x_j, x_{j+2}]} - G^{[x_j, x_{j+2}]}}.
\end{equation}
We devote the rest of this section to proving lower bounds on $ \sum_{j=0}^{m-2}
        \seq{\one_{[x_j, x_{j+2}]} - G^{[x_j, x_{j+2}]}}.$
Towards this end, we introduce the notation 
\begin{equation}
\label{Eqn:Mathfrak_Ks}
    \mathfrak{K}^{[m]}_{\mathscr{R}(x, \eta)} 
        :=
    \sum_{j=0}^{m-2}
     \mathfrak{K}_{\mathscr{R}(x, \eta)}^{j, j + 1}
        \quad
        \text{where}
        \quad 
     \mathfrak{K}_{\mathscr{R}(x, \eta)}^{j, j + 1}
        := 
    \one_{[x_j, x_{j+2}]} - G^{[x_j, x_{j+2}]}.
\end{equation}
Our goal is thus to lower bound the spectral gap of $\mathfrak{K}^{[m]}_{\mathscr{R}(x, \eta)}$.
Towards this end, we recall the following lemma from \cite{FannesNachtergaeleWerner}, where for two projections $E$ and $F$, $E\wedge F$ denotes the projection onto $\operatorname{ran}(E)\cap\operatorname{ran}(F)$, and $E^\perp := \one - E$. 
\begin{lemma}[\texorpdfstring{\cite[Lemma 6.3]{FannesNachtergaeleWerner}}{l}]
\label{Lem:FNW_proj_lemma}
    Let $E$ and $F$ be orthogonal projections on a finite-dimensional Hilbert space. 
    Then
    \begin{enumerate}[label = (\alph*)]
        \item $\|EF - E\wedge F\| = \|E^\perp F^\perp - E^\perp\wedge F^\perp\|$

        \item $EF + FE \geq -\|EF - E\wedge F\|(E + F)$.
    \end{enumerate}
\end{lemma}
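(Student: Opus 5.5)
We prove (b) directly and use the two-subspace (Halmos) decomposition for (a). Throughout, write $P := E\wedge F$ and $c := \|EF - E\wedge F\|$.

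\emph{Reduction.} Since $\operatorname{ran}P\subset\operatorname{ran}E\cap\operatorname{ran}F$, we have $EP = PE = FP = PF = P$. Hence $E' := E - P$ and $F' := F - P$ are orthogonal projections with $E'\wedge F' = 0$. Expanding gives $EF - P = E'F'$, so $c = \|E'F'\|$. Expanding once more,
\begin{equation*}
EF + FE = 2P + E'F' + F'E'.
\end{equation*}

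\emph{Proof of (b).} Fix a vector $v$ and put $u = E'v$, $w = F'v$. Because $u = E'u$ and $w = F'w$,
\begin{equation*}
|\langle u, w\rangle| = |\langle u, E'F' w\rangle| \leq c\,\|u\|\,\|w\|.
\end{equation*}
Using this and the AM--GM inequality,
\begin{equation*}
\langle v, (E'F'+F'E')v\rangle = 2\operatorname{Re}\langle u,w\rangle \geq -2c\|u\|\|w\| \geq -c\big(\|E'v\|^2 + \|F'v\|^2\big) = -c\,\langle v,(E'+F')v\rangle .
\end{equation*}
Since $P\geq 0$, this gives $EF+FE\geq -c(E'+F')$. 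Finally, $E'\leq E$ and $F'\leq F$, so $-c(E'+F')\geq -c(E+F)$, which proves (b).

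\emph{Proof of (a).} In finite dimensions the Hilbert space splits orthogonally into $E$- and $F$-invariant pieces:
\begin{equation*}
(\operatorname{ran}E\cap\operatorname{ran}F)\oplus(\operatorname{ran}E\cap\ker F)\oplus(\ker E\cap\operatorname{ran}F)\oplus(\ker E\cap\ker F)\oplus\mathcal{M}.
\end{equation*}
On the generic part $\mathcal{M}$, $E$ and $F$ decompose into a direct sum of two-dimensional blocks. On each block,
\begin{equation*}
E=\begin{pmatrix}1&0\\0&0\end{pmatrix},\qquad F=\begin{pmatrix}\cos^2\theta&\cos\theta\sin\theta\\ \cos\theta\sin\theta&\sin^2\theta\end{pmatrix},\qquad \theta\in(0,\pi/2).
\end{equation*}
To obtain this decomposition in finite dimensions, diagonalize the positive contraction $EFE$ restricted to $\operatorname{ran}E\ominus\big[(\operatorname{ran}E\cap\operatorname{ran}F)\oplus(\operatorname{ran}E\cap\ker F)\big]$. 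Its eigenvalues lie in $(0,1)$ and equal $\cos^2\theta$. If $Ex = x$ is such an eigenvector, then $x$ and $Fx$ span an invariant two-dimensional block. Checking that these blocks are mutually orthogonal and exhaust $\mathcal{M}$ is routine linear algebra.

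On the four special pieces, $EF - E\wedge F$ vanishes: $E\wedge F$ is exactly the projection onto the first piece, and $EF$ is zero on the other three. On each generic block a direct computation gives $\|EF\| = \cos\theta$. Hence $c = \max_\theta\cos\theta$, with $c=0$ if $\mathcal{M}=0$.

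Passing to complements swaps the roles of the four special pieces. The piece $\ker E\cap\ker F$ is the range of $E^\perp\wedge F^\perp$, so $E^\perp F^\perp - E^\perp\wedge F^\perp$ again vanishes on all four. On a generic block, $E^\perp$ and $F^\perp$ are the projections onto $e_2$ and onto $(-\sin\theta,\cos\theta)$, which meet at the same angle. So $\|E^\perp F^\perp\| = |\langle e_2,(-\sin\theta,\cos\theta)\rangle| = \cos\theta$ on that block. Taking the maximum over blocks yields (a).

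The main obstacle is building the decomposition cleanly, in particular showing that the two-dimensional blocks are orthogonal and exhaust $\mathcal{M}$. Everything else is block-wise computation.
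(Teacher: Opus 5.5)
The paper does not prove this lemma. It quotes it from Fannes--Nachtergaele--Werner, so there is no in-paper argument to compare with, and your proposal has to stand on its own. It does: both parts are correct.

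Your proof of (b) is complete as written and is the efficient route. After the reduction $EF+FE=2P+E'F'+F'E'$, the estimate $|\langle E'v,F'v\rangle|=|\langle E'v,E'F'\,F'v\rangle|\le c\,\|E'v\|\,\|F'v\|$ plus AM--GM finishes it. It uses neither finite dimensionality nor part (a), and it actually gives the slightly sharper bound $EF+FE\ge 2P-c\,(E+F-2P)$.

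For (a), the small points you left implicit or called routine do close, in a few lines each.
\begin{itemize}
\item \emph{Eigenvalues lie in $(0,1)$.} For $x=Ex$ we have $\langle x,EFEx\rangle=\|Fx\|^2$. So eigenvalue $1$ forces $Fx=x$, and eigenvalue $0$ forces $Fx=0$. That puts $x$ in $\operatorname{ran}E\cap\operatorname{ran}F$ or in $\operatorname{ran}E\cap\ker F$, and both meet $\mathcal{M}$ only in $0$.
\item \emph{Orthogonality of the blocks.} For orthonormal eigenvectors $x,y$ we get $\langle x,Fy\rangle=\langle x,EFEy\rangle=0$ and $\langle Fx,Fy\rangle=\langle x,Fy\rangle=0$.
\item \emph{Exhaustion of $\mathcal{M}$.} Let $\mathcal{R}$ be the orthogonal complement of the blocks inside $\mathcal{M}$; it is invariant under $E$ and $F$. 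Since $E\mathcal{M}$ is spanned by the eigenvectors, $E\mathcal{R}\subset E\mathcal{M}\cap\mathcal{R}=0$. Because $F|_{\mathcal{R}}$ is a projection, $\mathcal{R}=(\mathcal{R}\cap\operatorname{ran}F)\oplus(\mathcal{R}\cap\ker F)$. This lies in $(\ker E\cap\operatorname{ran}F)\oplus(\ker E\cap\ker F)$, which is orthogonal to $\mathcal{M}$, so $\mathcal{R}=0$.
\end{itemize}

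A useful way to see what (a) says: $(EF-P)(EF-P)^*=EFE-P$, so $c^2$ is the largest eigenvalue of $EFE$ on $\operatorname{ran}E\ominus\operatorname{ran}P$. In other words, $c$ is the cosine of the Friedrichs angle between $\operatorname{ran}E$ and $\operatorname{ran}F$. Part (a) is then the classical fact that this angle is unchanged when both subspaces are replaced by their orthogonal complements, and your block computation is a proof of that fact in finite dimensions.
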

With this in hand, we can prove our main theorem. 
\begin{thm}
\label{Thm:Local_spectral_gap_prop}
    Fix $x\in\Z$ and $\eta\in\N$ admissible. 
    With notation as above, for any regrouping $\mathscr{R}(x, \eta)$ we have
    \begin{equation}
        \operatorname{spec-gap}\seq{H^{\Psi^\eta, [x, x_m]}}
            \geq 
        \frac{1}{2}
         \seq{\min_{j\in\set{0, \dots, m-2}}
    \tilde{\gamma}_{x, j}}
    \seq{1
    -
    \max_{j\in\set{0, \dots, m-2}}
    \beta_{x, j}}
    \end{equation}
    almost surely, where $\beta_{x, j}:\Omega\to\mathbb{R}$ is defined by
    \begin{equation}
    \label{Eqn:Def_of_beta}
        \beta_{x, j}
            :=
        2\alpha^{[x_{j+1}, x_{j+2}]}
        \left(
        \cfrac{
        \big\|
            \rho_{x_{j+3}}^{-1/2}
            \rho_{x_{j+2}}
            \rho_{x_{j+3}}^{-1/2}
        \big\|^{1/2}
        +
            \alpha^{[x_{j+1}, x_{j+2}]}
        }{
        1 - \alpha^{[x_{j+1}, x_{j+2}]}
        }
        \right)
    \end{equation}
    for all $j=0, \dots, m$. 
\end{thm}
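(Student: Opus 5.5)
By \eqref{Eqn:Lower_bound_first_one}, it suffices to show that
\[
\mathfrak{K}^{[m]}_{\mathscr{R}(x,\eta)}\geq \Big(1-\max_j\beta_{x,j}\Big)\bigl(\one-G^{[x_0,x_m]}\bigr).
\]
Combining this with the factor $\tfrac12\min_j\tilde\gamma_{x,j}$ gives the claimed bound. Here we use that both operators have the same kernel $\mathcal{G}^{[x_0,x_m]}$. Indeed, the intersection property (Proposition \ref{Prop:Intersection_property}, applied along the regrouped chain) identifies $\bigcap_j \mathcal{H}\otimes\mathcal{G}^{[x_j,x_{j+2}]}\otimes\mathcal{H}$ with $\mathcal{G}^{[x_0,x_m]}$, so the kernel of $H^{\Psi^\eta,[x,x_m]}$ is also $\mathcal{G}^{[x_0,x_m]}$. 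Write $P_j:=G^{[x_j,x_{j+2}]}$ (tensored with the identity) and $Q_j:=\one-P_j$, so that $\mathfrak{K}^{[m]}=\sum_j Q_j$.

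The core step is the standard Fannes--Nachtergaele--Werner estimate of $(\mathfrak{K}^{[m]})^2$. Expand
\[
\Big(\sum_j Q_j\Big)^2=\sum_j Q_j+\sum_j (Q_jQ_{j+1}+Q_{j+1}Q_j)+\sum_{|i-j|\ge2}Q_iQ_j .
\]
For $|i-j|\geq 2$ the supports $[x_i,x_{i+2}]$ and $[x_j,x_{j+2}]$ overlap in at most one site, so these projections commute. I expect this to be fine after checking endpoints: if it fails one can regroup with half-open blocks, or simply use $x_{j+2}$ in place of $x_{j+2}-1$ to keep the supports disjoint. In either case $Q_iQ_j\ge0$.

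For the neighbouring terms we apply Lemma \ref{Lem:FNW_proj_lemma}(b) with $E=Q_j$ and $F=Q_{j+1}$:
\[
Q_jQ_{j+1}+Q_{j+1}Q_j\ge -\epsilon_j(Q_j+Q_{j+1}),\qquad \epsilon_j=\|Q_jQ_{j+1}-Q_j\wedge Q_{j+1}\|.
\]
By Lemma \ref{Lem:FNW_proj_lemma}(a), $\epsilon_j=\|P_jP_{j+1}-P_j\wedge P_{j+1}\|$. Moreover $P_j\wedge P_{j+1}=G^{[x_j,x_{j+3}]}$, again by the intersection property. Corollary \ref{Cor:good_approximations_part_2}, applied with $(x_0,x_1,x_2,x_3)\to(x_j,x_{j+1},x_{j+2},x_{j+3})$, then bounds $\epsilon_j$ by
\[
\alpha^{[x_{j+1},x_{j+2}]}\,\frac{\|\rho_{x_{j+3}}^{-1/2}\rho_{x_{j+2}}\rho_{x_{j+3}}^{-1/2}\|^{1/2}+\alpha^{[x_{j+1},x_{j+2}]}}{\kappa^{[x_{j+1},x_{j+2}]}} .
\]
Lemma \ref{Lem:Kappa_lemma}(a) gives $\kappa\geq1-\alpha$, so $\epsilon_j\le\beta_{x,j}/2$.

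We must verify the hypotheses of Corollary \ref{Cor:good_approximations_part_2}. The block $[x_{j+1},x_{j+2}]$ contains $[x_{j+1},r_\eta(x_{j+1})]$, which is injective by Lemma \ref{Lem:Canonical_inj_sequence} together with monotonicity of $\kappa$. The condition $x_{j+3}>x_{j+2}+1+\ell_{x_{j+2}+1}$ holds by injectivity sequence property (b), since $x_{j+3}\geq r_\eta(x_{j+2})$. This verification is the step I expect to be the most delicate: one may need the strict inequalities $p_{j+1}>p_j$, and possibly to shift indices by one.

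Summing the neighbouring-term bounds, each $Q_j$ appears in at most two of them, so
\[
(\mathfrak{K}^{[m]})^2\ge \Big(1-2\max_j\epsilon_j\Big)\mathfrak{K}^{[m]}\ge\Big(1-\max_j\beta_{x,j}\Big)\mathfrak{K}^{[m]}.
\]
Hence every nonzero eigenvalue of $\mathfrak{K}^{[m]}$ is at least $1-\max_j\beta_{x,j}$. This is the required gap bound for $\mathfrak{K}^{[m]}$ and, by the first paragraph, the theorem follows. If $\max_j\beta_{x,j}\geq1$, the statement is trivial.
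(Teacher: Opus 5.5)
You follow the paper's proof essentially line by line. You reduce via \eqref{Eqn:Lower_bound_first_one} to $(\mathfrak{K}^{[m]})^2\ge(1-\max_j\beta_{x,j})\,\mathfrak{K}^{[m]}$, expand the square, and bound each neighbouring anticommutator with Lemma~\ref{Lem:FNW_proj_lemma} and Corollary~\ref{Cor:good_approximations_part_2}, using $\kappa\ge 1-\alpha$. Your check that the overlap $[x_{j+1},x_{j+2}]\supseteq[x_{j+1},r_\eta(x_{j+1})]$ is injective is correct. The strict inequality involving $x_{j+3}$ that worries you is immaterial: the proofs of Lemma~\ref{Lem:Good_approximations_part_1} and Corollary~\ref{Cor:good_approximations_part_2} only use injectivity of $\Gamma^{[x_1,x_2]}$. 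One correction to your last sentence: if $\alpha^{[x_{j+1},x_{j+2}]}>1$ then $\beta_{x,j}<0$, and the step $1/\kappa\le 1/(1-\alpha)$ is unavailable. So the degenerate case is ``some $\alpha^{[x_{j+1},x_{j+2}]}\ge 1$'', to be read as $\beta_{x,j}=+\infty$, not ``$\max_j\beta_{x,j}\ge1$''.

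The genuine gap is the step you flagged and then waved through. $Q_i$ and $Q_{i+2}$ live on $[x_i,x_{i+2}]$ and $[x_{i+2},x_{i+4}]$, which share the site $x_{i+2}$. Projections overlapping in a single site need not commute: already the AKLT projectors $\one-G^{[1,2]}$ and $\one-G^{[2,3]}$ do not, and long blocks do not cure this. Without $[Q_i,Q_{i+2}]=0$ you lose $\sum_{|i-j|\ge2}Q_iQ_j\ge0$. Estimating $\{Q_i,Q_{i+2}\}$ via Lemma~\ref{Lem:FNW_proj_lemma} gives no small constant, since a one-site overlap is not injective. The paper's proof asserts this commutation ``by construction'', so you have reproduced its argument together with this defect.

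Half-open blocks are the right repair, but the orientation matters and the constant changes.
\begin{itemize}
\item Right-open blocks $[x_j,x_{j+2})$ can fail: when $p_{j+2}=p_{j+1}+1$ the overlap is $[x_{j+1},r_\eta(x_{j+1})-1]$, which need not be injective.
\item Left-open blocks $(x_j,x_{j+2}]$ for $j\ge1$ work, keeping $[x_0,x_2]$ so the left end stays covered. The overlap $[x_{j+1}+1,x_{j+2}]$ contains the injective block $[x_{j+1}+k^\eta_{x_{j+1}},r_\eta(x_{j+1})]$.
\end{itemize}
With left-open blocks, redefine $\mathfrak{h}^{j,j+1}$ as the sum over $y\in[x_j+1,r_\eta^{-1}(x_{j+2})]$. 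Each term is still counted at most twice, and by Proposition~\ref{Prop:Intersection_property} its kernel is $\mathcal{G}^{[x_j+1,x_{j+2}]}$. The total kernel is still $\mathcal{G}^{[x_0,x_m]}$. Then apply Corollary~\ref{Cor:good_approximations_part_2} to $(x_j+1,x_{j+1}+1,x_{j+2},x_{j+3})$.

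This yields the theorem with $\alpha^{[x_{j+1}+1,x_{j+2}]}$ in place of $\alpha^{[x_{j+1},x_{j+2}]}$ in $\beta_{x,j}$, and with $\tilde\gamma_{x,j}$ the gaps of the modified $\mathfrak{h}^{j,j+1}$. That is slightly weaker than stated, since $E^{[l,r]}_{\one_n}-\Delta_r=E^{[l]}_{\one_n}\circ(E^{[l+1,r]}_{\one_n}-\Delta_r)$ gives $\alpha^{[l,r]}\le\alpha^{[l+1,r]}$. With the closed overlaps $[x_{j+1},x_{j+2}]$ of the statement, next-nearest blocks always share a site. So the constant exactly as written is out of reach of this method.
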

\begin{proof}
    As discussed above, it suffices to show that 
    \begin{equation*}
        \seq{\mathfrak{K}^{[m]}_{\mathscr{R}(x, \eta)}}^2
            \geq 
         \seq{1
    -
    \max_{j\in\set{0, \dots, m-2}}
    \beta_{x, j}}
    \mathfrak{K}^{[m]}_{\mathscr{R}(x, \eta)}
    \end{equation*}
    almost surely. 
    To see this, we first compute 
    \begin{align}
         \seq{\mathfrak{K}^{[m]}_{\mathscr{R}(x, \eta)}}^2
         \geq 
         \mathfrak{K}^{[m]}_{\mathscr{R}(x, \eta)}
         +
         \sum_{j=0}^{N-2}
         \{
            \mathfrak{K}^{j, j + 1}_{\mathscr{R}(x, \eta)}, \mathfrak{K}^{{j + 1, j + 2}}_{\mathscr{R}(x, \eta)}
         \} 
    \label{Eqn:Local_spectral_gap_proof_Eqn_1}
    \end{align}
    where $\{\cdot, \cdot\}$ denotes the anticommutator, and we have noted that, by construction, $[\mathfrak{K}^{j, j + 1}_{\mathscr{R}(x, \eta)}, \mathfrak{K}^{{j + 1, j + 2}}_{\mathscr{R}(x, \eta)}] = 0$ whenever $|i - j|\geq 2$, which, since $\mathfrak{K}^{j, j + 1}_{\mathscr{R}(x, \eta)}$ is a projection, implies $\{
       \mathfrak{K}^{j, j + 1}_{\mathscr{R}(x, \eta)}, \mathfrak{K}^{{j + 1, j + 2}}_{\mathscr{R}(x, \eta)}
         \}\geq 0$.
    By Lemma \ref{Lem:FNW_proj_lemma}, we know that 
    \begin{equation*}
          \{
        \mathfrak{K}^{j, j + 1}_{\mathscr{R}(x, \eta)}, \mathfrak{K}^{{j + 1, j + 2}}_{\mathscr{R}(x, \eta)}
         \}
         \geq 
         -\big\|
            (
            G^{[x_j, x_{j + 2}]}
                \otimes 
            \one_{(x_{j + 2}, x_{j + 3}]}
            )
            (
            \one_{[x_j, x_{j + 1})}
                \otimes 
            G^{[x_{j + 1}, x_{j + 3}]}    
            )
            -
            G^{[x_j, x_{j + 3}]}
        \big\|
        \big( 
             \mathfrak{K}^{j, j + 1}_{\mathscr{R}(x, \eta)}
                +
            \mathfrak{K}^{{j + 1, j + 2}}_{\mathscr{R}(x, \eta)}
        \big)
    \end{equation*}
    for all $j = 0, \dots, m-2$, so from Corollary \ref{Cor:good_approximations_part_2} we conclude 
    \begin{equation*}
         \{
       \mathfrak{K}^{j, j + 1}_{\mathscr{R}(x, \eta)}, \mathfrak{K}^{{j + 1, j + 2}}_{\mathscr{R}(x, \eta)}
         \}
         \geq 
         -
         \frac{\beta_{x, j}}{2}
         \big( 
             \mathfrak{K}^{j, j + 1}_{\mathscr{R}(x, \eta)} + \mathfrak{K}^{{j + 1, j + 2}}_{\mathscr{R}(x, \eta)}
        \big).
    \end{equation*}
    Thus, from (\ref{Eqn:Local_spectral_gap_proof_Eqn_1}), we see
    \begin{equation*}
         \left(\mathfrak{K}^{[m]}_{\mathscr{R}(x, \eta)}\right)^2
         \geq 
          \seq{
        1
        -
        \max_{j=0, \dots, m-2}
            \beta_{x, j}
        }
        \mathfrak{K}^{[m]}_{\mathscr{R}(x, \eta)}
    \end{equation*}
    almost surely,
    which concludes the proof. 
\end{proof}
\begin{corollary}
    Fix notation as in Theorem \ref{Thm:Local_spectral_gap_prop}.
    Let $H^\eta$ denote the GNS Hamiltonian corresponding to $\Psi^\eta$. 
    For any decreasing sequence $(x_j)_{j\in\N}\subset\Z$ with $\lim_j x_j = -\infty$ and any random sequence $(m_j)_{j\in\N}$ with $r_\eta^{m_j}(x_j)< r_\eta^{m_{j+1}}(x_{j+1})$ almost surely,
    \begin{equation*}
        \operatorname{spec-gap}(H^\eta)
        \geq 
        \frac{1}{2}
        \limsup_{j\geq 1}
         \seq{\min_{j\in\set{0, \dots, m-2}}
    \tilde{\gamma}_{x, j}}
    \seq{1
    -
    \max_{j\in\set{0, \dots, m-2}}
    \beta_{x, j}}
    \end{equation*}
    holds almost surely. 
\end{corollary}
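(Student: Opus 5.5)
The plan is to combine the Koma--Nachtergaele inequality (Proposition \ref{Prop:Koma_Nachtergaele}) with the finite-volume bound of Theorem \ref{Thm:Local_spectral_gap_prop}. Proposition \ref{Prop:Koma_Nachtergaele} requires four inputs, all available from Section \ref{sec:PHams}:
\begin{enumerate}[label = (\roman*)]
\item $\Psi^\eta$ is a positive interaction, by Corollary \ref{Cor:Canonical_interaction_of_tolerance_eta};
\item it is locally finite-range, by Lemma \ref{Lem:Loc_fin_range};
\item it has bounded surface energy, by Lemma \ref{Lem:Quasilocal_dynamics_for_eta_interaction}, so the GNS Hamiltonian $H^\eta$ exists via Propositions \ref{Prop:Dynamics_from_loc_fin_interactions} and \ref{prop:CovariantGNS};
\item $\psi_\omega$ is a frustration-free ground state, by Corollary \ref{Cor:Unique_ground_state}.
\end{enumerate}
We fix $\omega$ in the full-measure set on which all these hold, together with the conclusion of Theorem \ref{Thm:Local_spectral_gap_prop} for every starting point $x_j$ and every $m_j$. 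This is a countable intersection of almost-sure events, so it still has full measure.

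Next we build the absorbing sequence. Set $\Lambda_j := [x_j, r_\eta^{m_j}(x_j)]$. Since $x_j$ decreases to $-\infty$ and $r_\eta^{m_j}(x_j)$ is strictly increasing by hypothesis, it tends to $+\infty$. Hence $(\Lambda_j)$ is increasing and absorbing. Proposition \ref{Prop:Koma_Nachtergaele} was stated for an arbitrary increasing and absorbing sequence of finite sets, not only the one from Lemma \ref{Lem:Quasilocal_dynamics_for_eta_interaction}. Its proof uses only that $\delta^2(\bs a),\delta^3(\bs a)$ eventually lie in $\A{\Lambda_n}$ and that frustration-freeness gives $\pi(H^{\Phi,\Lambda_n})\Xi=0$. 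So it applies to $(\Lambda_j)$ and yields
\[
\operatorname{spec-gap}(H^\eta)\ \ge\ \limsup_j \operatorname{spec-gap}\bigl(H^{\Psi^\eta,\Lambda_j}\bigr).
\]

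It remains to apply Theorem \ref{Thm:Local_spectral_gap_prop} with $x=x_j$ and $m=m_j$ to each term, using any regrouping $\mathscr R(x_j,\eta)$ whose last point is $r_\eta^{m_j}(x_j)$. The simplest choice is $p_i=i$, which makes the last point $x_{m_j}=r_\eta^{m_j}(x_j)$. Theorem \ref{Thm:Local_spectral_gap_prop} then bounds each finite-volume gap from below by $\tfrac12\bigl(\min_i\tilde\gamma_{x_j,i}\bigr)\bigl(1-\max_i\beta_{x_j,i}\bigr)$. Taking $\limsup_j$ gives the displayed bound, which we read with the indices $x=x_j$, $m=m_j$.

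The main obstacle is a mismatch between the interaction sets. $H^{\Psi^\eta,\Lambda_j}$ sums only over $[y,r_\eta(y)]\subset\Lambda_j$, while the regrouped terms $\mathfrak h^{i,i+1}$ and the quantity $\beta_{x,m-2}$ (which involves $x_{m+1}$) refer to $x_{j+3}$ slightly outside the last block. I would handle this by checking that every $\mathfrak h^{i,i+1}$ with $i\le m-2$ is supported in $[x_0,x_m]$ and uses only terms with $r_\eta(y)\le x_m$, which follows from the monotonicity in Lemma \ref{Lem:r_eta_strictly_increasing}. Then the inequality $H^{\Psi^\eta,[x,x_m]}\ge\frac12\sum\mathfrak h$ holds as stated, and both Hamiltonians share the kernel $\mathcal G^{[x,x_m]}$ by Proposition \ref{Prop:Intersection_property}. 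If that check fails, I would shrink $m_j$ by one or two, which changes nothing in the $\limsup$.
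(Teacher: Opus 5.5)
Your proposal is correct and follows the paper's proof: apply the Koma--Nachtergaele inequality (Proposition \ref{Prop:Koma_Nachtergaele}) to the increasing, absorbing sequence $\Lambda_j=[x_j,r_\eta^{m_j}(x_j)]$, then bound each finite-volume gap using Theorem \ref{Thm:Local_spectral_gap_prop}. Your last paragraph is about the internals of Theorem \ref{Thm:Local_spectral_gap_prop} rather than the corollary, and the check you describe succeeds: each $\mathfrak h^{i,i+1}$ with $i\le m-2$ is supported in $[x_0,x_m]$ by monotonicity of $r_\eta$, and the term $\beta_{x,m-2}$ involving $x_{m+1}$ only enlarges the maximum and so weakens the bound, so no shrinking of $m_j$ is needed.
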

\begin{proof}
    This follows immediately from the Koma-Nachtergaele result Proposition \ref{Prop:Koma_Nachtergaele} upon noting the assumptions on $x_j$ and $m_j$ insure that $\Lambda_j = [x_j, r_{\eta}^{m_j}(x_j)]$ is an increasing and absorbing sequence almost surely. 
\end{proof}
%
%
%
\subsection{Uniformly bounded injectivity length and examples of gapped EMPS}
In this section, we briefly discuss how the above results may be improved in the situation that $\psi$ is uniformly injective. 
\begin{definition}
    If $\ell\in L^\infty(\Omega)$, we call $\psi$ \textit{uniformly injective}. 
\end{definition}
It may be useful to rephrase this condition as follows. 
\begin{lemma}
    $\Pr[\sup_{x\in\Z}\ell_x < \infty]\in\set{0, 1}$.
    In particular, 
    $\psi$ is uniformly injective if and only if $\sup_{x\in\Z}\ell_x < \infty$ almost surely. 
\end{lemma}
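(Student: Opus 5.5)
The plan is to show that the event $A := \{\sup_{x\in\Z}\ell_x < \infty\}$ is $T$-invariant, and then appeal to ergodicity of $T$. Since $\ell_x(\omega) = \ell(T^x(\omega))$, we have
\begin{equation*}
    \sup_{x\in\Z}\ell_x(T(\omega)) = \sup_{x\in\Z}\ell(T^{x+1}(\omega)) = \sup_{x\in\Z}\ell_x(\omega).
\end{equation*}
Hence $T^{-1}(A) = A$ exactly, not merely up to null sets. The random variable $\omega\mapsto\sup_{x}\ell_x(\omega)$ is measurable as a countable supremum of measurable maps, so $A\in\mathcal{F}$. Ergodicity of $T$ then forces $\Pr[A]\in\set{0,1}$, which is the first assertion.

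For the equivalence, first suppose $\psi$ is uniformly injective, so $\ell\leq C$ almost surely for some deterministic $C$. Because $T$ preserves measure, $\Pr[\ell_x > C] = \Pr[\ell > C] = 0$ for every $x$. A countable union over $x\in\Z$ of these null events is still null, so $\sup_x\ell_x\leq C<\infty$ almost surely.

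Conversely, suppose $\sup_x\ell_x<\infty$ almost surely. Set $S := \sup_x\ell_x$; it is $T$-invariant and almost surely finite. An invariant measurable function is almost surely constant under an ergodic $T$, because each level set $\{S\leq c\}$ is invariant and so has probability $0$ or $1$. Therefore $S = c$ almost surely for some finite deterministic $c$. Then $\ell = \ell_0\leq S = c$ almost surely, which gives $\ell\in L^\infty(\Omega)$.

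The only point requiring any care is this last step: passing from almost-sure finiteness of $S$ to a deterministic bound. It is handled by the standard fact that invariant functions are almost surely constant for ergodic $T$. Everything else in the argument is routine.
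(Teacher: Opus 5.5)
Your proof is correct and follows the same route as the paper, which says only that the $0$--$1$ law is immediate from ergodicity and that the equivalence then follows; you supply the details, including that $\sup_{x\in\Z}\ell_x$ is exactly $T$-invariant. For the converse you apply the $0$--$1$ law to every level set $\{\sup_{x}\ell_x\le c\}$, not just to $\{\sup_x\ell_x<\infty\}$, which makes the supremum almost surely a deterministic constant. That is the step the paper leaves implicit.
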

\begin{proof}
    It is immediate from ergodicity that $\Pr[\sup_{x\in\Z}\ell_x < \infty]\in\set{0, 1}$, after which the lemma is clear. 
\end{proof}
The first thing to note is that when $\ell\in L^\infty(\Omega)$, there is deterministic $L\in\N$ for which $\mathcal{L}_x = L$ is an injectivity sequence: simply take $L > \|\ell\|_{L^\infty(\Omega)}$. 
Then, by taking $\eta = L$, we have that 
\begin{equation*}
    \Psi^{\eta}\big(
        \Lambda
    \big)
        =
        \begin{cases}
            K^{[x, x + \eta]} &\text{if $\Lambda = [x, x + \eta]$ for some $x\in\Z$}\\
            0 &\text{else.}
        \end{cases}
\end{equation*}
As a result, we see that, for any $p > \|\ell\|_{L^\infty(\Omega)}$, $\Psi^{2p}$ is a random interaction exposing $\psi$ that satisfies all the properties of Theorem \ref{Thm:Parent_Hamiltonian}. 
Moreover, for all $j\geq 0$, we have that 
\begin{equation*}
\tau_{pj}\big(
    \Psi^{2p}([x, x + 2p])
    \big)
    =
    \mathfrak{K}^{j, j + 1}_{\mathscr{R}(x, p)}
\end{equation*}
where $\mathscr{R}(x, p)$ is defined by the deterministic sequence $p_j = jp$. 
Therefore, arguing as in the proof of Theorem \ref{Thm:Local_spectral_gap_prop}, we conclude the following. 
\begin{cor}
\label{Cor:Gaps_for_finite_inj_length}
    Assume $\ell\in L^\infty(\Omega)$ and fix deterministic $p > \|\ell\|_{L^\infty(\Omega)}$. 
    Then for all $m>0$, the lower bound
    \begin{equation*}
        \operatorname{spec-gap}\left(
            H^{\Psi^{2p}, [x, mp]}
        \right)
        \geq 
        \frac{1}{2}
        \seq{
        1 
        -
        \max_{j\in\set{0, \dots, m-2}}\beta_{x, j}
        }
    \end{equation*}
    holds almost surely,
    where
    \begin{equation*}
        \beta_{x, j}
            :=
        2\alpha^{[x + (j+1)p, x + (j + 2)p]}
        \seq{
        \cfrac{
            \big\|
                \rho_{x + (j+3)p}^{-1/2}
                \rho_{x + (j+2)p}
                \rho_{x + (j+3)p}^{-1/2}
            \big\|^{1/2}
            +
            \alpha^{[x + (j+1)p, x + (j + 2)p]}
        }
        {
        1 - \alpha^{[x + (j+1)p, x + (j + 2)p]}
        }
        }.
    \end{equation*}
\end{cor}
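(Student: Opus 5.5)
The plan is to run the argument of Theorem~\ref{Thm:Local_spectral_gap_prop} with the deterministic regrouping $x_j = x + jp$, where every step becomes simpler. With $\eta = 2p$ the interaction $\Psi^{2p}$ consists only of the terms $K^{[y, y+2p]}$. Each block $[x+jp, x+(j+2)p]$ has length $2p > p > \|\ell\|_{L^\infty(\Omega)}$, so we may use the identification $\tau_{pj}(\Psi^{2p}([x,x+2p])) = \mathfrak{K}^{j,j+1}_{\mathscr{R}(x,p)}$ recorded just before the statement. This makes the nearest-neighbour terms of the regrouped chain exactly the projections $\one_{[x_j,x_{j+2}]} - G^{[x_j,x_{j+2}]}$ themselves.

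First I compare $H^{\Psi^{2p},[x,x+mp]}$ with $\mathfrak{K}^{[m]}_{\mathscr{R}(x,p)}$. Every regrouped term $\mathfrak{K}^{j,j+1}$ is one of the summands $K^{[y,y+2p]}$ with $y = x+jp$. Consecutive regrouped blocks overlap in a block of length $p$, so each summand appears at most once, and the sum $\sum_j \mathfrak{K}^{j,j+1}$ is dominated by the full Hamiltonian. Both operators have the common kernel $\mathcal{G}^{[x,x+mp]}$. For the Hamiltonian this follows from frustration-freeness together with Proposition~\ref{Prop:Intersection_property}, applied with $\mathcal{L}_x = 2p$ (and also $p$), which is an injectivity sequence since $p > \|\ell\|_\infty$. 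For the regrouped sum it follows from the same intersection argument, applied to overlapping blocks of length $2p$ that overlap on injective length-$p$ pieces. Because both operators share this kernel, the operator inequality gives $\operatorname{spec-gap}(H) \ge \operatorname{spec-gap}(\mathfrak{K}^{[m]})$, up to the factor $\tfrac12$ used in \eqref{Eqn:Lower_bound_first_one}. Here $\tilde\gamma_{x,j}=1$, since each regrouped term is already a projection, and this accounts for the constant $\tfrac12$ in the statement.

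Next I prove $(\mathfrak{K}^{[m]})^2 \ge (1-\max_j\beta_{x,j})\mathfrak{K}^{[m]}$. Expanding the square, projections satisfy $(\mathfrak{K}^{j,j+1})^2 = \mathfrak{K}^{j,j+1}$. Non-adjacent terms commute and have positive anticommutator. The adjacent anticommutators I bound by Lemma~\ref{Lem:FNW_proj_lemma}(b) together with part (a), which moves from the $K$'s to the complementary $G$'s.

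The norm $\|(G^{[x_j,x_{j+2}]}\otimes\one)(\one\otimes G^{[x_{j+1},x_{j+3}]}) - G^{[x_j,x_{j+3}]}\|$ is then controlled by Corollary~\ref{Cor:good_approximations_part_2} with $(x_0,x_1,x_2,x_3) = (x_j,x_{j+1},x_{j+2},x_{j+3})$. Its hypotheses hold because $\Gamma^{[x_{j+1},x_{j+2}]}$ is injective, as its length exceeds $\ell_{x_{j+1}}$, and because $x_{j+3} - x_{j+2} - 1 = p-1 \ge \ell_{x_{j+2}+1}$. The $\kappa$ in the denominator is replaced by $1-\alpha$ using Lemma~\ref{Lem:Kappa_lemma}(a). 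Summing, each $\mathfrak{K}^{j,j+1}$ appears in at most two anticommutator bounds, and this is where the factor $2$ inside $\beta_{x,j}$ comes from.

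The main obstacle I expect is the index bookkeeping at the ends of the chain and in the Koma–Nachtergaele-style comparison. I need to check that the last blocks still satisfy the condition $x_3 > x_2 + 1 + \ell_{x_2+1}$, and that $1-\alpha > 0$ so that the $\kappa$ bound is usable. If $1-\alpha\le 0$, then $\beta \ge 1$ and the claimed bound is vacuous, so that case is trivial.
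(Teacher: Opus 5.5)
You are following the paper's own route; its proof is just ``argue as in Theorem~\ref{Thm:Local_spectral_gap_prop}'' with $x_j=x+jp$. Several pieces are right:
\begin{itemize}
\item the domination $H^{\Psi^{2p},[x,x+mp]}\ge\mathfrak{K}^{[m]}_{\mathscr{R}(x,p)}$ with common kernel $\mathcal{G}^{[x,x+mp]}$;
\item the treatment of adjacent pairs via Lemma~\ref{Lem:FNW_proj_lemma}(a),(b), Corollary~\ref{Cor:good_approximations_part_2} and $\kappa\ge 1-\alpha$;
\item the origin of the factor $2$ in $\beta_{x,j}$.
\end{itemize}

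The step that fails is ``non-adjacent terms commute and have positive anticommutator.'' Your blocks are the closed intervals $[x_j,x_{j+2}]$. They must be, to coincide with the terms $\Psi^{2p}([y,y+2p])$ and to produce the overlap $[x_{j+1},x_{j+2}]$ that appears in $\beta_{x,j}$. So blocks $j$ and $j+2$ share the site $x_{j+2}$.
\begin{itemize}
\item Such projections need not commute; the two-site AKLT projections overlapping on one site already do not.
\item The anticommutator of two projections is positive semidefinite only if they commute, and its negative part is not controlled by any $\alpha$.
\end{itemize}
So the distance-two cross terms in $(\mathfrak{K}^{[m]}_{\mathscr{R}(x,p)})^2$ are unaccounted for, and $(\mathfrak{K}^{[m]})^2\ge(1-\max_j\beta_{x,j})\mathfrak{K}^{[m]}$ does not follow. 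The paper's proof of Theorem~\ref{Thm:Local_spectral_gap_prop} makes the same assertion, so you inherited this gap, but it still has to be repaired.

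One fix is to keep $B_0=[x_0,x_2]$ and use $B_j=[x_j+1,x_{j+2}]$ for $j\ge 1$.
\begin{itemize}
\item Blocks two apart are then disjoint, so they commute.
\item $K^{[x_j,x_{j+2}]}\ge K^{B_j}\otimes\one$, by the easy inclusion in Proposition~\ref{Prop:Intersection_property}, so domination by $H$ is preserved.
\item Consecutive blocks overlap in the $p$ sites $[x_{j+1}+1,x_{j+2}]$, on which $\Gamma$ is injective since $p-1\ge\|\ell\|_{L^\infty(\Omega)}$. Hence the kernel is still $\mathcal{G}^{[x,x+mp]}$, and Corollary~\ref{Cor:good_approximations_part_2} applies to each consecutive pair.
\end{itemize}
The price is that $\alpha^{[x_{j+1},x_{j+2}]}$ in $\beta_{x,j}$ becomes $\alpha^{[x_{j+1}+1,x_{j+2}]}$. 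This is never smaller, because $E^{[l,r]}_{\localOne}-\Delta_r=E^{[l]}_{\localOne}\circ(E^{[l+1,r]}_{\localOne}-\Delta_r)$ and $E^{[l]}_{\localOne}$ is contractive. So the displayed formula is obtained only up to this one-site shift.

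There are also two smaller corrections.
\begin{itemize}
\item If $\alpha^{[x_{j+1},x_{j+2}]}>1$, the displayed $\beta_{x,j}$ is negative, not $\ge 1$, so that case is not ``vacuous'' as you claim. The statement must be read with $\beta_{x,j}:=+\infty$ whenever $\alpha\ge 1$, where Lemma~\ref{Lem:Kappa_lemma}(a) gives nothing.
\item Your inequality $p-1\ge\ell_{x_{j+2}+1}$ yields only $x_{j+3}\ge x_{j+2}+1+\ell_{x_{j+2}+1}$, not the strict inequality assumed in Lemma~\ref{Lem:Good_approximations_part_1}. This is harmless: the proofs of that lemma and of Corollary~\ref{Cor:good_approximations_part_2} use only injectivity of $\Gamma^{[x_1,x_2]}$, and hence, by Lemma~\ref{Lem:Kappa_lemma}, of $\Gamma^{[x_0,x_2]}$ and $\Gamma^{[x_1,x_3]}$. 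You should state this rather than assert that the strict hypothesis holds.
\end{itemize}
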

In particular, for example, if $\rho$ is deterministic and if $\alpha^{[l, r]}\to 0$ uniformly in the disorder as $|l - r|\to \infty$, then there is some $\eta>0$ such that $\Psi^{\eta}$ has a spectral gap above $\psi$. 
To make this result concrete---and to showcase that it can indeed be used to prove the existence of a gap---we revisit the disordered AKLT model introduced in Example \ref{Example:DAKLT_introduction}. 
\begin{definition}
    We say the disordered AKLT model of Example \ref{Example:DAKLT_introduction} is \textit{nondegenerate} if it is uniformly injective.
\end{definition}
So, the disordered AKLT model is nondegenerate if the cosines and sines defining it do not vanish too often. 
\begin{prop}\label{Prop:DAKLT}
    With the notation of Example \ref{Example:DAKLT_introduction}, for a nondegenerate disordered AKLT model, there is $\eta\in\N$ such that $\Psi^\eta$ is gapped if and only if 
    \begin{equation}\label{Eqn:DAKLT_quantity}
        \lim_{N\to\infty}\sup_{\substack{[l, r]\subset\Z\\
        |r - l| = N}}
        \max\left( 
            \prod_{x=l}^r
            |\cos 2\theta_x|,
            \prod_{x=l}^r \sin^2\theta_x
        \right)
        =
        0
    \end{equation}
    holds for $\Pr$-almost every $\overline{\theta}\in[0, 2\pi]^\Z$.
\end{prop}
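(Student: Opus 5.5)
We prove the two implications separately, using Corollary~\ref{Cor:Gaps_for_finite_inj_length} for sufficiency and an exponential-clustering argument for necessity.

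\emph{Setup.} For the disordered AKLT model we have $\rho_x = \frac12\one_2$ deterministically. Hence $\|\rho_{x_{j+3}}^{-1/2}\rho_{x_{j+2}}\rho_{x_{j+3}}^{-1/2}\|=1$ and $\trD{\rho^{-1}}=4$. The diagonalization \eqref{Eqn:Diagonalization_of_AKLT} in the orthonormal basis $\set{\one_2,\sigma^z,\sigma^+,\sigma^-}$ gives an explicit formula for $E^{[l,r]}_{\one_3}-\Delta_r$. It is diagonal with entries $0$, $\pm\prod_{x=l}^r\cos 2\theta_x$, $\pm\prod_{x=l}^r\sin^2\theta_x$, $\pm\prod_{x=l}^r\sin^2\theta_x$. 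Therefore
\[
\big\|E^{[l,r]}_{\one_3}-\Delta_r\big\| \asymp q^{[l,r]} := \max\Big(\prod_{x=l}^r|\cos2\theta_x|,\ \prod_{x=l}^r\sin^2\theta_x\Big),
\]
up to a universal constant coming from comparing norms on $\mathbb{M}_2$. Consequently $\alpha^{[l,r]}\le C\,q^{[l,r]}$ with $C$ deterministic.

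\emph{Sufficiency.} Suppose \eqref{Eqn:DAKLT_quantity} holds. Nondegeneracy gives $\ell\in L^\infty$, so Corollary~\ref{Cor:Gaps_for_finite_inj_length} applies for every $p>\|\ell\|_\infty$. With $\rho$ constant,
\[
\beta_{x,j} = 2\alpha^{[\cdot]}\,\frac{1+\alpha^{[\cdot]}}{1-\alpha^{[\cdot]}},
\qquad \alpha^{[\cdot]}\le C\sup_{|r-l|=p}q^{[l,r]}.
\]
The hypothesis makes this supremum tend to $0$ as $p\to\infty$, almost surely. So we can choose a deterministic $p$ with $\sup_{x,j}\beta_{x,j}\le\frac12$ almost surely. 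One point needs care: the supremum in \eqref{Eqn:DAKLT_quantity} is random, but it is a shift-invariant function of $\overline\theta$, so ergodicity makes it almost surely constant in $N$. Corollary~\ref{Cor:Gaps_for_finite_inj_length} then gives local gaps at least $\frac14$ on the intervals $[x,x+mp]$, uniformly in $x,m$. Taking $x\to-\infty$ and applying Proposition~\ref{Prop:Koma_Nachtergaele}, the GNS Hamiltonian of $\Psi^{2p}$ is gapped. We set $\eta=2p$.

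\emph{Necessity.} Suppose \eqref{Eqn:DAKLT_quantity} fails. By the same shift-invariance, there is a deterministic $c>0$ such that for every $N$, almost surely, some interval of length $N$ has $q^{[l,r]}\ge c$. Assume for contradiction that $\Psi^\eta$ is gapped for some $\eta$. Because $\ell$ is bounded, $\Psi^\eta$ is finite range with norm-one terms. Lieb--Robinson bounds then hold, and the exponential clustering theorem of \cite{Nachtergaele2006Lieb-RobinsonTheorem} gives deterministic constants $K,\xi$ with
\[
|\psi(AB)-\psi(A)\psi(B)|\le K\|A\|\|B\|e^{-d/\xi}
\]
for observables separated by distance $d$. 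Now take an interval $[l,r]$ of length $N$ with $q^{[l,r]}\ge c$ and use \eqref{Eqn:Correlations_DAKLT}. This yields the lower bound
\[
|\psi(S^\#\otimes\one_{[l,r]}\otimes S^\#)|\ge c\cdot(\text{endpoint factors}).
\]
Here $\#=z$ if the cosine product realizes $q^{[l,r]}$ and $\#=\pm$ if the sine product does; in both cases $\psi(S^\#)=0$, so this is a genuine connected correlation. For large $N$ this contradicts the clustering bound, provided the endpoint factors $|\sin2\theta_{l-1}|$ or $|1+\cos2\theta_{l-1}|/2$ stay bounded below.

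\emph{Main obstacle.} The hard step is controlling those endpoint factors. My first attempt is to shrink the interval by a bounded amount, or shift it, so that its neighbours are non-degenerate sites. This should be possible because nondegeneracy (uniform injectivity) forbids long runs of degenerate angles, and dropping a few factors from the product changes $q^{[l,r]}$ by at most a bounded multiplicative factor. If that fails, I would instead replace $S^\#$ by block observables supported on $[l-k,l-1]$, chosen so that their transfer-operator images have a component of size bounded below along $\sigma^z$ or $\sigma^\pm$. This is possible because uniform injectivity makes $\Gamma^{[l-k,l-1]}$ surjective, with quantitative control via $\kappa$.
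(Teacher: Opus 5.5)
Your strategy is the same as the paper's in both directions. Sufficiency comes from Corollary~\ref{Cor:Gaps_for_finite_inj_length} together with Proposition~\ref{Prop:Koma_Nachtergaele}, and necessity from exponential clustering together with \eqref{Eqn:Correlations_DAKLT}. Your sufficiency half is fine and more careful than the paper's. You keep the factor $\trD{\rho^{-1}}=4$ in $\alpha$, and the shift-invariance of $\sup_{|r-l|=N}q^{[l,r]}$ is exactly what justifies a deterministic choice of $p$.

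\textbf{The gap: endpoint factors.} Your necessity half is not finished. The paper reads \eqref{Eqn:DAKLT_quantity} directly off \eqref{Eqn:Correlations_DAKLT} and never addresses the endpoint factors, so the obstacle you flag is genuine. Your first fix, however, does not resolve it. Nondegeneracy ($\ell\in L^\infty(\Omega)$) only bounds runs of \emph{exactly} degenerate angles; it gives no lower bound on $|\sin2\theta_x|$ or $\cos^2\theta_x$. Consider a stretch with $\theta_x=\pi/2-\epsilon_x$ and small $\epsilon_x>0$:
\begin{itemize}
\item every pair of adjacent sites is already injective, so $\ell$ stays bounded;
\item $\prod\sin^2\theta_x=\prod\cos^2\epsilon_x$ can be close to $1$;
\item yet at every site both couplings, $\cos^2\theta_x=\sin^2\epsilon_x$ and $|\sin2\theta_x|=\sin2\epsilon_x$, are small.
\end{itemize}
Against a clustering bound $Ke^{-N/\xi}$ you need endpoint factors that are not exponentially small in $N$. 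Moving the endpoints a bounded distance, using only qualitative nondegeneracy, cannot guarantee this.

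\textbf{Repairing the second fix.} The block-observable idea does work, but the quantitative input must be mixing of long blocks, not uniform injectivity. For blocks of fixed length, $\kappa^{[l-k,l-1]}$ has no uniform lower bound, because of the same near-degenerate windows. Instead:
\begin{itemize}
\item Given $\omega$ and $D=[l,r]$, take $L=[l-k,l-1]$ and $R=[r+1,r+k]$, with $k=k(\omega,D)$ large enough that $\tilde\alpha^{L}$ and $\tilde\alpha^{R}$ are small. Lemma~\ref{lem:alpha_estimates}(a) allows this.
\item For $A\in\A{L}$ and $B\in\A{R}$, the connected correlation equals $\hatrho_{l-k-1}\circ E^{L}_{A}\circ(E^{D}_{\one_3}-\Delta_r)\circ E^{R}_{B}(\one_2)$.
\item Take $A=\ketbra{\Gamma^{L}(c_1)}{\Gamma^{L}(c_2)}$ and $B=\ketbra{\Gamma^{R}(d_1)}{\Gamma^{R}(d_2)}$. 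By Lemma~\ref{lem:alpha_estimates}(b), $\sum_\sigma\overline{\trD{cX^\sigma}}\,X^\sigma$ equals $\tfrac12c^*$ up to an error of order $\tilde\alpha\|c\|$ on each block.
\item Hence $E^{R}_{B}(\one_2)\approx\tfrac14 d_1d_2^*$ and $\hatrho_{l-k-1}\circ E^{L}_{A}\approx\tfrac18\operatorname{Tr}(c_2^*c_1\,\cdot\,)$, with $\|A\|$ and $\|B\|$ bounded.
\item Choosing the $c_i,d_i$ to be matrix units, the normalized connected correlations across $D$ have supremum $\gtrsim\|E^{D}_{\one_3}-\Delta_r\|\asymp q^{D}$.
\item In one dimension with finite range, the constant in \cite{Nachtergaele2006Lieb-RobinsonTheorem} depends on the supports only through their boundaries, so it is uniform in $k$.
\end{itemize}
This gives $q^{D}\lesssim Ke^{-|D|/\xi}$ for every interval $D$, which is \eqref{Eqn:DAKLT_quantity}.

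\textbf{Finite range.} The step ``$\ell$ is bounded, hence $\Psi^\eta$ is finite range'' holds only for $\eta\ge\|\ell\|_{L^\infty(\Omega)}$. For smaller admissible $\eta$, $\mathcal{L}^\eta_x=k^\eta_x+\eta$ contains the entry time into $\{\ell\le\eta\}$, which is typically unbounded. For such $\Psi^\eta$, Lieb--Robinson bounds and clustering are not available off the shelf. You should either restrict to large $\eta$ or argue this separately; the paper's proof is silent on it as well.
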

\begin{proof}
    Assume first that the limit converges to zero almost surely. 
    The assumption of nondegeneracy ensures we may apply Corollary \ref{Cor:Gaps_for_finite_inj_length}. 
    Notice that 
    \begin{equation*}
        \big\|
            E^{[l, r]}_{\one_3; \overline{\theta}}
            -
            \frac{1}{2}
            \operatorname{Tr}_D(\cdot)\one_2
        \big\|
        =
         \max\left( 
            \prod_{x=l}^r
            |\cos 2\theta_x|,
            \prod_{x=l}^r \sin^2\theta_x
        \right)
    \end{equation*}   
    by the diagonal representation described in Example \ref{Example:DAKLT_introduction}.
    In particular, the quantity $\beta_{x, j}$ in the statement of Corollary \ref{Cor:Gaps_for_finite_inj_length} is 
    \begin{equation*}
        \beta_{x, j}(\overline{\theta})
        =
        2\max\left( 
            \prod_{y=x + (j+1)p}^{x + (j+2)p}
            |\cos 2\theta_y|,
            \prod_{x=l}^r \sin^2\theta_y
        \right)
        \seq{
        \cfrac{
            1
            +
            \displaystyle\max\left( 
            \prod_{y=x + (j+1)p}^{x + (j+2)p}
            |\cos 2\theta_y|,
            \prod_{x=l}^r \sin^2\theta_y
        \right)
        }
        {
        1 -  \displaystyle\max\left( 
            \prod_{y=x + (j+1)p}^{x + (j+2)p}
            |\cos 2\theta_y|,
            \prod_{x=l}^r \sin^2\theta_y
        \right)
        }
        }.
    \end{equation*}
    Because (\ref{Eqn:DAKLT_quantity}) holds, for large enough $p$, we have by Corollary \ref{Cor:Gaps_for_finite_inj_length} that there is deterministic $\epsilon>0$ such that
    \begin{equation*}
        \operatorname{spec-gap}\seq{H_{\overline{\theta}}^{\Psi^{2p}, [x, x + mp]}}
            > \epsilon
    \end{equation*}
    holds almost surely uniformly in $x$ and $m$, which by Koma-Nachtergaele \cite{KomaNachtergaele} ensures a bulk gap for $\Psi^{2p}$. 
    Conversely, if there were $\Psi^\eta$ gapped, then the gap would be deterministic, so arguing as in \cite{RoonSchenker}, by the general exponential clustering phenomenon \cite{Nachtergaele2006Lieb-RobinsonTheorem}, we necessarily have that
    \begin{equation*}
        \operatorname{ess-sup}
        \big|
            \psi_{\operatorname{DAKLT}}(S^\#\otimes\one_{[l, r]}\otimes S^\#)
        \big|
        \to 0
    \end{equation*}
    uniformly $|r - l|$, where $\#\in\set{+, -, z}$. 
    Thus, by (\ref{Eqn:Correlations_DAKLT}), we have that (\ref{Eqn:DAKLT_quantity}) holds, concluding the proof. 
\end{proof}
So, for example, if we take $\seq{\theta_x}_{x\in\Z}$ i.i.d. with $\theta_x\in (s, t)$ almost surely for some $0 < s < t < \pi/4$, then the corresponding disordered AKLT model is gapped.

\appendix
\section{Finite injectivity length is equivalent to eventual strict positivity}
\label{App:Injectivity}
In this appendix, we discuss the notion of injectivity length for general ergodic quantum processes. 
As in the main body, let $\seq{\Omega, \mathcal{F}, \mu}$ be a probability space and let $T:\Omega\to\Omega$ be an ergodic and invertible measure-preserving transformation. 
To make this appendix self-contained and since we work in a somewhat more general setting than in the main body, we introduce and reintroduce some notation and terminology specific to this appendix.
Let $\gmatrices$ denote the set of $g$-tuples of matrices, equipped with the natural Borel $\sigma$-algebra structure.
Given $\boldsymbol{V} = \seq{V_i}_{i=1}^g\in\gmatrices$, define $\boldsymbol{V}^* := \seq{V_i^*}_{i=1}^g\in\gmatrices$, and also define $\psi_{\boldsymbol{V}}:\matrices\to\matrices$ to be the completely positive map $\psi_{\boldsymbol{V}}(a) := \sum_{i=1}^g V_i a V_i^*$. 
In this appendix, we study the properties of $\gmatrices$-valued random variables $\boldsymbol{S}:\Omega\to\gmatrices$, where we write $\boldsymbol{S}_\omega = \seq{A_{i; \omega}}_{i=1}^g$ for random matrices $A_i:\Omega\to\matrices$. 
Given such $\bs{S}$ and $k\in\Z$, write $\boldsymbol{S}_k$ to denote the $\gmatrices$-valued random variable $\boldsymbol{S}_{k; \omega} := \boldsymbol{S}_{T^k(\omega)}$. 
For $n\in\N$ and $\sigma = \seq{i_1, \dots, i_n}\in[g]^n$, let $\Pi^{(\boldsymbol{S}, n)}_{\sigma}:\Omega\to\matrices$ be the random matrix defined by
\begin{equation}\label{Eqn:Product_of_matrices}
    \Pi^{(\boldsymbol{S}, n)}_{\sigma; \omega} 
    =
    A_{i_n; T^{n-1}(\omega)}\cdots A_{i_1; \omega}.
\end{equation}
Because $\boldsymbol{S}$ will usually be fixed, we simply write $\Pi^{(n)}_\sigma$ to denote $\Pi^{(\boldsymbol{S}, n)}_{\sigma}$. 
We then define the function $\prelength{\boldsymbol{S}}:\Omega \to\N\cup\{\infty\}$ by 
\begin{equation}
    \prelength{\boldsymbol{S}_\omega}
        :=
    \inf
    \Bigg\{n\in\N\,\,:\,\,
        \matrices 
        =
        \operatorname{span}
        \Big\{
            \Pi^{(\boldsymbol{S}, n)}_{\sigma; \omega}\,\,:\,\, \sigma\in[g]^n
        \Big\}
    \Bigg\},
\end{equation}
where we define $\prelength{\boldsymbol{S}_\omega} = \infty$ if the above set over which the infimum is taken is the empty set.
Similarly, define $\wielength{\boldsymbol{S}}:\Omega\to\N\cup\{\infty\}$ by
\begin{equation}
    \wielength{\boldsymbol{S}_\omega}
        :=
    \inf
    \Bigg\{n\in\N
        \,\,:\,\,
    \matrices 
        =
        \operatorname{span}
        \Big\{
            \Pi^{(\boldsymbol{S}, m)}_{\sigma; \omega}\,\,:\,\, \sigma\in[g]^m
        \Big\}
        \text{ for all $m\geq n$}
    \Bigg\},
\end{equation}
where again we take $\wielength{\boldsymbol{S}_\omega} = \infty$ when necessary.
The goal of this appendix is to relate these quantities---which pertain to $\boldsymbol{S}$---to properties of the random completely positive map $\phi_{\boldsymbol{S}}:\matrices\to\matrices$ defined by
\begin{equation}
    \begin{split}
        \phi_{\boldsymbol{S}; \omega}(a)
            :=
        \sum_{i=1}^g
        A_{i; \omega}aA_{i; \omega}^*
        \quad\text{for $a\in\matrices$.}
    \end{split}
\end{equation}
For any $n\in\Z$, let $\phi_{\boldsymbol{S}, n}$ be the random completely positive superoperator
$\phi_{\boldsymbol{S}, n; \omega} = \phi_{\boldsymbol{S}; T^{n}(\omega)}$. 
We write $\Phi^{(n)}_{\boldsymbol{S}}$ to denote $\phi_{\boldsymbol{S}, n-1}\circ\cdots\circ\phi_{\boldsymbol{S}, 0}$ for $n\in\N$ (so $\Phi^{(1)}_{\boldsymbol{S}} = \phi_{\boldsymbol{S}}$).
Given $m, n\in\Z$ with $m < n$, we define $\Phi^{(m, n)}_{\boldsymbol{S}}$ by 
\begin{equation}
    \Phi^{(m, n)}_{\boldsymbol{S}}
    =
    \phi_{\boldsymbol{S}, n}\circ\cdots\circ\phi_{\boldsymbol{S}, m}.
\end{equation}
Notice that 
\begin{equation}
    \Phi^{(n)}_{\boldsymbol{S}; \omega}\seq{a}
    =
    \sum_{\sigma\in[g]^n}
     \Pi^{(n)}_{\sigma; \omega}a \Pi^{(n)*}_{\sigma; \omega}\
\qquad\text{and}\qquad
    \Phi^{(m, n)}_{\boldsymbol{S}; \omega}\seq{a}
    =
    \sum_{\sigma\in[g]^{n-m +1}}
    \Pi^{(n-m + 1)}_{\sigma; T^{m}(\omega)}a \Pi^{(n-m + 1)*}_{\sigma; T^{m}(\omega)}
\end{equation}
for all $a\in\matrices$.
We call a linear map $\psi:\matrices\to\matrices$ strictly positive if $\psi(\rho)$ is full rank for any density matrix $\rho\in\matrices$, and we call a map $\psi:\matrices\to\matrices$ is called faithful if the only positive semidefinite matrix $p\in\matrices$ satisfying $\psi(p) = 0$ is $p = 0$. 
We call $\boldsymbol{S}$ faithful if $\psi_{\boldsymbol{S}}$ is faithful almost surely. 
For a linear map $\psi:\matrices\to\matrices$, write $\psi^*$ to denote the Hilbert space adjoint of $\psi$ with respect to the Hilbert-Schmidt inner product on $\matrices$. 
We now define quantities $q_0(\boldsymbol{S}), q(\boldsymbol{S}):\Omega\to\mathbb{N}\cup\{\infty\}$ by
\begin{equation*}
    q_0(\boldsymbol{S}_\omega)
        := 
    \inf
    \Bigg\{n\in\N
        \,\,:\,\,
    \Phi^{(n)}_{\boldsymbol{S}; \omega}\text{ is strictly positive}
    \Bigg\}
\end{equation*}
and
\begin{equation*}
    q(\boldsymbol{S}_\omega)
        := 
    \inf
    \Bigg\{n\in\N
        \,\,:\,\,
    \Phi^{(m)}_{\boldsymbol{S}; \omega}\text{ is strictly positive for all $m\geq n$}
    \Bigg\},
\end{equation*}
with these quantities defined to be infinite when the infima are undefined. 
We begin by making the following simple observation.
\begin{lem}\label{App:Lem:Wie_dont_care_about_adjoint}
    $\prelength{\boldsymbol{S}} = \prelength{\boldsymbol{S}^*}$ and $\wielength{\boldsymbol{S}} = \wielength{\boldsymbol{S}^*}$ almost surely. 
\end{lem}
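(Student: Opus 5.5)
Fix $\omega$ and $n\in\N$, and write $\boldsymbol{S}_\omega = \seq{A_i}_{i=1}^g$. The idea is to compare, at each length, the products built from $\boldsymbol{S}^*$ with those built from $\boldsymbol{S}$. The products for $\boldsymbol{S}^*$ at length $n$ are
\begin{equation*}
    \Pi^{(\boldsymbol{S}^*, n)}_{\sigma; \omega}
    =
    A^*_{i_n; T^{n-1}(\omega)}\cdots A^*_{i_1; \omega}
    =
    \seq{A_{i_1; \omega}\cdots A_{i_n; T^{n-1}(\omega)}}^*.
\end{equation*}
So, up to the adjoint, the $\boldsymbol{S}^*$-products are the products of the same matrices $A_{i_k; T^{k-1}(\omega)}$ taken in the reversed order.

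The key step is to pass from the span of the $\Pi^{(\boldsymbol{S}, n)}_{\sigma}$ to the span of the reversed products without assuming anything new. I would use the transpose $a\mapsto a^{T}$ in the fixed basis, which is an antimultiplicative linear bijection of $\matrices$. Composing it with $a\mapsto a^*$ gives $a\mapsto \overline{a}$ (entrywise conjugation), which is conjugate-linear, multiplicative and bijective. Applying conjugation to the reversed-adjoint expression above gives
\begin{equation*}
    \overline{\Pi^{(\boldsymbol{S}^*, n)}_{\sigma; \omega}}
    =
    \seq{A_{i_n; T^{n-1}(\omega)}\cdots A_{i_1; \omega}}^{T}
    =
    \seq{\Pi^{(\boldsymbol{S}, n)}_{\sigma; \omega}}^T.
\end{equation*}
A conjugate-linear bijection maps complex subspaces onto complex subspaces and preserves the property of spanning $\matrices$. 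The transpose is a linear bijection, so it preserves spanning as well.

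Putting these together, for every $n$, the set $\set{\Pi^{(\boldsymbol{S}, n)}_{\sigma;\omega}}_\sigma$ spans $\matrices$ if and only if $\set{\Pi^{(\boldsymbol{S}^*, n)}_{\sigma;\omega}}_\sigma$ spans $\matrices$. Since this equivalence holds for each fixed $n$, and both $\prelength{\cdot}$ and $\wielength{\cdot}$ are defined purely in terms of which $n$ (respectively, all $m\ge n$) give spanning, it yields $\prelength{\boldsymbol{S}_\omega}=\prelength{\boldsymbol{S}^*_\omega}$ and $\wielength{\boldsymbol{S}_\omega}=\wielength{\boldsymbol{S}^*_\omega}$ for every $\omega$, including the value $\infty$. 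In particular the equalities hold almost surely.

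The only point needing care is the adjoint-reversal identity displayed first, where one must check that the sites $T^{k-1}(\omega)$ remain attached to the correct index $i_k$. The equalities hold pointwise, so no measurability or ergodicity input is needed.
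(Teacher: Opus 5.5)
\textbf{There is a genuine gap, and the statement itself is false.} The step that fails is your second display. Entrywise conjugation $a\mapsto\overline{a}$ is multiplicative, so it preserves the order of factors. Combined with $\overline{A^*}=A^{\top}$ (here $\top$ denotes transpose), it gives
\[
\overline{\Pi^{(\boldsymbol{S}^*, n)}_{\sigma; \omega}}
=
A^{\top}_{i_n; T^{n-1}(\omega)}\cdots A^{\top}_{i_1;\omega}
=
\seq{A_{i_1;\omega}A_{i_2;T(\omega)}\cdots A_{i_n;T^{n-1}(\omega)}}^{\top}.
\]
This is the transpose of the \emph{reversed} product, not of $\Pi^{(\boldsymbol{S},n)}_{\sigma;\omega}=A_{i_n;T^{n-1}(\omega)}\cdots A_{i_1;\omega}$. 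You correctly kept each site attached to its index $i_k$, but then silently reversed the order of multiplication. Further global operations such as transpose or conjugation cannot undo this, because they act uniformly on all factors and either keep or reverse the order. Relabelling $\sigma\mapsto(i_n,\dots,i_1)$ turns reversed products into forward ones only if $A_{i;T^k(\omega)}$ does not depend on $k$, i.e.\ in the translation-invariant case. In general, already for $n=2$ you are comparing $\operatorname{span}(\mathcal{B}\mathcal{A})$ with $\operatorname{span}(\mathcal{A}\mathcal{B})$, where $\mathcal{A},\mathcal{B}$ are the spans of the tuples at $\omega$ and $T(\omega)$. These can differ.

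This cannot be repaired, because the statement is false for non-translation-invariant tuples. The paper's one-line proof has exactly the same defect: its identity $\Pi^{(\boldsymbol{S}^*,k)}_{\sigma;\omega}=\Pi^{(\boldsymbol{S},k)*}_{\sigma^*;\omega}$ reverses the site labels $T^{j}(\omega)$ along with the indices.

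For a counterexample, take $d=2$, $\Omega=\set{a,b}$ with uniform measure and $T$ the swap, and set $\boldsymbol{S}_a=\tfrac{1}{\sqrt2}\seq{E_{1,1}-E_{2,1},E_{1,2},E_{2,2}}$ and $\boldsymbol{S}_b=\seq{E_{1,1},E_{2,2},0}$. Both tuples satisfy $\sum_iA_i^*A_i=\one_2$, and $\boldsymbol{S}$ and $\boldsymbol{S}^*$ are faithful. Let $\mathcal{A}=\operatorname{span}\set{E_{1,1}-E_{2,1},E_{1,2},E_{2,2}}$ and let $\mathcal{B}$ be the diagonal matrices.
\begin{itemize}
\item The length-two $\boldsymbol{S}$-products based at $a$ span $\operatorname{span}(\mathcal{B}\mathcal{A})$. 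This contains $E_{1,1}(E_{1,1}-E_{2,1})=E_{1,1}$, $E_{2,2}(E_{1,1}-E_{2,1})=-E_{2,1}$, $E_{1,1}E_{1,2}=E_{1,2}$ and $E_{2,2}E_{2,2}=E_{2,2}$, so it is all of $\mathbb{M}_2$.
\item The length-two $\boldsymbol{S}^*$-products based at $a$ span $\operatorname{span}(\mathcal{A}\mathcal{B})^*=\mathcal{A}^*$. Indeed, right multiplication by $E_{1,1}$ or $E_{2,2}$ sends each listed element of $\mathcal{A}$ to itself or to $0$, so this span is only three-dimensional.
\item At length three the $\boldsymbol{S}^*$-products span $\operatorname{span}(\mathcal{A}\mathcal{B}\mathcal{A})^*=\mathbb{M}_2$, since $\mathcal{A}$ contains the invertible $E_{1,1}-E_{2,1}+E_{1,2}$.
\end{itemize}
Both $\mathcal{A}$ and $\mathcal{B}$ contain invertible elements, so once the products span $\mathbb{M}_2$ they continue to do so. Hence $\prelength{\boldsymbol{S}_a}=\wielength{\boldsymbol{S}_a}=2$, whereas $\prelength{\boldsymbol{S}^*_a}=\wielength{\boldsymbol{S}^*_a}=3$, on an event of probability $1/2$.

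What is true is weaker. The length-$n$ $\boldsymbol{S}^*$-products at $\omega$ are the adjoints of the length-$n$ products of the time-reversed process $(T^{-1},\boldsymbol{S})$ based at $T^{n-1}(\omega)$. Because that base point moves with $n$, no pointwise identity between $\prelength{\boldsymbol{S}}$ and $\prelength{\boldsymbol{S}^*}$ follows, and likewise for $\wielength{\cdot}$.
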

\begin{proof}
    Note that $\Pi^{(\boldsymbol{S}^*, k)}_{\sigma; \omega}
        =
        \Pi^{(\boldsymbol{S}, k)*}_{\sigma^*; \omega}$, where $\sigma^* = (i_k, \dots, i_1)$ whenever $\sigma = (i_1, \dots, i_k)$. 
    So, from the fact that $(\cdot)^*:[g]^k\to[g]^k$ is a bijection and $\matrices = \matrices^*$, the lemma is clear. 
\end{proof}
Now, for any completely positive map $\psi:\matrices\to\matrices$, let $J(\psi)$ denote the Choi matrix in $\mathbb{M}_{d^2}$, which we recall is defined as follows. 
Let $\set{\ket{i}}_{i=1}^d$ be the standard basis of $\C^d$, and define $E\in\mathbb{M}_{d^2}$ to be the matrix $E = \sum_{i, j=1}^d E_{i, j}\otimes E_{i, j}$ where $E_{i, j} := \ketbra{i}{j}\in\matrices$.  
Then the Choi matrix $J(\psi)$ of $\psi$ is
\begin{equation*}
    J(\psi)
    =
    (\psi\otimes \operatorname{Id}_{\matrices})(E).
\end{equation*}
If $\psi$ is of the form $\psi_{\boldsymbol{V}}$ for some $\boldsymbol{V} = \seq{V_i}_{i=1}^g \in\gmatrices$, then, under the standard identification of $\matrices$ with $\C^{d}\otimes\C^d$ via
\begin{equation}
\begin{split}
    \operatorname{Vec}:\matrices &\to\C^d\otimes\C^d\\
    \operatorname{Vec}\seq{\sum_{i, j=1}^da_{ij}E_{i, j}}
        &= 
    \sum_{i, j=1}^da_{ij}\ket{i}\otimes\ket{j},
\end{split}
\end{equation}
it is a standard fact that $J(\psi)$ may be written $J(\psi) = \sum_{k=1}^K\operatorname{Vec}(V_k)\operatorname{Vec}(V_k)^*$ \cite{Watrous2018TheInformation}. 
Under this identification, it is clear that $\operatorname{rank}(J(\psi)) = \dim\operatorname{span}\set{V_k}_{k=1}^K$. 
In particular, we have that 
\begin{equation}
    \prelength{\boldsymbol{S}_\omega}
        =
    \inf\set{
    n\in\N
        \,\,:\,\,
    \operatorname{rank}\seq{J\seq{\Phi_{\boldsymbol{S}; \omega}^{(n)}}}
        =
    d^2
    },
\end{equation}
and an analogous identity holds for $\wielength{\boldsymbol{S}}$.
So, since the Choi matrix of a completely positive map is positive semidefinite, we see that 
\begin{equation}
    \prelength{\boldsymbol{S}_\omega}
        =
    \inf\set{
    n\in\N
        \,\,:\,\,
    J\seq{\Phi_{\boldsymbol{S}; \omega}^{(n)}}>0
    },
\end{equation}
and similarly for $\wielength{\boldsymbol{S}}$.
\begin{lem}\label{App:Lem:Faithful_adjoint_implies_PSD_to_PSD}
    Let $\psi:\matrices\to\matrices$ be a positive linear map. 
    If $\psi^*$ is faithful, then $\psi(p)>0$ for all $p > 0$. 
\end{lem}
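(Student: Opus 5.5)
We argue by duality with the Hilbert--Schmidt inner product $\langle x, y\rangle_{HS} = \operatorname{Tr}_d(x^*y)$ on $\matrices$. The plan is to show that $\psi(p)$ has trivial kernel by testing it against rank-one projections.

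Fix $p>0$. Since $\psi$ is positive, $\psi(p)\geq 0$, so it suffices to rule out a nonzero vector $v$ with $\psi(p)v=0$. Suppose such a $v$ exists and set $q:=\ketbra{v}{v}\geq 0$, $q\neq 0$. Then
\begin{equation*}
  0=\bra{v}\psi(p)\ket{v}=\operatorname{Tr}_d\big(q\,\psi(p)\big)=\operatorname{Tr}_d\big(\psi^*(q)\,p\big).
\end{equation*}
Here we need that $\psi^*$ is again a positive map. This holds because for $x\geq 0$ and every $y\geq 0$ we have $\operatorname{Tr}_d(\psi^*(x)y)=\operatorname{Tr}_d(x\psi(y))\geq 0$, and a matrix pairing nonnegatively with every positive semidefinite matrix is itself positive semidefinite. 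In particular $\psi^*(q)$ is self-adjoint, so no conjugates appear in the display above.

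Thus $\psi^*(q)\geq 0$ and $\operatorname{Tr}_d(\psi^*(q)p)=0$. Since $p\geq \lambda_{\min}(p)\one_d$ with $\lambda_{\min}(p)>0$, we get
\begin{equation*}
  0=\operatorname{Tr}_d\big(\psi^*(q)p\big)\geq \lambda_{\min}(p)\operatorname{Tr}_d\big(\psi^*(q)\big)\geq 0 .
\end{equation*}
Hence $\operatorname{Tr}_d(\psi^*(q))=0$, and because $\psi^*(q)\geq 0$ this forces $\psi^*(q)=0$. By faithfulness of $\psi^*$, $q=0$, contradicting $v\neq 0$. Therefore $\psi(p)$ is injective and positive semidefinite, so $\psi(p)>0$.

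The main point needing care is positivity of the adjoint $\psi^*$. This is the standard duality fact established above, and faithfulness can only be applied once $\psi^*(q)\geq 0$ is known. The remaining steps are immediate.
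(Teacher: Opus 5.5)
Your proof is correct and is essentially the paper's argument: both test $\psi(p)$ against a projection $q$, combine $p\geq\lambda_{\min}(p)\one_d$ with the duality $\operatorname{Tr}_d(q\,\psi(\cdot))=\operatorname{Tr}_d(\psi^*(q)\,\cdot)$, and use faithfulness of $\psi^*$ to make the pairing strictly positive. The only difference is where the bound is applied. The paper writes $\operatorname{Tr}_d(q\psi(p))\geq\varepsilon\operatorname{Tr}_d(q\psi(\one_d))=\varepsilon\operatorname{Tr}_d(\psi^*(q))$, while you apply it on the $\psi^*$ side. Your explicit check that $\psi^*$ is positive fills in a step the paper uses without comment.
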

\begin{proof}
    Let $p>0$ and let $q\in\matrices$ be a projection. 
    Then there is $\varepsilon>0$ such that $\tr{q\psi(p)}\geq \varepsilon\tr{q\psi(\I)}$. 
    So, since $\psi^*$ is faithful, it holds that $\tr{q\psi(p)}>0$ for all projections $q$, which implies $\psi(p)>0$. 
\end{proof}
\begin{lem}\label{App:Lem:Pre_is_wie}
    If $\boldsymbol{S}^*$ is faithful, $\prelength{\boldsymbol{S}} = \wielength{\boldsymbol{S}}$ almost surely. 
\end{lem}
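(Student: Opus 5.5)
The strategy is to show that spanning at length $n$ propagates to spanning at length $n+1$. Iterating this gives that once $\prelength{\boldsymbol{S}_\omega}=n<\infty$, every $m\geq n$ also spans, so $\wielength{\boldsymbol{S}_\omega}=n$. The inequality $\prelength{\boldsymbol{S}}\leq\wielength{\boldsymbol{S}}$ is trivial, so only the reverse is needed. If $\prelength{\boldsymbol{S}_\omega}=\infty$, both quantities are infinite and there is nothing to prove.

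I will phrase the argument through Choi matrices, using the identity just established: $\prelength{\boldsymbol{S}_\omega}$ is the least $n$ with $J(\Phi^{(n)}_{\boldsymbol{S};\omega})>0$. Writing $\Phi^{(n+1)}_{\boldsymbol{S}}=\phi_{\boldsymbol{S},n}\circ\Phi^{(n)}_{\boldsymbol{S}}$, we get
\begin{equation*}
J(\Phi^{(n+1)}_{\boldsymbol{S}})=(\phi_{\boldsymbol{S},n}\otimes\operatorname{Id})\big(J(\Phi^{(n)}_{\boldsymbol{S}})\big).
\end{equation*}
The map $\phi_{\boldsymbol{S},n}\otimes\operatorname{Id}_{\matrices}$ is positive on $\mathbb{M}_{d^2}$, since $\phi$ is completely positive. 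It is enough to check that its Hilbert--Schmidt adjoint $\phi^*_{\boldsymbol{S},n}\otimes\operatorname{Id}$ is faithful. Note that $\phi_{\boldsymbol{S}}^*(a)=\sum_i A_i^*aA_i=\psi_{\boldsymbol{S}^*}(a)$, which is faithful almost surely by hypothesis. By $T$-invariance of the measure, $\phi^*_{\boldsymbol{S},n}$ is also faithful almost surely for every $n$, and intersecting countably many full-measure events handles all $n$ at once.

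The one step needing care is upgrading faithfulness of $\phi^*$ to faithfulness of $\phi^*\otimes\operatorname{Id}$. I would argue directly. Let $P\geq0$ in $\mathbb{M}_{d^2}$ satisfy $(\phi^*\otimes\operatorname{Id})(P)=0$. Apply the partial trace over the second factor; it commutes with $\phi^*\otimes\operatorname{Id}$ and gives $\phi^*(\operatorname{Tr}_2P)=0$. Since $\operatorname{Tr}_2P\geq0$, faithfulness forces $\operatorname{Tr}_2P=0$, and hence $P=0$ because $P\geq0$. Lemma~\ref{App:Lem:Faithful_adjoint_implies_PSD_to_PSD}, applied to the positive map $\phi_{\boldsymbol{S},n}\otimes\operatorname{Id}$ on $\mathbb{M}_{d^2}$, then shows that $J(\Phi^{(n)})>0$ implies $J(\Phi^{(n+1)})>0$. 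That closes the induction.

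I expect no real obstacle. The only subtleties are the measure-theoretic bookkeeping (countably many null sets, one per $n$) and being sure Lemma~\ref{App:Lem:Faithful_adjoint_implies_PSD_to_PSD} applies in dimension $d^2$, since it was stated for $\matrices$ with general $d$. As a fallback that avoids Choi matrices entirely, one can argue on spans. If the products $\Pi^{(n)}_\sigma$ span $\matrices$, then the products of length $n+1$ span $\sum_i A_{i;T^n\omega}\matrices$. This equals $\matrices$ unless some nonzero vector $v$ satisfies $v^*A_i=0$ for all $i$, i.e.\ $\phi^*(\ketbra{v}{v})=0$, which contradicts faithfulness of $\psi_{\boldsymbol{S}^*}$.
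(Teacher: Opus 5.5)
Your proof is correct and is essentially the paper's argument. The paper writes $J(\Phi^{(N)}_{\boldsymbol{S};\omega}) = (\Phi^{(m)}_{\boldsymbol{S};T^{N-m}(\omega)}\otimes\operatorname{Id})\big(J(\Phi^{(n_0)}_{\boldsymbol{S};\omega})\big)$ with $n_0 = \prelength{\boldsymbol{S}_\omega}$ and $N = n_0+m$, applies Lemma~\ref{App:Lem:Faithful_adjoint_implies_PSD_to_PSD} in one jump of $m$ steps where you advance one step at a time, and leaves implicit the partial-trace check that faithfulness of $\phi^*$ passes to $\phi^*\otimes\operatorname{Id}$, which you supply (your one-step induction also avoids needing faithfulness of compositions, and your span-based fallback is valid too).
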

\begin{proof}
    It is clear that $\prelength{\boldsymbol{S}}\leq\wielength{\boldsymbol{S}}$ almost surely, so it suffices to show that on the event $\prelength{\boldsymbol{S}}<\infty$, it holds that $\prelength{\boldsymbol{S}} = \wielength{\boldsymbol{S}}$. 
    So, assume $\prelength{\boldsymbol{S}_\omega}<\infty$. 
    Then let $m\in\N$ and let $N = \prelength{\boldsymbol{S}_\omega} + m$. 
    Then 
    \begin{equation*}
        J\left(\Phi^{(N)}_{\boldsymbol{S}; \omega}\right)
        =
        \Big[
            \Phi^{(m)}_{\boldsymbol{S}; T^{N - m}(\omega)}
            \otimes 
            \operatorname{Id}_{\matrices}
        \Big]
        \left(
        J\left(
            \Phi^{(\prelength{\boldsymbol{S}_\omega})}_{\boldsymbol{S}; \omega}\right)
        \right).
    \end{equation*}
    Now, $ J\left(
            \Phi^{(\prelength{\boldsymbol{S}_\omega})}_{\boldsymbol{S}; \omega}\right)>0$. 
    So, since $\boldsymbol{S}^*$ is faithful, it follows from Lemma \ref{App:Lem:Faithful_adjoint_implies_PSD_to_PSD} that $J\left(\Phi^{(N)}_{\boldsymbol{S}; \omega}\right) > 0$. 
    Since $m$ was arbitrary, this shows that $\prelength{\boldsymbol{S}_\omega} = \wielength{\boldsymbol{S}_\omega}$, as desired. 
\end{proof}
\begin{lem}
    If $\boldsymbol{S}^*$ is faithful, $q_0(\boldsymbol{S}) = q(\boldsymbol{S})$ almost surely. 
\end{lem}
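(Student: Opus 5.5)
The plan is to mirror the proof of Lemma \ref{App:Lem:Pre_is_wie}, with strict positivity of $\Phi^{(n)}_{\boldsymbol{S}}$ playing the role that positive-definiteness of the Choi matrix played there.

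First, $q_0(\boldsymbol{S})\leq q(\boldsymbol{S})$ holds pointwise, directly from the definitions. So it suffices to work on the event $\{q_0(\boldsymbol{S})<\infty\}$ and show that for almost every such $\omega$, the map $\Phi^{(m)}_{\boldsymbol{S};\omega}$ is strictly positive for every $m\geq q_0(\boldsymbol{S}_\omega)$. This gives $q(\boldsymbol{S}_\omega)\leq q_0(\boldsymbol{S}_\omega)$.

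Fix $\omega$ in the full-measure set on which $\phi^*_{\boldsymbol{S}; T^k(\omega)}$ is faithful for every $k\geq 0$. This is a countable intersection of full-measure sets, using that $T$ is measure preserving. Set $n_0=q_0(\boldsymbol{S}_\omega)$ and let $m\geq 0$. Then
\begin{equation*}
\Phi^{(n_0+m)}_{\boldsymbol{S};\omega}=\phi_{\boldsymbol{S},n_0+m-1}\circ\cdots\circ\phi_{\boldsymbol{S},n_0}\circ\Phi^{(n_0)}_{\boldsymbol{S};\omega}.
\end{equation*}
Take any density matrix $\rho$. Then $\Phi^{(n_0)}_{\boldsymbol{S};\omega}(\rho)>0$ by the choice of $n_0$. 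Each outer factor $\phi_{\boldsymbol{S}; T^k(\omega)}$ is a positive map whose adjoint is faithful. Lemma \ref{App:Lem:Faithful_adjoint_implies_PSD_to_PSD} therefore says each factor sends positive definite matrices to positive definite matrices. Applying it $m$ times gives $\Phi^{(n_0+m)}_{\boldsymbol{S};\omega}(\rho)>0$, so $\Phi^{(n_0+m)}_{\boldsymbol{S};\omega}$ is strictly positive.

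Since $m$ was arbitrary, $q(\boldsymbol{S}_\omega)=n_0=q_0(\boldsymbol{S}_\omega)$ on this event. On the complementary event both quantities are infinite, so the equality holds almost surely.

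The only delicate point is the measure-theoretic bookkeeping. The hypothesis says $\phi_{\boldsymbol{S}}^*$ is faithful almost surely, but the argument needs faithfulness of $\phi_{\boldsymbol{S}}^*$ at every shifted point $T^k(\omega)$ simultaneously. This follows because a countable union of null sets $T^{-k}(N)$ is null. Note also that the adjoint in question is $\phi_{\boldsymbol{S}}^*(a)=\sum_i A_i^*aA_i$, which is exactly $\psi_{\boldsymbol{S}^*}$; this is the sense in which the hypothesis is stated as "$\boldsymbol{S}^*$ is faithful."
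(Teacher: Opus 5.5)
Your argument is correct. The paper gives no internal proof of this lemma; it only says it follows immediately from \cite[Corollary 3.2]{MovassaghSchenker}.

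You instead prove it inside the appendix. You push strict positivity of $\Phi^{(n_0)}_{\boldsymbol{S};\omega}$ through the later factors $\phi_{\boldsymbol{S};T^k(\omega)}$ using Lemma~\ref{App:Lem:Faithful_adjoint_implies_PSD_to_PSD}. This is exactly parallel to how Lemma~\ref{App:Lem:Pre_is_wie} pushes positive definiteness of the Choi matrix forward, with $\Phi^{(n_0)}_{\boldsymbol{S};\omega}(\rho)$ in place of the Choi matrix.

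Your route uses only tools already in the appendix, and it makes explicit two points the citation leaves implicit:
\begin{enumerate}
\item the hypothesis ``$\boldsymbol{S}^*$ faithful'' is the same as faithfulness of the Hilbert--Schmidt adjoint $\phi^*_{\boldsymbol{S}} = \psi_{\boldsymbol{S}^*}$, which is exactly what Lemma~\ref{App:Lem:Faithful_adjoint_implies_PSD_to_PSD} requires;
\item faithfulness is needed simultaneously along the whole forward orbit $T^k(\omega)$, which holds off a countable union of null sets because $T$ is measure preserving.
\end{enumerate}

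The citation is shorter and places the fact within Movassagh--Schenker's general framework of positive maps. The cost is that the reader must check their hypotheses against the present setting.
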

\begin{proof}
    This follows immediately from \cite[Corollary 3.2]{MovassaghSchenker}.
\end{proof}
At this juncture, we notice that the condition $q(\boldsymbol{S})<\infty$ almost surely is precisely what we called eventual strict positivity for $\phi_{\boldsymbol{S}}$ in the main body above. 
In \cite{MovassaghSchenker}, the following characterization of this was given. 
\begin{lem}[\texorpdfstring{\cite[Lemma 2.1]{MovassaghSchenker}}{l}]
    $q(\boldsymbol{S})<\infty$ almost surely if and only if the following two conditions hold. 
    \begin{enumerate}[label = (\alph*)]
        \item $\boldsymbol{S}$ and $\boldsymbol{S}^*$ are faithful. 

        \item There exists $N\in\N$ such that $q_0(\boldsymbol{S}) = N$ with positive probability. 
    \end{enumerate}
\end{lem}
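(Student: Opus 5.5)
We prove the two implications separately. Throughout we use ergodicity of $T$ in the following form: any event of positive probability is visited by $T^k(\omega)$ for infinitely many $k\in\N$, for almost every $\omega$.

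\emph{Forward direction.} Assume $q(\boldsymbol{S})<\infty$ almost surely.
\begin{enumerate}[label=(\roman*)]
\item Since $q_0(\boldsymbol{S})\le q(\boldsymbol{S})$, also $q_0(\boldsymbol{S})<\infty$ almost surely. Then $\Omega=\bigcup_{N\in\N}\set{q_0(\boldsymbol{S})=N}$ up to a null set, so by countable additivity some $N$ satisfies $\Pr[q_0(\boldsymbol{S})=N]>0$. This gives (b).
\item Faithfulness of $\boldsymbol{S}$. Suppose that on a set $B$ of positive probability $\phi_{\boldsymbol{S}}$ is not faithful. For $\omega\in B$ pick $p\ge 0$, $p\ne 0$, with $\phi_{\boldsymbol{S};\omega}(p)=0$. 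Then $\Phi^{(m)}_{\boldsymbol{S};\omega}(p)=0$ for every $m$, because the first map applied is $\phi_{\boldsymbol{S},0}$. So $\Phi^{(m)}_{\boldsymbol{S};\omega}$ is never strictly positive, and $q(\boldsymbol{S}_\omega)=\infty$ on $B$. This is a contradiction.
\item Faithfulness of $\boldsymbol{S}^*$. Suppose that on a set $B'$ of positive probability $\phi_{\boldsymbol{S}}^*$ is not faithful. For $\omega\in B'$ there is a nonzero projection $Q$ with $\phi^*_{\boldsymbol{S};\omega}(Q)=0$. Hence $\tr{Q\,\phi_{\boldsymbol{S};\omega}(\rho)}=0$ for every density matrix $\rho$, so the range of $\phi_{\boldsymbol{S};\omega}$ on positives avoids $\operatorname{ran}Q$.

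By ergodicity, almost surely $T^{m-1}(\omega)\in B'$ for infinitely many $m$. For each such $m$, the last factor of $\Phi^{(m)}_{\boldsymbol{S};\omega}=\phi_{\boldsymbol{S},m-1}\circ\cdots$ is non-full-rank on every density matrix. Thus $\Phi^{(m)}_{\boldsymbol{S};\omega}$ fails to be strictly positive for arbitrarily large $m$, i.e. $q=\infty$ almost surely. This is a contradiction, and (a) holds.
\end{enumerate}

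\emph{Reverse direction.} Assume (a) and (b), and let $A=\set{q_0(\boldsymbol{S})=N}$, which has positive probability.
\begin{enumerate}[label=(\roman*)]
\item By ergodicity, for almost every $\omega$ there is $k\ge 0$ with $T^k(\omega)\in A$. Hence $\Phi^{(k,k+N-1)}_{\boldsymbol{S};\omega}$ is strictly positive.
\item Take any density matrix $\rho$ and factor
\[
\Phi^{(k+N)}_{\boldsymbol{S};\omega}(\rho)=\Phi^{(k,k+N-1)}_{\boldsymbol{S};\omega}\big(\Phi^{(k)}_{\boldsymbol{S};\omega}(\rho)\big).
\]
Faithfulness of $\boldsymbol{S}$ (applied at each site, which is legitimate almost surely by countability of $\Z$) shows $\Phi^{(k)}_{\boldsymbol{S};\omega}(\rho)$ is a nonzero positive matrix.
\item After normalizing it to a density matrix, strict positivity from (i) gives that $\Phi^{(k+N)}_{\boldsymbol{S};\omega}(\rho)$ has full rank. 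Therefore $q_0(\boldsymbol{S})\le k+N<\infty$ almost surely.
\item Finally, since $\boldsymbol{S}^*$ is faithful, the preceding lemma gives $q(\boldsymbol{S})=q_0(\boldsymbol{S})<\infty$ almost surely.
\end{enumerate}

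The only genuinely delicate point is the passage from "strictly positive once" to "strictly positive for all later times." In the reverse direction it is delegated to the previous lemma, i.e. to \cite[Corollary 3.2]{MovassaghSchenker}. If one wanted it self-contained, Lemma \ref{App:Lem:Faithful_adjoint_implies_PSD_to_PSD} does the job: each further application of a map with faithful adjoint sends full-rank positives to full-rank positives. In the forward direction the key subtlety is instead using ergodicity for the adjoint case, since non-faithfulness of $\phi^*$ only obstructs strict positivity when it occurs at the \emph{last} factor.
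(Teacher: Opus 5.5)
Your proof is correct and complete. The paper gives no proof of this lemma; it imports it from \cite[Lemma 2.1]{MovassaghSchenker}. So your proposal is a self-contained proof rather than a reconstruction of an in-paper argument.

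Its structure matches what the paper does for the injectivity-length analogue (Lemmas \ref{App:Lem:Finite_wie_implies_faithfulness}, \ref{App:Lem:If_faithful_then_01_law}, \ref{App:Lem:Char_of_finite_inj}). In one direction, failure of faithfulness forces failure of eventual positivity. In the other, an ergodic hitting time for $\{q_0(\boldsymbol{S})=N\}$ is followed by propagation: faithful maps applied first keep the input nonzero, and maps with faithful adjoint applied afterwards preserve full rank.

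Two features of your write-up are worth keeping.

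\begin{itemize}
\item In the forward direction, the $\boldsymbol{S}^*$ case genuinely needs your recurrence argument. Non-faithfulness of $\phi^*_{\boldsymbol{S};\omega}$ only confines the range of the first factor $\phi_{\boldsymbol{S};\omega}$, and later factors can repair that. Unlike the $\boldsymbol{S}$ case, there is no pointwise conclusion $q(\boldsymbol{S}_\omega)=\infty$; the obstruction only bites when the bad event recurs at the last factor.
\item In the reverse direction, you can close the last step with Lemma \ref{App:Lem:Faithful_adjoint_implies_PSD_to_PSD} instead of \cite[Corollary 3.2]{MovassaghSchenker}. This removes all dependence on the cited work and shows directly that $\Phi^{(m)}_{\boldsymbol{S};\omega}$ is strictly positive for every $m\geq k+N$.
\end{itemize}

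A minor wording point: in the adjoint case, ``avoids $\operatorname{ran}Q$'' should read ``is orthogonal to $\operatorname{ran}Q$''. This follows from $\operatorname{Tr}(Q\,\phi_{\boldsymbol{S};\omega}(\rho))=0$ together with $\phi_{\boldsymbol{S};\omega}(\rho)\geq 0$, and it is exactly what you need.
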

We now prove an analog version of this lemma for the condition $\wielength{\boldsymbol{S}}<\infty$ almost surely. 
\begin{lem}\label{App:Lem:Char_of_finite_inj}
    $\wielength{\boldsymbol{S}}<\infty$ almost surely if and only if the following two conditions hold. 
    \begin{enumerate}[label = (\alph*)]
        \item $\boldsymbol{S}$ and $\boldsymbol{S}^*$ are faithful. 

        \item There exists $N\in\N$ such that $\prelength{\boldsymbol{S}} = N$ with positive probability. 
    \end{enumerate}
\end{lem}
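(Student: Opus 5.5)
Both implications can be handled directly, modeled on the Movassagh--Schenker characterization of $q(\boldsymbol{S})<\infty$.

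\emph{Forward direction.} Suppose $\wielength{\boldsymbol{S}}<\infty$ almost surely. Then $\prelength{\boldsymbol{S}}\le \wielength{\boldsymbol{S}}<\infty$ almost surely, so $\Pr[\prelength{\boldsymbol{S}}=N]>0$ for some $N$, which is (b).

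For (a), I would argue by contrapositive. Suppose $\phi_{\boldsymbol{S};\omega}$ fails to be faithful on a set of positive measure, i.e.\ there is $p\geq 0$, $p\neq 0$, with $\sum_i A_{i;\omega}pA_{i;\omega}^*=0$. Taking $p$ to be a rank-one projection $\ketbra{v}{v}$ in its range gives $A_{i;\omega}v=0$ for every $i$. Every product $\Pi^{(n)}_{\sigma;T^{-n+1}(\omega)}$ ends with a factor $A_{i_1;\omega}$ on the right, so it kills $v$. Hence these products cannot span $\matrices$ for any $n$, which means $\prelength{\boldsymbol{S}_{T^{-n+1}\omega}}>n$ for all $n$.

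By ergodicity the bad set $B$ has positive probability. For almost every $\omega'$ we want $\wielength{\boldsymbol{S}_{\omega'}}<\infty$, yet every $m\ge \wielength{\boldsymbol{S}_{\omega'}}$ with $T^{m-1}\omega'\in B$ would be a contradiction. Poincar\'e recurrence gives infinitely many such $m$, so faithfulness of $\boldsymbol{S}$ holds almost surely. For $\boldsymbol{S}^*$ the same argument runs with the first factor $A_{i_n;T^{n-1}\omega}$: a vector $w$ with $A_{i}^*w=0$ for all $i$ satisfies $w^*\Pi=0$. One can also simply invoke Lemma \ref{App:Lem:Wie_dont_care_about_adjoint}, since the statement is symmetric under $\boldsymbol{S}\mapsto\boldsymbol{S}^*$.

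\emph{Converse.} Assume (a) and (b), and let $A=\{\prelength{\boldsymbol{S}}= N\}$, which has positive measure. By Poincar\'e recurrence and ergodicity, for almost every $\omega$ there is $k\ge 0$ with $T^k\omega\in A$, so $J(\Phi^{(N)}_{\boldsymbol{S};T^k\omega})>0$. The aim is to extend this strict positivity of the Choi matrix both to the left and to the right.

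Extension on the right, i.e.\ composing after with later maps, is exactly the computation in Lemma \ref{App:Lem:Pre_is_wie}: $(\Phi^{(m)}\otimes\operatorname{Id})$ preserves $J>0$ by Lemma \ref{App:Lem:Faithful_adjoint_implies_PSD_to_PSD}, using faithfulness of $\boldsymbol{S}^*$. Extension on the left, composing before with the earlier $k$ maps $\phi_{\boldsymbol{S},0},\dots,\phi_{\boldsymbol{S},k-1}$, is the step I expect to be the main obstacle, since $J(\Phi\circ\Psi)$ is not simply $(\Phi\otimes\operatorname{Id})J(\Psi)$.

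My plan there is to use span language instead of Choi matrices. Write $\mathcal{P}=\operatorname{span}\{\Pi^{(N)}_{\sigma;T^k\omega}\}=\matrices$. Then products over $[0,k+N)$ have span equal to $\operatorname{span}\{Y\,Z\}$ with $Y\in\mathcal{P}$ and $Z$ ranging over products of the first $k$ factors. This equals $\matrices\cdot \operatorname{span}\{Z\}$, a left ideal, which is all of $\matrices$ iff the $Z$'s have no common kernel vector. That common-kernel condition is exactly faithfulness of $\Phi^{(k)}_{\boldsymbol{S}}$, i.e.\ $\sum_Z Z pZ^*\ne 0$ for every $p>0$. Faithfulness of a composition of faithful maps follows from (a): a composition of faithful positive maps is faithful.

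Symmetrically, the right extension can also be done in span language: $\operatorname{span}\{Z'Y\}$ is a right ideal $=\matrices$ iff the products $Z'$ share no common left kernel vector, which is faithfulness of the $\boldsymbol{S}^*$ compositions.

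Together these give $\prelength{\boldsymbol{S}_\omega}\le k+N$ and moreover that every length $m\ge k+N$ spans. Hence $\wielength{\boldsymbol{S}_\omega}\le k+N<\infty$ almost surely, completing the converse.
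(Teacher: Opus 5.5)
There is one step that fails as written: the forward-direction argument for faithfulness of $\boldsymbol{S}$.

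By \eqref{Eqn:Product_of_matrices}, $\Pi^{(n)}_{\sigma;T^{-n+1}\omega}=A_{i_n;\omega}A_{i_{n-1};T^{-1}\omega}\cdots A_{i_1;T^{-n+1}\omega}$. So the factor at $\omega$ is the \emph{leftmost} one, not the rightmost. Likewise, in your recurrence step the factor at $T^{m-1}\omega'$ is the leftmost factor of $\Pi^{(m)}_{\sigma;\omega'}$.

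A common kernel vector of the leftmost tuple does not obstruct spanning. Suppose the products of the first $m-1$ factors already span $\mathbb{M}_2$ and $\boldsymbol{S}_{T^{m-1}\omega'}=(E_{1,1},E_{2,1})$, whose common kernel is $\mathbb{C}\ket{2}$. Then the length-$m$ products span the right ideal generated by $E_{1,1},E_{2,1}$, which is all of $\mathbb{M}_2$. Your own converse says the same thing: a common kernel only matters in the rightmost, i.e.\ earliest, factors.

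The repair is simpler than your argument and is exactly the paper's Lemma \ref{App:Lem:Finite_wie_implies_faithfulness}. For $\omega$ in the bad set $B$, the rightmost factor of every $\Pi^{(n)}_{\sigma;\omega}$ is $A_{i_1;\omega}$, so $\Pi^{(n)}_{\sigma;\omega}v=0$ for all $n$ and $\sigma$. Hence $\prelength{\boldsymbol{S}_\omega}=\wielength{\boldsymbol{S}_\omega}=\infty$ on $B$, which has positive probability, and no recurrence is needed.

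Your recurrence argument is the right one for $\boldsymbol{S}^*$, where the obstruction is a vector $w$ with $w^*A_{i;T^{m-1}\omega'}=0$ in the leftmost factor. Your sentence for that case is correct. Keep this direct argument rather than the fallback to Lemma \ref{App:Lem:Wie_dont_care_about_adjoint}, on which the paper's own ``similar argument'' also leans. The identity $\Pi^{(\boldsymbol{S}^*,k)}_{\sigma;\omega}=\Pi^{(\boldsymbol{S},k)*}_{\sigma^*;\omega}$ in that lemma's proof reverses the word $\sigma$ but not the order of the sites, so $\wielength{\boldsymbol{S}}$ and $\wielength{\boldsymbol{S}^*}$ can differ pointwise. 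For instance, take the period-two process with $\boldsymbol{S}_0=(\one_2,E_{1,2})$ and $\boldsymbol{S}_1=(E_{1,1},E_{1,2}+E_{2,1})$. Both $\boldsymbol{S}$ and $\boldsymbol{S}^*$ are faithful, yet $\wielength{\boldsymbol{S}_0}=2$ while $\wielength{\boldsymbol{S}^*_0}=3$.

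Apart from this, the proof is sound, and your converse takes a different, more elementary route than the paper. The paper (Lemma \ref{App:Lem:If_faithful_then_01_law}) stays with Choi matrices. It passes to Hilbert--Schmidt adjoints, which reverses the order of composition so that the known positive-definite Choi matrix becomes the inner factor: $J(\Phi^{(N+k)*}_{\boldsymbol{S};\omega})=[\Phi^{(k)*}_{\boldsymbol{S};\omega}\otimes\operatorname{Id}_{\matrices}](J(\Phi^{(N)*}_{\boldsymbol{S};T^k\omega}))$. It then applies Lemma \ref{App:Lem:Faithful_adjoint_implies_PSD_to_PSD} with faithfulness of $\boldsymbol{S}$ to get $\prelength{\boldsymbol{S}_\omega}<\infty$. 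Finally it upgrades to $\wielength{\boldsymbol{S}_\omega}<\infty$ through Lemma \ref{App:Lem:Pre_is_wie}, using faithfulness of $\boldsymbol{S}^*$. Note that $J(\Phi\circ\Psi)=(\Phi\otimes\operatorname{Id})J(\Psi)$ does hold; the only problem is that the known factor is the outer one, and taking adjoints fixes that.

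Your left/right-ideal computation proves the same two extension facts directly inside $\matrices$. It makes clear which faithfulness hypothesis is used on which side, and gives $\wielength{\boldsymbol{S}_\omega}\le k+N$ in one stroke. The paper's version has the advantage of staying within the Choi-matrix formalism it uses again in Proposition \ref{App:Prop:ESP_iff_finite_inj}. Your derivation of (b) from $\prelength{\boldsymbol{S}}\le\wielength{\boldsymbol{S}}$ is also more direct than the paper's.

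One small slip: faithfulness of $\Phi^{(k)}_{\boldsymbol{S}}$ means $\Phi^{(k)}_{\boldsymbol{S}}(p)\neq 0$ for every nonzero $p\geq 0$, not just for $p>0$. Your common-kernel formulation is the correct one.
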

To prove this lemma, we proceed in steps. 
\begin{lemma}\label{App:Lem:Finite_wie_implies_faithfulness}
    If $\wielength{\boldsymbol{S}}<\infty$ almost surely, then $\boldsymbol{S}$ and $\boldsymbol{S}^*$ are faithful.
\end{lemma}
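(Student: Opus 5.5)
We argue by contraposition in each of the two cases, using that on the event $\wielength{\boldsymbol{S}}<\infty$ the products $\Pi^{(n)}_{\sigma;\omega}$, $\sigma\in[g]^n$, span all of $\matrices$ for every $n\geq \wielength{\boldsymbol{S}_\omega}$. The key observation is that spanning $\matrices$ is incompatible with a common nontrivial kernel or cokernel for the first (respectively last) factor of the products.

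\emph{Faithfulness of $\boldsymbol{S}^*$.} Recall that $\psi_{\boldsymbol{S}^*}(p)=\sum_i A_i^* p A_i$. Suppose that with positive probability there is a nonzero $p\geq 0$ with $\sum_i A_{i;\omega}^* p A_{i;\omega}=0$. Since this is a sum of positive terms, each $p^{1/2}A_{i;\omega}=0$, so taking a unit vector $v\in\operatorname{ran}(p)$ we get $v^*A_{i;\omega}=0$ for all $i$. By ergodicity and invertibility of $T$, the failure set $B$ has positive measure, so almost surely there is some $k$ with $T^{k}(\omega)\in B$; by Poincaré recurrence we may take $k\geq 0$ and, more to the point, $k$ arbitrarily large. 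Fix $\omega$ with $\wielength{\boldsymbol{S}_\omega}<\infty$ and choose $k\geq 0$ with $T^{k}(\omega)\in B$ and $n:=k+1\geq \wielength{\boldsymbol{S}_\omega}$. The leftmost factor of every product $\Pi^{(n)}_{\sigma;\omega}$ is $A_{i_n;T^{k}(\omega)}$, so $v^*\Pi^{(n)}_{\sigma;\omega}=0$ for all $\sigma$. Hence the span lies in $\{a: v^*a=0\}\neq\matrices$, which contradicts $n\geq\wielength{\boldsymbol{S}_\omega}$.

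\emph{Faithfulness of $\boldsymbol{S}$.} Here $\psi_{\boldsymbol{S}}(p)=\sum_i A_ipA_i^*=0$ gives $A_{i;\omega}v=0$ for a unit vector $v$ and all $i$. This is the rightmost factor of the products, at time $0$. We therefore need the bad event to occur at the start of the product. To arrange this, we want some $\omega'=T^{j}(\omega)$ with $j\geq 0$, $\omega'\in B$, and $\wielength{\boldsymbol{S}_{\omega'}}<\infty$. This holds almost surely, since $\wielength{\boldsymbol{S}}<\infty$ holds a.s. along the whole orbit (countable intersection) and $B$ is visited. For such $\omega'$ we have $\Pi^{(n)}_{\sigma;\omega'}v=0$ for all $\sigma$ and all $n$, so the span is never $\matrices$, contradicting $\wielength{\boldsymbol{S}_{\omega'}}<\infty$. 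The same trick handles $\boldsymbol{S}^*$ as well. Alternatively, one can apply Lemma~\ref{App:Lem:Wie_dont_care_about_adjoint} to swap the roles of $\boldsymbol{S}$ and $\boldsymbol{S}^*$: $\wielength{\boldsymbol{S}^*}<\infty$, and the first-factor argument for $\boldsymbol{S}^*$ reverses the product order so that time-$0$ issues disappear.

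\emph{Main obstacle.} The only delicate point is the measure-theoretic bookkeeping. Faithfulness is a pointwise-a.s. property, so failure means that $B$ has positive probability, and we must make sure that $\wielength{}<\infty$ is available at a point of $B$ or at a point whose long products pass through $B$ in the outermost position. Once the contrapositive is set up with $B$ of positive measure, the spanning obstruction is immediate.
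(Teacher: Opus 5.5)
Your proof is correct. For $\boldsymbol{S}$ it is the paper's argument: a common kernel vector of the $A_{i;\omega}$ sits in the rightmost slot of every $\Pi^{(n)}_{\sigma;\omega}$. So the non-faithful event lies inside $\{\wielength{\boldsymbol{S}}=\infty\}$ pointwise, and your detour through a point $\omega'=T^j(\omega)$ on the orbit is not needed.

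For $\boldsymbol{S}^*$ the two routes genuinely differ.

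\emph{The paper's route.} It runs the kernel argument for $\boldsymbol{S}^*$ to get $\wielength{\boldsymbol{S}^*}=\infty$ on the bad event. It then transfers this to $\boldsymbol{S}$ via Lemma~\ref{App:Lem:Wie_dont_care_about_adjoint}.

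\emph{Your route.} You stay with $\boldsymbol{S}$ and use ergodicity to place a cokernel vector in the \emph{leftmost} slot of arbitrarily long products.

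Your route is the safer one. The proof of Lemma~\ref{App:Lem:Wie_dont_care_about_adjoint} rests on the identity $\Pi^{(\boldsymbol{S}^*,k)}_{\sigma;\omega}=\Pi^{(\boldsymbol{S},k)*}_{\sigma^*;\omega}$. But taking adjoints also reverses the order of the times. The left side is $A^*_{i_k;T^{k-1}(\omega)}\cdots A^*_{i_1;\omega}$, while the right side is $A^*_{i_k;\omega}\cdots A^*_{i_1;T^{k-1}(\omega)}$.

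Pointwise equality of the two lengths can in fact fail. Take the Kraus set $\{E_{11},E_{22}\}$ at time $0$ and $\{E_{11}+E_{12}+E_{22},\,E_{21}+E_{22}\}$ at time $1$.
\begin{itemize}
\item The length-two products for $\boldsymbol{S}$ span $\mathbb{M}_2$.
\item Those for $\boldsymbol{S}^*$ span only the three-dimensional space $\operatorname{span}\{E_{11}+E_{21},E_{22},E_{12}\}$.
\end{itemize}
Both sets have no common kernel or cokernel. In an i.i.d.\ model choosing one of the two sets at each site, one therefore gets $\prelength{\boldsymbol{S}}=2<\prelength{\boldsymbol{S}^*}$ with probability $1/4$. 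By Lemma~\ref{App:Lem:Pre_is_wie}, the same strict inequality holds for $\ell$.

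So you should drop your ``alternatively'' via that lemma.

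The sentence ``the same trick handles $\boldsymbol{S}^*$'' is also wrong if read literally. At a bad point $\omega'$ for $\boldsymbol{S}^*$, the cokernel vector of the $A_{i;\omega'}$ sits at the \emph{right} end of $\Pi^{(n)}_{\sigma;\omega'}$, where it does not obstruct spanning. The products $QA_{i;\omega'}$ span the left ideal generated by the $A_{i;\omega'}$, which is all of $\matrices$ once those have no common kernel and the $Q$'s span. What actually proves the $\boldsymbol{S}^*$ half is your first paragraph, which uses infinitely many forward visits to the bad set and hence ergodicity in an essential way.
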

\begin{proof}
    for $\omega\in\Omega$ such that $\boldsymbol{S}$ is not faithful, there is a rank-1 projection $p_\omega\in\matrices$ such that $\Phi^{(n)}_{\boldsymbol{S}; \omega}(p) = 0$ for all $n\in\N$. 
    Writing $p_\omega = \ketbra{\xi}{\xi}$ for some $\ket{\xi}\in\C^d$, the positivity of $\Phi^{(n)}_{\boldsymbol{S}; \omega}$ implies $\Pi_{\sigma; \omega}^{(n)}\ket{\xi}$ for all $\sigma\in[g]^n$ and all $n$. 
    In particular, for all $n$, $p_\omega\not\in\operatorname{span}\set{\Pi_{\sigma; \omega}^{(n)}\,\,:\,\,\sigma\in[g]^n}$, hence $\prelength{\boldsymbol{S}_\omega} = \infty$. 
    This shows that $\wielength{\boldsymbol{S}} = \infty$ on the event that $\boldsymbol{S}$ is not faithful. 
    By a similar argument, one can show that on the event that $\boldsymbol{S}^*$ is not faithful,  $\wielength{\boldsymbol{S}^*} = \wielength{\boldsymbol{S}} = \infty$. 
    By contrapositive, this shows that if $\wielength{\boldsymbol{S}}<\infty$ almost surely, then $\boldsymbol{S}$ and $\boldsymbol{S}^*$ are faithful. 
\end{proof}
\begin{lemma}\label{App:Lem:If_faithful_then_01_law}
    If $\boldsymbol{S}$ and $\boldsymbol{S}^*$ are faithful, then $\mu[\wielength{\boldsymbol{S}} < \infty]\in\set{0, 1}$.
\end{lemma}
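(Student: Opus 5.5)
The plan is to show that the event $A := \{\wielength{\boldsymbol{S}} < \infty\}$ is essentially $T$-invariant, and then apply ergodicity of $T$. This is the same strategy as in Lemma \ref{Lem:01_law_inj_length}, but now carried out with the general tuple $\boldsymbol{S}$ and the products \eqref{Eqn:Product_of_matrices}. Concretely, I will show that $\wielength{\boldsymbol{S}_\omega}<\infty$ implies $\wielength{\boldsymbol{S}_{T(\omega)}}<\infty$, and conversely, almost surely. Then $A$ and $T^{-1}(A)$ agree up to null sets, so $\mu[A]\in\{0,1\}$.

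\emph{Forward direction} ($\omega\in A \Rightarrow T(\omega)\in A$). Let $n=\wielength{\boldsymbol{S}_\omega}$. For $m\geq n$ the Choi matrix $J(\Phi^{(m+1)}_{\boldsymbol{S};\omega})$ is positive definite. I will use the factorization
\[
\Phi^{(m+1)}_{\boldsymbol{S};\omega} = \Phi^{(m)}_{\boldsymbol{S};T(\omega)}\circ\phi_{\boldsymbol{S};\omega},
\]
which gives $J(\Phi^{(m+1)}_{\boldsymbol{S};\omega}) = (\Phi^{(m)}_{\boldsymbol{S};T(\omega)}\otimes\operatorname{Id})$ applied to something, but with the single-step map acting on the wrong side. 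The cleaner route is to work with spans. We have
\[
\operatorname{span}\{\Pi^{(m+1)}_{\sigma;\omega}\} = \operatorname{span}\{\Pi^{(m)}_{\sigma';T(\omega)}A_{i;\omega}\} .
\]
If $\operatorname{span}\{\Pi^{(m)}_{\sigma';T(\omega)}\}=:W\neq\matrices$, then this span is contained in $W\cdot \operatorname{span}\{A_{i;\omega}\}$. That containment alone does not obviously give a contradiction, so instead I will pass to $\boldsymbol{S}^*$ using Lemma \ref{App:Lem:Wie_dont_care_about_adjoint}.

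Write $V:=\operatorname{span}\{\Pi^{(m)}_{\sigma';T(\omega)}\}$, a subspace. Take $B\perp V$ in Hilbert–Schmidt. Then for every $X\in\matrices$ we can write $X=\sum \Pi^{(m)}_{\sigma'} A_i c_{\sigma',i}$, and this forces
\[
\sum_{\sigma',i}\langle B,\Pi^{(m)}_{\sigma'}A_i c_{\sigma',i}\rangle=0 ,
\]
which does not obviously give $X\perp B$. A better argument mirrors Lemma \ref{Lem:01_law_inj_length} in reverse. Faithfulness of $\boldsymbol{S}^*$ means $\phi_{\boldsymbol{S}^*}$ is faithful, i.e.\ $\sum_i A_i^*pA_i\neq 0$ for $p\geq0$, $p\neq0$. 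So suppose the $A_i^*$-products on the left fail to span. Then there is a rank-one obstruction, and since the Choi matrix is positive semidefinite there is a vector $\operatorname{Vec}(B)\neq0$ orthogonal to every $\operatorname{Vec}(\Pi^{(m)}_{\sigma';T(\omega)})$, i.e.\ $\operatorname{Tr}(B^*\Pi^{(m)}_{\sigma'})=0$ for all $\sigma'$. Then
\[
\operatorname{Tr}\big((A_iB^*)\Pi^{(m)}_{\sigma'}\big)\cdot\text{(cyclic)} \;\Rightarrow\; \operatorname{Tr}\big(C\,\Pi^{(m+1)}_{\sigma}\big)=0 \quad\text{for every } C\in\{B^*\ \text{composed with}\ A_i\},
\]
which contradicts spanning at length $m+1$ unless $B A_i^*=0$ for all $i$. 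The identity $BA_i^*=0$ for all $i$ gives $\sum_i A_iB^*BA_i^*=0$, i.e.\ $\phi_{\boldsymbol{S}}(B^*B)=0$, and faithfulness of $\boldsymbol{S}$ then forces $B=0$. This is exactly the argument of Lemma \ref{Lem:01_law_inj_length}, with unitality replaced by faithfulness. Hence $\wielength{\boldsymbol{S}_{T(\omega)}}\leq \wielength{\boldsymbol{S}_\omega}$ (up to an index shift), so $T(\omega)\in A$.

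\emph{Reverse direction.} I will run the same argument with the extra letter appended on the right: $\Pi^{(m+1)}_{\sigma;T^{-1}\omega}=\Pi^{(m)}_{\sigma';\omega}A_{i;T^{-1}\omega}$. Then $\operatorname{Tr}(B^*\Pi^{(m+1)})=0$ for all words would give $A_{i;T^{-1}\omega}B^*=0$ for all $i$, so $\phi_{\boldsymbol{S}^*}(BB^*)=0$. Faithfulness of $\boldsymbol{S}^*$ at $T^{-1}\omega$, which holds almost surely, forces $B=0$. This shows $T^{-1}(A)\subset A$ a.s., and combined with the forward direction, $T^{-1}(A)=A$ mod null sets.

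The main obstacle is keeping the order of multiplication in \eqref{Eqn:Product_of_matrices} straight, so that the appended letter can be absorbed by cyclicity of trace exactly as in Lemma \ref{Lem:01_law_inj_length}. The point is that one direction needs faithfulness of $\boldsymbol{S}$ and the other needs faithfulness of $\boldsymbol{S}^*$.
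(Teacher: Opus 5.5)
Your ``forward direction'' does not work, and the pointwise inequality it is meant to give is false. There the new letter sits at the right end, $\Pi^{(m+1)}_{\sigma;\omega}=\Pi^{(m)}_{\sigma';T(\omega)}A_{i;\omega}$. So from a single $B\neq 0$ with $\Tr(B^*X)=0$ for all $X\in V$, you would need one nonzero $C$ with $\Tr\big(A_{i;\omega}C\,\Pi^{(m)}_{\sigma';T(\omega)}\big)=0$ for all $\sigma'$ and all $i$ at once. Nothing in your manipulation produces such a $C$, and the step ``contradicts spanning at length $m+1$ unless $BA_i^*=0$'' is unjustified.

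Deleting the earliest letter genuinely loses control. Consider an i.i.d.\ model in which each site carries either the four matrix units of $\mathbb{M}_2$ or four copies of $\tfrac12\one_2$; both $\boldsymbol{S}$ and $\boldsymbol{S}^*$ are faithful. Take the positive-probability event that site $0$ has the first type and sites $1,\dots,K$ have the second. On this event, the words at $\omega$ of every length contain nonzero multiples of all four matrix units, so $\wielength{\boldsymbol{S}_\omega}=1$. But the words of length at most $K$ at $T(\omega)$ are multiples of $\one_2$, so $\wielength{\boldsymbol{S}_{T(\omega)}}>K$. Hence no bound $\wielength{\boldsymbol{S}_{T(\omega)}}\le\wielength{\boldsymbol{S}_\omega}+c$ holds, and this direction cannot be obtained by a pointwise argument.

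That direction is, however, unnecessary. Your ``reverse direction'' (appending $A_{i;T^{-1}(\omega)}$ on the right) is the correct analogue of Lemma \ref{Lem:01_law_inj_length}, and it gives $T^{-1}(A)\subseteq A$ up to a null set. Since $\mu[T^{-1}(A)]=\mu[A]$, this already yields $\mu[A\,\triangle\,T^{-1}(A)]=0$, and ergodicity gives $\mu[A]\in\{0,1\}$.

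In that step you invoke the wrong hypothesis. From $A_{i;T^{-1}(\omega)}B^*=0$ for all $i$ you get $\phi_{\boldsymbol{S};T^{-1}(\omega)}(B^*B)=\sum_i (A_iB^*)(A_iB^*)^*=0$, so it is faithfulness of $\boldsymbol{S}$ that forces $B=0$. By contrast, $\phi_{\boldsymbol{S}^*}(BB^*)=\sum_i (B^*A_i)^*(B^*A_i)$ need not vanish. Your closing remark that the two directions use the two different faithfulness hypotheses is therefore also wrong.

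Repaired this way, your proof is shorter than the paper's and uses only faithfulness of $\boldsymbol{S}$. The paper instead proceeds in three steps:
\begin{enumerate}[label = (\roman*)]
\item it uses faithfulness of $\boldsymbol{S}^*$ (via Lemma \ref{App:Lem:Pre_is_wie}) to write $\{\wielength{\boldsymbol{S}}<\infty\}=\bigcup_n\{\prelength{\boldsymbol{S}}=n\}$;
\item it uses ergodicity to find a time $k$ with $\prelength{\boldsymbol{S}_{T^k(\omega)}}=N$;
\item it pulls positivity of the Choi matrix back to time $0$ through $\Phi^{(k)*}_{\boldsymbol{S};\omega}\otimes\operatorname{Id}$, using faithfulness of $\boldsymbol{S}$.
\end{enumerate}
What that route buys is a direct proof of $\mu[\prelength{\boldsymbol{S}}=N]>0\Rightarrow\mu[\wielength{\boldsymbol{S}}<\infty]=1$, the form used in Lemma \ref{App:Lem:Char_of_finite_inj}. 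Your dichotomy also yields this once combined with Lemma \ref{App:Lem:Pre_is_wie}.
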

\begin{proof}
    Since $\boldsymbol{S}^*$ is faithful, Lemma \ref{App:Lem:Pre_is_wie} implies
    \begin{equation}
        \set{\wielength{\boldsymbol{S}}<\infty}
            =
        \bigcup_{n\in\N}
        \set{\prelength{\boldsymbol{S}} = n}.
    \end{equation}
    So, it suffices to show that if there is $N\in\N$ such that $\mu[\prelength{S} = N] > 0$, then $\mu[\wielength{\boldsymbol{S}} < \infty] = 1$. 
    Under the assumption such $N$ exists, we have 
    \begin{equation}
        \mu\left[
            \bigcup_{k\geq 0}
            \set{\omega\in\Omega\,\,:\,\,
            \prelength{\boldsymbol{S}_{T^{k}(\omega)}}
            =
            N
            }
        \right]
        =
        1
    \end{equation}
    by ergodicity of $T$. 
    Therefore, for almost every $\omega\in\Omega$, there is $k$ such that $J\seq{\Phi^{(N)}_{\boldsymbol{S}; T^k(\omega)}}>0$. 
    But by Lemma \ref{App:Lem:Wie_dont_care_about_adjoint}, 
    $J\seq{\Phi^{(N)}_{\boldsymbol{S}; T^k(\omega)}}>0$ if and only if $J\seq{\Phi^{(N)}_{\boldsymbol{S}^*; T^k(\omega)}} = J\seq{\Phi^{(N)*}_{\boldsymbol{S}; T^k(\omega)}}>0$.
    Thus, since $\boldsymbol{S}$ is faithful, arguing as in the proof of Lemma \ref{App:Lem:Pre_is_wie} we conclude
    \begin{equation}
        J\seq{\Phi^{(N + k)*}_{\boldsymbol{S};\omega}}
        =
        \left[\Phi^{(k)*}_{\boldsymbol{S}; \omega}\otimes\operatorname{Id}_{\matrices}\right]
        \!
        \seq{
        J\seq{\Phi^{(N)*}_{\boldsymbol{S}; T^k(\omega)}}
        }
        >0,
    \end{equation}
    which shows that $\prelength{\boldsymbol{S}_\omega}<\infty$. 
    But $\omega$ was an arbitrary element of a full probability set, so we conclude that $\mu[\prelength{\boldsymbol{S}}<\infty] = \mu[\wielength{\boldsymbol{S}}<\infty]= 1$.
\end{proof}
\begin{proof}[Proof of Lemma \ref{App:Lem:Char_of_finite_inj}]
    Assuming $\wielength{\boldsymbol{S}}<\infty$ almost surely, Lemma \ref{App:Lem:Finite_wie_implies_faithfulness} already shows that $\boldsymbol{S}$ and $\boldsymbol{S}^*$. 
    From $ \set{\wielength{\boldsymbol{S}}<\infty}
        =
        \bigcup_{n\in\N}
        \set{\wielength{\boldsymbol{S}} = n}$
    and the fact that $\set{\wielength{\boldsymbol{S}} = n} = \set{\prelength{\boldsymbol{S}} = n}$ via Lemma \ref{App:Lem:Pre_is_wie}, the fact that $\wielength{\boldsymbol{S}}<\infty$ almost surely already shows that there is $N$ for which $\mu[\prelength{\boldsymbol{S}} = N]>0$.
    Conversely, assuming (a) and (b), Lemma \ref{App:Lem:If_faithful_then_01_law} together with $\set{\wielength{\boldsymbol{S}}<\infty}
        =
        \bigcup_{n\in\N}
        \set{\wielength{\boldsymbol{S}} = n}$ and $\wielength{\boldsymbol{S}} = \prelength{\boldsymbol{S}}$ shows that $\mu[\wielength{\boldsymbol{S}}<\infty] = 1$, concluding the proof.  
\end{proof}
We conclude this appendix with the following proposition. 
\begin{prop}\label{App:Prop:ESP_iff_finite_inj}
    Assume $\phi_{\boldsymbol{S}}$ is almost surely trace preserving. 
    Then $q(\boldsymbol{S})<\infty$ almost surely if and only if $\wielength{\boldsymbol{S}}<\infty$ almost surely. 
\end{prop}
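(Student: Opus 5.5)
Both sides of the equivalence have the same structure: a "faithfulness" condition plus a "positive probability of a finite witness" condition. Lemma \ref{App:Lem:Char_of_finite_inj} characterizes $\wielength{\boldsymbol{S}}<\infty$ a.s. by (a) $\boldsymbol{S},\boldsymbol{S}^*$ faithful and (b) $\Pr[\prelength{\boldsymbol{S}}=N]>0$ for some $N$. The cited \cite[Lemma 2.1]{MovassaghSchenker} characterizes $q(\boldsymbol{S})<\infty$ a.s. by the same (a) together with $\Pr[q_0(\boldsymbol{S})=N]>0$ for some $N$. The plan is therefore to compare $\prelength{\boldsymbol{S}}$ and $q_0(\boldsymbol{S})$ pointwise, after first securing faithfulness on each side.

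For $\wielength{\boldsymbol{S}}<\infty\Rightarrow q<\infty$, we get (a) from Lemma \ref{App:Lem:Finite_wie_implies_faithfulness}. It then suffices to show $q_0(\boldsymbol{S}_\omega)\le \prelength{\boldsymbol{S}_\omega}$ pointwise, since positive probability of $\{\prelength{\boldsymbol{S}}=N\}$ then gives positive probability of $\{q_0\le N\}$, hence of $\{q_0=N'\}$ for some $N'\le N$. The pointwise inequality is the argument of \cite[Proposition 1]{SanzPerezGarciaWolfCirac}, already used in Proposition \ref{Prop:Inj_iff_ESP}. Suppose the products $\Pi^{(n)}_\sigma$ span $\matrices$ and $\Phi^{(n)}(\rho)$ fails to be full rank for a density matrix $\rho=\sum_k \lambda_k\ketbra{v_k}{v_k}$. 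Then some unit vector $w$ has $\bra{w}\Pi_\sigma\ket{v_k}=0$ for all $\sigma$ and all $k$ with $\lambda_k>0$. By spanning, $\bra{w}a\ket{v_k}=0$ for every $a\in\matrices$, which is impossible.

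For $q<\infty\Rightarrow\wielength{\boldsymbol{S}}<\infty$, faithfulness again comes from \cite[Lemma 2.1]{MovassaghSchenker}. It remains to produce $N$ with $\Pr[\prelength{\boldsymbol{S}}=N]>0$, and strict positivity alone does not give spanning. My approach uses the trace-preserving hypothesis together with Theorem \ref{Thm:ESP}. Applied to the dual (unital) process, that theorem gives $\|\Phi^{(m,n)}(a)-\tr{a}\,\rho_n\|\le C\mu^{n-m}\|a\|$ with $\rho$ faithful, which mirrors Lemma \ref{lem:alpha_estimates}(a). The Choi matrix $J(\Phi^{(0,n)})=(\Phi^{(0,n)}\otimes\operatorname{Id})(E)$ should then converge to $\rho_n\otimes\one_d$, with error $O(C_0\mu^{n})$ times a dimensional constant. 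Since $\rho_n\otimes\one_d$ has smallest eigenvalue $\lambda_{\min}(\rho_n)$, the Choi matrix is invertible as soon as $C_0\mu^n d^2<\lambda_{\min}(\rho_n)$.

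The main obstacle is controlling $\lambda_{\min}(\rho_n)$ along the sequence. By Poincaré recurrence, $\lambda_{\min}(\rho_n)\ge\delta$ for some deterministic $\delta>0$ and infinitely many $n$, almost surely. This yields, almost surely, some finite $n$ with $J(\Phi^{(n)})>0$, i.e. $\prelength{\boldsymbol{S}}<\infty$ a.s., and then some $N$ has $\Pr[\prelength{\boldsymbol{S}}=N]>0$. Lemma \ref{App:Lem:Char_of_finite_inj} finishes the proof.

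A cleaner alternative for this direction is the following. Strict positivity of $\Phi^{(n)}$ and faithfulness of $\boldsymbol{S}^*$ make the range of the dual spread out. If $\operatorname{span}\{\Pi_\sigma\}\ne\matrices$, pick a nonzero $Y$ orthogonal to all $\Pi_\sigma$ in Hilbert–Schmidt. Composing with a further strictly positive block should then force a contradiction after doubling the length (as in $\Gamma^{[l,r]}$ becoming injective once $\alpha<1$, cf. \eqref{Eqn:Lower_bound_inj_length}). I would try the Choi-convergence route first.
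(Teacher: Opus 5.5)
Your proof is correct and is essentially the paper's argument. One direction is the \cite{SanzPerezGarciaWolfCirac}-type implication ``the products $\Pi^{(n)}_\sigma$ span $\matrices$ $\Rightarrow$ $\Phi^{(n)}_{\boldsymbol{S}}$ is strictly positive''; the paper applies it at every $n\ge\wielength{\boldsymbol{S}}$ to get $q(\boldsymbol{S})\le\wielength{\boldsymbol{S}}$ pointwise, so it never needs \cite[Lemma 2.1]{MovassaghSchenker}. The other direction compares $J(\Phi^{(n)}_{\boldsymbol{S}})$ with the Choi matrix $Z_n\otimes\one_d$ of the limiting replacement map, using the exponential convergence from \cite{MovassaghSchenker} together with recurrence of $\lambda_{\min}(Z_n)$. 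The paper argues this second direction by contradiction via a kernel vector of the Choi matrix, stated there as a projection $p_n$ with $\Pi^{(n)}_\sigma p_n=0$; no such projection can exist for a trace-preserving map, and what non-spanning actually provides is a nonzero $Y$ Hilbert--Schmidt orthogonal to all $\Pi^{(n)}_\sigma$. Your direct bound $J(\Phi^{(n)}_{\boldsymbol{S}})\ge\bigl(\lambda_{\min}(\rho_n)-c_d C\mu^n\bigr)\one$ says the same thing more cleanly and avoids that slip.
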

\begin{proof}
    We proceed as in \cite{SanzPerezGarciaWolfCirac}.
    First, we claim that $q(\boldsymbol{S})\leq \wielength{\boldsymbol{S}}$ almost surely.
    Indeed, on the event $\wielength{\boldsymbol{S}} = \infty$ there is nothing to prove, so assume $\wielength{\boldsymbol{S}_\omega} < \infty$. 
    For any $n\geq \wielength{\boldsymbol{S}_\omega}$, we have that $J\seq{\Phi^{(n)}_{\boldsymbol{S}; \omega}}>0$. 
    In particular, because for any rank one projection $p = \ketbra{\xi}{\xi}\in\matrices$ we have that 
    \begin{equation}
        \Phi^{(n)}_S(p)
            =
       \operatorname{Tr}_{\matrices}\seq{J\seq{\Phi^{(n)}_{\boldsymbol{S}; \omega}}\seq{\I\otimes p'}}
    \end{equation}
    where $\operatorname{Tr}_{\matrices}$ denotes the partial trace and where $p'$ is the transpose of $p$, which is itself a rank one projection, we conclude that $\Phi^{(n)}_{\boldsymbol{S}; \omega}(p)>0$.
    Since $p$ was arbitrary, this shows that $\Phi^{(n)}_{\boldsymbol{S}; \omega}$ is strictly positive for all $n\geq\wielength{\boldsymbol{S}}$. 
    Thus, $q(\boldsymbol{S}_\omega)\leq \wielength{\boldsymbol{S}_\omega}$, concluding the proof of the claim. 
    The claim implies the backwards direction of the proposition, and so it just remains to show the forwards direction. 
    To prove the forwards direction, we proceed by contradiction.
    So, assume that $q\seq{\boldsymbol{S}}<\infty$ almost surely and that $\mu[\wielength{\boldsymbol{S}}=\infty] > 0$. 
    By Lemma \ref{App:Lem:Char_of_finite_inj}, we know that $\boldsymbol{S}$ and $\boldsymbol{S}^*$ are faithful, therefore by Lemma \ref{App:Lem:If_faithful_then_01_law}, we know that $\mu[\wielength{\boldsymbol{S}}=\infty] = 1$.
    First, by \cite[Theorem 1]{MovassaghSchenker}, there exists $Z:\Omega\to\matrices$ such that $\tr{Z} = 1$, $Z>0$, and $\phi_{\boldsymbol{S}; \omega}(Z_\omega) = Z_{T(\omega)}$ almost surely. 
    Furthermore, by \cite[Theorem 2]{MovassaghSchenker}, there is a universal constant $\gamma\in (0, 1)$ and a measurable function $C:\Omega\to (0, \infty)$ such that 
    \begin{equation}\label{App:Prop:ESP_iff_finite_inj:Eqn_1}
        \left\|
            \Phi^{(n)}
            _{\boldsymbol{S}; \omega}
            -
            \Delta_{n; \omega}
        \right\|_1
        \leq 
        C_\omega \gamma^n
    \end{equation} 
    holds almost surely for all $n$, where $\Delta_{n;\omega}(a) := \tr{a}Z_{T^n(\omega)}$ for all $a\in\matrices$ and $\|\cdot\|_1$ denotes the trace norm $\operatorname{Tr}|\cdot|$. 
    We shall use these facts freely in the following. 
    Now, since $\wielength{\boldsymbol{S}} = \infty$ almost surely, for almost every $\omega\in\Omega$ and all $n\in\N$, there is an orthogonal projection $p_{n; \omega}\in\matrices$ such that $ \Pi^{(n)}_{\sigma; \omega}p_{n; \omega} = 0$ for all $\sigma\in[g]^n$. 
    Since $\tr{p_{n; \omega}}\geq 1$, we see that 
\begin{equation}
    \frac{1}{\|Z_{T^n(\omega)}^{-1}\|_\infty}
    \leq 
    \tr{p_{n; \omega} Z_{T^n(\omega)} p_{n; \omega}}
\end{equation}
holds for all $n$ and almost every $\omega\in\Omega$. 
Thus, for all $n\in\N$, we have that 
\begin{align}
  \frac{1}{\|Z_{T^n(\omega)}^{-1}\|_\infty}
    &\leq 
    \tr{p_{n; \omega} Z_{T^n(\omega)} p_{n; \omega}}\\
    &= 
    \left| 
        \tr{p_{n; \omega} Z_{T^n(\omega)} p_{n; \omega}}
        -
        \sum_{\sigma\in[g]^n}
        \left| 
            \tr{\Pi^{(n)}_{\sigma; \omega}p_{n; \omega}}
        \right|^2 
    \right|\\
    &= 
    \Big| 
    \operatorname{Tr}_{{d^2}}\!\seq{
        E \seq{\Delta_{n; \omega}\otimes \operatorname{Id}_{\matrices}}\seq{
        \tilde{p}_{n; \omega} E \tilde{p}_{n; \omega}
        }
        }
        -
        \operatorname{Tr}_{{d^2}}\!\seq{
        E\!\left[\Phi^{(n)}_{S; \omega}\otimes\operatorname{Id}_{\matrices}\right]\!\seq{
        \tilde{p}_{n; \omega} E \tilde{p}_{n; \omega}
        }
        }
    \Big|\\
    &\leq 
    D_{\omega}\gamma^n \label{Eqn:Wielength_finite_characterizations:Eqn2}
\end{align}
where $ \tilde{p}_{n; \omega} = p_{n; \omega}\otimes\I$ and $D_{\omega}\in (0, \infty)$ is some constant independent of $p_{n; \omega}$, which holds by (\ref{App:Prop:ESP_iff_finite_inj:Eqn_1}).
This, however, is a contradiction: because $Z>0$ almost surely, there is $\varepsilon>0$ such that $\|Z^{-1}\|_\infty^{-1}>\varepsilon$ with positive probability, so by Poincar\'e recurrence, we know that 
\begin{equation}
    \limsup_{n\to\infty}\frac{1}{\|Z_{T^n(\omega)}^{-1}\|_\infty}
    \geq \varepsilon
\end{equation}
holds with positive probability. 
This, however, contradicts (\ref{Eqn:Wielength_finite_characterizations:Eqn2}) holding almost surely, because $\gamma\in(0, 1)$. 
Therefore, it must be that $\wielength{\boldsymbol{S}} < \infty$ almost surely, which concludes the proof. 
\end{proof}


\section*{Conflict of Interest Statement}
On behalf of all authors, the corresponding author states there are no conflicts of interest. 

\section*{Data Availability Statement}
This manuscript has no associated data. 


\bibliographystyle{plain}
\bibliography{consolidated_refs}

\end{document}